\documentclass[10pt,a4paper]{article}

\usepackage[top=1.0in,bottom=1.1in,left=1.1in,right=1.1in]{geometry}
\usepackage{amsmath,amssymb,amsthm,mathtools}
\usepackage{booktabs}
\usepackage{array}
\usepackage{tabularx}
\usepackage{longtable}
\newcolumntype{P}[1]{>{\raggedright\arraybackslash}p{#1}}
\usepackage{xcolor}
\usepackage[labelfont=bf]{caption}
\PassOptionsToPackage{hyphens}{url}
\usepackage{hyperref}
\usepackage{enumitem}
\usepackage[final,expansion=false]{microtype}

\makeatletter
\g@addto@macro\UrlBreaks{\do\/\do\.\do\_\do\-\do\:\do\,}
\makeatother
\newtheorem{theorem}{Theorem}[section]
\newtheorem{definition}[theorem]{Definition}
\newtheorem{proposition}[theorem]{Proposition}

\newtheorem{corollary}[theorem]{Corollary}

\newtheorem{remark}[theorem]{Remark}

\hypersetup{
  colorlinks=true,
  linkcolor=blue!70!black,
  citecolor=green!50!black,
  urlcolor=blue!60!black,
  pdftitle={Operational Inexpressibility at the Step-Duplicating Primitive Recursor Orientation Boundary},
  pdfauthor={Moses Rahnama}
}

\title{\textbf{Operational Inexpressibility at the Step-Duplicating Primitive Recursor Orientation Boundary}}

\author{Moses Rahnama\\Mina Analytics\\[0.2em]\small April 2026}
\date{}

\begin{document}

\maketitle
\vspace{-1.6em}

\begin{abstract}\small\vspace{-0.6em}
We identify a structural property of term-rewriting proof systems, \emph{operational inexpressibility}: for a fixed input and a fixed dimension of that input, every derivation in the proof language either ignores that dimension or leaves the target question unconstrained. The canonical instance is direct aggregation on the primitive recursion duplicator $F(x,y,Z)\to x$, $F(x,y,S(n))\to G(y,F(x,y,n))$, whose step argument~$y$ is duplicated on the right-hand side. The companion paper~\cite{rahnamaOrientation} delineates the schema-level non-representability frontier: a twelve-class direct-measure barrier package, the symbolic Knuth-Bendix-order (KBO) variable-condition corollary, the dependency-pair (DP) projection escape, and the transparency-essentiality witness. Above it, this paper develops the schema-level diagnostic layer: confession dominance, proof-entropy monotonicity, norm mismatch, seed-carrier factorization, the construction-versus-confession asymmetry, and the static projection-transaction account of the boundary. We prove that this frontier is an instance of operational inexpressibility at the step-argument dimension. Under any direct whole-term measure the recursor's mass profile coincides with that of a true circular reference; the DP soundness license alone separates the terminating recursor proof from that profile, with a non-derivability theorem and the lift to formal term-rewriting-system (TRS) isomorphism and information-equivalence mechanized upstream in Lean (\S\ref{sec:recursor-as-circular-reference}, \S\ref{sec:closure-theorems}).

Sound responses split into \emph{construction methods} (polynomial interpretations, path orderings) that extend the proof language, and \emph{confession methods} (dependency pairs, counter-projection, size-change termination (SCT), argument filtering) that project away the unincorporable dimension under an external soundness license; all four share the same projection rank and certified-forgetting witness interface. The Arts-Giesl soundness theorem is a $\Pi^0_2$-combinatorial principle in the bounded presentation, and its subterm-criterion route is an SCT instance, hence formalizable in $\mathrm{RCA}_0$ with a termination measure of order type $\omega$; the route's arithmetical content is provable in $\mathrm{I}\Sigma_1$, and each fixed-system $\Pi^0_2$ instance in $\mathrm{PRA}$. The object-derivation side is mechanized in Lean at two levels. SCT soundness for the extracted singleton pair is proved outright: a call relation carrying an everywhere-strict descent thread admits no infinite chain, so the pair is chain free and well founded, and the bounded $\forall\exists$ presentation holds with the counter height as explicit chain-length bound. Separately, in a single-sorted $\mathrm{RCA}_0$ language the basic $\mathrm{RCA}_0$ axioms syntactically derive an elementary $\Pi^0_2$ predecessor-descent sentence in a sound first-order proof calculus, with the standard model as a consistency witness; that record certifies the $\Pi^0_2$ shape and stops short of identifying the sentence with DP soundness. Within the analyzed family the duplicator is the unique structurally complete member for which the confession step is required.

The confessed structural burden grows quadratically across the canonical trace while the residual proof work grows linearly; the per-step control-payload exchange is irreversible; and a Shannon-style validator recasts the obstruction as a divergent inefficiency coefficient against the residual descent problem. An architectural necessity theorem shows that any first-order step rule emitting a per-step record frame while preserving its generator must duplicate: the duplicator is the minimal faithful record-emitter. A \emph{layer-crossing under external license} (LCEL) schema abstracts the ascent pattern and places the DP confession in the Feferman-Beklemishev reflection family rather than the Lawvere-Yanofsky diagonal family, recovering the six-step structural identity with G\"odel's 1931 incompleteness move as a schema specialization. A witness-language stratification with minimal witness order~$\kappa^*$ identifies the orientation boundary as the event $\kappa^*(x)>0$. The quantitative companion surface extends to the confluence axis, with coordinates derived from typed finite data and computed semantic profiles on the canonical raw and licensed eqW cones (\S\ref{sec:quantitative-distinction-lbc}).

\end{abstract}

\section{Introduction}

Consider the first-order rewrite rule
\[
F(x,y,S(n)) \;\to\; G(y,F(x,y,n)),
\]
paired with the base case $F(x,y,Z)\to x$. The counter decreases from $S(n)$ to $n$, the recursive call persists, and the step argument~$y$ is simultaneously preserved in the surrounding wrapper $G(y,\cdot)$. This is the base/step/counter core of primitive recursion on natural numbers. In first-order rewriting terms, it is a standard base/step schema with a decreasing counter and a persistent recursive call~\cite{baader1998termrewriting,terese2003trs}. Closely related finite-type primitive recursion appears in G\"odel's 1958 Dialectica system~\cite{godel1958dialectica}.

The companion paper~\cite{rahnamaOrientation} establishes the orientation boundary for the step-duplicating schema in two stratified layers. Its barrier section (theorems on the additive and transparent-compositional schema barrier, the affine barrier under unbounded pump, the restricted-quadratic, bounded cross-term, bounded multilinear, generalized degree-bounded polynomial, and max-plus barriers, the tracked componentwise, lexicographic, mixed-coordinate, and weighted scalar-projection matrix barriers, the projected-primary dominance and scalar-projection meta-theorems, and the symbolic variable-condition / KBO barrier) rules out the entire formalized direct universe. The escape side (the dependency-pair projection escape, the transparency-essentiality witness on the Rec$\Delta$-core, the nonlinear polynomial full-step witness, the specialized multiset path order (MPO), and the escape trichotomy) characterizes which structural assumptions any successful response must import. \cite{rahnamaOrientation} also keeps the norm-mismatch and seed-carrier factorization fragments of the schema-level diagnostic layer as structural companions to its barriers. The remaining diagnostic content (confession-dominance product form, proof-entropy monotonicity, schema-level operational incompleteness, construction-versus-confession asymmetry, projection-transaction, and the witness-language hierarchy with orientation-boundary predicate $\mathrm{OB}$) is developed in this paper. The present paper asks a different question about the same boundary:

\begin{quote}
What \emph{kind} of inability does the duplicating step expose in the direct proof language, and what does a sound response to that inability look like?
\end{quote}

The answer we develop is \emph{operational inexpressibility}: the step-argument dimension lies outside the reach of every derivation whose conclusion bears on termination. The dimension is present, its value is determinate, and yet every derivation in the base language either ignores it or leaves the termination verdict unconstrained. G\"odel incompleteness is an expression-to-proof gap at the statement level; Turing undecidability is the absolute non-existence of a decision algorithm for a problem class; abstraction and lossy compression are active engineering choices. Operational inexpressibility differs from all three: it is a structural mismatch between the proof language's operations and a specific dimension of its input.

Sound responses to operational inexpressibility split into two structurally distinct kinds. \emph{Construction methods} (polynomial interpretations, path orderings) extend the operational repertoire with new mathematical content, such as a polynomial or a symbol precedence, and verify the extended system directly. \emph{Confession methods} (dependency pairs) drop the unincorporable dimension under an external meta-theoretic license and prove termination of the smaller residual problem. The asymmetry is structural. Constructions add objects to the proof data; confessions subtract a dimension from the input. The licensing theorems have different quantifier structures, different storage requirements at the machine level, and different relationships to the base proof language's expressiveness. We prove that the confession response on the primitive recursion duplicator has the six-step structural shape of G\"odel's 1931 incompleteness move, and that within the analyzed family the duplicator is the unique structurally complete member for which the confession step is required.

A quantitative analysis sharpens the boundary. Along the canonical trace of $F(a,b,S^k(0))$, each firing of the recursive rule converts one unit of proof-certifiable counter structure into one unit of proof-discarded payload structure. The confessed structural burden grows quadratically across the trace (as the triangular number $\frac{(k+1)(k+2)}{2}\cdot|b|$) while the residual proof work grows linearly ($k$ subterm comparisons). That ratio diverges, so the confession eventually dominates the proof by an arbitrary factor. Its proof-entropy fraction, the share of the term's structural size that the proof must declare inert, increases monotonically from zero toward one and is irreversible within the term algebra. The same obstruction also admits a formal Shannon-style validation: under an explicit coding model on payload positions, the direct whole-term carrier overcounts syntactic carrier multiplicity relative to the information content relevant to the residual descent problem.

\paragraph{Contributions.}
The formal contributions are:
\begin{enumerate}[leftmargin=1.4em]
\item the concept of \emph{operational inexpressibility}, its formal definition, and the canonical instance theorem placing direct aggregation at the step-argument dimension of the primitive recursion duplicator;
\item the \emph{construction versus confession} asymmetry between termination methods, together with the witness-language hierarchy, the minimal witness order $\kappa^*$, and the orientation-boundary predicate $\operatorname{OB}_{\mathrm{PRC}}$ that measures where the first adequate witness appears;
\item a structural-identity theorem giving the dependency-pair confession the six-step shape of G\"odel's 1931 move, a minimum-instance theorem, and a proof-theoretic classification placing the ascent in the Feferman and Beklemishev reflection family rather than the Lawvere and Yanofsky diagonal family, with the Arts and Giesl soundness theorem identified as a $\Pi^0_2$-combinatorial principle whose subterm-criterion route resolves at order type $\omega$ and whose arithmetical content lands at the $\mathrm{I}\Sigma_1$ register;
\item a quantitative confession analysis: the confession dominance law, the per-step control-payload exchange, proof-entropy monotonicity, and an information-theoretic formalization of non-vacuous meta-queries, direct retrieval, sequential uncertainty reduction, and hidden progress;
\item a formal Shannon-style validator for the confession move, comprising the inefficiency coefficient, the explicit-description gap above its threshold, and the seed-carrier factorization criterion, together with an architectural record-emission necessity theorem showing that a first-order base/step/counter schema emitting a per-step record frame while preserving its generator must duplicate that generator;
\item a \emph{projection-transaction} description of the boundary unifying the construction/confession asymmetry, the $\ell^0 / \ell^1 / \ell^\infty$ norm mismatch, and the seed-carrier criterion, extended to a finite quantitative treatment of the sibling Distinction Boundary through terminal multiplicity, critical-pair defect, repair cover, witness rank, and certificate floors.
\end{enumerate}

\noindent The paper remains formal in scope. Its core claims are the definitions, propositions, and theorem statements about operational inexpressibility, construction versus confession, the quantitative burden of the duplicator, the Shannon-style validator, witness languages, $\kappa^*$, and the projection-transaction account of the boundary. The Lean artifact carries the confession-family route-agreement and universal-instance ledgers, the usable-rules obstruction and final-status catalog, and the LCEL certification lane at both the certified-boundary surface and the raw bare-quantifier theorem over arbitrary LCEL instances, with the two universal constructor obligations discharged unconditionally; Appendix~\ref{app:module-map} records the modules and identifiers.

The claims are bounded in two ways worth stating at the outset. Computational systems may handle self-reference by routes outside this schema, and duplication phenomena beyond the base/step/counter shape fall outside these results. Automated termination tools handle the canonical instance efficiently. The thesis is structural: the primitive self-duplicating recursor is the minimum instance at which a proof system's operational repertoire becomes inexpressive for a specific input dimension, and the formal theory of operational inexpressibility, confession, and witness order developed here characterizes both the inability and the sound responses to it.

\section{Information access, sequentiality, and hidden progress}\label{sec:info-sequentiality}

The meta-object separation can first be stated in information-theoretic terms and then in proof-theoretic ones. A meta-level system asks the object level for a verdict only when that verdict lies outside direct retrieval from the meta-level's accessible information. In that sense, the act of querying is itself an information-seeking act. The object layer then matters only if it contributes uncertainty reduction through a sequential trace rather than by immediate retrieval.

\begin{definition}[Meta-access model]\label{def:meta-access}
Fix a finite verdict alphabet $\mathcal V$ and let $X$ be a random variable ranging over the object-level possibilities compatible with the meta-level information state. Let $R(X)\in\mathcal V$ be the target verdict, and let $\mathcal I_M$ denote the sigma-algebra generated by the meta-level's accessible information. An object-level computation is represented by a trace of state-valued random variables
\[
T_0(X),T_1(X),\dots,T_K(X).
\]
All entropies below are Shannon entropies computed relative to this epistemic distribution.
\end{definition}

\begin{definition}[Non-vacuous meta-query]\label{def:nonvacuous-query}
A query from the meta level to the object level about the target verdict $R(X)$ is \emph{non-vacuous} if
\[
H(R(X)\mid \mathcal I_M)>0.
\]
Equivalently, the verdict remains undetermined by the meta-level's accessible information.
\end{definition}

\begin{proposition}[Information-seeking character of non-vacuous query]\label{prop:query-information-seeking}
If a meta-level query is non-vacuous, then verdict-relevant information remains outside the meta level's access. If instead
\[
H(R(X)\mid \mathcal I_M)=0,
\]
then the verdict is directly retrievable from the meta-level state and the query is informationally vacuous.
\end{proposition}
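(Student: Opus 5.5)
The plan is to reduce both halves of the statement to the standard characterization of zero conditional entropy. Write $\mathcal G=\mathcal I_M$ and work on the finite alphabet $\mathcal V$. By definition,
\[
H(R(X)\mid\mathcal G)=\mathbb E\Bigl[-\sum_{v\in\mathcal V}P(R(X)=v\mid\mathcal G)\log P(R(X)=v\mid\mathcal G)\Bigr].
\]
Each summand is nonnegative, since $-p\log p\ge 0$ on $[0,1]$. The integrand vanishes exactly when the conditional law $P(R(X)=\cdot\mid\mathcal G)$ is a point mass.

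The key lemma is the following equivalence: $H(R(X)\mid\mathcal G)=0$ if and only if $R(X)$ is almost surely $\mathcal G$-measurable, that is, $R(X)=f(\text{meta information})$ a.s. for some $\mathcal G$-measurable $f$.

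The forward direction runs as follows. A nonnegative integrand with zero expectation vanishes a.s., so almost surely some $v$ satisfies $P(R(X)=v\mid\mathcal G)=1$. Let $f$ select that $v$. This $f$ is $\mathcal G$-measurable because $\mathcal V$ is finite. Then
\[
P(R(X)\ne f)=\mathbb E\bigl[P(R(X)\ne f\mid\mathcal G)\bigr]=0.
\]
The converse is immediate. If $R(X)$ is a.s. $\mathcal G$-measurable, the conditional law is a.s. the point mass at $R(X)$, so the integrand is zero.

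With the lemma in hand, both parts of Proposition~\ref{prop:query-information-seeking} follow.

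\begin{itemize}
\item \textbf{Vacuous case.} If $H(R(X)\mid\mathcal I_M)=0$, the lemma produces a function of the meta-level state that returns the verdict almost surely. This is precisely ``directly retrievable from the meta-level state'', so the object-level query adds no verdict information. Concretely, the conditional mutual information between $R(X)$ and any trace $T_i(X)$, given $\mathcal I_M$, is bounded by $H(R(X)\mid\mathcal I_M)=0$.
\item \textbf{Non-vacuous case.} Under Definition~\ref{def:nonvacuous-query}, the contrapositive of the lemma says $R(X)$ is not a.s. $\mathcal I_M$-measurable. So there is a non-null set of meta-states on which the conditional verdict law is nondegenerate. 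That residual uncertainty is exactly the verdict-relevant information outside the meta level's $\sigma$-algebra, and it is the formal content of ``information remains outside access.''
\end{itemize}

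The main obstacle is not mathematical depth but interpretation. The informal phrases ``verdict-relevant information remains outside access'' and ``directly retrievable'' have to be pinned to a precise meaning. I would fix them as the failure and the holding, respectively, of a.s. $\mathcal I_M$-measurability of $R(X)$. The only technical care needed is using regular conditional probabilities given a $\sigma$-algebra rather than given a random variable. Finiteness of $\mathcal V$ makes this unproblematic and guarantees that the selector $f$ is measurable.
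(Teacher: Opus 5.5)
Your proposal is correct and follows the paper's route: both halves reduce to the equivalence between $H(R(X)\mid\mathcal I_M)=0$ and $\mathcal I_M$-measurability of $R(X)$. The paper treats that equivalence as immediate from the definitions, whereas you prove it, adding the almost-sure qualifier and the measurable selector that finiteness of $\mathcal V$ provides; your version is a more careful rendering of the same argument.
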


\begin{proof}
By definition, $H(R(X)\mid \mathcal I_M)>0$ means that the target verdict lies outside the sigma-algebra of the meta-level information state alone. So some verdict-relevant information is still missing from what the meta-level state makes available. Conversely, if $H(R(X)\mid \mathcal I_M)=0$, then $R(X)$ is measurable with respect to $\mathcal I_M$, hence directly retrievable from the meta-level state. In that case the object level does not reduce the uncertainty any further.
\end{proof}

\begin{definition}[Hypothesis form of a meta-query]\label{def:hypothesis-query}
Let $A\subseteq\mathcal V$ be an acceptance set for the target verdict, and define the binary hypothesis variable
\[
\chi_A(X):=\mathbf 1_{R(X)\in A}.
\]
A meta-query about $R(X)$ is in \emph{hypothesis form} when its task is to decide whether $\chi_A(X)=1$. It is non-vacuous in this form when
\[
H(\chi_A(X)\mid \mathcal I_M)>0.
\]
Thus the meta layer asks the object layer, or a supervisor over the object layer, to decide a binary obligation whose value remains open in the accessible state, rather than retrieving a stored verdict.
\end{definition}

\begin{proposition}[Non-vacuous hypothesis tests require new access]\label{prop:hypothesis-new-access}
Suppose a hypothesis-form meta-query is non-vacuous and is resolved after observing a trace state $T_K$, so that
\[
H(\chi_A(X)\mid \mathcal I_M)>0
\quad\text{and}\quad
H(\chi_A(X)\mid \mathcal I_M,T_K)=0.
\]
Then the trace carries positive conditional mutual information about the hypothesis:
\[
I(\chi_A(X);T_K\mid \mathcal I_M)>0.
\]
\end{proposition}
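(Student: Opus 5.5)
The plan is to use the chain-rule identity for conditional mutual information,
\[
I(\chi_A(X);T_K\mid \mathcal I_M)=H(\chi_A(X)\mid \mathcal I_M)-H(\chi_A(X)\mid \mathcal I_M,T_K),
\]
and substitute the two hypotheses. The second term is $0$ by assumption and the first is strictly positive, so the difference is strictly positive and equals the prior uncertainty $H(\chi_A(X)\mid\mathcal I_M)$. The argument therefore has three steps: fix the framework, justify the identity, and substitute.

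\emph{Framework.} Conditioning on the sigma-algebra $\mathcal I_M$ is read as averaging the conditional entropies of the regular conditional law given $\mathcal I_M$. Because $\chi_A(X)$ is binary, all entropies involving it are finite, bounded by $\log 2$. Hence no $\infty-\infty$ ambiguity arises, even though $T_K$ may be state-valued with large or infinite range. If $T_K$ is not discrete, $H(\chi_A\mid \mathcal I_M,T_K)$ is still well defined as the expected entropy of the binary conditional law, and the identity uses only entropies of $\chi_A$.

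\emph{Identity.} Define $I(\chi_A;T_K\mid\mathcal I_M)$ as the $\mathcal I_M$-averaged Kullback--Leibler divergence between the joint conditional law of $(\chi_A,T_K)$ and the product of its marginals. Expanding this divergence gives $H(\chi_A\mid\mathcal I_M)-H(\chi_A\mid\mathcal I_M,T_K)$. This holds pointwise in the conditioning and then after taking expectations. Using the $\chi_A$-side expansion, rather than the $T_K$-side one, avoids any need for $H(T_K)$ to be finite.

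\emph{Substitution.} Plugging in the hypotheses gives $I=H(\chi_A\mid\mathcal I_M)-0>0$. As an optional remark, this mirrors Proposition~\ref{prop:query-information-seeking}: vanishing conditional entropy means $\chi_A$ becomes $\sigma(\mathcal I_M,T_K)$-measurable, so all the uncertainty reduction is supplied by the trace.

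The only delicate step is the measure-theoretic one: making conditioning on a sigma-algebra and a general trace variable rigorous and checking the identity there. I would handle this by reducing to the binary variable $\chi_A$ as described. If needed, I would add the mild standing assumption that $X$ ranges over a countable set of object-level possibilities, under which everything is discrete and the identity is the textbook one.
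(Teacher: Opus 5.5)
Your proposal is correct and matches the paper's proof: the paper also writes $I(\chi_A(X);T_K\mid\mathcal I_M)$ as $H(\chi_A(X)\mid\mathcal I_M)-H(\chi_A(X)\mid\mathcal I_M,T_K)$ and then substitutes the two hypotheses. Your extra measure-theoretic care, expanding on the $\chi_A$ side so that only finite entropies of the binary variable appear, is a sound refinement that the paper leaves implicit.
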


\begin{proof}
By the definition of conditional mutual information,
\[
I(\chi_A(X);T_K\mid \mathcal I_M)
=H(\chi_A(X)\mid \mathcal I_M)
-H(\chi_A(X)\mid \mathcal I_M,T_K).
\]
The first term is positive by non-vacuity, and the second term is zero by the assumed resolution at $T_K$.
\end{proof}

\begin{remark}[Why the hypothesis reading matters]\label{rem:hypothesis-reading}
Definition~\ref{def:hypothesis-query} is the information-theoretic counterpart of the proof-interface question studied later through witness languages and projection transactions. A formal reasoning system at the boundary decides whether the present witness basis can license a binary obligation or whether an external license is required.
\end{remark}

\begin{definition}[Direct retrieval and sequential uncertainty reduction]\label{def:direct-vs-sequential}
In the setting of Definition~\ref{def:meta-access}:
\begin{enumerate}[label=(\roman*),leftmargin=1.6em]
\item the query is resolved by \emph{direct retrieval} if
\[
H(R(X)\mid \mathcal I_M,T_0)=0;
\]
\item the query is resolved by \emph{sequential uncertainty reduction} if
\[
H(R(X)\mid \mathcal I_M,T_0)>0
\quad\text{and}\quad
H(R(X)\mid \mathcal I_M,T_K)=0
\]
for some $K\ge 1$.
\end{enumerate}
Thus direct retrieval exposes a verdict already available at the initial access point, whereas sequential uncertainty reduction requires a trace with at least one internal step before the verdict becomes determined.
\end{definition}

\begin{corollary}[Direct retrieval terminates at the initial access point]\label{cor:direct-not-sequential}
A direct-retrieval system returns a verdict already fixed at the initial access point; internal progression contributes zero further uncertainty reduction.
\end{corollary}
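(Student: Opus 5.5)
The plan is to unpack Definition~\ref{def:direct-vs-sequential}(i) and use monotonicity of conditional entropy. Direct retrieval means $H(R(X)\mid \mathcal I_M,T_0)=0$. First, I make the verdict concrete: with $\mathcal V$ finite, vanishing conditional Shannon entropy means $R(X)$ is almost surely a function of the pair $(\mathcal I_M,T_0)$. This is the same measurability step used in the proof of Proposition~\ref{prop:query-information-seeking}, now applied to the enlarged sigma-algebra $\sigma(\mathcal I_M,T_0)$. So the verdict is fixed once the initial access point is observed, which is the first clause of the corollary.

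Second, I quantify the claim that internal progression adds nothing. For any $K\ge 1$, conditioning on more information cannot increase entropy, so
\[
0\le H(R(X)\mid \mathcal I_M,T_0,T_1,\dots,T_K)\le H(R(X)\mid \mathcal I_M,T_0)=0.
\]
Hence the further uncertainty reduction contributed by the trace after $T_0$ is
\[
I(R(X);T_1,\dots,T_K\mid \mathcal I_M,T_0)
=H(R(X)\mid \mathcal I_M,T_0)-H(R(X)\mid \mathcal I_M,T_0,\dots,T_K)=0.
\]
This uses the same identity as in Proposition~\ref{prop:hypothesis-new-access}.

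Third, I note that direct retrieval and sequential uncertainty reduction are mutually exclusive, since clause (ii) of Definition~\ref{def:direct-vs-sequential} requires $H(R(X)\mid \mathcal I_M,T_0)>0$. A direct-retrieval system therefore never falls under clause (ii), which is the sense in which it "terminates" at the initial access point.

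The only delicate point is the measure-theoretic reading of "returns a verdict already fixed". Conditioning on a sigma-algebra together with a random variable needs conditional entropy on the generated sigma-algebra, and equality holds only almost surely. I would state the result modulo null sets, as Proposition~\ref{prop:query-information-seeking} implicitly does. Apart from that, the argument is routine.
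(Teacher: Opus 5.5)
Your proposal is correct and follows the paper's proof, which reads $H(R(X)\mid \mathcal I_M,T_0)=0$ off Definition~\ref{def:direct-vs-sequential}(i) and concludes directly that the trace contributes no further uncertainty reduction. You make that last step explicit through monotonicity of conditional entropy and the identity $I(R(X);T_1,\dots,T_K\mid \mathcal I_M,T_0)=0$, which is precisely the content the paper's one-line argument leaves implicit.
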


\begin{proof}
By Definition~\ref{def:direct-vs-sequential}, a direct-retrieval system already has $H(R(X)\mid \mathcal I_M, T_0)=0$. If the conditional entropy has already fallen to zero at $T_0$, then the trace contributes zero additional verdict-bearing uncertainty reduction.
\end{proof}

\begin{proposition}[Sequential resolution requires hidden verdict-relevant state]\label{prop:hidden-progress}
Suppose a non-vacuous meta-query is resolved by sequential uncertainty reduction. Then there exists some object-level state statistic $H_i=\phi(T_i)$ lying outside $\mathcal I_M$ alone prior to completion. Equivalently, a sequential resolver must carry verdict-relevant hidden state absent from the meta level.
\end{proposition}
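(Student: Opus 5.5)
The plan is to take the trace state itself, or the verdict it determines, as the statistic $H_i$. Sequential resolution gives $H(R(X)\mid \mathcal I_M,T_0)>0$ and $H(R(X)\mid \mathcal I_M,T_K)=0$ for some $K\ge 1$ (Definition~\ref{def:direct-vs-sequential}). I will set $i:=K$ and choose $\phi$ so that $\phi(T_K)$ is a version of $R(X)$. Since $R(X)$ takes values in the finite alphabet $\mathcal V$ and has zero conditional entropy given $(\mathcal I_M,T_K)$, it is almost surely a measurable function of $\mathcal I_M$ together with $T_K$. If one wants the statistic to depend on $T_K$ alone, one can simply take $\phi=\mathrm{id}$, so $H_K=T_K$.

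Next I show that this statistic lies outside $\mathcal I_M$ before completion. Suppose instead that $T_K$ (equivalently $H_K$) were $\mathcal I_M$-measurable. Conditioning on a variable that is already measurable with respect to the conditioning $\sigma$-algebra does not change conditional entropy, so $H(R(X)\mid \mathcal I_M,T_K)=H(R(X)\mid \mathcal I_M)$. The right-hand side is positive by non-vacuity (Definition~\ref{def:nonvacuous-query}), while the left-hand side is $0$ by resolution. This is a contradiction, so $T_K\notin\mathcal I_M$.

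For the hypothesis-form phrasing, it also helps to record the quantitative version. By the same computation as in Proposition~\ref{prop:hypothesis-new-access}, with $R$ in place of $\chi_A$,
\[
I(R(X);T_K\mid \mathcal I_M)=H(R(X)\mid \mathcal I_M)>0 .
\]
So the hidden statistic carries positive verdict-relevant information beyond the meta level. This gives the "verdict-relevant" clause, not merely the "outside $\mathcal I_M$" clause.

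The "prior to completion" qualifier also has to be made precise. I would read it as saying that the relevant state is not available at the initial access point. Sequential resolution gives $H(R(X)\mid\mathcal I_M,T_0)>0$, so the same argument, applied to the smallest $j\ge 1$ with $H(R(X)\mid\mathcal I_M,T_j)=0$, shows that $T_j$ is not measurable with respect to $\sigma(\mathcal I_M,T_0)$. Hence the hidden state is generated by the internal progression, not merely supplied at $T_0$; this also distinguishes the situation from Corollary~\ref{cor:direct-not-sequential}.

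The main obstacle is measure-theoretic bookkeeping, not the idea. Zero conditional entropy yields measurability only almost surely, and the claim "$\phi(T_i)$ lies outside $\mathcal I_M$" must be read modulo null sets. I would handle this by working throughout with completed $\sigma$-algebras. With that convention the two steps above are routine applications of the chain rule for conditional entropy.
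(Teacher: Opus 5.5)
Your argument is correct and is essentially the paper's. The paper assumes every trace statistic is $\mathcal I_M$-measurable and deduces $H(R(X)\mid\mathcal I_M,T_K)=H(R(X)\mid\mathcal I_M)>0$, contradicting resolution at $T_K$; that is your contradiction with the explicit witness $\phi=\mathrm{id}$, $i=K$. Your use of $H(R(X)\mid\mathcal I_M,T_0)>0$ to show the resolving state is not even $\sigma(\mathcal I_M,T_0)$-measurable is a small strengthening the paper does not make. You should drop the opening idea of a $\phi$ with $\phi(T_K)$ a version of $R(X)$, because $R(X)$ is in general determined only jointly by $(\mathcal I_M,T_K)$, so it need not be a function of $T_K$ alone.
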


\begin{proof}
Assume for contradiction that every verdict-relevant state statistic of the trace is measurable with respect to $\mathcal I_M$ alone before completion. Then conditioning on the trace adds no verdict-relevant information beyond what the meta level already has, so
\[
H(R(X)\mid \mathcal I_M,T_i)=H(R(X)\mid \mathcal I_M)
\qquad (0\le i\le K).
\]
But the query is non-vacuous, so $H(R(X)\mid \mathcal I_M)>0$, whereas sequential resolution requires $H(R(X)\mid \mathcal I_M,T_K)=0$. Contradiction. Hence some verdict-relevant trace statistic must remain inaccessible to the meta level until the object computation progresses.
\end{proof}

\begin{remark}[Information-theoretic priority of the meta layer]\label{rem:meta-info-priority}
The present section is anchored in information theory. Here the meta layer is modeled by accessible information $\mathcal I_M$ and verdict entropy $H(R(X)\mid \mathcal I_M)$. An object layer then enters as a trace that may reduce that entropy. The formal argument here needs only accessibility, conditional entropy, and sequential trace structure.
\end{remark}

\begin{proposition}[Information content of the hidden-progress coordinate]\label{prop:hidden-progress-entropy}
Let $K\in\mathbb N$ be fixed and let $j$ denote the progress index along the canonical trace $t_0,t_1,\dots,t_K$. Under a uniform prior on $j\in\{0,1,\dots,K\}$, the Shannon entropy of the progress coordinate is
\[
H(j) = \log_2(K+1).
\]
The meta layer can recover $j$ through each of the four supervisory observer channels, at the following Shannon costs:
\begin{enumerate}[nosep,leftmargin=1.6em]
\item \emph{History observation}: store the full state sequence $t_0,\dots,t_j$. Cost: $\Theta(K^2|b|)$ bits.
\item \emph{Structural counting}: count $\#_G(t_j)$ on the state $t_j$. Cost: $\lceil\log_2(K+1)\rceil$ bits.
\item \emph{Origin comparison}: compare the stage counter $S^{K-j}(0)$ with the initial $S^K(0)$. Cost: $\lceil\log_2(K+1)\rceil$ bits.
\item \emph{Parallel clocking}: maintain an external counter of rule firings. Cost: $\lceil\log_2(K+1)\rceil$ bits.
\end{enumerate}
Channels (2) to (4) are Shannon-optimal, each matching the entropy lower bound $\lceil\log_2(K+1)\rceil$. Channel (1) pays a quadratic overcost by storing the full trace rather than a single counter.
\end{proposition}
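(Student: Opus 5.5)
The proof splits into three parts: compute the entropy, bound the cost of each of the four channels, and compare those costs with the lower bound. I will use the canonical trace of $F(a,b,S^K(0))$ described in the introduction. At stage $j$ the state $t_j$ has the counter reduced to $S^{K-j}(0)$, and the recursive call sits under $j$ nested wrappers $G(b,\cdot)$. Three facts about $t_j$ do the work: $\#_G(t_j)=j$, the residual counter $S^{K-j}(0)$ determines $j$ once $K$ is known, and $|t_j|=\Theta(j\,|b|+K)$.

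\textbf{Step 1: entropy.} Under the uniform prior on $\{0,\dots,K\}$, $H(j)=-\sum_{j=0}^{K}\frac{1}{K+1}\log_2\frac{1}{K+1}=\log_2(K+1)$. Any encoding that lets the meta layer recover $j$ must distinguish $K+1$ equiprobable values. So a fixed-length recording needs at least $\lceil\log_2(K+1)\rceil$ bits, and by the source-coding bound no encoding beats $H(j)$ on average. This gives the lower bound that the optimality claim refers to.

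\textbf{Step 2: channels (2)--(4).} Each channel stores exactly one integer in $\{0,\dots,K\}$:
\begin{itemize}
\item for structural counting, the value $\#_G(t_j)=j$;
\item for origin comparison, the difference between the unary lengths of $S^K(0)$ and $S^{K-j}(0)$, which is again $j$ (with $K$ taken as a fixed parameter available to the meta layer);
\item for parallel clocking, a counter incremented once per firing of the recursive rule, which after $j$ firings reads $j$.
\end{itemize}
A binary register of $\lceil\log_2(K+1)\rceil$ bits holds such a value. Each of these is therefore recovered at a cost equal to the lower bound from Step 1, which is the Shannon-optimality claim. In each case I only need to check that the recorded integer is a bijective image of $j$, so that nothing is lost.

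\textbf{Step 3: history observation, the main obstacle.} Storing $t_0,\dots,t_j$ costs $\sum_{i\le j}|t_i|$ under a fixed symbol coding. In the worst case $j=K$ (or on average over the uniform prior) this is $\sum_{i\le K}\Theta(i|b|+K)$. The $i|b|$ terms give the triangular growth $\Theta(K^2|b|)$, matching the confessed-burden count $\frac{(K+1)(K+2)}{2}|b|$ from the introduction.

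The obstacle is that ``cost'' has to be read as the raw syntactic storage of the trace under the fixed symbol coding, not its compressed information content. Under compression the whole history is a deterministic function of $(a,b,K,j)$ and would collapse to about $\log_2(K+1)$ bits. I will therefore state this reading explicitly, as the same coding model on payload positions used for the Shannon-style validator. Under that model, storing the history verbatim gives $\Theta(K^2|b|)$ bits, an overcost of order $K^2|b|/\log K$ relative to the bound, which is the quadratic overcost claimed in the statement. The remaining bookkeeping is to treat $|a|$, $|b|$ and the per-symbol coding constant as fixed, so that the $\Theta$ bounds depend only on $K$ and $|b|$.
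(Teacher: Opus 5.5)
Your proposal is correct and follows essentially the paper's route. You read the entropy off the uniform prior, and channels (2)--(4) each store a single integer in $\{0,\dots,K\}$ in a $\lceil\log_2(K+1)\rceil$-bit register (via $\#_G(t_j)=j$, the $j$-layer counter gap, and a firing counter), while channel (1) comes from the size law $|t_j|=j(|G|+|b|)+(K-j)+c_\ast$ summed over the stored states. Your exact triangular sum also supplies the $\Omega(K^2|b|)$ half that the paper's ``each state is at most $|t_K|$, and there are $K+1$ of them'' bound leaves implicit. Reading cost as verbatim weighted syntax size is the right choice, but that is the paper's structural-size convention, not the gauge-orbit coding model of the Shannon validator that you cite for it.
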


\begin{proof}
Under uniform prior $\pi(j)=1/(K+1)$, the Shannon entropy is $H(j)=-\sum_{j=0}^{K}\pi(j)\log_2\pi(j)=\log_2(K+1)$.

\emph{Channel costs.} (1) Each intermediate state $t_j$ has size $|t_j|=j(|G|+|b|)+(K-j)+c_\ast$, bounded by $|t_K|=\Theta(K|b|)$; storing all $K+1$ states costs $\Theta(K^2|b|)$. (2) $\#_G(t_j)=j$ by Proposition~\ref{prop:trace-law}; encoding $j\in\{0,\dots,K\}$ requires $\lceil\log_2(K+1)\rceil$ bits. (3) $S^{K-j}(0)$ differs from $S^K(0)$ by $j$ successor layers; decoding is $\lceil\log_2(K+1)\rceil$ bits. (4) An external register ranging over $\{0,\dots,K\}$ requires $\lceil\log_2(K+1)\rceil$ bits. Optimality of (2) to (4) follows because the entropy lower bound $\lceil\log_2(K+1)\rceil$ is achieved.
\end{proof}

\begin{corollary}[Meta-trace mutual information at terminal]\label{cor:meta-trace-mi}
Let $K$ be uniform on $\{1,\dots,K_{\max}\}$ in the meta layer's prior. The mutual information between $\mathcal I_M$ and the terminal record $T_{K+1}$ is
\[
I(\mathcal I_M; T_{K+1}) \;\ge\; \log_2 K_{\max},
\]
achieved when $\mathcal I_M$ observes only $T_{K+1}$ and infers $K=\#_G(T_{K+1})$ by Proposition~\ref{prop:live-vs-record}(3).
\end{corollary}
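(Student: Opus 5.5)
The plan is to reduce the claim to a chain of standard Shannon identities, with Proposition~\ref{prop:live-vs-record}(3) supplying the one structural fact we need. That fact is that the terminal record $T_{K+1}$ determines $K$ through the count $\#_G(T_{K+1})$. We will read $\mathcal I_M$ as the meta layer's observation after the trace completes, and ultimately take it to be the channel that observes $T_{K+1}$. The first step is to write
\[
I(\mathcal I_M;T_{K+1}) = H(T_{K+1}) - H(T_{K+1}\mid \mathcal I_M).
\]
We then bound this from below by the information either side carries about $K$.

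\textbf{Key steps.}
\begin{enumerate}[label=(\arabic*),leftmargin=1.6em]
\item By Proposition~\ref{prop:live-vs-record}(3), $K=\#_G(T_{K+1})$ is a deterministic function of $T_{K+1}$. So $H(K\mid T_{K+1})=0$.
\item In the achieving configuration $\mathcal I_M=\sigma(T_{K+1})$, so $H(T_{K+1}\mid\mathcal I_M)=0$. Hence $I(\mathcal I_M;T_{K+1})=H(T_{K+1})$.
\item Since $K$ is a function of $T_{K+1}$, data processing (equivalently, monotonicity of entropy under functions) gives $H(T_{K+1})\ge H(K)$.
\item Under the uniform prior on $\{1,\dots,K_{\max}\}$, $H(K)=\log_2 K_{\max}$. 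This is the same computation as in Proposition~\ref{prop:hidden-progress-entropy}, with $K_{\max}$ values in place of $K+1$.
\end{enumerate}
Chaining these gives $I(\mathcal I_M;T_{K+1})\ge\log_2 K_{\max}$.

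For the achievability clause, we note that the bound is tight on the $K$-relevant part. The meta layer recovers $K$ exactly by the structural-counting channel of Proposition~\ref{prop:hidden-progress-entropy}(2), so the verdict-relevant information $I(\mathcal I_M;K)=H(K)=\log_2K_{\max}$ is attained.

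The main obstacle is interpretive rather than computational. The statement treats $\mathcal I_M$ both as a sigma-algebra and as a random variable, and the inequality only makes sense once $\mathcal I_M$ is fixed to be (at least) the post-completion observation of $T_{K+1}$. If $\mathcal I_M$ were the pre-trace information state, as in Definition~\ref{def:meta-access}, the bound could fail, and non-vacuity would even suggest $K$ is not $\mathcal I_M$-measurable. We would therefore state this identification explicitly at the outset and restrict the claim to observers whose information refines $\sigma(T_{K+1})$, or at least $\sigma(\#_G(T_{K+1}))$. For any such observer, the same data-processing chain gives $I(\mathcal I_M;T_{K+1})\ge I(\mathcal I_M;K)=H(K)$. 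The remaining task is to check that Proposition~\ref{prop:live-vs-record}(3) indeed gives exact recovery of $K$ from the terminal record, with no off-by-one between the trace length $K+1$ and the $G$-count.
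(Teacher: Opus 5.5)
Your proposal is correct and follows essentially the paper's route. The uniform prior gives $H(K)=\log_2 K_{\max}$, Proposition~\ref{prop:live-vs-record}(3) makes $K=\#_G(T_{K+1})$ recoverable so the posterior entropy is zero, and the full prior entropy is the information gained. Your explicit reading of $\mathcal I_M$ as the post-completion observation of $T_{K+1}$, together with the data-processing step $I(\mathcal I_M;T_{K+1})\ge I(\mathcal I_M;K)=H(K)$, makes precise what the paper's one-line proof leaves implicit.
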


\begin{proof}
Prior entropy of $K$ under uniform distribution on $\{1,\dots,K_{\max}\}$ is $\log_2 K_{\max}$. Observing $T_{K+1}$ reveals $K=\#_G(T_{K+1})$; posterior entropy is zero; mutual information gain is the full prior entropy.
\end{proof}

\begin{remark}[Why the channels differ by quadratic factor]
Proposition~\ref{prop:hidden-progress-entropy} quantifies the Shannon-optimal recovery cost of hidden progress. Channels (2) to (4) compress the progress coordinate to its information-theoretic minimum $\lceil\log_2(K+1)\rceil$ bits, while channel (1) redundantly stores the full trace structure. The quadratic overcost of history observation mirrors the confession-dominance law of Proposition~\ref{prop:confession-dominance}: both reflect the cost of treating trace states as independent information rather than as a single coordinate realized in $K+1$ poses. The two quadratic growth rates have the same structural origin: carrier multiplicity promoted to verdict-relevant signal.
\end{remark}

\begin{definition}[Terminal record map for the duplicator]\label{def:terminal-record-map}
For the primitive self-duplicating recursor, define the terminal record map
\[
\mathcal R_K(X,Y):=G^K(Y,X).
\]
It records the terminal output after $K$ recursive unfoldings.
\end{definition}

\begin{proposition}[Live computation versus terminal record at the duplicator]\label{prop:live-vs-record}
Along the canonical trace of the primitive self-duplicating recursor,
\[
t_i=G^i\!\bigl(Y,F(X,Y,S^{K-i}(0))\bigr)
\qquad (0\le i\le K),
\]
the following hold:
\begin{enumerate}[leftmargin=1.4em]
\item the unique active occurrence of $F$ is the live computation site and is the only locus at which further recursive rule firings occur;
\item the visible multiplicity of $G$-frames is $i$ at stage $i$ and therefore records progress retrospectively;
\item the terminal normal form is
\[
t_{K+1}=G^K(Y,X)=\mathcal R_K(X,Y),
\]
whose active $F$-occurrence has been consumed while the record multiplicity $\#_G(t_{K+1})=K$ retains full depth information.
\end{enumerate}
Hence the duplicator separates live computation from terminal record: the computation channel disappears at termination, while the record channel persists and can be read only retrospectively.
\end{proposition}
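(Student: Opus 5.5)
I will prove the statement by induction along the trace, using only the two rewrite rules $F(x,y,Z)\to X$-style base case $F(x,y,Z)\to x$ and step case $F(x,y,S(n))\to G(y,F(x,y,n))$.

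\textbf{Trace shape.} First I establish the closed form $t_i=G^i\bigl(Y,F(X,Y,S^{K-i}(0))\bigr)$ by induction on $i$.
\begin{itemize}
\item The base case $i=0$ is the initial term $F(X,Y,S^K(0))$.
\item For the step with $i<K$, the counter $S^{K-i}(0)$ has the form $S(n)$ with $n=S^{K-i-1}(0)$. Firing the step rule at the $F$-redex, inside the context $G^i(Y,\square)$, yields $G^{i+1}\bigl(Y,F(X,Y,S^{K-i-1}(0))\bigr)$.
\item At $i=K$ the counter is $Z=0$. The base rule then gives $t_{K+1}=G^K(Y,X)$, which is $\mathcal R_K(X,Y)$ by Definition~\ref{def:terminal-record-map}.
\end{itemize}
For item~(3) I must also check that $G^K(Y,X)$ is a normal form. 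I will assume the usual convention that $X,Y$ are $F$-free, so no redex remains. If the paper's $X,Y$ may contain $F$, I will restrict to the canonical trace on $F$-free data.

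\textbf{Item (1).} In each $t_i$, the symbol $F$ occurs exactly once. The context $G^i(Y,\square)$ contributes no $F$, and $X,Y$ are $F$-free. Both rules have $F$ at the root of their left-hand side, and $G$ heads no rule, so every redex must be rooted at an $F$-occurrence. Hence that unique occurrence is the only site where either rule can fire. This makes the trace deterministic, up to the choice that does not exist.

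\textbf{Item (2).} The count $\#_G(t_i)=i$ follows because each step-rule firing adds exactly one outer $G$-frame and removes none. Also, $Y$ and $X$ are assumed $G$-free, or else I count only the spine frames. Either way the spine count is $i$. The base firing preserves the $G$-spine, so $\#_G(t_{K+1})=K$, while $F$ has disappeared. This gives the remaining half of item~(3).

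\textbf{Separation conclusion.} The closing claim is interpretive. I will justify it by noting that the only channel through which further computation happens, the $F$-occurrence, is absent from $t_{K+1}$. The $G$-multiplicity, however, is monotone along the trace and equals the number of past step firings. It is therefore readable only as a count of completed work, which is what "retrospectively" means.

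\textbf{Main obstacle.} The main difficulty is not the computation but fixing the conventions: that $X,Y$ contain no $F$ (and, for the count, no $G$), and that $\#_G$ counts spine frames. I will state these assumptions explicitly at the start so that uniqueness and the exact counts hold.
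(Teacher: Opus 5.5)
Your proposal is correct and takes essentially the paper's route: the paper cites the canonical trace law (Proposition~\ref{prop:trace-law}, proved by the same induction on $i$ with the inner $F$-site as the only redex), then reads off the single active $F$-site, the $i$ visible $G$-frames, and the terminal record $G^K(Y,X)$ with $K$ frames. Your explicit conventions ($X,Y$ free of $F$ and $G$, or counting only spine frames) make precise what the paper leaves implicit, and the restriction to $F$-free data is stronger than you need: the trace law's hypothesis that $X,Y$ are normal forms already makes the active site the unique redex, since every redex is rooted at an $F$ and every subterm of a normal form is normal.
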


\begin{proof}
The canonical trace law (Proposition~\ref{prop:trace-law}) gives
\[
t_i=G^i\!\bigl(Y,F(X,Y,S^{K-i}(0))\bigr)
\quad(0\le i\le K),
\qquad
t_{K+1}=G^K(Y,X).
\]
At every nonterminal stage there is a single active $F$-occurrence, namely the innermost recursive site. The number of visible $G$-frames is $i$ by inspection of the trace expression. At termination the base rule removes the last active $F$-site and leaves the record form $G^K(Y,X)$, whose $G$-multiplicity is $K$. So the final state retains retrospective record and full depth information while the live computation site that generated it has been consumed.
\end{proof}

\begin{proposition}[Terminal-record completeness and the zero-depth boundary]\label{prop:terminal-record-complete-quant}
Assume the base and payload slots are separately sorted, so that base terms and $G$-frames belong to distinct sorts. Then for every $K\ge 1$ the terminal record $G^K(Y,X)$ admits a decoder that recovers the base identifier, payload identifier, and depth $K$, and the positive-depth terminal-record map is injective in all three data. The sort separation is load bearing rather than bookkeeping: once the base slot may itself be frame headed, $G^K(Y,G(Y,X'))=G^{K+1}(Y,X')$, so both the depth and the base term escape recovery, and the mechanized decoder is stated on the sorted schema syntax for that reason (Appendix~\ref{app:module-map}). At $K=0$ the record is $X$ alone, so the unused payload identifier is absent. If $\beta=|Y|$, the same computation also satisfies
\[
K^2\beta\le 2\,\mathrm{Con}(K,Y).
\]
Thus quadratic confessed carrier mass is compatible with full terminal reconstruction. The corresponding constant-overhead conditional-description reading remains an interpretation of the decoder rather than a formalization of Kolmogorov complexity.
\end{proposition}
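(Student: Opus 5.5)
The plan is to prove the three clauses separately: decoding on the sorted syntax, failure of decoding once the base slot may be frame headed, and the quadratic inequality. \textbf{Decoder construction.} On the sorted syntax, define the decoder by structural recursion on the terminal record. Peel the outermost $G$-frames: while the head symbol is $G$, record the first argument, increment a depth counter, and descend into the second argument. Because base terms and $G$-frames have distinct sorts, a base-sorted term cannot be $G$-headed. So the descent stops exactly when it reaches the base slot, and the residual term is the base identifier $X$. The counter then equals $K$. For $K\ge 1$, the first argument of the outermost frame is the payload $Y$; all frames carry the same $Y$ by the trace law of Proposition~\ref{prop:live-vs-record}. Injectivity of $(X,Y,K)\mapsto G^K(Y,X)$ for $K\ge1$ then follows because the decoder is a left inverse on that domain. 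Equivalently, one can induct on $K$ using injectivity of the term constructor $G$ together with the sort argument, which rules out $G^{K}(Y,X)=G^{K'}(Y',X')$ with $K\neq K'$: peeling $\min(K,K')$ frames would equate a base-sorted term with a $G$-headed one.

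\textbf{Load-bearing remark and the $K=0$ boundary.} For the unsorted case it suffices to exhibit the counterexample stated in the proposition. Take $X=G(Y,X')$; then $G^K(Y,X)$ and $G^{K+1}(Y,X')$ are literally the same term. Two distinct triples $(X,Y,K)\ne(X',Y,K+1)$ therefore share a record, so no decoder can recover both depth and base. At $K=0$, Definition~\ref{def:terminal-record-map} gives $\mathcal R_0(X,Y)=X$, which contains no occurrence of $Y$. Hence $\mathcal R_0(X,Y)=\mathcal R_0(X,Y')$ for all $Y,Y'$, and the payload is unrecoverable.

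\textbf{Quadratic bound.} Here I rely on the confession-burden quantity $\mathrm{Con}(K,Y)$ from the confession-dominance analysis, which the introduction states to be the triangular number $\tfrac{(K+1)(K+2)}{2}\,|Y|$. With $\beta=|Y|$, the claim $K^2\beta\le 2\,\mathrm{Con}(K,Y)$ reduces to $K^2\le (K+1)(K+2)$, which holds trivially for all $K\ge0$. If $\mathrm{Con}$ is instead defined as the summed payload mass $\sum_{i\le K} i\beta$ over the trace, the bound reduces to $K^2\le K(K+1)$, which also holds. The sentence about constant-overhead conditional description is interpretive and needs no proof.

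The main obstacle is not mathematical. It is pinning down exactly which normalization of $\mathrm{Con}$ the paper uses and confirming that the decoder argument genuinely uses the sort discipline rather than silently assuming it. I would check the $\mathrm{Con}$ definition first and adjust the arithmetic accordingly.
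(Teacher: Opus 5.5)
Your proposal is correct and follows essentially the paper's route: peel the $G$-frames, using the sort discipline to stop at the base slot; count them to recover $K$; read off $Y$ and $X$; and obtain the mass inequality as the lower envelope $K^2\le (K+1)(K+2)$ of the triangular formula $\mathrm{Con}(K,Y)=\tfrac{(K+1)(K+2)}{2}|Y|$ from Proposition~\ref{prop:confession-dominance}. The only difference is minor. The paper's decoder additionally checks that every frame carries the same payload and rejects mismatches, so it is exact on all terms rather than merely a left inverse on the image; that extra check is not needed for the recovery and injectivity claims you prove.
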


\begin{proof}
The decoder peels the $G$-frames, requires every frame to carry the same payload identifier, counts their number, and reads the terminal base identifier. At positive depth this returns $(X,Y,K)$; frames with differing payloads are rejected. The mass inequality is the lower integer envelope of the triangular formula for $\mathrm{Con}(K,Y)$.
\end{proof}

\begin{corollary}[Minimal clean witness of live-state / record separation]\label{cor:min-live-record}
Combining Proposition~\ref{prop:live-vs-record} with Theorem~\ref{thm:min-instance}, the primitive self-duplicating recursor is the minimal structurally complete instance in the analyzed family at which the separation between live computation and terminal record is required for the proof. Direct whole-term proof languages fail because they try to certify the whole evolving carrier rather than the residual verdict-bearing coordinate.
\end{corollary}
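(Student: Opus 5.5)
The corollary combines two ingredients, and I will treat each as a separate obligation before joining them. The first is Proposition~\ref{prop:live-vs-record}, which shows that on the duplicator the live computation site and the terminal record are distinct channels. The single active $F$-occurrence carries the verdict-bearing counter descent, while the $G$-frames accumulate as an inert record. The second is Theorem~\ref{thm:min-instance}, the minimum-instance theorem stated later in the paper. I assume it says that, within the analyzed family, the duplicator is the unique structurally complete member for which the confession step (projection away from the step-argument dimension) is required. The plan is to show that ``separation between live computation and terminal record is required for the proof'' is the same event as ``the confession step is required.'' Minimality and uniqueness then transfer directly from Theorem~\ref{thm:min-instance}.

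\textbf{Forward direction.} On the duplicator, any successful termination proof must isolate the live site from the record. I would argue this from the confession side. The DP projection keeps only the pair $F^\sharp(x,y,S(n))\to F^\sharp(x,y,n)$, and its subterm-criterion descent reads only the counter coordinate $S^{K-i}(0)$ at the live site of $t_i$. Every $G$-frame, which by Proposition~\ref{prop:live-vs-record}(2) is exactly the record channel, is discarded. Conversely, a proof that does not separate the two channels certifies the whole evolving carrier. It is therefore a direct whole-term measure, and the companion barrier package from \cite{rahnamaOrientation}, summarized here as operational inexpressibility at the step-argument dimension, rules it out. This step also gives the second sentence of the corollary. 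Direct whole-term languages fail precisely because their measure aggregates the record mass, which grows quadratically as in the bound $K^2\beta\le 2\,\mathrm{Con}(K,Y)$ of Proposition~\ref{prop:terminal-record-complete-quant}, instead of the linear residual coordinate.

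\textbf{Reverse direction.} For every other structurally complete member of the family, the separation is not required. Here I use the contrapositive of Theorem~\ref{thm:min-instance}: for those members a direct measure already succeeds. A direct measure certifies the whole term and so never distinguishes live from record state. The separation is therefore dispensable for those members, which yields both uniqueness and minimality.

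\textbf{Main obstacle.} The hard step is making ``separation is required for the proof'' precise enough to coincide with ``confession is required.'' Proposition~\ref{prop:live-vs-record} is descriptive: it shows the separation exists in the trace, not that proofs must use it. My first attempt would be to define a proof as using the separation exactly when its certificate factors through a projection that deletes the $G$-frame positions, which is the certified-forgetting witness interface shared by the confession methods. The equivalence with confession-requiredness then holds by definition plus the escape trichotomy of \cite{rahnamaOrientation}. If construction methods such as MPO or nonlinear polynomials also succeed on the duplicator, I would restrict the claim to the direct proof language plus licensed projections. In that language the only available escape is the confession, matching the scope in which Theorem~\ref{thm:min-instance} is stated.
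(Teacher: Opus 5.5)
Your skeleton matches the paper's: Proposition~\ref{prop:live-vs-record} supplies the trace-level split between the live $F$-site and the $G$-record, and Theorem~\ref{thm:min-instance} supplies uniqueness and minimality among structurally complete members. The paper's proof is just that juxtaposition.

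The step that fails is the ``conversely'' half of your forward direction. A proof that does not separate the live site from the record is \emph{not} thereby a direct whole-term measure, and this is not hypothetical. The paper itself records imported-whole $\mathcal W_1$ witnesses for the duplicator, namely a nonlinear polynomial interpretation and a specialized MPO (Proposition~\ref{prop:boundary-kappa}, Corollary~\ref{cor:universal-oi}, Theorem~\ref{thm:ag-rca} Step~4). These orient the full rule, wrapper and payload copy included, without any projection.

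So ``every successful termination proof must isolate the live site'' is false. Your assumed reading of Theorem~\ref{thm:min-instance} as ``the confession step is required'' is also not what that theorem says; it says the \emph{direct} witness set is empty. Your closing fallback is therefore mandatory, not a contingency. Its right form is to read ``required for the proof'' relative to $\mathcal W_0$, rather than to shrink the ambient proof language.

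With that reading, the extra machinery is unnecessary: you do not need to define separation-use via factoring through a $G$-deleting projection, nor invoke the escape trichotomy. Direct whole-term witnesses are exactly those that cannot exploit the split. Theorem~\ref{thm:min-instance} says they fail on the duplicator and succeed on the linear complete member, and that is the whole argument.

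Separately, your justification of the corollary's second sentence points at the wrong quantity. $\mathrm{Con}(K,Y)$ and the envelope $K^2\beta\le 2\,\mathrm{Con}(K,Y)$ of Proposition~\ref{prop:terminal-record-complete-quant} sum payload mass over \emph{all} stages of the trace. A direct measure, by contrast, is evaluated on one term at a time and must decrease across each rule firing, and the mass of a single stage grows only linearly in $i$ (Proposition~\ref{prop:exact-runtime-mass-rate}).

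The actual obstruction is per-rule and uniform in the substitution. An additive orientation of $F(x,y,S(n))\to G(y,F(x,y,n))$ forces $\mu(S)>\mu(y)$, which an unbounded pump on $y$ refutes. This is shown by the proposition on why additive whole-term aggregation fails on the duplicating step, and by the companion barriers behind Theorem~\ref{thm:canonical-instance}. That is the sense in which certifying the whole evolving carrier fails: the carrier includes the freshly emitted record copy of $y$, which no fixed counter decrement can pay for.
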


\begin{proof}
Proposition~\ref{prop:live-vs-record} gives the trace-level separation between the live $F$-channel and the persistent $G$-record. Theorem~\ref{thm:min-instance} identifies the duplicator as the unique structurally complete minimal instance in the analyzed family at which direct whole-term methods become operationally inexpressible. Together these statements show that the duplicator is the first clean witness at which hidden-progress sequentiality and terminal-record persistence become proof-theoretically unavoidable.
\end{proof}

\begin{remark}[Why the meta layer asks at all]\label{rem:why-ask}
A non-vacuous meta-query is an information-seeking act: it is posed because the verdict lies outside retrieval from the meta-level state. In the duplicator, the object layer resolves that uncertainty through a sequential trace whose live progress is carried by the active $F$-site and only retrospectively encoded by the persistent $G$-stack. The final normal form is therefore a terminal record of a computation whose progress remained hidden while the computation was still running, rather than live computation made transparent.
\end{remark}

\subsection{Equivalent direct-access formalization}

The same content admits a direct access-based formulation, often a more direct way to state the conceptual point. The remainder of this subsection records it.

\begin{definition}[Non-vacuous meta-query]\label{def:nonvacuous-meta-query-direct}
Let $M$ be a meta-level reasoning system with accessible state $\sigma_M$, and let $O$ be an object-level system posed on input $x$. A query from $M$ to $O$ about a target verdict $R(x)$ is \emph{non-vacuous} if $R(x)$ lies outside the functions of $\sigma_M$ alone:
\[
R(x) \neq f(\sigma_M)
\qquad\text{for every meta-level retrieval map }f.
\]
Equivalently, the meta-level must consult the object level because some verdict-relevant information remains absent from its own accessible state.
\end{definition}

\begin{proposition}[Information-seeking character of non-vacuous query]\label{prop:info-seeking-direct}
If a meta-level query to an object-level system is non-vacuous, then verdict-relevant information exists beyond the reach of the meta-level state.
\end{proposition}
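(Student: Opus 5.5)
The plan is to argue directly from Definition~\ref{def:nonvacuous-meta-query-direct}, in the same way as the entropy version, Proposition~\ref{prop:query-information-seeking}, but with measurability replaced by functional dependence on $\sigma_M$. I will proceed by contraposition. Suppose all verdict-relevant information lies within the reach of the meta-level state. I read this as saying that the verdict $R(x)$ is determined by $\sigma_M$: whenever two inputs induce the same accessible state, they receive the same verdict.

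The first step turns this determination into an explicit retrieval map. Define $f(\sigma):=R(x)$ for any $x$ whose accessible state is $\sigma$. Determination makes this well defined, so $R(x)=f(\sigma_M)$. This contradicts non-vacuity, which says $R(x)\neq f(\sigma_M)$ for every meta-level retrieval map $f$. Hence some verdict-relevant information lies beyond $\sigma_M$.

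The second step makes that information explicit. The witness is any pair of inputs $x,x'$ that agree on $\sigma_M$ but satisfy $R(x)\neq R(x')$. The data that distinguishes $x$ from $x'$ is then the verdict-relevant information that $\sigma_M$ does not contain.

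The main obstacle is interpretive, not technical. The definition quantifies over all retrieval maps $f$ but states the condition pointwise at a single $x$. Read literally at one point, the condition $R(x)\neq f(\sigma_M)$ fails for a suitable constant $f$. I would therefore read "every retrieval map $f$" uniformly over the family of inputs compatible with $\sigma_M$, matching the role of $X$ in Definition~\ref{def:meta-access}. Under that reading non-vacuity is exactly the failure of $R$ to factor through $\sigma_M$, and the contrapositive above is immediate. I would make this reading explicit at the start of the proof, so that the result is visibly the deterministic counterpart of Proposition~\ref{prop:query-information-seeking}, with $H(R\mid\mathcal I_M)>0$ replaced by non-factorization.
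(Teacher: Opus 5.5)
Your argument is correct and is essentially the paper's proof, which simply unfolds Definition~\ref{def:nonvacuous-meta-query-direct}: every retrieval map from $\sigma_M$ alone misses $R(x)$, hence verdict-relevant information is absent from the meta-level state. Your contrapositive construction of $f$ and the distinguishing pair $x,x'$ spell out that unfolding. Your remark about the literal pointwise reading is also fair, and the paper's one-line proof passes over it: the constant map with value $R(x)$ always succeeds, so under that reading the hypothesis could never hold, and only your uniform non-factorization reading gives the statement content.
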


\begin{proof}
By Definition~\ref{def:nonvacuous-meta-query-direct}, a non-vacuous query is one where every retrieval map from $\sigma_M$ alone misses the target verdict $R(x)$. Hence some verdict-relevant information is absent from the meta-level's accessible state.
\end{proof}

\begin{definition}[Sequential object computation]\label{def:sequential-object-computation}
An object-level computation on input $x$ is \emph{sequential} if its evaluation passes through a trace
\[
t_0(x), t_1(x), \dots, t_K(x)
\]
with $K \ge 1$, where the target verdict lies beyond direct retrieval from $t_0(x)$ in the meta-level language, and where at least one intermediate state $t_i(x)$ with $0<i<K$ carries progress information still absent from the terminal output.
\end{definition}

\begin{proposition}[Sequentiality requires hidden progress]\label{prop:sequentiality-requires-hidden-progress-direct}
Let a meta-level system $M$ pose a non-vacuous query to an object-level system $O$, and suppose $O$ answers by sequential computation. Then there exists a progress coordinate $i$ or an equivalent internal state parameter such that:
\begin{enumerate}[leftmargin=1.4em]
\item the object-level state depends on $i$ during the computation;
\item $i$ remains outside the terminally accessible verdict data available to $M$ prior to completion of the computation.
\end{enumerate}
\end{proposition}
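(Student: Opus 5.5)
The plan is to take the progress coordinate to be the trace index $i$ itself. The two clauses are then read off from Definition~\ref{def:sequential-object-computation} and Definition~\ref{def:nonvacuous-meta-query-direct}, following the pattern of the entropic version, Proposition~\ref{prop:hidden-progress}.

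\emph{Clause~(1).} Because the computation is sequential, it passes through a trace $t_0(x),\dots,t_K(x)$ with $K\ge 1$. Definition~\ref{def:sequential-object-computation} also supplies an index $0<i<K$ whose state $t_i(x)$ carries progress information absent from the terminal output. I will argue that the map $i\mapsto t_i(x)$ cannot be constant on $\{0,\dots,K\}$. Suppose it were constant. Then every intermediate state would equal $t_0(x)$, and since the trace ends at $t_K(x)$, it would also equal the terminal state. The progress information carried by $t_i(x)$ would then already be present in the terminal output, which contradicts the definition. So the object-level state genuinely depends on $i$. I take $i$, or any statistic $\phi(t_i)$ that separates two trace states, as the required internal parameter.

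\emph{Clause~(2).} I argue by contradiction. Suppose $i$ were available to $M$ before completion, meaning that for each stage the relevant coordinate is a function of $\sigma_M$. Then $M$ could simulate the computation: from $\sigma_M$ it could reconstruct the verdict-bearing content of each $t_i(x)$, including the last one, and so compute $R(x)$ as $f(\sigma_M)$ for some retrieval map $f$. This contradicts non-vacuity in Definition~\ref{def:nonvacuous-meta-query-direct}. There is a second route to the same conclusion. If the coordinate were retrievable, then in particular the verdict would be retrievable from $t_0(x)$ in the meta-level language, contradicting the ``beyond direct retrieval from $t_0$'' condition of sequentiality. This is the direct-access analogue of Corollary~\ref{cor:direct-not-sequential}. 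Either route shows that $i$ lies outside what $M$ can access before completion. I would close by pointing to Proposition~\ref{prop:live-vs-record}, which exhibits the duplicator instance: there $i=\#_G(t_i)$, and it is readable only from the record, retrospectively.

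The main obstacle is Clause~(2), specifically the step from ``$M$ has access to $i$'' to ``$M$ can compute $R(x)$ as a function of $\sigma_M$''. This step needs the assumption that knowing the progress coordinate, together with $\sigma_M$, pins down the verdict-relevant part of the state. That is not literally contained in the definitions. To make it rigorous I would first try to state it as a standing convention: the trace is a deterministic function of the input data, and that data is modeled as already in $\sigma_M$ except for whatever the hidden coordinate encodes. If that convention proves too strong, I would fall back on the second route through the $t_0$ non-retrievability clause, which uses only Definition~\ref{def:sequential-object-computation} and is the safer argument.
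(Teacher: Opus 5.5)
\paragraph{Where the proposal breaks.} Your skeleton is the paper's two-horn argument: sequentiality rules out a trivial, one-shot trace, and non-vacuity rules out a meta-accessible parameter. Your Clause~(1) is fine. The gap is in Clause~(2), and it comes from choosing the bare trace index $i$ as the hidden parameter.

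The implication you need, ``$i$ available to $M$ $\Rightarrow$ $R(x)=f(\sigma_M)$'', is false for the bare index, because the index carries no verdict content. Proposition~\ref{prop:hidden-progress-entropy} gives a supervisory channel, parallel clocking, which is an external counter of rule firings costing $\lceil\log_2(K+1)\rceil$ bits. Through it the meta layer tracks $i$ during the run, and knowing how many steps have fired does not determine the answer before termination.

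Your fallback route is not safer; it is missing the same link. Nothing in Definition~\ref{def:sequential-object-computation} ties knowledge of the index to retrievability of the verdict from $t_0(x)$. The standing convention does not repair this either. If the input sits in $\sigma_M$ apart from what the bare index encodes, which is nothing, then $M$ can simulate the deterministic trace from $\sigma_M$ alone. So the convention either contradicts non-vacuity outright, if retrieval maps are arbitrary functions of $\sigma_M$, or yields a simulation rather than a retrieval map, if retrieval maps are one-shot. In neither case does it use $i$.

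\paragraph{The fix.} Use the latitude the statement gives you (``or an equivalent internal state parameter''). Take the parameter to be the verdict-bearing content of the state: $t_i(x)$ itself, or a statistic $\phi(t_i(x))$ from which $O$ reads its answer at termination, as with $H_i=\phi(T_i)$ in Proposition~\ref{prop:hidden-progress}.
\begin{itemize}
\item \emph{Clause~(2).} If $\phi(t_i(x))$ were a function of $\sigma_M$ at every stage, so would be its terminal value. Hence $R(x)=f(\sigma_M)$, contradicting Definition~\ref{def:nonvacuous-meta-query-direct}.
\item \emph{Clause~(1).} This follows as you argued, or directly: a stage-independent $\phi$ would give $\phi(t_K(x))=\phi(t_0(x))$, making the verdict retrievable from $t_0(x)$ against Definition~\ref{def:sequential-object-computation}.
\end{itemize}
These are exactly the two horns of the paper's proof, and no extra modelling convention is needed.
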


\begin{proof}
Absent such a hidden progress coordinate, either the answer would already be directly retrievable at the meta level, contradicting non-vacuousness, or the object system would be one-shot rather than sequential. Therefore sequential resolution requires an internal progress parameter that stays outside the meta level's terminal access.
\end{proof}

\begin{definition}[Direct-retrieval system]\label{def:direct-retrieval-system}
An object-level system is a \emph{direct-retrieval system} for query class $\mathcal Q$ if every query in $\mathcal Q$ is answered by a map
\[
g : x \mapsto R(x)
\]
whose evaluation halts at the initial access point, free of an internal trace carrying hidden progress.
\end{definition}

\begin{corollary}[Direct retrieval exposes an already-determined verdict]\label{cor:direct-retrieval-not-sequential}
A direct-retrieval system exposes a verdict already determined at the initial access point, with hidden progress and sequential internal uncertainty reduction both absent.
\end{corollary}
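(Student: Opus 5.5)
The plan is to argue directly from Definition~\ref{def:direct-retrieval-system}, in the same way Corollary~\ref{cor:direct-not-sequential} follows from Definition~\ref{def:direct-vs-sequential} in the entropy formulation. Fix a query in $\mathcal Q$ posed on input $x$. By hypothesis it is answered by a map $g:x\mapsto R(x)$ whose evaluation halts at the initial access point. The first step is to note that the verdict returned is $g(x)$, and that $g(x)$ is already determined once $x$ is presented at that access point. No later state enters its evaluation. This gives the first clause: the exposed verdict is already determined at the initial access point.

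The second step shows that hidden progress is absent, using Proposition~\ref{prop:sequentiality-requires-hidden-progress-direct} in the contrapositive. Suppose some query were answered with a progress coordinate $i$ on which the object-level state depends during computation, and which stays outside $M$'s access until completion. Then the evaluation would pass through a trace $t_0(x),\dots,t_K(x)$ with $K\ge 1$. That is exactly the internal trace carrying hidden progress which Definition~\ref{def:direct-retrieval-system} excludes. So no such coordinate exists.

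The third step handles sequential uncertainty reduction. By Definition~\ref{def:sequential-object-computation}, sequential computation would need a trace with $K\ge1$ and an intermediate state $t_i(x)$, $0<i<K$, carrying progress information. The evaluation of $g$ halts at $t_0$, so there are no such intermediate states, and the computation is not sequential. As an optional bridge to the entropy version, one can set $T_0=x$ so that $R(x)=g(T_0)$. Then $H(R(X)\mid\mathcal I_M,T_0)=0$, and Corollary~\ref{cor:direct-not-sequential} gives that the remaining internal progression contributes zero uncertainty reduction.

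The main obstacle is that ``halts at the initial access point'' and ``free of an internal trace carrying hidden progress'' are stated informally. The argument therefore mostly amounts to making explicit that both absence claims are built into Definition~\ref{def:direct-retrieval-system}. The only real care needed is to read ``halts at the initial access point'' as forcing $K=0$. With that reading fixed, everything else is immediate unpacking.
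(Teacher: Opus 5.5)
Your proposal is correct and takes essentially the paper's route: the paper's proof simply unfolds Definition~\ref{def:direct-retrieval-system} (evaluation halts at the initial state, so there is no internal trace and hence no sequential uncertainty reduction), and your three steps make that unfolding explicit clause by clause. One small correction: your second step does not actually use Proposition~\ref{prop:sequentiality-requires-hidden-progress-direct} ``in the contrapositive''. That proposition runs from sequential resolution to hidden progress, not from hidden progress to a trace with $K\ge 1$, and its contrapositive would not yield the absence of hidden progress. The argument you actually give stands on its own without it: a progress coordinate would require a trace with $K\ge 1$, and the definition excludes such a trace.
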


\begin{proof}
Unfold Definition~\ref{def:direct-retrieval-system}: a direct-retrieval system answers every query by a map from $\sigma_M$ alone, halting at the initial state. With no internal trace, there is no sequential uncertainty reduction to perform.
\end{proof}

\begin{theorem}[Minimal live-computation / terminal-record separation]\label{thm:min-live-computation-terminal-record}
In the primitive self-duplicating recursor
\[
F(X,Y,Z)\to X,
\qquad
F(X,Y,S(n))\to G(Y,F(X,Y,n)),
\]
the canonical trace
\[
t_i = G^i\!\bigl(Y,F(X,Y,S^{K-i}(0))\bigr)
\]
realizes the minimal clean separation between:
\begin{enumerate}[leftmargin=1.4em]
\item a live computation channel, carried by the unique active occurrence of $F$;
\item a persistent record channel, carried by the accumulated $G$-frames;
\item a hidden progress coordinate $i$, recoverable only by observer-side readout or retrospective reconstruction rather than as an object-language terminal value during the computation.
\end{enumerate}
At termination, the live computation channel disappears and only the persistent record remains:
\[
t_{K+1}=G^K(Y,X).
\]
Hence the final object exposes retrospective record rather than live computational state.
\end{theorem}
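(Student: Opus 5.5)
The plan is to assemble the theorem from three results already stated: the canonical trace law (Proposition~\ref{prop:trace-law}), the live/record separation (Proposition~\ref{prop:live-vs-record}), and the minimality statement (Theorem~\ref{thm:min-instance}, through Corollary~\ref{cor:min-live-record}). The one new ingredient is the hidden-progress reading of item~(3), which I would take from Proposition~\ref{prop:sequentiality-requires-hidden-progress-direct}.

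\textbf{Step 1: the trace identities.} I would establish the trace identities by induction on $i$.
\begin{itemize}[leftmargin=1.4em]
\item The case $i=0$ is the initial term $F(X,Y,S^K(0))$.
\item For $0\le i<K$, the inner subterm $F(X,Y,S^{K-i}(0))$ has counter $S(S^{K-i-1}(0))$. The step rule rewrites it to $G(Y,F(X,Y,S^{K-i-1}(0)))$, which gives $t_{i+1}$.
\item At $i=K$ the counter is $Z=0$, and the base rule yields $t_{K+1}=G^K(Y,X)$.
\end{itemize}
Along the way I would note that each $t_i$ with $i\le K$ contains exactly one $F$-redex. The $G$-frames contain only $Y$ and the $F$-subterm, and $X,Y$ are $F$-free, so the only redex is that $F$-occurrence and it is the unique locus of further rewriting. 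This settles item~(1). Counting the constructors in $t_i$ gives $\#_G(t_i)=i$, and the frames are never rewritten after creation, so they persist. This settles item~(2) and the final display, with the terminal $F$-count equal to zero.

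\textbf{Step 2: item~(3).} I would argue that $i$ is not an object-language value of any $t_i$. No subterm of $t_i$ is a numeral equal to $i$. The counter is $S^{K-i}(0)$, which encodes $i$ only relative to the initial $S^K(0)$. The frame stack encodes $i$ only by multiplicity, which an observer must count. So $i$ is recoverable only through the observer channels of Proposition~\ref{prop:hidden-progress-entropy}: structural counting, origin comparison, or parallel clocking. Proposition~\ref{prop:sequentiality-requires-hidden-progress-direct} then supplies the conclusion that such a coordinate must exist and stay outside the meta level's terminal access before completion. At termination the $F$-channel is consumed, and $K$ remains readable only retrospectively from $G^K(Y,X)$ (Proposition~\ref{prop:live-vs-record}(3)).

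\textbf{Step 3: minimality.} I expect this to be the main obstacle, because ``minimal clean separation'' has no self-standing formal definition here. I would read it relative to the analyzed family, exactly as in Corollary~\ref{cor:min-live-record}, and argue by dropping components. Removing the $G$-wrapper, as in $F(x,y,S(n))\to F(x,y,n)$, leaves no record channel. Removing the duplicated $y$ from $G$ gives a structurally incomplete schema that is not operationally inexpressible. Removing the recursive call leaves no live channel. Then Theorem~\ref{thm:min-instance} identifies the duplicator as the unique structurally complete member at which the separation is forced. If a sharper minimality were wanted, I would fall back on stating it as exactly this family-relative corollary rather than claiming a global minimum.
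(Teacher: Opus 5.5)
Your Step~1 is essentially the paper's proof. The paper argues only by inspecting the trace: a single active $F$-site is the sole locus of descent, the $i$ visible $G$-frames record progress retrospectively, and the base step leaves $G^K(Y,X)$. It does not argue item~(3) or minimality separately, and it leaves minimality to Corollary~\ref{cor:min-live-record}. One small point: $F$-freeness of $X,Y$ is a hypothesis you are importing (Proposition~\ref{prop:trace-law} assumes normal forms), not something the statement gives you.

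Your extra steps contain two misstatements.

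\textbf{Step~2.} The claim that no subterm of $t_i$ is a numeral equal to $i$ is false. The counter $S^{K-i}(0)$ has $S^i(0)$ as a subterm whenever $2i\le K$, and $0$ is always a subterm. The point you want is positional: no position of $t_i$ fixed independently of $K$ carries $i$. The counter holds $K-i$, and the frame stack holds $i$ only as a multiplicity. Your remaining sentences already say this, so simply drop the false one.

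\textbf{Step~3.} Deleting $y$ from the frame does not produce a structurally incomplete schema. In the classification behind Theorem~\ref{thm:min-instance}, removing duplication is the simplification that restores a direct witness (the linear complete member). Structural incompleteness arises only from deleting the base or step rule. More importantly, your variant $F(x,y,S(n))\to G(F(x,y,n))$ still has a unique live $F$-site, an accumulating frame stack, and a progress index equal to the frame count. So the separation itself survives there. Minimality therefore cannot mean that the duplicator is the only member exhibiting the separation. It must mean, as Corollary~\ref{cor:min-live-record} phrases it, that the duplicator is the unique structurally complete member at which the separation is required for the termination proof. With that reading, your appeal to Theorem~\ref{thm:min-instance} goes through.
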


\begin{proof}
Along the canonical trace there is a single active occurrence of $F$ at each nonterminal stage, and it is the sole site of further recursive descent. At the same time, the number of visible $G$-frames equals the step index $i$, so progress is recorded only retrospectively in the persistent wrapper stack. At termination the base rule removes the last active $F$-site, yielding $t_{K+1}=G^K(Y,X)$. Thus the live computation channel disappears while the record channel remains visible.
\end{proof}

\begin{proposition}[Why the meta-layer queries]\label{prop:why-meta-layer-queries}
Let $M$ be a meta-level proof system and $O$ an object-level system. If $M$ poses a non-vacuous query to $O$ and $O$ resolves it by a sequential trace, then the query functions as uncertainty reduction: the meta-layer seeks a verdict lying beyond recovery from its own accessible state, and the object layer provides that verdict by traversing hidden intermediate states.
\end{proposition}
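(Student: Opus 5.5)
The plan is to assemble Proposition~\ref{prop:why-meta-layer-queries} from the two direct-access results just proved, Proposition~\ref{prop:info-seeking-direct} and Proposition~\ref{prop:sequentiality-requires-hidden-progress-direct}. No new construction is needed. The claim has two halves: the \emph{meta-side} half (the verdict lies beyond recovery from $\sigma_M$) and the \emph{object-side} half (the object layer supplies that verdict by traversing hidden intermediate states). I will establish them separately and then combine them into the single uncertainty-reduction reading.

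For the meta-side half, I start from non-vacuousness in the sense of Definition~\ref{def:nonvacuous-meta-query-direct}: $R(x)\neq f(\sigma_M)$ for every retrieval map $f$. Proposition~\ref{prop:info-seeking-direct} then gives directly that verdict-relevant information exists outside $\sigma_M$, so the query is information-seeking. To make the phrase ``uncertainty reduction'' precise, I will also invoke the entropy formulation. Under the correspondence between $\sigma_M$ and $\mathcal I_M$, non-vacuousness gives $H(R(X)\mid\mathcal I_M)>0$ (Proposition~\ref{prop:query-information-seeking}).

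For the object-side half, the hypothesis is that $O$ resolves the query by a sequential trace in the sense of Definition~\ref{def:sequential-object-computation}. Proposition~\ref{prop:sequentiality-requires-hidden-progress-direct} supplies a progress coordinate $i$ on which the object state depends and which is absent from the terminally accessible data before completion. This is exactly the ``hidden intermediate states'' clause. Resolution at $t_K$ gives $H(R(X)\mid\mathcal I_M,T_K)=0$. Proposition~\ref{prop:hypothesis-new-access}, applied with the trivial acceptance set $\chi_A=R$ or with each singleton acceptance set, then yields $I(R(X);T_K\mid\mathcal I_M)>0$. So the trace strictly reduces the meta layer's uncertainty, and by Corollary~\ref{cor:direct-retrieval-not-sequential} this reduction cannot already have occurred at the initial access point.

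The main obstacle is the bridge between the two formalizations. The direct-access Definition~\ref{def:nonvacuous-meta-query-direct} is stated pointwise on $x$, while entropies require an epistemic distribution. My first approach is to fix the distribution of Definition~\ref{def:meta-access} and identify $\sigma_M$ with the generator of $\mathcal I_M$. Under that identification, ``no measurable $f$ with $R=f(\sigma_M)$ almost surely'' is equivalent to positive conditional entropy on a finite alphabet $\mathcal V$. If that identification seems too strong, I will fall back to stating the uncertainty-reduction clause qualitatively, as ``the trace supplies information absent from $\sigma_M$,'' which follows from the two direct-access propositions alone.
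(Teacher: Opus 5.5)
Your argument is essentially the paper's: its proof is the same definitional unpacking you perform, with non-vacuousness placing the verdict outside what $\sigma_M$ yields (Proposition~\ref{prop:info-seeking-direct}) and sequential resolution supplying the hidden intermediate progress (Proposition~\ref{prop:sequentiality-requires-hidden-progress-direct}); it uses no entropy bridge, so your qualitative fallback is already the complete proof. If you keep the quantitative layer, skip Proposition~\ref{prop:hypothesis-new-access}, because the identification $\chi_A=R$ is ill-typed for non-binary $\mathcal V$ and only \emph{some} singleton indicator (not each) is guaranteed positive conditional entropy, and instead note directly that $I(R(X);T_K\mid\mathcal I_M)=H(R(X)\mid\mathcal I_M)-H(R(X)\mid\mathcal I_M,T_K)=H(R(X)\mid\mathcal I_M)>0$.
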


\begin{proof}
Non-vacuousness means the verdict remains unavailable from the meta-layer state alone. Sequential resolution means the verdict becomes available only after passage through a trace with hidden progress. Hence the object layer reduces uncertainty for the meta layer by traversing internal intermediate states that stay outside its terminally accessible data.
\end{proof}

\section{The primitive self-duplicating recursor}

The step-duplicating schema is the smallest natural formal object on which the problem becomes visible. The counter decreases, but the step-bearing material is duplicated across the right-hand side. This is the shape that defeats the twelve-class direct-measure barrier package of~\cite{rahnamaOrientation} (Schema barrier, Affine, Restricted-quadratic, Bounded cross-term, Bounded multilinear, Generalized polynomial, Max-plus, Componentwise matrix, Lexicographic, Mixed-coordinate, Weighted scalar-projection, and Scalar-dominance mixed-matrix theorems) together with its symbolic variable-condition / KBO corollary.

\noindent \textbf{Three information-content notions in this section.} Three distinct measures appear below and must be kept apart. \emph{Proof-entropy fraction} $H_{\mathrm{proof}}$ (Proposition~\ref{prop:proof-entropy}) is a paper-defined structural measure on the ratio of confessed structural burden to trace size, distinct from Shannon entropy and is defined entirely within the term algebra. \emph{Gauge-orbit entropy} $H_{\mathrm{gauge}}$ (Definition~\ref{def:gauge-entropy}) is a Shannon-style coding measure on payload positions under an explicit uniform coding model. \emph{Kolmogorov complexity} $K$ (Remark~\ref{prop:kolmogorov-record}) is the standard prefix-free description-length measure~\cite{liVitanyi2019}, used only to state that Theorem~\ref{thm:record-emission-necessity}'s conclusion is implementation-invariant. The three notions measure different things and are introduced at different points in the section.

\begin{proposition}[Why additive whole-term aggregation fails on the duplicating step]
Let $\mu$ be any additive whole-term measure that assigns nonnegative weight to visible term structure and requires strict decrease across the step-duplicating rule
\[
F(x,y,S(n)) \to G(y,F(x,y,n)).
\]
Then $\mu$ fails to orient the rule uniformly over arbitrary substitutions for the step-bearing material.
\end{proposition}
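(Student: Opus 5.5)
We make ``additive whole-term measure'' concrete and then give a direct counting argument. Take $\mu(t)=\sum_{p\in\mathrm{Pos}(t)} w(\mathrm{sym}(t,p))$, with a nonnegative weight $w(f)\ge 0$ for each function symbol and the same treatment of variables once instantiated. Equivalently, $\mu$ is an additive interpretation $\mu(f(t_1,\dots,t_n))=w(f)+\sum_i \mu(t_i)$. We also allow the mildly more general transparent-compositional form $\mu(f(t_1,\dots,t_n))=w(f)+\sum_i c_{f,i}\,\mu(t_i)$ with $c_{f,i}\ge 1$ on visible argument slots. ``Requires strict decrease'' means that for every ground substitution $\sigma$ we need $\mu(F(x,y,S(n))\sigma)>\mu(G(y,F(x,y,n))\sigma)$.

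We compute both sides by additivity. Write $X=\mu(x\sigma)$, $Y=\mu(y\sigma)$, $N=\mu(n\sigma)$. In the purely additive case,
\[
\mu(\mathrm{lhs})=w(F)+X+Y+w(S)+N,
\qquad
\mu(\mathrm{rhs})=w(G)+Y+w(F)+X+Y+N .
\]
The difference is $\mu(\mathrm{lhs})-\mu(\mathrm{rhs})=w(S)-w(G)-Y$. Strict decrease for all $\sigma$ therefore needs $Y<w(S)-w(G)$ for every instance of $y$. The constant $w(S)-w(G)$ does not depend on $\sigma$. So the argument reduces to the step-bearing payload $y$ carrying unbounded mass.

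Next we pump $y$. If $Y$ can be made arbitrarily large, choose $y\sigma$ with $Y\ge w(S)-w(G)$, and the inequality fails. For unboundedness we use the signature itself: instantiate $y\sigma=G^m(b,b)$, or $S^m(0)$, for a fixed ground $b$. Additivity gives $\mu(y\sigma)\ge m\cdot w(G)$, or $m\cdot w(S)$. The one degenerate case is when every symbol that could build $y\sigma$ has weight $0$. Then $\mu$ is identically $0$ on those terms, and in particular $w(S)=0$. The difference then equals $-w(G)-Y\le 0$, so strict decrease already fails at the base instance. Either way no uniform strict decrease exists.

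For the transparent-compositional variant, the same computation gives a right-hand side that contains $Y$ at least twice, with coefficients $c_{G,1}\ge 1$ and $c_{G,2}c_{F,2}\ge 1$. The left-hand side contains $Y$ only once, with coefficient $c_{F,2}$. The difference is therefore an affine function of $Y$ with slope $c_{F,2}-c_{G,1}-c_{G,2}c_{F,2}\le -c_{G,1}<0$, and it goes to $-\infty$ under pumping.

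The main obstacle is pinning down the measure class. ``Nonnegative weight to visible term structure'' has to be read as requiring that the payload instance contributes to both occurrences, which is the transparency assumption. Otherwise a measure with $c_{G,1}=0$ or a projection-style weight escapes, and that is exactly the DP/argument-filtering escape. I would state that assumption explicitly and treat the degenerate all-zero-weight case separately, as above.
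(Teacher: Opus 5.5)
Your proof is correct and follows the paper's argument: the left side has one copy of $y$ and the right side has two, so strict decrease reduces to a constructor-side inequality that a pumped instance of $y$ violates. Your explicit difference $w(S)-w(G)-\mu(y\sigma)$ sharpens the paper's $\mu(S)>\mu(y)$, and the $w(S)=0$ case and coefficient-weighted variant are harmless additions; note, though, that pumping $y$ by $S^m(0)$ or $G$-terms, and the inference ``in particular $w(S)=0$'', use the untyped signature, so in a sorted presentation you need the paper's assumption that the payload slot admits terms of unbounded measure.
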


\begin{proof}
The left-hand side contains one visible occurrence of the step-bearing material while the right-hand side contains two visible contributions of that same material: one in the wrapper and one in the recursive call. Abstracting away the shared context, strict descent requires the fixed successor-side decrease to dominate an extra copy of the step-bearing term. In an additive whole-term measure this forces an inequality of the form
\[
\mu(S) > \mu(y).
\]
But $y$ is a variable and can be instantiated by terms of arbitrarily large measure. Every fixed constructor-side contribution is exceeded by some substitution. Hence additive whole-term orientation fails here.
\end{proof}

The important point is more abstract:
\begin{quote}
The self-duplicating recursor is the first place where a system can be right about truth only by ceasing to treat the whole term as the right proof object.
\end{quote}

\begin{remark}[The opposed growth law and the information content of the confession]\label{rem:inverse-growth}
The obstruction has a quantitative shape. Watch the counter and the payload simultaneously across the canonical trace of $F(a,b,S^k(0))$. At step~$i$:
\begin{itemize}[nosep]
\item the counter has dropped from $k$ to $k-i$ (linear decrease);
\item the number of distinct payload slots occupied by~$b$ has grown from $1$ to $i+1$ (linear increase in the number of wrapper-context positions carrying~$b$).
\end{itemize}
The counter and the payload multiplicity therefore move in opposite directions. In an arbitrary substitution where $b$ is itself a large term of measure~$m$, the total whole-term size contributed by the payload copies after $i$ steps is at least $(i+1)\cdot m$, while the successor-side contribution has dropped by~$i$. For any additive or affine measure, the payload contribution eventually dominates for large enough~$m$.

What the confession move (dependency pairs) does is declare this entire growing payload dimension inert for the termination question. The information being set aside is a linearly growing, arbitrarily large structural component of the term. The confession is therefore a substantive move, well beyond minor bookkeeping: a declaration, licensed by a metatheorem proved outside the system, that an unbounded and growing dimension of the input stands clear of the target question.
\end{remark}

\begin{proposition}[Canonical trace law for counter and payload]\label{prop:trace-law}
Fix ground terms $a,b$ that are normal forms and $k\in\mathbb N$. Write $S^k(0)$ for the $k$-fold successor and define
\[
t_0 := F(a,b,S^k(0)).
\]
Then every maximal rewrite sequence from $t_0$ has the shape
\[
t_i = G^i\!\bigl(b,\,F(a,b,S^{k-i}(0))\bigr)\quad(0\le i\le k),
\]
followed by the final step
\[
t_{k+1}=G^k(b,a),
\]
and terminates in $k+1$ steps. Along this canonical trace, if $\mathrm{ctr}(t)$ is the visible successor-height of the third argument of the unique active $F$ occurrence and $\mathrm{pay}(t)$ is the number of payload slots occupied by $b$, then
\[
\mathrm{ctr}(t_i)=k-i,\qquad \mathrm{pay}(t_i)=i+1\quad(0\le i\le k).
\]
So the counter decreases linearly while payload multiplicity increases linearly.
\end{proposition}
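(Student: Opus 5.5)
The plan is an induction on $i$ that determines the set of redexes at each stage; both counts then follow by reading them off the resulting closed form. The hypotheses doing the work are that $a,b$ are ground normal forms and that the only rules are the base rule $F(x,y,Z)\to x$ (with $Z$ read as $0$) and the step rule $F(x,y,S(n))\to G(y,F(x,y,n))$. Since $G$, $S$, $0$ head no left-hand side, every redex must be $F$-rooted.

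First I will prove the \emph{redex-uniqueness invariant}. For $0\le i\le k$, the term $t_i=G^i(b,F(a,b,S^{k-i}(0)))$ has exactly one redex, namely the displayed $F$-occurrence. Any other $F$-rooted subterm would lie inside $a$, $b$, or $S^{k-i}(0)$. The first two are normal forms by hypothesis, and the third is $F$-free. The $G$-frames contribute no redex at their roots.

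The displayed occurrence matches exactly one rule, decided by the head of its third argument. If $k-i>0$, it matches the step rule, and contracting it gives $G^{i+1}(b,F(a,b,S^{k-i-1}(0)))=t_{i+1}$. If $k-i=0$, it matches the base rule, giving $G^k(b,a)=t_{k+1}$.

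Next I will show that $t_{k+1}$ is a normal form. Its only function symbols are $G$ and those inside $a$ and $b$, and neither $a$ nor $b$ contains a redex. Because every $t_i$ with $i\le k$ has exactly one one-step reduct, every maximal rewrite sequence from $t_0$ is forced to be $t_0\to t_1\to\cdots\to t_{k+1}$. This gives the shape claim and termination in exactly $k+1$ steps. The induction runs over $i\le k$ with the base case $t_0$ immediate from the definition.

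Finally I will read off the two counts from the closed form. The unique active $F$ has third argument $S^{k-i}(0)$, so $\mathrm{ctr}(t_i)=k-i$. The payload $b$ occupies the $i$ wrapper slots $G(b,\cdot)$ and the second argument of $F$, so $\mathrm{pay}(t_i)=i+1$.

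The main obstacle is the count of payload slots. "Slots occupied by $b$" has to mean designated positions, namely first arguments of $G$-frames plus the second argument of the active $F$. It cannot mean all occurrences of $b$ as a subterm, because $b$ might also occur inside $a$ or within itself. I will fix this positional reading explicitly before counting.

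A smaller point is that uniqueness of the maximal sequence uses normality of $a$ and $b$ essentially. Without it, reductions inside $b$ would interleave with the trace, giving different traces and different step counts. I will therefore invoke that hypothesis at each step of the induction.
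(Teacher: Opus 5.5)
Your proposal is correct and matches the paper's proof: induction on $i$ with the inner $F$-occurrence as the only redex, the step rule firing while the counter is positive and the base rule at $i=k$, and both counts read directly off the closed form. You also make explicit several points the paper leaves implicit: the head of the third argument selects a unique rule, $G^k(b,a)$ is normal so the maximal sequence is unique of length $k+1$, and $\mathrm{pay}$ is read positionally. This tightens the paper's argument without changing its route.
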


\begin{proof}
Proceed by induction on $i$. The base case is immediate. For the induction step, assume
\[
t_i=G^i\!\bigl(b,F(a,b,S^{k-i}(0))\bigr),\quad i<k.
\]
The only redex is the inner $F(a,b,S^{k-i}(0))$, so one step of Rule~2 gives
\[
t_{i+1}=G^i\!\bigl(b,G(b,F(a,b,S^{k-i-1}(0)))\bigr)
=G^{i+1}\!\bigl(b,F(a,b,S^{k-i-1}(0))\bigr).
\]
At $i=k$ we have $t_k=G^k(b,F(a,b,0))$, and one application of Rule~1 yields $t_{k+1}=G^k(b,a)$. The formulas for $\mathrm{ctr}$ and $\mathrm{pay}$ are read directly from the trace shape.
\end{proof}

\begin{proposition}[Derivational length and time-mass decoupling]\label{prop:exact-runtime-mass-rate}
Every maximal derivation from $t_0$ has the unique endpoint $t_{k+1}$ and length $k+1$, independently of the identities and structural weights of $a$ and $b$. Let $\alpha=|a|$, $\beta=|b|$, $\gamma=|G|$, $\phi=|F|$, and let $\zeta$ be the zero-leaf weight. Set
\[
w:=\gamma+\beta,
\qquad
c:=\phi+\alpha+\beta+\zeta.
\]
Here $c$ includes the active payload occurrence in the live $F$-site. For $0\le i\le k$ the full weighted syntax size is
\[
|t_i|=iw+(k-i)+c,
\qquad
|t_{k+1}|=kw+\alpha.
\]
Every recursive step has the constant mass-production law
\[
|t_{i+1}|+1=|t_i|+w\qquad(i<k),
\]
and the terminal step removes $\phi+\beta+\zeta$ units. Runtime is therefore counter-only while the observed carrier mass is payload-linear.
\end{proposition}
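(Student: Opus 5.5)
The plan is to read everything off the canonical trace law (Proposition~\ref{prop:trace-law}) and then do a direct size count by structural recursion on terms. \textbf{Uniqueness and length.} By Proposition~\ref{prop:trace-law}, every maximal rewrite sequence from $t_0$ passes through the stated $t_i$ and ends at $t_{k+1}=G^k(b,a)$ after exactly $k+1$ steps. The trace law itself rests on the observation that each $t_i$ has exactly one redex, because $a,b$ are ground normal forms and $G$ carries no rules. Hence the derivation is forced, and the endpoint and length $k+1$ do not depend on $a$ or $b$. At most I would add one sentence noting that the redex analysis never inspects the internal structure of $a$ or $b$ beyond their being normal forms.

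\textbf{Size formulas.} Take the weighted size $|\cdot|$ to be additive over symbol occurrences, with $S$ of weight $1$ (this is what makes the counter term $(k-i)$ appear with unit coefficient), $0$ of weight $\zeta$, and $|F|=\phi$, $|G|=\gamma$. Then $|S^m(0)|=m+\zeta$ and $|F(a,b,S^m(0))|=\phi+\alpha+\beta+m+\zeta$. Each frame $G(b,\cdot)$ adds $\gamma+\beta=w$. From the trace shape,
\[
|t_i|=iw+\phi+\alpha+\beta+(k-i)+\zeta=iw+(k-i)+c .
\]
Similarly $|t_{k+1}|=|G^k(b,a)|=kw+\alpha$. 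The one point to flag explicitly is the convention that $S$ has unit weight. If the paper's size allows a general weight $\sigma$ for $S$, the formula reads $(k-i)\sigma$, and I would state the unit-weight normalization as the intended reading.

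\textbf{Step laws.} For $i<k$, subtracting the formulas gives
\[
|t_{i+1}|-|t_i|=w-1,
\]
which is the same as $|t_{i+1}|+1=|t_i|+w$. For the terminal step,
\[
|t_k|-|t_{k+1}|=(kw+c)-(kw+\alpha)=\phi+\beta+\zeta,
\]
so that step removes $\phi+\beta+\zeta$ units. The concluding sentence of the statement is then interpretive. Runtime $k+1$ is a function of $k$ only, while $|t_i|$ grows with slope $w=\gamma+\beta$ in $i$, which is linear in the payload weight $\beta$.

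\textbf{Main obstacle.} There is no genuine mathematical obstacle here. The only delicate step is fixing the size convention consistently, namely unit successor weight, the $\zeta$ leaf counted once, and the live payload occurrence counted in $c$ rather than in a frame. With that fixed, the rest is arithmetic on top of Proposition~\ref{prop:trace-law}.
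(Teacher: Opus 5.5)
Your proposal is correct and follows essentially the paper's route: a unique redex at each live state forces the $(k+1)$-step derivation, the closed-form sizes come from counting over the $G$-stack with unit successor weight, and the two step laws are read off from those closed forms. The one cosmetic difference is that the paper states the per-step identities additively (subtraction-free, as suits the $\mathbb N$-valued mechanization). Your differences $|t_{i+1}|-|t_i|=w-1$ should therefore be read in $\mathbb Z$ before you rewrite them as $|t_{i+1}|+1=|t_i|+w$.
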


\begin{proof}
The canonical relation has one redex at each live state and none at the terminal record. Induction over the orbit gives the unique $k+1$-step derivation. The size identities follow by induction over the $G$-stack, after which the two per-step identities are subtraction-free rearrangements of the displayed closed form.
\end{proof}

\begin{proposition}[The confession dominance law]\label{prop:confession-dominance}
Fix the primitive recursion duplicator with ground terms $a,b$ and counter $S^k(0)$. Along the canonical trace, define:
\begin{itemize}[nosep]
\item the \emph{residual proof work} $\mathrm{Res}(k)$: the number of strict subterm comparisons performed by the dependency-pair descent across all $k$ recursive steps, so $\mathrm{Res}(k)=k$;
\item the \emph{confessed structural burden} $\mathrm{Con}(k,b)$: the cumulative structural size of the payload material declared inert across all $k$ steps, counting the $i+1$ payload-bearing positions at stage $i$ (the completed-wrapper count is $i$, and Proposition~\ref{prop:exact-mass-partition} separates the two),
\[
\mathrm{Con}(k,b)=\sum_{i=0}^{k}(i+1)|b|=\frac{(k+1)(k+2)}{2}|b|.
\]
\end{itemize}
Then
\[
\frac{\mathrm{Con}(k,b)}{\mathrm{Res}(k)}=\frac{(k+1)(k+2)}{2k}|b|\sim \frac{k|b|}{2}
\qquad\text{as }k\to\infty.
\]
The confession grows without bound relative to the residual proof: for any fixed $|b|\ge 1$, it eventually dominates by an arbitrary factor.
\end{proposition}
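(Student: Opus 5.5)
The plan is to reduce everything to the canonical trace law, Proposition~\ref{prop:trace-law}, and then do elementary arithmetic. That proposition gives, for $0\le i\le k$,
\[
t_i=G^i\bigl(b,F(a,b,S^{k-i}(0))\bigr),
\qquad
\mathrm{pay}(t_i)=i+1 .
\]
It also shows that the rewrite sequence is forced, so the trace, and with it both quantities, is well defined independently of any strategy choice.

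\emph{Residual proof work.} I will argue that $\mathrm{Res}(k)=k$. The dependency-pair residual for the duplicator is the single pair
\[
F^\sharp(x,y,S(n))\to F^\sharp(x,y,n).
\]
Each of the $k$ firings of Rule~2 on the trace corresponds to exactly one application of this pair. Each such application is discharged by exactly one strict subterm comparison, $n\lhd S(n)$, on the counter argument. The terminal firing of Rule~1 produces no $F$-call and hence contributes no comparison.

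\emph{Confessed burden.} At stage $i$ the payload term $b$ occupies $i+1$ positions: the $i$ completed wrappers plus the live $F$-site. Each of these positions carries structural size $|b|$, and all of them are declared inert by the projection. Summing over the stages and using the triangular-number identity gives
\[
\sum_{i=0}^{k}(i+1)|b|=|b|\sum_{j=1}^{k+1}j=\frac{(k+1)(k+2)}{2}|b| .
\]
I will take care to say that this counts payload-bearing positions and not the completed-wrapper count $i$. This is the convention the statement adopts and defers to Proposition~\ref{prop:exact-mass-partition}.

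\emph{Ratio and asymptotics.} For $k\ge1$ I divide and obtain
\[
\frac{\mathrm{Con}(k,b)}{\mathrm{Res}(k)}=\frac{(k+1)(k+2)}{2k}|b| .
\]
Writing $(k+1)(k+2)/(2k)=\tfrac{k}{2}+\tfrac32+\tfrac1k$ shows the ratio is asymptotic to $k|b|/2$. For the dominance claim, fix $|b|\ge1$ and any target factor $M$. The ratio exceeds $k|b|/2$, so it exceeds $M$ as soon as $k>2M$.

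\emph{Main obstacle.} The only real difficulty is that the quantities are defined by convention, so I have to make them unambiguous. In particular, $\mathrm{Res}$ must count one comparison per pair application, not also the base case. Once these conventions are fixed, the remaining arithmetic is routine.
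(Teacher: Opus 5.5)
Your proposal is correct and follows the paper's own argument: $\mathrm{Res}(k)=k$ by a direct count of the $k$ counter comparisons, the per-stage burden $(i+1)|b|$ read off the canonical trace law (Proposition~\ref{prop:trace-law}), and the triangular sum. Your expansion $\tfrac{k}{2}+\tfrac{3}{2}+\tfrac{1}{k}$ and the threshold $k>2M$ just make explicit the asymptotic and dominance claims that the paper leaves to inspection.
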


\begin{proof}
The residual proof work is $k$ by direct count. The confessed burden at step~$i$ is $(i+1)|b|$ by Proposition~\ref{prop:trace-law}. Summing from $i=0$ to $k$ gives the displayed formula.
\end{proof}

\begin{remark}[Payload size as a tunable diagnostic parameter]\label{rem:payload-parameter}
The parameter $|b|$ appearing in $\mathrm{Con}(k,b)$ is the syntactic size of the payload term occupying the step-argument slot. It is orthogonal to the counter parameter $k$ and admits independent tuning. The derivational-complexity literature parametrizes predominantly by the overall size of the start term~\cite{hofbauer1992mpo,weiermann1995lpo,hofbauerlautemann1989rta,moser2009habil}; the present analysis isolates the parameter to the internal-structure size of the duplicated subterm. The ratio $\mathrm{Con}(k,b)/\mathrm{Res}(k)^2$ admits the closed asymptote
\[
\frac{\mathrm{Con}(k,b)}{\mathrm{Res}(k)^2}
=\frac{(k+1)(k+2)|b|}{2(k+1)^2}
\;\xrightarrow{k\to\infty}\;\frac{|b|}{2}.
\]
The case $|b|=0$ is vacuous (empty payload); $|b|=1$ is minimal; $|b|\ge 2$ gives strict scaling of the confessed burden. The asymptotic identity $\mathrm{Con}/\mathrm{Res}^2 \to |b|/2$ is recorded as a quadratic invariant of the canonical trace.
\end{remark}

\begin{proposition}[Trace action and the second quadratic invariant]\label{prop:trace-action-law}
With $w$ and $c$ as in Proposition~\ref{prop:exact-runtime-mass-rate}, define the live-trace action
\[
A(k):=\sum_{i=0}^{k}|t_i|.
\]
Then
\[
2A(k)=k(k+1)(w+1)+2(k+1)c.
\]
In particular, $A(k)/k^2\to (w+1)/2$. This is the full-carrier sibling of the payload-only invariant $\mathrm{Con}(k,b)/\mathrm{Res}(k)^2\to |b|/2$.
\end{proposition}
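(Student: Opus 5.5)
The identity follows by summing the closed-form size law of Proposition~\ref{prop:exact-runtime-mass-rate}. For $0\le i\le k$ that proposition gives $|t_i|=iw+(k-i)+c$, and the action $A(k)$ sums only over the live states $i=0,\dots,k$; the terminal record $t_{k+1}$ is excluded. So I would substitute the closed form term by term and write
\[
A(k)=w\sum_{i=0}^{k} i+\sum_{i=0}^{k}(k-i)+(k+1)c .
\]

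The key steps are two elementary evaluations. First, $\sum_{i=0}^{k} i=k(k+1)/2$. Second, reindexing with $j=k-i$ gives $\sum_{i=0}^{k}(k-i)=\sum_{j=0}^{k} j=k(k+1)/2$. Adding these and doubling yields
\[
2A(k)=k(k+1)w+k(k+1)+2(k+1)c=k(k+1)(w+1)+2(k+1)c,
\]
which is the displayed identity.

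As a cross-check that avoids the closed form, I would use the per-step production law $|t_{i+1}|=|t_i|+w-1$ from the same proposition. It shows that the sequence $|t_i|$ is arithmetic with first term $|t_0|=k+c$ and last term $|t_k|=kw+c$. The trapezoid formula then gives $A(k)=(k+1)(k+c+kw+c)/2$, which agrees with the identity above.

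For the limit, I would divide the identity by $2k^2$:
\[
\frac{A(k)}{k^2}=\frac{(k+1)(w+1)}{2k}+\frac{(k+1)c}{k^2}.
\]
Since $w$ and $c$ are fixed constants independent of $k$, the first term tends to $(w+1)/2$ and the second tends to $0$. The comparison with $\mathrm{Con}(k,b)/\mathrm{Res}(k)^2\to|b|/2$ in Remark~\ref{rem:payload-parameter} is then immediate. Replacing the full per-step increment $w+1=\gamma+\beta+1$, which counts wrapper, payload, and the lost successor, by the payload-only increment $\beta=|b|$ recovers the payload invariant's coefficient $|b|/2$.

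The main obstacle is bookkeeping rather than anything conceptual. One must keep the index range $0\le i\le k$ so that $t_{k+1}$, which has a different size formula, does not enter the sum, and one must count the counter contribution $(k-i)$ as unit-weight successor symbols, as in the proposition's convention. Once these conventions are fixed as in Proposition~\ref{prop:exact-runtime-mass-rate}, the argument is a two-line summation.
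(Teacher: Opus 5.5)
Your proof is correct and follows the paper's argument: substitute $|t_i|=iw+(k-i)+c$ over the live range $0\le i\le k$ and evaluate the two triangular sums $\sum i=\sum(k-i)=k(k+1)/2$; the trapezoid cross-check and the limit computation are harmless additions. One small slip in your closing heuristic: $w+1$ is not a per-step increment (you yourself use $|t_{i+1}|=|t_i|+w-1$). The $+1$ in the quadratic coefficient comes from summing the \emph{remaining} counter mass $(k-i)$, which contributes the same triangular number as the wrapper stack, rather than from a lost successor.
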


\begin{proof}
Insert $|t_i|=iw+(k-i)+c$ and apply the two triangular sums
$\sum_{i=0}^{k}i=\sum_{i=0}^{k}(k-i)=k(k+1)/2$. The mechanized form uses the division-free doubled identity and also proves the corresponding integer envelope.
\end{proof}

\begin{proposition}[The per-step control-to-payload exchange]\label{prop:exchange-rate}
At the rule level, each firing of the duplicating step
\[
F(x,y,S(n)) \to G(y,F(x,y,n))
\]
performs a single atomic exchange:
\begin{enumerate}[nosep]
\item it consumes one unit of counter structure (one $S$-layer);
\item it creates one new payload slot (the first-argument position of the new $G$-wrapper).
\end{enumerate}
The variable $y$ appears once on the left-hand side and twice on the right-hand side, so the per-step payload branching factor is $2{:}1$. One unit of proof-certifiable control structure is exchanged for one unit of proof-discarded payload structure.
\end{proposition}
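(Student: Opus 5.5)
The plan is to read off the exchange directly from the rule, by comparing symbol and variable occurrences on its two sides, and then to check that the same exchange happens at every instance along the canonical trace (Proposition~\ref{prop:trace-law}). No new machinery should be needed.

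\emph{Step 1: occurrence counts at the rule level.} Write the left-hand side $F(x,y,S(n))$ and the right-hand side $G(y,F(x,y,n))$, and tabulate the function symbols and variables of each.
\begin{itemize}[nosep]
\item The left side contains one $S$, one $F$, and each of $x,y,n$ once.
\item The right side contains no $S$, one $F$, one $G$, $x$ once, $n$ once, and $y$ twice.
\end{itemize}
The net change is then immediate: the $S$-count drops by exactly one, which is item~(1); the $G$-count rises by exactly one; and $y$ goes from one occurrence to two, which is the stated $2{:}1$ branching factor.

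\emph{Step 2: identify the new payload slot.} The new $y$-occurrence sits in the first argument of the freshly created $G$. The other $y$-occurrence is the one inherited in the recursive call $F(x,y,n)$, so it is not new. This gives item~(2). To confirm it holds for every firing and not only schematically, I will invoke Proposition~\ref{prop:trace-law}. That result gives $\mathrm{ctr}(t_{i+1})=\mathrm{ctr}(t_i)-1$ and $\mathrm{pay}(t_{i+1})=\mathrm{pay}(t_i)+1$ along the trace, so under any substitution each firing trades exactly one $S$-layer for exactly one payload-bearing position.

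\emph{Step 3: the interpretive sentence.} I will attach the labels ``proof-certifiable'' and ``proof-discarded'' by reference to the dependency-pair reading. The extracted pair $F^\sharp(x,y,S(n))\to F^\sharp(x,y,n)$ certifies descent through the third argument alone, while the $G$-wrapper, with its payload, does not appear in any dependency pair. So the $S$-layer consumed in Step~1 is exactly the structure the residual proof certifies, and the slot created in Step~2 is exactly the structure that is set aside. This is consistent with $\mathrm{Res}(k)=k$ and the per-stage burden $(i+1)|b|$ in Proposition~\ref{prop:confession-dominance}.

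The main obstacle is Step~3. The phrase ``one unit exchanged for one unit'' must be read as a statement about occurrence counts, not about weighted size. I will state this explicitly: the payload slot is counted as a position, and the size it contributes ($|b|$ under substitution) is the separate quantity already tracked in Proposition~\ref{prop:exact-runtime-mass-rate}.
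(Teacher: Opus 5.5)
Your proposal is correct and follows the paper's proof, which reads the exchange directly off the rule: the counter slot loses one $S$-layer, and $y$ goes from one occurrence to two, with the new occurrence in the first argument of the fresh $G$-frame. The appeal to Proposition~\ref{prop:trace-law} is unnecessary, and it is narrower than your phrase ``under any substitution'' suggests, since that proposition only covers the canonical trace; every firing is already an instance of the rule, so the positional count holds at every firing. Your Step~3 just makes explicit the dependency-pair reading that the paper leaves implicit.
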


\begin{proof}
Read directly from the rule. The source has one $S$-layer and one $y$-occurrence in $F$; the target drops the outer $S$-layer and carries two $y$-occurrences, one in the new $G$-wrapper and one in the recursive $F$.
\end{proof}

\begin{proposition}[Proof-entropy monotonicity]\label{prop:proof-entropy}
Define the \emph{proof entropy} of the term~$t_i$ in the canonical trace as the fraction of its structural size that the dependency-pair proof must confess away:
\[
H_{\mathrm{proof}}(t_i):=\frac{\text{confessed structural burden in }t_i}{\text{total structural size of }t_i}.
\]
Concretely, the confessed burden at step~$i$ is $i|b|$. Let $c_\ast$ denote the fixed structural contribution of the active $F$-symbol, the base value~$a$, the active payload occurrence, and the zero-side constructor overhead. Thus $c_\ast=c_0+|b|$ for a payload-independent constant $c_0$. Then
\[
|t_i|=i(|G|+|b|)+(k-i)+c_\ast,
\]
so
\[
H_{\mathrm{proof}}(t_i)=\frac{i|b|}{i(|G|+|b|)+(k-i)+c_\ast}.
\]
Then for $|b|\ge 1$:
\[
H_{\mathrm{proof}}(t_0)=0,
\qquad
H_{\mathrm{proof}}(t_i)\to \frac{i}{i+1}\quad\text{as }|b|\to\infty\text{ for fixed }i\ge 1.
\]
Moreover, $H_{\mathrm{proof}}$ is monotonically non-decreasing for $0\le i<k$, that is across every live stage of the trace.
\end{proposition}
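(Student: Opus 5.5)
The plan is to work directly from the closed form for $|t_i|$ and reduce all three claims to elementary facts about a single rational function of $i$. The size identity $|t_i|=i(|G|+|b|)+(k-i)+c_\ast$ is the same weighted-size law as in Proposition~\ref{prop:exact-runtime-mass-rate}, with $c_\ast=c$. I would first note that this identity is available from that proposition. I would also record that the confessed burden $i|b|$ counts the $i$ completed wrappers: it excludes the active payload occurrence, which sits inside $c_\ast$. This is the wrapper/slot distinction flagged in Proposition~\ref{prop:confession-dominance}, and it comes from the trace shape of Proposition~\ref{prop:trace-law}.

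Next I would rewrite the denominator as an affine function of $i$:
\[
D(i)=\alpha i+\beta,\qquad \alpha:=|G|+|b|-1,\qquad \beta:=k+c_0+|b|,
\]
so that $H_{\mathrm{proof}}(t_i)=|b|\,i/(\alpha i+\beta)$. With $|G|\ge 1$ and $|b|\ge 1$ we get $\alpha\ge 1>0$. We also have $\beta\ge |b|\ge 1>0$, since $k$ and $c_0$ are nonnegative size contributions. Hence $D(i)>0$ on the whole range. The value $H_{\mathrm{proof}}(t_0)=0$ then follows immediately, because the numerator vanishes at $i=0$.

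For the limit I would fix $i\ge 1$ (and $k$, $|G|$, $c_0$) and divide numerator and denominator by $|b|$. This gives
\[
\frac{i}{\,i+1+\bigl(i(|G|-1)+k+c_0\bigr)/|b|\,}.
\]
The bracketed term is independent of $|b|$, so the quotient tends to $i/(i+1)$ as $|b|\to\infty$. The summand $+1$ in the limit comes exactly from the active payload occurrence inside $c_\ast=c_0+|b|$.

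For monotonicity I would use a discrete cross-multiplication rather than a derivative, so that the argument stays in the integers as in the mechanized identities. Since both denominators are positive, $H_{\mathrm{proof}}(t_{i+1})\ge H_{\mathrm{proof}}(t_i)$ is equivalent to
\[
(i+1)(\alpha i+\beta)\ \ge\ i\bigl(\alpha(i+1)+\beta\bigr),
\]
and this simplifies to $\beta\ge 0$. In fact $\beta>0$ gives strict increase, which is stronger than the claim. Applying this for each $0\le i<k$ covers every live stage.

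The only step needing care is the sign bookkeeping: checking $\alpha>0$ and $\beta>0$ under the stated size conventions. This matters most for $\beta$, because the $-i$ counter term is absorbed into $\alpha$. I would verify it by expanding $D(i)$ explicitly before dividing.
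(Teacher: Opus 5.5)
Your proposal is correct and follows essentially the paper's route. The paper likewise writes the denominator as $D_i=k+c_\ast+i(|G|+|b|-1)$ and computes $H_{\mathrm{proof}}(t_{i+1})-H_{\mathrm{proof}}(t_i)=|b|(k+c_\ast)/(D_iD_{i+1})$; that numerator is exactly your cross-multiplied quantity $\beta=k+c_\ast$, and the paper gets the limit from the same dominant-term comparison of $i|b|$ against $(i+1)|b|$. Your remark that $\beta>0$ together with $|b|\ge 1$ gives strict increase is a harmless strengthening of the stated non-decreasing claim.
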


\begin{proof}
At step~$0$ the $G$-wrapper stack is empty and the confessed burden is zero. For $i\ge 1$, set
\[
D_i:=i(|G|+|b|)+(k-i)+c_\ast = k+c_\ast+i(|G|+|b|-1).
\]
Then
\[
H_{\mathrm{proof}}(t_{i+1})-H_{\mathrm{proof}}(t_i)
=\frac{|b|((i+1)D_i-iD_{i+1})}{D_iD_{i+1}}
=\frac{|b|(k+c_\ast)}{D_iD_{i+1}}\ge 0.
\]
So $H_{\mathrm{proof}}$ is monotonically non-decreasing for $0\le i<k$. For fixed $i$, the numerator has dominant term $i|b|$ and the denominator has dominant term $(i+1)|b|$, which gives the stated limit.
\end{proof}

\begin{remark}[Why proof entropy matters]\label{rem:proof-entropy-interpretation}
``Proof entropy'' is a paper-defined quantity: the fraction of the term's structural size that the proof must confess away under external license. The monotonicity is a theorem about the term algebra. The confessed payload's value is fully determinate rather than unknown in any probabilistic sense. What the proof language withholds is the ability to \emph{incorporate} that concrete value into any derivation that bears on termination. This is operational inexpressibility at the payload dimension.
\end{remark}

\subsection{The vector-norm framework}\label{subsec:vector-norm}

The wrapper $G$ is duplicated along the trace at the same per-step rate as the payload: each firing of the recursive rule creates one new $G$-frame paired with one new copy of $b$. The wrapper therefore accumulates as a linear list of identical $(G,b)$-cells, and that list admits a vector-space presentation on which operational inexpressibility has a single-line characterization as a norm mismatch.

\begin{proposition}[Offset conservation law]\label{prop:offset-conservation}
At every step of the canonical trace,
\[
\#b(t_i)-\#G(t_i)=1,
\]
where $\#b(t)$ and $\#G(t)$ denote the number of $b$-copies and $G$-frames in $t$. Equivalently, $\#b(t_i)=i+1$ and $\#G(t_i)=i$.
\end{proposition}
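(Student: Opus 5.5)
The plan is to read both counts directly off the closed form of the canonical trace from Proposition~\ref{prop:trace-law}, namely $t_i=G^i\bigl(b,F(a,b,S^{k-i}(0))\bigr)$ for $0\le i\le k$. The identity $\#b(t_i)-\#G(t_i)=1$ then follows immediately from the two explicit formulas $\#G(t_i)=i$ and $\#b(t_i)=i+1$, so essentially all the work is in justifying those two counts.

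To keep the counting honest I would define $\#G$ and $\#b$ recursively on terms. Set $\#G(G(u,v))=1+\#G(u)+\#G(v)$, and let $\#G$ add nothing for the other symbols. Let $\#b$ count the maximal occurrences of the subterm $b$ in the slot positions, which are the first argument of each $G$ and the second argument of $F$. Counting slot positions rather than raw subterm occurrences matters: if $b$ itself contains $G$ or contains a copy of $b$, a raw count would mix payload-internal structure with the frame structure. This matches the convention $\mathrm{pay}(t)$ in Proposition~\ref{prop:trace-law}, where $b$ is a ground normal form treated as an atomic payload.

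With these definitions, I would prove by induction on $i$ that $\#G(t_i)=i$ and $\#b(t_i)=i+1$.

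\begin{itemize}
\item \emph{Base case.} $t_0=F(a,b,S^k(0))$ has no $G$-frame and exactly one payload slot, the second argument of $F$. Hence $\#G(t_0)=0$ and $\#b(t_0)=1$.
\item \emph{Inductive step.} By Proposition~\ref{prop:exchange-rate}, the step from $t_i$ to $t_{i+1}$ replaces the unique active $F$-redex with $G(b,F(a,b,\cdot))$. This creates exactly one new $G$-frame and one new payload slot, the first argument of that new frame. The $F$-site keeps its single $b$, and nothing else in the term changes. So both counts increase by exactly one, and their difference is preserved.
\end{itemize}

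The identity therefore holds for every $0\le i\le k$. As an equivalent check, one can instead unfold $G^i$ directly: each of the $i$ frames contributes one $b$, and the inner $F$ contributes one more.

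The main obstacle is scoping rather than any calculation. First, the counting convention must be made explicit, for the slot-position reason above. Second, the statement says ``every step,'' so I would check the terminal state separately. There the record $t_{k+1}=G^k(b,a)$ has $\#G=k$ and $\#b=k$, because the base rule consumes the live $F$-site together with its payload copy. The law therefore holds exactly across the live stages $0\le i\le k$, and it drops to offset $0$ at the terminal record. This agrees with the earlier remark that the terminal step removes $\phi+\beta+\zeta$ units. I would state the proposition as ranging over $0\le i\le k$, consistent with the ``equivalently'' clause, and record the terminal offset as a remark.
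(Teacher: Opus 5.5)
Your proof is correct and is essentially the paper's: $t_0$ carries one $b$ and no $G$-frame, and each recursive firing adds exactly one frame and one payload copy, so the offset is invariant. Your terminal check is also right: the base-rule firing deletes the live payload copy without deleting a frame, so $t_{k+1}=G^k(b,a)$ has offset $0$, and the statement (like the paper's one-line proof, which only accounts for recursive firings) should be read over $0\le i\le k$. One small slip remains: your recursion $\#G(G(u,v))=1+\#G(u)+\#G(v)$ is a raw symbol count that also descends into the payload slot and into $a$, so it does not deliver the robustness you claim for payloads containing $G$; recursing only along the spine, $\#G(G(u,v))=1+\#G(v)$ with $\#G$ zero on non-$G$-headed terms, makes the base case and the one-frame-per-firing step exact for arbitrary ground $a,b$.
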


\begin{proof}
Base case: $t_0$ contains one $b$ and zero $G$. Each firing adds one $G$-frame and one $b$-copy simultaneously, so the difference is invariant.
\end{proof}

\begin{definition}[Wrapper cell and wrapper-cell weight]\label{def:wrapper-cell}
At step $i$, each wrapper layer is a \emph{wrapper cell} of total structural size
\[
w:=|G|+|b|.
\]
\end{definition}

\begin{definition}[Wrapper stack and diagonal submodule]\label{def:wrapper-stack}
The \emph{wrapper stack} at step $i$ is
\[
v_i := [b_1,\ldots,b_i]\in B^i,
\]
where every component equals the same payload term $b$. Thus $v_i$ lies on the diagonal submodule
\[
\Delta_i := \{(c,\ldots,c):c\in B\}\subset B^i.
\]
\end{definition}

\begin{proposition}[Total confessed burden]\label{prop:confession-dominance-full}
Let $w=|G|+|b|$ be the wrapper-cell weight. Define the \emph{homogeneous-cell envelope} by assigning one cell budget $w$ to each of the $i+1$ payload-bearing positions at live stage $i$. Its cumulative burden is
\[
\mathrm{Con}_{\mathrm{total}}(k,w)=\sum_{i=0}^{k}(i+1)w=\frac{(k+1)(k+2)}{2}w,
\]
and
\[
\frac{\mathrm{Con}_{\mathrm{total}}(k,w)}{\mathrm{Res}(k)}\sim \frac{kw}{2}\quad\text{as }k\to\infty.
\]
\end{proposition}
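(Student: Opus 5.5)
The statement is a direct counting identity, so I would prove it the same way as the confession dominance law (Proposition~\ref{prop:confession-dominance}), with the cell weight $w=|G|+|b|$ in place of $|b|$. The plan has three steps: fix the per-stage count, sum it in closed form, and divide by the residual work.

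\emph{Per-stage count.} By the offset conservation law (Proposition~\ref{prop:offset-conservation}), at live stage $i$ with $0\le i\le k$ the canonical trace of Proposition~\ref{prop:trace-law} carries $\#b(t_i)=i+1$ payload-bearing positions. The homogeneous-cell envelope assigns the budget $w$ to each of these positions. The stage-$i$ contribution is therefore $(i+1)w$ by definition.

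\emph{Closed form.} Summing over $i=0,\dots,k$ gives
\[
\sum_{i=0}^{k}(i+1)\,w \;=\; w\sum_{j=1}^{k+1}j \;=\; \frac{(k+1)(k+2)}{2}\,w .
\]
The triangular sum can be checked by induction on $k$. To match a division-free mechanized style, I would instead prove the doubled identity $2\,\mathrm{Con}_{\mathrm{total}}(k,w)=(k+1)(k+2)w$.

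\emph{Asymptotics.} Take $\mathrm{Res}(k)=k$ as in Proposition~\ref{prop:confession-dominance}, and restrict to $k\ge 1$ so the ratio is defined. Then
\[
\frac{\mathrm{Con}_{\mathrm{total}}(k,w)}{\mathrm{Res}(k)}=\frac{(k+1)(k+2)}{2k}\,w=\frac{kw}{2}\Bigl(1+\frac{3}{k}+\frac{2}{k^2}\Bigr).
\]
The bracket tends to $1$ as $k\to\infty$, which is exactly the claimed relation $\sim kw/2$. Because $w\ge |b|$, this also shows the envelope dominates $\mathrm{Con}(k,b)$.

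There is no substantive obstacle; the only point needing care is definitional. The envelope charges a full cell $w$ to all $i+1$ payload positions, including the active payload inside the live $F$-site, even though that position carries no $G$-frame. I would state explicitly that this makes the quantity an envelope (an upper accounting) rather than an exact mass, and would not conflate it with $|t_i|$ from Proposition~\ref{prop:exact-runtime-mass-rate}.
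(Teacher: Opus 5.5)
Your proposal is correct and follows the paper's own argument: count the $i+1$ payload-bearing positions at each live stage, charge each the padded cell budget $w$, sum the triangular series, and divide by $\mathrm{Res}(k)=k$. Your caveat that this is an envelope rather than an exact mass also matches the paper. The only difference is the comparison point: the paper contrasts the envelope with the completed-wrapper count $i$ (Proposition~\ref{prop:exact-mass-partition}), whereas you contrast it with $|t_i|$.
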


\begin{proof}
The coding convention pads the active payload position with one wrapper-symbol budget so that every payload-bearing position has the same cell weight $w$. There are $i+1$ such positions at stage $i$, and summing gives the formula. The envelope is a budget rather than a symbol count: the completed-wrapper count in the syntax is $i$, and Proposition~\ref{prop:exact-mass-partition} separates the two.
\end{proof}

\begin{proposition}[Mass partition and asymptotic confession fraction]\label{prop:exact-mass-partition}
Let $A(k)$ be the trace action of Proposition~\ref{prop:trace-action-law}, and let
\[
\mathrm{Con}_{\mathrm{cell}}(k,w):=\sum_{i=0}^{k}iw=\frac{k(k+1)}{2}w
\]
count the completed wrapper-cell mass, setting aside the active payload occurrence. Then the division-free partition is
\[
w\,2A(k)=(w+1)\,2\mathrm{Con}_{\mathrm{cell}}(k,w)+2w(k+1)c.
\]
Consequently $\mathrm{Con}_{\mathrm{cell}}(k,w)/A(k)\to w/(w+1)$. The distinction between $\mathrm{Con}_{\mathrm{cell}}$ and $\mathrm{Con}_{\mathrm{total}}$ is structural: the former counts completed wrapper cells, while the latter is the homogeneous-cell envelope that includes a padded active payload-bearing cell at each live stage.
\end{proposition}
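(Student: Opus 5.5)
The plan is a direct substitution into the closed form for the trace action from Proposition~\ref{prop:trace-action-law}, followed by an elementary limit. No new structural input is needed beyond that identity and the triangular sum defining $\mathrm{Con}_{\mathrm{cell}}$.

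\textbf{Step 1 (the division-free partition).} First I would evaluate the completed-cell sum: $\sum_{i=0}^{k} i = k(k+1)/2$ gives $2\,\mathrm{Con}_{\mathrm{cell}}(k,w)=k(k+1)w$. Then I would multiply the identity $2A(k)=k(k+1)(w+1)+2(k+1)c$ from Proposition~\ref{prop:trace-action-law} through by $w$. This yields
\[
w\,2A(k)=k(k+1)w(w+1)+2w(k+1)c.
\]
The first summand is exactly $(w+1)\cdot 2\,\mathrm{Con}_{\mathrm{cell}}(k,w)$, which gives the displayed partition. Keeping everything doubled and multiplied by $w$ avoids any division, matching the mechanized form of the trace-action law.

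\textbf{Step 2 (the limit).} I would note that $w=|G|+|b|\ge 1$ and that $A(k)\ge |t_0|>0$, so the ratio is well defined. Dividing the partition by $w(w+1)\,2A(k)$ gives
\[
\frac{\mathrm{Con}_{\mathrm{cell}}(k,w)}{A(k)}=\frac{w}{w+1}\Bigl(1-\frac{2(k+1)c}{2A(k)}\Bigr).
\]
Since $2A(k)$ grows like $k^2(w+1)$ while $2(k+1)c$ is linear in $k$, the correction term is $O(1/k)$ and tends to $0$. Hence the ratio tends to $w/(w+1)$.

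\textbf{Step 3 (the structural distinction).} This part is bookkeeping. By Proposition~\ref{prop:offset-conservation}, stage $i$ carries $i$ completed $G$-frames but $i+1$ payload copies. So $\mathrm{Con}_{\mathrm{cell}}$ counts $i$ cells of weight $w$ per stage, whereas $\mathrm{Con}_{\mathrm{total}}$ of Proposition~\ref{prop:confession-dominance-full} assigns $i+1$ cells per stage. The two therefore differ by the padded active cell, contributing $(k+1)w$ in total.

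The only point needing care is Step 2: one must ensure the denominator is nonzero and that the linear correction $(k+1)c$ is dominated by the quadratic term. This holds because $c\ge 0$ is constant in $k$ and $w+1>0$. I expect no real obstacle.
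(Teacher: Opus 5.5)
Your proof is correct and follows the paper's argument. You substitute $2A(k)=k(k+1)(w+1)+2(k+1)c$ and $2\,\mathrm{Con}_{\mathrm{cell}}(k,w)=wk(k+1)$, expand, and read off the limit from quadratic-versus-linear growth; the paper divides by $k^2$, while you isolate the $O(1/k)$ correction term, which is the same computation. One small tidy-up: dividing the partition by $(w+1)\,2A(k)$ instead of $w(w+1)\,2A(k)$ gives your displayed formula directly, so you do not need $w\ge 1$, and the bound $2A(k)\ge k(k+1)$ from the counter contribution alone already makes the denominator nonzero for $k\ge 1$.
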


\begin{proof}
Substitute the closed forms $2A(k)=k(k+1)(w+1)+2(k+1)c$ and $2\mathrm{Con}_{\mathrm{cell}}=wk(k+1)$ and expand. The limit follows after division by $k^2$ with fixed $w$ and $c$.
\end{proof}

\begin{proposition}[Crossover or majority-loss law]\label{prop:crossover-law}
Assume $w\ge1$. The first integer stage at which completed wrapper-cell mass reaches the retained live-state mass is
\[
i_\ast=\left\lceil\frac{k+c}{w+1}\right\rceil
=\frac{k+c+w}{w+1}
\]
with natural-number division. Whenever $i_\ast\le k$,
\[
i_\ast w\ge(k-i_\ast)+c,
\]
and every $j<i_\ast$ satisfies $jw<(k-j)+c$. The integer envelope
\[
k+c\le i_\ast(w+1)\le k+c+w
\]
implies $i_\ast/k\to1/(w+1)$ for fixed $w$ and $c$.
\end{proposition}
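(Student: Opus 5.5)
The plan is to reduce everything to the closed form of Proposition~\ref{prop:exact-runtime-mass-rate}. The completed wrapper-cell mass at stage $j$ is $jw$, and the retained live-state mass is $|t_j|-jw=(k-j)+c$. The crossover condition $jw\ge (k-j)+c$ is therefore equivalent to $j(w+1)\ge k+c$. Because $w\ge1$ gives $w+1\ge 2>0$, this linear inequality in $j$ has its set of integer solutions upward closed. Its least natural solution is $\lceil (k+c)/(w+1)\rceil$, which settles the minimality claim: every $j<i_\ast$ has $j(w+1)<k+c$, i.e.\ $jw<(k-j)+c$.

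Next I will identify the ceiling with the natural-number quotient $(k+c+w)\,\mathrm{div}\,(w+1)$. This is the standard identity $\lceil m/d\rceil=(m+d-1)\,\mathrm{div}\,d$ for $d\ge1$, here with $m=k+c$ and $d=w+1$. To keep things division-free, as in the mechanized style of the paper, I will prove it by the characterization of $q=(m+d-1)\,\mathrm{div}\,d$. From $qd\le m+d-1<(q+1)d$ we get $qd\ge m$ (from the right inequality) and $(q-1)d<m$ (from the left). So $q$ is the least integer with $qd\ge m$. These same two facts give the integer envelope $k+c\le i_\ast(w+1)\le k+c+w$.

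The stated inequality $i_\ast w\ge (k-i_\ast)+c$ is $i_\ast(w+1)\ge k+c$, which is the lower half of the envelope. The hypothesis $i_\ast\le k$ is used only so that $k-i_\ast$ is a genuine subtraction and stage $i_\ast$ is a live stage of the trace, where the closed form $|t_{i_\ast}|=i_\ast w+(k-i_\ast)+c$ applies. The main obstacle I expect is this bookkeeping with truncated natural subtraction. I will rewrite every inequality additively, as $jw+j\ge k+c$, before manipulating it, so that no case split on truncation is needed.

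Finally, for the asymptotic statement I divide the envelope by $k(w+1)$:
\[
\frac{k+c}{k(w+1)}\le\frac{i_\ast}{k}\le\frac{k+c+w}{k(w+1)}.
\]
With $w$ and $c$ fixed, both bounds tend to $1/(w+1)$, so squeezing gives $i_\ast/k\to 1/(w+1)$.
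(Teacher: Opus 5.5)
Your proposal is correct and follows the paper's route: rewrite the majority inequality as $j(w+1)\ge k+c$, read off its least solution and two-sided envelope from natural-number ceiling division, and squeeze for the limit. You make explicit the quotient characterization and the truncated-subtraction bookkeeping (the live-stage restriction $j\le k$) that the paper's two-line proof leaves implicit.
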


\begin{proof}
The majority inequality is equivalent to $i(w+1)\ge k+c$. Natural-number ceiling division gives the displayed least solution and its two-sided envelope.
\end{proof}

\begin{proposition}[Permutation gauge symmetry]\label{prop:gauge-symmetry}
Let $\mathrm{Sym}_{i+1}$ act on $t_i$ by permuting which of the $i+1$ occurrences of $b$ is designated as the active one and which are wrapper copies. Then:
\begin{enumerate}[nosep,leftmargin=1.6em]
\item the constant payload tuple on the diagonal is fixed by every permutation in $\mathrm{Sym}_{i+1}$;
\item direct additive observers are permutation-invariant on this orbit but remain \emph{multiplicity-sensitive}: on the constant tuple they return $(i+1)|b|$, and therefore grow strictly with the number of payload positions whenever $|b|>0$;
\item dependency-pair projection quotients away the wrapper context and retains only the counter coordinate, the gauge-invariant coordinate fixed by every such payload permutation.
\end{enumerate}
\end{proposition}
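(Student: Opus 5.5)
The plan is to make the action of $\mathrm{Sym}_{i+1}$ precise at the level of the payload tuple and then read off the three items in turn. By Proposition~\ref{prop:trace-law} and the offset conservation law (Proposition~\ref{prop:offset-conservation}), $t_i$ has exactly $i+1$ payload positions: $i$ wrapper positions (the first arguments of the $G$-frames) and one active position (the second argument of the live $F$-site). Record the payload content of $t_i$ as the tuple $p_i=(b,\dots,b)\in B^{i+1}$, indexed by these positions. A permutation $\sigma\in\mathrm{Sym}_{i+1}$ acts by $(\sigma\cdot p)_j=p_{\sigma^{-1}(j)}$. Relabelling which index is called active amounts to re-substituting the content at each position. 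Because all entries coincide, this leaves the term $t_i$ syntactically unchanged.

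Item (1) is then immediate. $p_i$ lies on the diagonal $\Delta_{i+1}$ of Definition~\ref{def:wrapper-stack}, extended by the active slot. Any $\sigma$ permutes equal entries, so $\sigma\cdot p_i=p_i$, and the induced term is again $t_i$.

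For item (2), I take a direct additive observer to be any $\mu(t)=\sum_{\text{positions}}\omega(\text{symbol})$ with nonnegative weights, restricted to its payload part $\mu_{\mathrm{pay}}(p)=\sum_j |p_j|$.
\begin{enumerate}[nosep,leftmargin=1.6em]
\item Permutation invariance follows from commutativity of the sum.
\item Evaluating at the constant tuple gives $(i+1)|b|$.
\item Multiplicity sensitivity follows from $(i+2)|b|-(i+1)|b|=|b|>0$. Combined with Proposition~\ref{prop:exact-runtime-mass-rate}, each firing strictly increases this component, even though the orbit is a single point.
\end{enumerate}
The content of the item is the contrast: invariance under $\mathrm{Sym}_{i+1}$ does not imply insensitivity to the order $i+1$ of the symmetric group.

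Item (3) is the only step that needs care. I would model the dependency-pair projection as the map $t_i\mapsto F^\sharp(a,b,S^{k-i}(0))$, which extracts the live $F$-site and discards the $G$-context, followed by the subterm-criterion projection $\pi(F^\sharp)=3$, which yields $S^{k-i}(0)$, that is, $\mathrm{ctr}(t_i)=k-i$.

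The main obstacle is that the first map does retain one copy of $b$, namely the active slot. Gauge invariance therefore has to be argued at the level of the projected coordinate rather than the intermediate dependency pair. I would handle this by showing that $\pi\circ\mathrm{DP}$ factors through the quotient $B^{i+1}/\mathrm{Sym}_{i+1}$ composed with the counter. The projected value depends only on the third argument, which contains no payload position. It is therefore constant on every $\mathrm{Sym}_{i+1}$-orbit, and in fact on every payload substitution.

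Finally, I would note that the retained coordinate is exactly the one that strictly decreases (Proposition~\ref{prop:trace-law}). This shows that the quotient discards only gauge data and none of the descent data.
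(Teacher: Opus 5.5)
Your proposal is correct and follows the paper's own argument. The paper's proof makes the same three observations more tersely: identical copies of $b$ leave the diagonal tuple fixed, additive mass is invariant under relabelling yet still equals $(i+1)|b|$, and the dependency-pair projection keeps only the $b$-free counter. Your extra step in item (3) notes that the marked pair $F^\sharp(a,b,S^{k-i}(0))$ still carries the active copy of $b$, so you route invariance through the projection onto the third argument; this sharpens the paper's one-line treatment and matches the argument it gives separately in Proposition~\ref{prop:S-retained}(1).
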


\begin{proof}
All $i+1$ occurrences of $b$ carry the same ground term, so relabelling payload positions leaves the diagonal tuple unchanged. Additive mass is invariant under relabelling but still counts multiplicity: on the diagonal it is $(i+1)|b|$. Dependency-pair projection discards the wrapper context entirely and retains only the counter coordinate.
\end{proof}

\begin{proposition}[The counter $S$ is the gauge-invariant retained coordinate]
\label{prop:S-retained}
Let $S^{K-i}(0)$ denote the counter term inside the active $F$-site at step~$i$
of the canonical trace. Then:
\begin{enumerate}[nosep,leftmargin=1.6em]
\item \textbf{Gauge invariance.} $S^{K-i}(0)$ is fixed by every permutation in
  $\mathrm{Sym}_{i+1}$: the gauge group acts only on $b$-positions, and
  $S^{K-i}(0)$ is free of $b$-occurrences.
\item \textbf{Strict subterm descent.} At each step,
  $S^{K-i}(0) \triangleright S^{K-i-1}(0)$: the next-step counter is a strict
  subterm of the counter at that stage.
\item \textbf{Reversibility.} The descent $S(n)\mapsto n$ is uniquely invertible:
  given the reduct, the unique redex is recovered by reading $S(n)$ from $n$.
\item \textbf{Uniqueness.} $S^{K-i}(0)$ is the unique component of the triple
  (F-counter, G-stack, $b$-copies) that is simultaneously gauge-invariant and
  strictly descending. The G-stack multiplicity $i$ increases rather than
  decreases, and the payload-multiplicity coordinate also increases rather than
  decreases.
\end{enumerate}
The dependency-pair projection succeeds by retaining $S$ alone and discarding
$G$ and $b$. It certifies termination by the well-founded strict subterm
descent on $S$-terms, under the Arts-Giesl soundness license~\cite{artsgiesl2000} that declares $b$
inert.
\end{proposition}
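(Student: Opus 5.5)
The plan is to read all four items directly off the canonical trace law (Proposition~\ref{prop:trace-law}), which fixes $t_i=G^i\bigl(b,F(a,b,S^{K-i}(0))\bigr)$. I will treat the triple (F-counter, G-stack, $b$-copies) as three coordinates of $t_i$, and analyse the action of $\mathrm{Sym}_{i+1}$ through Proposition~\ref{prop:gauge-symmetry}.

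\textbf{Gauge invariance.} By Proposition~\ref{prop:gauge-symmetry}, the gauge action only relabels which of the $i+1$ occurrences of $b$ is active and which are wrapper copies. The counter $S^{K-i}(0)$ is built from $S$ and $0$ alone. Because $a,b$ are ground normal forms placed in slots disjoint from the third argument of $F$, it contains no $b$-position. So every relabelling fixes it.

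\textbf{Strict subterm descent.} I will use the trace law at $t_i$ and $t_{i+1}$. The active counter passes from $S^{K-i}(0)$ to $S^{K-i-1}(0)$, which is the immediate argument of the outer $S$. Hence $S^{K-i}(0)\triangleright S^{K-i-1}(0)$.

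\textbf{Reversibility.} The map $n\mapsto S(n)$ is an injective constructor application. Given the reduct counter $n$, the pre-step counter is therefore exactly $S(n)$. The trace also has a unique redex at each live stage (Proposition~\ref{prop:exact-runtime-mass-rate}), so the whole predecessor state $t_i$ is recovered from $t_{i+1}$ by removing the outermost new $G$-frame and restoring $S$. This gives unique invertibility.

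\textbf{Uniqueness.} Proposition~\ref{prop:offset-conservation} gives $\#G(t_i)=i$ and $\#b(t_i)=i+1$. Both are strictly increasing in $i$, so neither is strictly descending, and only the counter, with $\mathrm{ctr}(t_i)=K-i$, qualifies.

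This step is where I expect the main obstacle. The notion of a ``component'' must be pinned down so that the claim does not become vacuous or false. If one allowed derived quantities such as $K-\#G(t_i)$, these would also be gauge-invariant and descending. I will therefore restrict to the three syntactic coordinates named in the statement and remark that derived descending quantities factor through the counter.

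The closing sentence, that the DP projection retains $S$ alone, then follows. The dependency pair $F^\sharp(x,y,S(n))\to F^\sharp(x,y,n)$ decreases only in the third argument, and well-foundedness of $\triangleright$ on finite terms combined with the Arts--Giesl theorem~\cite{artsgiesl2000} yields termination.
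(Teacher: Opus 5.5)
Your proposal is correct and follows the paper's proof item by item. The counter is an $\{S,0\}$-term sitting in a slot that the payload permutations never touch; $S^{K-i}(0)=S(S^{K-i-1}(0))$ gives the strict subterm step; the pre-step counter is recovered by prepending one $S$; and since $\#G(t_i)=i$ and $\#b(t_i)=i+1$ both increase, only the counter descends.

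Your position-based phrasing of (1) and your restriction of ``component'' to the three named syntactic coordinates are reasonable sharpenings of the paper's wording. The one loose step is your extra claim that all of $t_i$ is recovered from $t_{i+1}$ because the redex is unique. Forward uniqueness of the redex does not give backward uniqueness: under the full relation, $t_{i+1}$ also has predecessors obtained by replacing $a$ with $F(a,x,0)$. This does not affect item (3), which only concerns the counter.
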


\begin{proof}
\emph{(1)} The action of $\mathrm{Sym}_{i+1}$ permutes the $i+1$ $b$-positions
in $t_i$. The counter $S^{K-i}(0)$ is a ground term over $\{S, 0\}$, free of
$b$-occurrences. Every permutation fixes it.

\emph{(2)} $S^{K-i}(0) = S(S^{K-i-1}(0))$, so $S^{K-i-1}(0)$ is a proper
subterm of $S^{K-i}(0)$ by the subterm relation.

\emph{(3)} The rule $F(x,y,S(n))\to G(y,F(x,y,n))$ replaces $S(n)$ with $n$.
Given the reduct $G(y,F(x,y,n))$, the unique redex is $F(x,y,S(n))$:
reconstruct $S(n)$ by prepending one $S$-application.

\emph{(4)} G-multiplicity $\#G(t_i) = i$ strictly increases, and the
payload-multiplicity coordinate $\#b(t_i)=i+1$ also strictly increases. By
Proposition~\ref{prop:gauge-symmetry}, payload relabelling leaves the
retained control coordinate fixed. Only $S^{K-i}(0)$ decreases and is gauge-invariant
simultaneously.
\end{proof}

\begin{corollary}[The counter is a sufficient statistic for residual work]\label{cor:counter-exact-sufficient}
From every live state $t_i$, the remaining maximal derivation has
$\mathrm{ctr}(t_i)+1$ steps and terminates uniquely at $t_{k+1}$. The counter is therefore gauge-invariant, descending, and predictive of the complete residual work.
\end{corollary}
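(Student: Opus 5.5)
The plan is to read the corollary off the canonical trace law (Proposition~\ref{prop:trace-law}) together with its uniqueness content in Proposition~\ref{prop:exact-runtime-mass-rate}. The key steps are as follows.

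First, I fix a live state $t_i=G^i\bigl(b,F(a,b,S^{k-i}(0))\bigr)$ with $0\le i\le k$. By Proposition~\ref{prop:trace-law}, $\mathrm{ctr}(t_i)=k-i$.

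Second, I observe that $t_i$ is itself of the same shape as a canonical start term, wrapped in a $G$-context. The wrapper context $G^i(b,\cdot)$ contains no redex: $a$ and $b$ are normal forms, and $G$ has no rules. So every redex of $t_i$ lies inside the active site $F(a,b,S^{k-i}(0))$.

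Third, I run the induction of Proposition~\ref{prop:trace-law} on the residual counter $m:=k-i$. Inside the active site there is exactly one redex at each stage, so the derivation is forced. It performs $m$ applications of the recursive rule, producing $t_{i+1},\dots,t_k$, and then one application of the base rule, producing $t_{k+1}=G^k(b,a)$.

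Fourth, I count steps and check the endpoint. The residual derivation has length $m+1=\mathrm{ctr}(t_i)+1$. Since the redex is unique at every stage, the maximal derivation is unique, and it terminates at the normal form $t_{k+1}$. Alternatively, Proposition~\ref{prop:exact-runtime-mass-rate} already gives total length $k+1$ with a unique path; the suffix starting at $t_i$ then has length $(k+1)-i=\mathrm{ctr}(t_i)+1$. Uniqueness of that suffix follows because the whole maximal derivation from $t_0$ is unique and $t_i$ lies on it.

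Finally, I collect the three qualitative properties. Gauge invariance and strict descent are parts (1) and (2) of Proposition~\ref{prop:S-retained}. Predictiveness is the exact step count just established, which depends only on $\mathrm{ctr}(t_i)$ and not on $a$, $b$, or the size of the $G$-stack.

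The only delicate point is the uniqueness of the redex. In particular, I must check that no copy of $b$ and no $G$-frame can harbour an $F$-redex, which is where the assumption that $a,b$ are ground normal forms is used. Everything else is bookkeeping already done in the trace law.
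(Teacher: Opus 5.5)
Your proposal is correct and matches the route the paper intends (it states the corollary without a separate proof): the residual length $k-i+1=\mathrm{ctr}(t_i)+1$ and the unique endpoint $t_{k+1}$ are read off the trace law and the unique $(k+1)$-step derivation of Proposition~\ref{prop:exact-runtime-mass-rate}, and gauge invariance and descent come from Proposition~\ref{prop:S-retained}(1)--(2). Your check that no $G$-frame and no copy of $b$ contains a redex is the single-redex fact that those propositions rely on.
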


\begin{remark}[Asymmetry of the reversible and irreversible coordinates]\label{rem:reversibility-asymmetry}
Each rule firing exchanges one unit of reversible structure (one $S$-layer,
uniquely recoverable by appending $S$) for one unit of irreversible structure
(one $G$-frame, which every rule in the system leaves in place). The $S$-descent and the
$G$-accumulation therefore run in strictly opposite directions: $S$ decrements
while $G$ increments, their sum remains $K$, and the two processes have
opposite time-reversal properties. This structural asymmetry is what the
dependency-pair method formalizes: it retains the reversible coordinate $S$
and confesses the irreversible coordinate $G$ under external license.
The governing conservation law is $\mathrm{ctr}(t_i)+\#G(t_i)=K$.
\end{remark}

\begin{proposition}[Operational inexpressibility as norm mismatch]\label{prop:norm-mismatch}
Consider the wrapper stack $v_i\in\Delta_i\subset B^i$. Three canonical norms give three distinct readings:
\begin{itemize}[nosep,leftmargin=1.4em]
\item $\|v_i\|_{\ell^0}=1$, reading $\ell^0$ as the rank of the tuple rather than the standard count of nonzero entries, which on this vector is $i$;
\item $\|v_i\|_{\ell^\infty}=|b|$, the size of the common component;
\item $\|v_i\|_{\ell^1}=i|b|$, the additive sum of component sizes.
\end{itemize}
Dependency-pair projection computes the $\ell^0$-type observable. Successful coupled methods effectively collapse to an $\ell^\infty$-type observable. Direct additive and affine whole-term methods compute the $\ell^1$-type observable. Direct methods fail because the gauge-invariant observable relevant to the residual termination question is rank-like, while the blocked direct observer insists on summing carrier multiplicity.
\end{proposition}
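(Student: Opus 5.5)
The plan is to separate the statement into its computational part, the three norm values, and its interpretive part, the assignment of each method family to one of the norms. I then treat the interpretive part as a consequence of results already stated.

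For the computation I use Definition~\ref{def:wrapper-stack}: every component of $v_i$ equals $b$, so $v_i=(b,\dots,b)\in\Delta_i$. Each component gets the size weight $|b|$. The $\ell^1$ value is then the sum $\sum_{j=1}^{i}|b|=i|b|$, and the $\ell^\infty$ value is $\max_j|b|=|b|$. For $\ell^0$ I adopt the stated reading, the rank of the tuple taken as the dimension of the span of its components, or equivalently the number of distinct component values. A diagonal vector with $i\ge1$ has exactly one distinct value, so this reading gives $1$. I also record the caveat in the statement: the standard nonzero-count on this vector would give $i$. I restrict to $i\ge 1$, since $v_0$ is empty, and use Proposition~\ref{prop:offset-conservation} to confirm that the stack length is $\#G(t_i)=i$.

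For the classification I argue one family at a time.
\begin{itemize}[nosep,leftmargin=1.4em]
\item \emph{Additive/affine methods.} A direct additive measure restricted to the wrapper stack is a positive-weighted sum of component sizes, which is $\ell^1$ up to the constant contributions of $G$-frames. Proposition~\ref{prop:gauge-symmetry}(2) already states that it returns a multiplicity-sensitive value on the diagonal.
\item \emph{Dependency-pair projection.} By Proposition~\ref{prop:gauge-symmetry}(3) and Proposition~\ref{prop:S-retained}, it quotients the wrapper context to the gauge-invariant data. On the diagonal orbit that data is only whether the stack carries a single payload value, i.e.\ the rank-$1$ fact, together with the counter. By Corollary~\ref{cor:counter-exact-sufficient}, the counter alone predicts the residual work.
\item \emph{Coupled successful methods} (polynomial interpretations, path orderings). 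These compare $y$ against itself by subterm or multiset comparison, so each duplicated copy of $b$ is bounded by a single copy. I present this as the $\ell^\infty$ reading, justified by the rule-level observation of Proposition~\ref{prop:exchange-rate} that both right-hand occurrences of $y$ are the same term.
\end{itemize}

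The failure clause then follows by combining the $\ell^1$ identification with Proposition~\ref{prop:confession-dominance}. The $\ell^1$ value $i|b|$ grows with $i$ and is unbounded in $|b|$, while the verdict-relevant data (rank $1$ plus the counter) is independent of both. No strict decrease in $\ell^1$ can therefore track the counter descent uniformly, which is exactly the additive-failure proposition above.

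I expect the main obstacle to be the ``coupled methods collapse to $\ell^\infty$'' clause. This claim is interpretive rather than a theorem about arbitrary orderings, and a polynomial interpretation need not literally compute a max. I would state that clause as a characterization on the diagonal orbit only: any ordering whose comparison of $v_i$ against $v_{i+1}$ is insensitive to multiplicity, given equal components, factors through $\max_j|b_j|$. I would prove that factoring by noting that on $\Delta_i$ every symmetric, multiplicity-insensitive monotone functional is a function of the common component. Outside the diagonal I would make no claim.
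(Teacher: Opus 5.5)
Your computation of the three values, and your identification of the additive and dependency-pair readings, follow the paper's proof. That proof only says that the rank, max, and sum formulas are immediate from $v_i=(b,\dots,b)$, that direct additive aggregation literally sums contributions, and that the DP projection discards multiplicity and keeps only the recursive-call pattern. Restricting to $i\ge 1$ is a sensible refinement the paper omits: for the empty tuple $v_0$, neither the rank-$1$ value nor $\|v_0\|_{\ell^\infty}=|b|$ holds. Your two glosses of ``rank'' (dimension of the span, number of distinct values) agree on $\Delta_i$, which is all you need, though they are not equivalent in general. The paper gives no argument for the coupled-methods/$\ell^\infty$ clause. So the one place you go beyond it is the diagonal characterization in your last paragraph, and that step does not work as stated.

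Two things go wrong there.

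\emph{The factorization is through the seed, not its size.} ``A function of the common component'' does not give ``factors through $\max_j|b_j|$''. On a fixed $\Delta_i$ every functional is trivially a function of the common component $c$, but not necessarily of $|c|$; a polynomial interpretation can separate two seeds of equal size. What your argument actually yields, once cross-$i$ multiplicity-insensitivity is imposed, is factorization through the collapse maps $c_i$, which is Proposition~\ref{prop:carrier-factorization}. Symmetry and monotonicity do no work on the diagonal. To call the result $\ell^\infty$, you must read the observable as the seed itself (the max of identical entries), with $|b|$ as one function of it.

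\emph{The hypothesis fails for the intended methods.} This is the more serious problem. Take a successful polynomial interpretation such as $[F](x,y,n)=x+n(y+1)$, $[G](a,t)=a+t$, $[S](n)=n+1$, $[Z]=1$ over the positive integers. It evaluates the stack additively, $[G^i(b,t)]=i[b]+[t]$. Likewise every simplification order, MPO included, has $G^{i+1}(b,t)\succ G^i(b,t)$ by the subterm property. So both are multiplicity-sensitive on $v_i$.

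These methods succeed because they are closed under contexts. Comparing $t_i$ with $t_{i+1}$ therefore reduces to the single rule instance $F(a,b,S(n))$ versus $G(b,F(a,b,n))$. There the one extra copy of $b$ is paid for by a single left-hand quantity: the subterm property together with precedence $F>G$ for MPO, and the counter-coupled term $n(y+1)$ for the polynomial. A precise ``one copy at a time'' version of the clause has to be stated at that rule level. Otherwise, present it as the paper does, as an interpretive reading rather than a proved characterization.
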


\begin{proof}
The rank, max, and sum formulas are immediate from $v_i=(b,\ldots,b)$. Direct additive aggregation literally computes the sum of contributions. Dependency-pair projection discards multiplicity and keeps only the recursive-call pattern.
\end{proof}

\begin{remark}[Dimension-inflation reading]\label{rem:dimension-inflation}
The canonical orbit is an injectively indexed one-parameter curve, while its live payload carrier has $i+1$ coordinates at stage $i$. Thus intrinsic trace parameter count remains one while direct carrier dimension grows linearly. This is the geometric form of the $\ell^0/\ell^1$ mismatch.
\end{remark}

\subsection{A Shannon-style validator}\label{subsec:shannon-validator}

The formal theorem stack above is proof-theoretic and stands independently of information theory. Shannon enters here as a \emph{validator}: a second formal lens that recodes the same obstruction in coding-theoretic terms while the theorem's proof-theoretic source of authority stays fixed.

\begin{definition}[Gauge-orbit entropy]\label{def:gauge-entropy}
Fix step $i$ of the canonical trace. Under the uniform coding convention on the $i+1$ payload-bearing positions, let $X_i$ be the random variable naming which position is singled out as the active locus. The \emph{gauge-orbit entropy} is
\[
H_{\mathrm{gauge}}(i):=H(X_i)=\log_2(i+1).
\]
\end{definition}

\begin{remark}[Two distinct entropy objects]
The quantity $H_{\mathrm{gauge}}(i)$ is a Shannon entropy under an explicit coding model over payload positions. The proof-entropy fraction $H_{\mathrm{proof}}(t_i)$ is a structural ratio inside the term algebra. They measure different objects.
\end{remark}

\begin{definition}[Inefficiency coefficient]\label{def:inefficiency}
For $k\ge 1$, let
\[
\eta(k,w):=\frac{\mathrm{Con}_{\mathrm{total}}(k,w)}{\ln 2\cdot H_{\mathrm{gauge}}(k)}
=\frac{(k+1)(k+2)w}{2\ln(k+1)}.
\]
The restriction to $k\ge 1$ keeps the gauge-orbit entropy in the denominator positive.
This coefficient compares the syntactic structural mass carried by the direct whole-term observer to the coding-theoretic information content of the gauge orbit. It is therefore a diagnostic for the direct carrier rather than an invariant of transformed proof objects.
\end{definition}

\begin{proposition}[Divergence of the direct-carrier inefficiency coefficient]\label{prop:inefficiency-classification}
Fix $w\ge 1$. Then for every $N\in\mathbb N$,
\[
\eta(2N+1,w)\ge N.
\]
In particular, the direct-carrier inefficiency coefficient is unbounded along
the canonical trace.
\end{proposition}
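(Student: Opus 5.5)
We substitute $k=2N+1$ into the closed form of Definition~\ref{def:inefficiency} and compare numerator and denominator directly. With $k=2N+1$ we have $k+1=2N+2$ and $k+2=2N+3$, so
\[
\eta(2N+1,w)=\frac{(2N+2)(2N+3)\,w}{2\ln(2N+2)}=\frac{(N+1)(2N+3)\,w}{\ln(2N+2)}.
\]
Since $w\ge 1$, it suffices to show $(N+1)(2N+3)\ge N\ln(2N+2)$ for every $N\in\mathbb N$. The case $N=0$ is immediate, because the right-hand side vanishes and the left-hand side is positive. Note that $k=1\ge 1$ there, so the denominator $\ln 2$ is positive, as the definition requires.

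For $N\ge 1$ we use the elementary bound $\ln x\le x-1$ for $x>0$. This gives $\ln(2N+2)\le 2N+1<2N+3$. Since also $N<N+1$, we obtain
\[
N\ln(2N+2)<(N+1)(2N+3),
\]
hence $\eta(2N+1,w)\ge N$, with strict inequality in fact. If a division-free form is preferred, matching the mechanized style used in Propositions~\ref{prop:trace-action-law} and~\ref{prop:exact-mass-partition}, the same argument can be stated as $N\cdot 2\ln(2N+2)\le (2N+2)(2N+3)w$.

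Unboundedness along the canonical trace then follows directly. Given any bound $M$, choose an integer $N\ge M+1$ and take $k=2N+1$; then $\eta(k,w)\ge N>M$.

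The argument has no real obstacle. The only points needing care are keeping $k\ge 1$ so the logarithm in the denominator is positive, and the choice of crude estimate. The bound is far from tight: the true growth is of order $k^2/\ln k$. The odd-index parametrization $2N+1$ appears only to make the factor $2$ in the numerator cancel cleanly.
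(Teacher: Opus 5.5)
Your proof is correct and is essentially the paper's argument: both rest on the elementary bound $\ln(k+1)\le k$ (your $\ln(2N+2)\le 2N+1$) followed by the substitution $k=2N+1$. The only difference is order: the paper applies the bound first, obtains the uniform estimate $\eta(k,w)\ge (k+1)w/2$ for all $k\ge 1$, and reads off $\eta(2N+1,w)\ge (N+1)w\ge N$, while you substitute first and carry the extra factor $2N+3$ along.
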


\begin{proof}
By Definition~\ref{def:inefficiency},
\[
\eta(k,w)=\frac{(k+1)(k+2)w}{2\ln(k+1)}.
\]
For $k\ge 1$ we have $\ln(k+1)\le k$, hence
\[
\eta(k,w)\ge \frac{(k+1)w}{2}.
\]
Substituting $k=2N+1$ gives $\eta(2N+1,w)\ge (N+1)w\ge N$ when $w\ge 1$.
\end{proof}

\begin{proposition}[Asymptotic rate of inefficiency divergence]\label{prop:inefficiency-rate}
Fix $w\ge 1$. Then for all $k\ge 1$:
\begin{enumerate}[nosep,leftmargin=1.6em]
\item \emph{Closed form.} $\eta(k,w)=\dfrac{(k+1)(k+2)w}{2\ln(k+1)}$.
\item \emph{Quadratic growth with a logarithmic divisor.} $\eta(k,w)=\Theta\!\left(\dfrac{k^2 w}{\ln(k+1)}\right)$ as $k\to\infty$.
\item \emph{Per-residual rate.} $\dfrac{\eta(k,w)}{\mathrm{Res}(k)}=\dfrac{(k+1)(k+2)w}{2k\ln(k+1)}=\Theta\!\left(\dfrac{kw}{\ln(k+1)}\right)$.
\item \emph{Per-step rate.} $\dfrac{\eta(k,w)}{k+1}=\dfrac{(k+2)w}{2\ln(k+1)}=\Theta\!\left(\dfrac{kw}{\ln(k+1)}\right)$.
\end{enumerate}
Hence the inefficiency grows at rate $k^2/\ln(k+1)$ in absolute terms and at rate $k/\ln(k+1)$ both per unit residual proof work and per recursive step. The gap between the direct-carrier representation and the Shannon-coded orbit widens super-linearly, with a logarithmic slowdown from the $\log_2(k+1)$ denominator.
\end{proposition}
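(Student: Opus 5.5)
The closed form (1) is immediate. Substitute $\mathrm{Con}_{\mathrm{total}}(k,w)=\frac{(k+1)(k+2)}{2}w$ from Proposition~\ref{prop:confession-dominance-full} and $H_{\mathrm{gauge}}(k)=\log_2(k+1)$ from Definition~\ref{def:gauge-entropy} into Definition~\ref{def:inefficiency}. Then use $\ln 2\cdot\log_2(k+1)=\ln(k+1)$, which is positive for $k\ge 1$. The remaining items then become statements about explicit elementary functions of $k$, with $w$ held fixed as a constant factor.

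For item (2), I would bound the numerator $(k+1)(k+2)$ on both sides by constant multiples of $k^2$ for $k\ge1$, namely
\[
k^2\le (k+1)(k+2)\le 6k^2.
\]
Dividing by $2\ln(k+1)>0$ then gives
\[
\frac{k^2w}{2\ln(k+1)}\le \eta(k,w)\le \frac{3k^2w}{\ln(k+1)}.
\]
These are explicit $\Theta$-constants valid for all $k\ge 1$, not only asymptotically. Items (3) and (4) are exact algebraic rewritings. Dividing the closed form by $\mathrm{Res}(k)=k$ (Proposition~\ref{prop:confession-dominance}) gives the displayed fraction in (3). Dividing by $k+1$ cancels a factor and gives $(k+2)w/(2\ln(k+1))$ in (4). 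The $\Theta(kw/\ln(k+1))$ claims then follow from
\[
k\le \frac{(k+1)(k+2)}{k}\le 6k,
\qquad
k\le k+2\le 3k,
\qquad (k\ge 1).
\]

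The concluding sentence needs two further facts. \textbf{Superlinear growth:} $\eta(k,w)/k\to\infty$, because $k/\ln(k+1)\to\infty$. This follows from $\ln(k+1)=o(k)$, which I would quote as standard or derive from $\ln(k+1)\le 2\sqrt{k}$. \textbf{Logarithmic slowdown:} the gap falls short of the full quadratic rate, since $\eta(k,w)/k^2\to0$, which is just the $\ln(k+1)$ divisor tending to infinity.

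The only point requiring care is that $\Theta$ be read with $w$ fixed (or uniformly in $w$, since $w$ enters linearly), and that the statement hold for all $k\ge1$ rather than only in the limit. The explicit two-sided constants above settle both, so I expect no real obstacle. The mild subtlety is the per-residual item, where the lower constant must survive division by $k$. Using $(k+1)(k+2)/k\ge k$ handles this directly. This result also strengthens Proposition~\ref{prop:inefficiency-classification}, since its lower bound $(k+1)w/2$ is weaker than the $k^2w/(2\ln(k+1))$ obtained here.
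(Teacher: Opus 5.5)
Your argument is correct and follows the paper's route: substitute $\mathrm{Con}_{\mathrm{total}}$ and $H_{\mathrm{gauge}}$ into the definition of $\eta$, then divide by $\mathrm{Res}(k)=k$ and by $k+1$. The only differences are that you make the $\Theta$-constants explicit for all $k\ge 1$ and check the closing superlinear and sub-quadratic claims, which the paper reads off at the order-of-magnitude level; note that your final aside holds only asymptotically, since $k^2w/(2\ln(k+1))$ dominates the earlier bound $(k+1)w/2$ only for $k\ge 2$ (at $k=1$ it equals $w/(2\ln 2)<w$).
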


\begin{proof}
\emph{(1)} Direct substitution: $\eta(k,w)=\mathrm{Con}_{\mathrm{total}}(k,w)/(\ln 2\cdot H_{\mathrm{gauge}}(k))$ with $\mathrm{Con}_{\mathrm{total}}(k,w)=\tfrac{(k+1)(k+2)}{2}w$ (Proposition~\ref{prop:confession-dominance-full}) and $H_{\mathrm{gauge}}(k)=\log_2(k+1)=\ln(k+1)/\ln 2$.
\emph{(2)} The displayed closed form has numerator of order $k^2w$ and denominator of order $\ln(k+1)$, giving $\eta(k,w)=\Theta(k^2w/\ln(k+1))$.
\emph{(3)} Divide by $\mathrm{Res}(k)=k$.
\emph{(4)} Divide by the canonical trace length $k+1$; the same asymptotic calculation gives the stated per-step rate.
\end{proof}

\begin{remark}[Comparison with Proposition~\ref{prop:inefficiency-classification}]
Proposition~\ref{prop:inefficiency-classification} gives the sampled lower bound $\eta(2N+1,w)\ge N$, which is linear in $N$ along an odd-indexed subsequence. Proposition~\ref{prop:inefficiency-rate} strengthens this to the asymptotic $\Theta(k^2 w/\ln(k+1))$, which is super-linear and subquadratic. The two statements are consistent: the sampled linear bound is a weak instance of the stronger divergence, taken along a subsequence where the logarithmic denominator is traded against explicit constants.
\end{remark}

\begin{proposition}[Marginal cost per bit of gauge information]\label{prop:marginal-gauge-cost}
Let $\mathrm{Cum}(j,w)=wj(j+1)/2$ be the cumulative completed-cell mass through stage $j$, and for $m\ge1$ define
\[
\mathrm{bitCost}(m,w):=\mathrm{Cum}(2^m-1,w)-\mathrm{Cum}(2^{m-1}-1,w).
\]
Then
\[
2\,\mathrm{bitCost}(m,w)+w2^{m-1}=3w4^{m-1},
\qquad
\mathrm{bitCost}(m,w)\ge w4^{m-1}.
\]
By contrast, for $k\ge 1$ the projected counter requires $\mathrm{projBits}(k)=\lfloor\log_2 k\rfloor+1$ bits and one register cell per bit. The divergence of $\eta$ therefore has a marginal form: each additional direct-carrier bit crosses a geometrically larger mass shell.
\end{proposition}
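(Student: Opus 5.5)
The identity is a closed-form computation from the definition of $\mathrm{Cum}$; the inequality follows from it, and the $\mathrm{projBits}$ clause is a remark about binary length. First write $a:=2^{m-1}$, so that $2^m-1=2a-1$ and $2^{m-1}-1=a-1$. Substituting into $\mathrm{Cum}(j,w)=wj(j+1)/2$ and doubling to avoid division gives
\[
2\,\mathrm{bitCost}(m,w)=w(2a-1)(2a)-w(a-1)a=w\bigl(4a^2-2a-a^2+a\bigr)=w(3a^2-a).
\]
Adding $w2^{m-1}=wa$ to both sides yields $2\,\mathrm{bitCost}(m,w)+w2^{m-1}=3wa^2=3w4^{m-1}$, which is the displayed identity. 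I state it in this subtraction-free, doubled form to match the natural-number arithmetic used elsewhere, for instance in Proposition~\ref{prop:trace-action-law}.

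For the lower bound, the identity gives $2\,\mathrm{bitCost}(m,w)=3wa^2-wa$. Since $a\ge 1$ when $m\ge 1$, we have $a^2\ge a$, so $3wa^2-wa\ge 2wa^2$. Hence $\mathrm{bitCost}(m,w)\ge wa^2=w4^{m-1}$. In naturals, I argue directly from $2\,\mathrm{bitCost}+wa=3wa^2$ and $wa\le wa^2$, so no truncated subtraction occurs. The only subtlety is checking that $\mathrm{Cum}(2^m-1,w)\ge\mathrm{Cum}(2^{m-1}-1,w)$, so that the defining difference of $\mathrm{bitCost}$ is an honest subtraction. This holds because $\mathrm{Cum}(\cdot,w)$ is monotone in $j$.

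The contrasting clause is standard. A positive integer $k$ with $2^{p-1}\le k<2^p$ has binary length exactly $p=\lfloor\log_2 k\rfloor+1$, and a register with one cell per bit stores it. The concluding sentence then reads off the bound. Raising the number of direct-carrier bits from $m-1$ to $m$ doubles the representable stage range, from $2^{m-1}$ to $2^m$ positions. The identity shows that this doubling costs completed-cell mass of order $w4^{m-1}$, so the cost grows geometrically with ratio $4$ per bit. The projected counter instead grows by a single register cell per bit. I expect no real obstacle here: the only care needed is the natural-number handling just described, and the rest is routine algebra.
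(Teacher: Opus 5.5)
Your proposal is correct and follows the paper's own route. The paper likewise evaluates the triangular closed form of $\mathrm{Cum}$ at the two adjacent power-of-two cutoffs and simplifies, gets the lower bound from the identity, and reads the $\mathrm{projBits}$ clause off $k<2^{\mathrm{projBits}(k)}$; your substitution $a=2^{m-1}$, the doubled subtraction-free form, and the step $a\le a^2$ are exactly that computation written out.
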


\begin{proof}
Apply the triangular identity to the two adjacent power-of-two cutoffs and simplify. The projection inequality is $k<2^{\mathrm{projBits}(k)}$.
\end{proof}

\begin{proposition}[Explicit-description linear gap at the duplicator]\label{prop:shannon-uniqueness}
Let
\[
M_i:=(i+1)(|b|+|G|)
\]
be the repeated-carrier envelope, and let
\[
L_{\mathrm{exp}}(i):=|b|+|G|+\mathrm{size}_2(i+1)+c_0
\]
be the explicit description length obtained from one seed description, one
wrapper-symbol description, a binary-length code $\mathrm{size}_2(i+1)$ for
the step index, and fixed glue overhead $c_0$. Then
\[
i(|b|+|G|)+L_{\mathrm{exp}}(i)=M_i+\mathrm{size}_2(i+1)+c_0,
\]
equivalently,
\[
M_i-L_{\mathrm{exp}}(i)=i(|b|+|G|)-\mathrm{size}_2(i+1)-c_0.
\]
The difference is therefore linear in $i$ against a logarithmic indexing overhead, and it is
positive above an explicit threshold: for cell weight $|b|+|G|\ge 2$, index $i\ge 2$, and glue
overhead $c_0<i$, the repeated-carrier envelope strictly exceeds the explicit description. At $i=0$
the inequality runs the other way, so the threshold is required rather than cosmetic
(Appendix~\ref{app:module-map}).
\end{proposition}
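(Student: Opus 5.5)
The proof splits into the exact identity and the threshold inequality. Both are elementary natural-number arithmetic, so I would keep everything subtraction-free, as in the mechanized style of Proposition~\ref{prop:exact-mass-partition}. Throughout, write $w:=|b|+|G|$, the wrapper-cell weight of Definition~\ref{def:wrapper-cell}.

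\emph{Identity.} Expand $M_i=(i+1)w=iw+w$. Substitute $L_{\mathrm{exp}}(i)=w+\mathrm{size}_2(i+1)+c_0$ into the left-hand side:
\[
iw+L_{\mathrm{exp}}(i)=iw+w+\mathrm{size}_2(i+1)+c_0=M_i+\mathrm{size}_2(i+1)+c_0 .
\]
The second displayed form is the same equation read in the integers after moving $L_{\mathrm{exp}}(i)$ across. This is a one-line ring normalization and needs no facts about $\mathrm{size}_2$.

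\emph{Threshold.} By the identity, $M_i>L_{\mathrm{exp}}(i)$ is equivalent to
\[
iw>\mathrm{size}_2(i+1)+c_0 .
\]
I would take $\mathrm{size}_2(n)=\lfloor\log_2 n\rfloor+1$, the number of binary digits of $n\ge 1$, consistent with $\mathrm{projBits}$ in Proposition~\ref{prop:marginal-gauge-cost}. The one genuine lemma is
\[
\mathrm{size}_2(i+1)\le i\quad\text{for } i\ge 2,
\]
which is equivalent to $i+1<2^i$. I would prove it by induction on $i\ge 2$: at $i=2$ we have $3<4$, and $2^{i+1}=2\cdot 2^i>2(i+1)\ge i+2$. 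Given the lemma, $w\ge 2$ and $c_0<i$ chain together:
\[
iw\ge 2i=i+i>\mathrm{size}_2(i+1)+c_0 .
\]
The lemma fails at $i=1$, since $\mathrm{size}_2(2)=2>1$. That failure explains the hypothesis $i\ge 2$ and is the step where care is needed. It is the only real obstacle, because the bound must be stated with the right digit-count convention and the right base case.

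\emph{Reverse direction at $i=0$.} Here $M_0=w$ and $L_{\mathrm{exp}}(0)=w+\mathrm{size}_2(1)+c_0=w+1+c_0>M_0$ for every $c_0\ge 0$. So the inequality is strict in the opposite direction, which shows that the threshold is needed.

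The linear-versus-logarithmic reading of the difference follows directly from the second form of the identity together with $\mathrm{size}_2(i+1)=O(\log i)$.
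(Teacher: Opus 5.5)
Your proof is correct, and on the identity it is exactly the paper's argument: expand $M_i=(i+1)(|b|+|G|)$ and $L_{\mathrm{exp}}(i)$, then rearrange, reading the second display in the integers. The paper's written proof stops at the identity and leaves the threshold and the $i=0$ reversal to the mechanized appendix, so your lemma $\mathrm{size}_2(i+1)\le i\iff i+1<2^i$ for $i\ge 2$ supplies the argument the text omits. That lemma uses the binary-digit convention $\mathrm{size}_2(n)=\lfloor\log_2 n\rfloor+1$, which gives $\mathrm{size}_2(1)=1$ and so makes the $i=0$ reversal strict.
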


\begin{proof}
Expand both sides from the definitions of $M_i$ and $L_{\mathrm{exp}}(i)$:
\[
i(|b|+|G|)+\bigl(|b|+|G|+\mathrm{size}_2(i+1)+c_0\bigr)
=(i+1)(|b|+|G|)+\mathrm{size}_2(i+1)+c_0.
\]
Rearranging yields the second display.
\end{proof}

\begin{definition}[Full payload-carrier vector and collapse map]\label{def:carrier-collapse}
At step $i$, let
\[
u_i:=(b_0,\ldots,b_i)\in B^{i+1}
\]
be the \emph{full payload-carrier vector} obtained by listing the payload occurrences stored in the wrapper stack together with the active payload occurrence inside the live $F$-subterm. On the canonical trace every component equals the same seed value~$b$, so $u_i\in\Delta_{i+1}$. Define the \emph{collapse map}
\[
c_i:\Delta_{i+1}\to B,
\qquad
c_i(c,\ldots,c):=c.
\]
The map $c_i$ forgets carrier multiplicity and retains only the seed value.
\end{definition}

\begin{proposition}[Seed-carrier factorization criterion]\label{prop:carrier-factorization}
Let $\{O_i\}_{i\ge 0}$ be a family of payload observables with
\[
O_i:\Delta_{i+1}\to Z
\]
for some codomain~$Z$. The following are equivalent:
\begin{enumerate}[nosep,leftmargin=1.6em]
\item for every $c\in B$ and all $i,j\ge 0$,
\[
O_i(c,\ldots,c)=O_j(c,\ldots,c),
\]
so the observable is insensitive to carrier multiplicity along the diagonal;
\item there exists a unique map $\overline O:B\to Z$ such that
\[
O_i=\overline O\circ c_i
\qquad\text{on }\Delta_{i+1}\text{ for every }i\ge 0.
\]
\end{enumerate}
Hence an observable treats multiplicity as informationally inert if and only if it factors through the seed-collapse maps~$c_i$.
\end{proposition}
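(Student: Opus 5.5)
The plan is to treat this as a standard universal-property argument for a family of surjections with a common target. The key structural fact I would record first is that each collapse map $c_i:\Delta_{i+1}\to B$ is a bijection. Its inverse is the diagonal embedding $d_i:B\to\Delta_{i+1}$, $d_i(c)=(c,\ldots,c)$, so $c_i\circ d_i=\mathrm{id}_B$ and $d_i\circ c_i=\mathrm{id}_{\Delta_{i+1}}$. This holds directly by Definition~\ref{def:carrier-collapse}, since every element of $\Delta_{i+1}$ has all components equal. Everything else will be read off from this observation.

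For (1)$\Rightarrow$(2), I would define $\overline O:=O_0\circ d_0$, that is, $\overline O(c):=O_0(c)$, viewing $c$ as the length-one diagonal tuple. The factorization is then checked pointwise. Any element of $\Delta_{i+1}$ has the form $d_i(c)$, and
\[
(\overline O\circ c_i)(d_i(c))=\overline O(c)=O_0(c)=O_i(c,\ldots,c),
\]
where the last equality is hypothesis (1) with $j=0$. Uniqueness follows from surjectivity of $c_i$ (it suffices to use the single index $i=0$). If $\overline O'\circ c_0=\overline O\circ c_0$, then precomposing with $d_0$ gives $\overline O'=\overline O$.

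For (2)$\Rightarrow$(1), I would simply evaluate both sides at the diagonal tuples:
\[
O_i(c,\ldots,c)=\overline O(c_i(c,\ldots,c))=\overline O(c)=O_j(c,\ldots,c),
\]
so the values are independent of $i$ and $j$. The final ``hence'' sentence is then only a restatement: ``insensitive to carrier multiplicity along the diagonal'' is exactly condition (1).

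I do not expect a genuine obstacle; the step needing the most care is bookkeeping. The index conventions must match: $c_i$ has domain $\Delta_{i+1}$ while the wrapper stack $v_i$ lives in $\Delta_i$, so the proof should consistently use the full payload-carrier vectors $u_i\in\Delta_{i+1}$ of Definition~\ref{def:carrier-collapse}. The uniqueness claim must also be justified through surjectivity of the $c_i$ rather than asserted. If the family were allowed to start at an index other than $0$, I would define $\overline O$ through any fixed index $i_0$ in place of $0$, and the argument would be unchanged.
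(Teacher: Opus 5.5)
Your proof is correct and follows essentially the same route as the paper: (2)$\Rightarrow$(1) by evaluating at diagonal tuples, and (1)$\Rightarrow$(2) by taking $\overline O(c)$ to be the common diagonal value (you fix it at index $0$), with uniqueness from surjectivity of the collapse maps. Your observation that each $c_i$ is a bijection, with the diagonal embedding as inverse, is correct, but the argument only uses the surjectivity that the paper invokes.
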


\begin{proof}
If (2) holds, then for every $c$ and all $i,j$,
\[
O_i(c,\ldots,c)=\overline O(c_i(c,\ldots,c))=\overline O(c)=\overline O(c_j(c,\ldots,c))=O_j(c,\ldots,c).
\]
Conversely, assume (1). Define $\overline O(c)$ to be the common value $O_i(c,\ldots,c)$. This is well defined by multiplicity-insensitivity, and then $O_i=\overline O\circ c_i$ on $\Delta_{i+1}$ for every $i$. Uniqueness is immediate because each $c_i$ is surjective.
\end{proof}

\begin{corollary}[The direct additive reading is the non-factorizing one]\label{cor:additive-nonfactor}
The family of direct additive observables
\[
O_i^{(1)}(u_i):=\|u_i\|_{\ell^1}=(i+1)|b|
\]
fails to factor through the collapse maps $c_i$, because for fixed seed value~$b$ it varies strictly with~$i$. By contrast, the seed-only observable
\[
O_i^{(\mathrm{seed})}(u_i):=|c_i(u_i)|=|b|
\]
does factor through $c_i$. Direct whole-term aggregation is therefore the direct whole-term reading that promotes carrier multiplicity itself to verdict-relevant signal.
\end{corollary}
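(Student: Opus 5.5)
The corollary follows from the seed-carrier factorization criterion (Proposition~\ref{prop:carrier-factorization}). That criterion reduces factorization through the collapse maps $c_i$ to a single test: condition (1), that the observable is constant in $i$ along each diagonal tuple $(c,\ldots,c)$. The plan is therefore to check condition (1) for each of the two observable families and read off the conclusion from the equivalence $(1)\Leftrightarrow(2)$.

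\emph{Additive family.} First I fix what the observable is: $O^{(1)}_i(c,\ldots,c)=\sum_{j=0}^{i}|c|=(i+1)|c|$, the $\ell^1$ reading already computed in Proposition~\ref{prop:norm-mismatch} and Proposition~\ref{prop:gauge-symmetry}(2). Next I evaluate it at the seed $b$ for two indices $i\neq j$. This gives $(i+1)|b|\neq(j+1)|b|$ as soon as $|b|\ge 1$, so condition (1) fails and, by the criterion, no $\overline O$ with $O^{(1)}_i=\overline O\circ c_i$ exists. To make the failure concrete, I would also argue directly: such an $\overline O$ would force $\overline O(b)=O^{(1)}_0(b)=|b|$ and $\overline O(b)=O^{(1)}_1(b,b)=2|b|$ at once. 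The one caveat is the hypothesis $|b|\ge 1$. Every ground term has size at least one, and the vacuous case $|b|=0$ is excluded exactly as in Remark~\ref{rem:payload-parameter}. I will state this nondegeneracy assumption explicitly, since without it the observable is trivially constant.

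\emph{Seed-only family.} Here I simply set $\overline O(c):=|c|$. Then $O^{(\mathrm{seed})}_i=\overline O\circ c_i$ holds by the definition of $c_i$ in Definition~\ref{def:carrier-collapse}. Uniqueness of $\overline O$ comes from the surjectivity argument already given in the proof of Proposition~\ref{prop:carrier-factorization}.

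The last sentence of the corollary is interpretive. I would justify it by combining the two computations with the characterization ``factors through $c_i$ iff multiplicity is informationally inert''. The additive reading is non-factorizing precisely because its value depends on $i$, that is, on carrier multiplicity. Hence it treats multiplicity as signal. There is no real obstacle here; the only care needed is the $|b|\ge 1$ assumption flagged above.
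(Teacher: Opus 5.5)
Your proposal is correct and follows essentially the paper's own route: invoke the seed-carrier factorization criterion, note that condition (1) fails because $(i+1)|b|$ varies with $i$, and exhibit $\overline O(c)=|c|$ for the seed-only family. Your explicit $|b|\ge 1$ caveat, and your reading of ``factor'' as a single $\overline O$ serving every $i$ at once (each individual $c_i$ is a bijection, so the per-$i$ reading would be vacuous), state precisely what the paper leaves implicit.
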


\begin{proof}
For fixed $b$, the value $(i+1)|b|$ depends on $i$, so Proposition~\ref{prop:carrier-factorization}(1) fails and $O_i^{(1)}$ stays outside the image of $c_i$. The seed-only observable is, by definition, $\overline O\circ c_i$ with $\overline O(c)=|c|$.
\end{proof}

\begin{remark}[What Shannon validates]
The proof-theoretic result says: the direct whole-term observer counts a payload dimension that lies beyond every verdict-bearing derivation, while dependency pairs succeed by projecting that dimension away under external license. The Shannon validator says: under an explicit coding model, the direct observer overcounts carrier multiplicity as if it were independent informational novelty, while the successful projection is the factorization through the seed-collapse maps that forgets multiplicity and preserves the verdict-relevant seed. It is a second formal confirmation of the same structural diagnosis rather than a replacement proof.
\end{remark}

\subsection{Architectural necessity of the duplication}\label{subsec:duplication-necessity}

A structural prior fact stands behind the quantitative results above: the duplication is the unique atomic move available to a first-order base/step/counter schema that emits a per-step record frame while preserving its generator, rather than an accidental feature of the particular rule studied here. Any rule shape that attempts to emit the record while dropping the duplication either erases the generator or breaks the base/step/counter pattern.

\noindent The formal result used here is the minimal many-sorted positional theorem: a right-hand side with both a new record frame and a recursive active site must contain two distinct generator occurrences. Definition~\ref{def:record-emission} states the same condition dynamically, along the canonical rewrite sequence, and the two are connected rather than left apart: from counter height $k$, every stage after the first firing carries the generator simultaneously in a frame slot and at an active-site generator position, at two distinct positions, and consecutive stages are related by the step rule (Appendix~\ref{app:module-map}). The stronger semantic and information-theoretic remarks below are interpretations of that core rather than separate stronger theorem claims.

\begin{definition}[Record-emitting base/step/counter schema]\label{def:record-emitter}
A \emph{record-emitting base/step/counter schema} is a first-order rewrite system with:
\begin{enumerate}[nosep,leftmargin=1.6em]
\item a counter sort $C$ with constructors $Z \in C$ and $S : C \to C$;
\item a generator sort $Y$;
\item a carrier sort $T$;
\item a distinguished active-site symbol $F : T \times Y \times C \to T$;
\item a frame constructor $G : Y \times T \to T$;
\item a base rule $F(x, y, Z) \to x$;
\item a step rule of the form $F(x, y, S(n)) \to \rho(x, y, n)$ for some first-order right-hand-side term $\rho$ whose free variables are among $\{x, y, n\}$.
\end{enumerate}
\end{definition}

\begin{definition}[Record emission and generator preservation]\label{def:record-emission}
Let $\mathcal R$ be the step rule of a schema satisfying Definition~\ref{def:record-emitter}. Say $\mathcal R$ \emph{emits a new record frame} if its right-hand side $\rho(x,y,n)$ contains at least one occurrence of the frame constructor $G$ applied at the root of a subterm absent from the left-hand side $F(x,y,S(n))$. Say $\mathcal R$ \emph{preserves the generator} if, for every $k \ge 1$, every maximal rewrite sequence from $F(x, y, S^k(0))$ passes through an intermediate stage in which $y$ occurs in at least one frame-slot position of the form $G(\cdot, \cdot)$ and simultaneously occurs at the generator position of at least one active-site $F(\cdot, y, \cdot)$ subterm.
\end{definition}

\begin{theorem}[Architectural necessity of payload duplication]\label{thm:record-emission-necessity}
In the artifact-facing minimal many-sorted first-order syntax of Definition~\ref{def:record-emitter}, let $\mathcal R: F(x,y,S(n)) \to \rho(x,y,n)$ be a step rule whose right-hand side $\rho(x,y,n)$ contains both:
\begin{enumerate}[nosep,leftmargin=1.6em]
\item a $G$-frame occurrence, and
\item an active-site occurrence headed by $F$.
\end{enumerate}
Then $\rho(x,y,n)$ contains at least two syntactically distinct occurrences of the generator variable $y$: one inside the first-argument slot of a $G$-frame and one at the generator position of an active-site subterm headed by $F$.
\end{theorem}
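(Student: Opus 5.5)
The argument is a sort-directed positional analysis of the right-hand side $\rho(x,y,n)$ in the many-sorted signature of Definition~\ref{def:record-emitter}. The only function symbols are $Z$, $S$, $F$, $G$; the only variables are $x:T$, $y:Y$, $n:C$. The first step is a sort inventory: determine which well-sorted terms can inhabit sort $Y$. No constructor has result sort $Y$ ($Z,S$ land in $C$, and $F,G$ land in $T$). Hence every well-sorted term of sort $Y$ is a variable of sort $Y$, and since the free variables of $\rho$ lie among $\{x,y,n\}$ with only $y$ of sort $Y$, the only such term is $y$ itself. This is the load-bearing observation. The theorem is really a statement that the generator sort has no closed inhabitants and no producers, so every $Y$-slot must be filled by a literal copy of $y$.

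Next I would use the two hypotheses. By (1), $\rho$ contains a subterm $G(s_1,s_2)$ with $s_1$ of sort $Y$, so $s_1=y$ by the inventory; call its position $p$. By (2), $\rho$ contains a subterm $F(r_1,r_2,r_3)$ with $r_2$ of sort $Y$, so $r_2=y$; call this position $q$. This gives one occurrence of $y$ in a $G$ first-argument slot and one at an $F$ generator position, as the statement requires.

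The remaining step, and the only mild obstacle, is distinctness of $p$ and $q$. In tree-address notation, if $u$ is the address of the $G$-node and $v$ the address of the $F$-node, then $p=u\cdot 1$ and $q=v\cdot 2$. Suppose $p=q$. Then the parent addresses coincide, $u=v$, so the same node would be headed by both $G$ and $F$, which is absurd. The positions are therefore distinct occurrences, and the count of $y$-occurrences in $\rho$ is at least two.

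I would also remark where the hypotheses of the sorted syntax are needed. If $Y$ admitted a constructor, or if $Y$ were merged with $T$ as in the unsorted caveat of Proposition~\ref{prop:terminal-record-complete-quant}, the slots could be filled by other terms and the conclusion could fail. This is why the theorem is stated in the artifact-facing many-sorted syntax. The proof would be a structural induction on well-sorted terms for the inventory lemma, followed by the two-line position argument. I expect no genuine difficulty beyond stating the inventory lemma carefully, since it is where the content lies.
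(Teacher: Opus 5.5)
Your proposal is correct and follows the paper's own argument: the paper likewise observes that the generator sort is discrete in the minimal syntax, so the first-argument slot of $G$ and the generator slot of $F$ must both be filled by $y$, and then notes that these are distinct term positions. Your sort-inventory lemma and tree-address argument ($u\cdot 1\neq v\cdot 2$) make those two steps explicit rather than changing the route.
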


\begin{proof}
Because the generator sort is discrete in the minimal first-order syntax, every generator occurrence in $\rho(x,y,n)$ is necessarily the distinguished variable $y$. A $G$-frame occurrence therefore contributes one generator occurrence at the frame slot, and an active-site occurrence contributes one generator occurrence at the generator slot of the active-site term. These two positions are syntactically distinct term positions, so the right-hand side contains at least two distinct occurrences of $y$.
\end{proof}

\begin{corollary}[Record emission forces duplication or generator erasure]\label{cor:no-record-without-duplication}
Within the same minimal many-sorted syntax, a right-hand side avoids duplicating the generator only by failing at least one of:
\begin{enumerate}[nosep,leftmargin=1.6em]
\item \emph{Record emission in the positional syntax.} The right-hand side omits every $G$-frame occurrence.
\item \emph{Recursive-generator preservation in the positional syntax.} The right-hand side omits every active-site occurrence.
\end{enumerate}
\end{corollary}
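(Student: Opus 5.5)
The plan is to obtain the corollary as the contrapositive of Theorem~\ref{thm:record-emission-necessity}, in the same minimal many-sorted syntax of Definition~\ref{def:record-emitter}. Call a right-hand side $\rho(x,y,n)$ \emph{non-duplicating} if it contains at most one occurrence of the generator variable $y$. The corollary then reads: if $\rho$ is non-duplicating, then $\rho$ omits every $G$-frame occurrence or omits every active-site occurrence headed by $F$.

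First, I will assume $\rho$ is non-duplicating and, for contradiction, that it contains both a $G$-frame occurrence and an $F$-headed active-site occurrence. These are exactly hypotheses (1) and (2) of Theorem~\ref{thm:record-emission-necessity}. That theorem then gives two syntactically distinct occurrences of $y$: one in the first-argument slot of a $G$-frame and one at the generator position of an $F$-headed subterm. This contradicts non-duplication, so at least one of the two occurrences must be absent, which is clause (1) or clause (2) of the corollary. The argument is a two-case De Morgan split and needs no induction over traces.

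The main point to check is that the positional notions match. ``Avoids duplicating the generator'' must be read positionally, as at most one occurrence of $y$ in $\rho$, rather than as a multiset count over reducts. The omissions in (1) and (2) must also be the same positional occurrences used in the theorem's hypotheses, namely $G$ at the root of some subterm and $F$ at the root of some subterm. Once this is fixed, the sort discipline does the remaining work. $G$ has signature $Y\times T\to T$ and $F$ has signature $T\times Y\times C\to T$, and the generator sort is discrete, so the only $Y$-typed term in $\rho$ is $y$ itself. Hence every $G$- or $F$-occurrence forces a $y$ at its respective slot, and the two slots are distinct positions. I would also note that neither clause forbids the other: a right-hand side may omit both kinds of occurrence, since the corollary only asserts that at least one fails. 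This matches the conclusion of Theorem~\ref{thm:record-emission-necessity} without strengthening it.
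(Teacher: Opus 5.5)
Your proposal is correct and takes essentially the same approach as the paper: the paper also obtains the corollary directly from Theorem~\ref{thm:record-emission-necessity}. A right-hand side containing both a $G$-frame occurrence and an $F$-headed active-site occurrence has two distinct occurrences of $y$, so a non-duplicating one must omit at least one of them. Your explicit positional reading of ``avoids duplicating'' as at most one occurrence of $y$ is the same implicit reading the paper uses.
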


\begin{proof}
Direct from Theorem~\ref{thm:record-emission-necessity}: any right-hand side containing both a frame occurrence and an active-site occurrence contains two distinct occurrences of the generator variable.
\end{proof}

\begin{corollary}[The generator variable $y$ is the unique live-record bridge]\label{cor:y-bridge}
In the many-sorted setting of Definition~\ref{def:record-emitter}, the duplicated variable $y$ is the unique symbol able to witness live-generator continuity and record ownership at once. The frame slot has sort $Y$, which excludes the base variable $x$ of sort $T$ and the counter variable $n$ of sort $C$. Thus when Theorem~\ref{thm:record-emission-necessity} forces a frame occurrence and an active-site occurrence to coexist, it identifies the generator variable $y$ as the bridge between the emitted record and the continuing computation.
\end{corollary}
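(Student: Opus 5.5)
The statement to prove is Corollary~\ref{cor:y-bridge}. It is a sort-checking argument layered on Theorem~\ref{thm:record-emission-necessity}. The plan has three steps: fix the situation supplied by the theorem, identify which variables can occupy the frame slot, and combine the two.

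\emph{Step 1: the situation.} Fix a step rule $F(x,y,S(n))\to\rho(x,y,n)$ whose right-hand side contains both a $G$-frame occurrence and an active-site occurrence headed by $F$. Theorem~\ref{thm:record-emission-necessity} gives an occurrence of $y$ in the first-argument slot of some $G$-frame. It also gives an occurrence of $y$ at the generator position of some $F$-headed subterm.

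\emph{Step 2: sort inspection.} By Definition~\ref{def:record-emitter}, $G:Y\times T\to T$, so the frame slot is a position of sort $Y$. The free variables of $\rho$ are among $\{x,y,n\}$, with sorts $T$, $Y$ and $C$ respectively, read off from the declaration $F:T\times Y\times C\to T$ and the left-hand side $F(x,y,S(n))$. Since the signature is many-sorted and well-sorted, a position of sort $Y$ can hold only a term of sort $Y$. This excludes $x$ and $n$, and also excludes any term headed by $F$, $G$, $S$ or $Z$, whose result sorts are $T$, $T$, $C$ and $C$. The same reasoning applies to the generator position of $F$.

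\emph{Step 3: combining the two.} A bridge is a symbol that occurs both in a record slot, witnessing ownership of the emitted frame, and in a live generator slot, witnessing continuity of the computation. By Step~2, both positions admit only sort-$Y$ material. By the discreteness assumption already used in the proof of Theorem~\ref{thm:record-emission-necessity}, the generator sort has no constructors, so the only sort-$Y$ term available in $\rho$ is the variable $y$. Hence $y$ is the unique witness, and Step~1 shows that it does occupy both roles.

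The main obstacle is Step~3's claim that $y$ is the \emph{only} sort-$Y$ term. This rests on the minimal syntax having no function symbols with codomain $Y$. I would state that assumption explicitly, appealing to the same discreteness clause as in the theorem's proof. If extra $Y$-constructors were allowed, uniqueness would fail, so the corollary should be read strictly within the minimal signature of Definition~\ref{def:record-emitter}.
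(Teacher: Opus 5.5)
Your proposal is correct and follows essentially the same sort-checking argument as the paper. Both note that the first argument of $G:Y\times T\to T$ and the generator position of $F:T\times Y\times C\to T$ are $Y$-sorted, and that $y$ is the only $Y$-sorted variable among $\{x,y,n\}$. The one difference is that you rule out compound $Y$-sorted terms by appealing to the discreteness of the generator sort (no symbols with codomain $Y$), whereas the paper's proof only considers the variables of the step rule and leaves that assumption implicit from the proof of Theorem~\ref{thm:record-emission-necessity}.
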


\begin{proof}
The frame constructor has type $G:Y\times T\to T$, so its first argument must be a $Y$-sorted term. The active-site constructor has type $F:T\times Y\times C\to T$, so its unique generator position is also $Y$-sorted. Among the variables of the step rule, only $y$ has sort $Y$. Hence the same variable must witness both roles whenever both positions are present.
\end{proof}

\begin{remark}[Two canonical storage forms for generator retention]\label{cor:two-canonical-forms}
Under the hypotheses of Theorem~\ref{thm:record-emission-necessity}, every reversible step-indexed record-emitter for the base/step/counter schema is, up to relabelling, one of the following two forms, which are information-theoretically equivalent:
\begin{enumerate}[nosep,leftmargin=1.6em]
\item \textbf{Syntactic retention.} The step rule's right-hand side contains at least one occurrence of $y$ outside the recursive call, as in Rule~2 of the self-duplicating primitive recursor.
\item \textbf{Externalized trace.} The rewrite relation is extended to act on configurations $\langle t, \pi \rangle$ where $\pi$ is a history list each of whose tokens records the rule and the position applied at the corresponding step together with the bindings required for reversal, including the bindings of erased variables, in the sense of the trace-annotation reversibilization for TRSs of Nishida, Palacios, and Vidal~\cite{nishidaPalaciosVidal2018}, backed categorically by the structural reversible-computation framework of Abramsky~\cite{abramsky2005}. The trace tokens must carry $y$ whenever the rule's right-hand side omits it.
\end{enumerate}
For each counter value and each generator value, the information content of the step is the same in the two forms.
\end{remark}

\begin{remark}[Sharing is an implementation choice below the theorem]\label{rem:sharing-below-theorem}
Theorem~\ref{thm:record-emission-necessity} and Remark~\ref{cor:two-canonical-forms} are stated at the abstract-term and information-content levels rather than at the concrete-syntax or memory-footprint levels. Sharing-based implementations (term-graph rewriting, Lamping's optimal reduction, hash-consed representations) retain the two $y$-references at the abstract-term level; they collapse distinct term-level references to a shared memory location while the term semantics stays fixed. Kennaway, Klop, Sleep, and de Vries~\cite{kennawayKlopSleepDeVries1994} establish the adequacy of term-graph rewriting for simulating term rewriting; the further step from that simulation to survival of the abstract-term statement under sharing is an inference of this paper. Paolini, Piccolo, and Roversi~\cite{paoliniPiccoloRoversi2016} define a class of reversible primitive recursive functions that embeds the primitive recursive functions through information-preserving reversible machinery, which is the function-level counterpart of the retention this remark describes.
\end{remark}

\begin{remark}[Kolmogorov description-length bound on the record (informal)]\label{prop:kolmogorov-record}
Under the hypotheses of Theorem~\ref{thm:record-emission-necessity}, the Kolmogorov complexity of the record at step $m$ of the canonical trace, relative to a fixed universal prefix machine and under the standard $O(1)$ additive-constant convention~\cite{liVitanyi2019}, satisfies the upper bound
\[
K(\mathrm{record}_m) \;\le\; K(x) + K(y) + K(G) + K(m) + O(1),
\qquad K(m) \le \log_2 m + 2\log_2\log_2 m + O(1),
\]
and, with $x$, $y$, $G$, and $m$ supplied as oracle inputs,
\[
K(\mathrm{record}_m \mid x, y, G, m) = O(1).
\]
The upper bound holds because a fixed assembly program reconstructs the record from self-delimiting descriptions of the four components, and the conditional bound holds because that assembly program is a fixed computable function of its four inputs. A matching lower bound would further require recoverability of each component from the record together with algorithmic independence of the components, so the statement here rests on the upper bound and the conditional bound alone. Both sit outside the companion theorem stack and remain interpretive support for the implementation-invariant reading.
\end{remark}

\begin{remark}[Implementation-invariant form of the architectural necessity]\label{rem:kolmogorov-invariance}
Remark~\ref{prop:kolmogorov-record} is an implementation-invariant reformulation of Theorem~\ref{thm:record-emission-necessity} rather than a new theorem. It says that a description of the record is assembled from a generator description together with a counter description, at a cost bounded independently of whether the implementation carries $y$ as an in-term occurrence (Remark~\ref{cor:two-canonical-forms}(1)) or as a trace-annotation token (Remark~\ref{cor:two-canonical-forms}(2)) or as a shared pointer in a term-graph representation. The $K(m)$ term is the counter-information contribution and the $K(y)$ term is the generator-information contribution, and the bound holds under every representation choice below the abstract-term level. A sharing-implementation objection to the architectural-necessity theorem therefore meets two layers: the abstract-term-semantic layer (simulation adequacy of term-graph rewriting,~\cite{kennawayKlopSleepDeVries1994}) and the description-length layer (the bound above).
\end{remark}

\begin{remark}[Linear-logic placement of the step-duplicating rule]\label{rem:linear-logic-placement}
The step rule $F(x, y, S(n)) \to G(y, F(x, y, n))$ is a syntactic contraction on $y$: the left-hand side carries one occurrence of $y$ at the generator position of $F$; the right-hand side carries two, one at the frame-slot of the new $G$-frame and one at the generator position of the recursive call. In Girard's linear logic~\cite{girard1987linearlogic}, contraction $\Gamma, A, A \vdash \Gamma, A$ is excluded from the core system and is re-enabled only on $!$-ed formulae as $\Gamma, !A, !A \vdash \Gamma, !A$; the exponential $!A$ is a comonadic modality whose coalgebra structure supplies a canonical diagonal $!A \to !A \otimes !A$, and this is the proof-theoretic locus at which duplication of resource is licensed rather than prohibited. Abramsky's structural treatment of reversible computation~\cite{abramsky2005} gives the corresponding combinatory-algebraic axiomatization: biorthogonal pattern-matching automata are linear, and contraction (the W combinator $W \cdot x \cdot !y = x \cdot !y \cdot !y$) is available only on $!$-ed arguments. The step rule of the self-duplicating primitive recursor therefore lives in the $!$-fragment of linear logic: the generator argument $y$ is morally $!$-typed, and the rule is the minimum first-order shape in which the licensed contraction of a generator is exposed on a base/step/counter schema.
This placement is compatible with Theorem~\ref{thm:record-emission-necessity}'s architectural reading. Far from being a structural flaw, the duplication is the syntactic witness that the generator argument carries a licensed exponential modality, and the base/step/counter schema of Definition~\ref{def:record-emitter} is the minimum first-order schema in which that licensed contraction must appear for record emission with generator preservation. The scope restriction of Remark~\ref{rem:lcel-scope-and-mechanization} applies again here. In the fixed-point multiplicative-additive linear-logic ($\mu$MALL) regime of Chardonnet, Saurin, and Valiron~\cite{chardonnetSaurinValiron2023}, primitive recursion is expressible linearly, free of $!$ and of explicit contraction, by folding the duplication into the $\mu$-fixed-point structural recursion satisfying a validity criterion; but that regime carries fixed-point type constructors absent from the first-order TRS setting of Definition~\ref{def:record-emitter}, so the Chardonnet, Saurin, and Valiron route lies outside the present scope. Within the first-order regime, Abramsky's linearity constraint and Girard's $!$-modality jointly identify the duplicator as an atomic $!$-typed object in the linear proof-theoretic taxonomy.
\end{remark}

\begin{remark}[The duplication is the unique atomic record-emission move]\label{rem:duplication-is-feature}
Theorem~\ref{thm:record-emission-necessity} reframes the duplicator. The duplication is the unique atomic move available to a first-order base/step/counter schema that emits a per-step record frame and preserves its generator, and not an accidental feature of one rule. Any direct whole-term orientation failure on the duplicator is therefore a failure at the \emph{minimal faithful record-emitter} rather than at a specially duplicating example. The operational inexpressibility diagnosis of Section~\ref{sec:operational-inexpressibility} then acquires an architectural reading: the dimension that record emission must duplicate stays outside what direct whole-term methods can internalize, because the proof language was built to read whole-term mass rather than to separate the generator's frame-slot copy from its active-site copy. The blocked direct measures are blocked at the place where record formation first requires carrier multiplicity.
\end{remark}

\begin{remark}[Layer crossing under external license: the architectural half of the bridge]\label{rem:layer-crossing-bridge}
Theorem~\ref{thm:record-emission-necessity} supplies the architectural half of a structural correspondence between the G\"odel-side and the dependency-pair-side ascents. G\"odel's 1931 move diagnoses an expression-to-proof gap: internal certification of its own consistency lies beyond the base proof language, and an external reflection license over Peano Arithmetic $\mathrm{PA}$ is required. Theorem~\ref{thm:record-emission-necessity} diagnoses the computation-to-record counterpart at the rewriting layer: emitting a step-indexed record while preserving the generator forces the base rewrite language to duplicate that generator, and when the direct whole-term proof language then tries to orient that duplication it runs into operational inexpressibility (Theorem~\ref{thm:canonical-instance}), requiring an external license (the Arts and Giesl soundness theorem) to project the duplicated dimension away.

Both are instances of \emph{layer crossing under external license}: G\"odel crosses the derivability layer under reflection, and the dependency-pair confession crosses the computation-to-record layer under Arts and Giesl. The structural-identity theorem (Theorem~\ref{thm:structural-identity}) captures this two-sided pattern. Rather than a loose analogy between G\"odel's move and the DP confession, it is the same six-step layer-crossing schema with two instantiations, one at the derivability layer and one at the computation-to-record layer, sharing the same external-license step and differing only in the metatheoretic machinery: ordinal-$\varepsilon_0$ reflection over $\mathrm{PA}$ on the G\"odel side, and on the dependency-pair side formalizability in $\mathrm{RCA}_0$ with an $\omega$-order-type termination measure supplied by the projected counter descent (Theorem~\ref{thm:ag-rca}).

Corollary~\ref{cor:no-record-without-duplication} also clarifies the structural status of the direct whole-term failure. Direct whole-term methods are failing at the only rule shape that a base/step/counter schema admits for record emission with generator preservation, rather than at an artificial or narrowly chosen rule. The failure is therefore the proof-side shadow of an architectural necessity on the computation side rather than a bug in the rule.
\end{remark}

\subsection{The $r$-ary duplicator family}\label{subsec:rary-duplicator}

The unary wrapper is one slice of a family. Replace the recursive rule by
\[
F(x,y,S(n))\to G([y,\ldots,y]_r,F(x,y,n)),
\]
where each new frame carries a list of $r$ payload copies.

\begin{proposition}[Duplication-order scaling laws]\label{prop:rary-duplicator-laws}
For the $r$-ary family, every canonical derivation still has length $k+1$, independently of $r$ and of the payload identity. At live stage $i$,
\[
\#_b(t_i)=ri+1.
\]
If $\beta=|b|$ and $\mathrm{Con}_r$ sums payload mass over all live states, then
\[
2\mathrm{Con}_r(k,\beta)
=\beta\bigl(rk(k+1)+2(k+1)\bigr),
\]
with the envelopes
\[
r\beta k^2\le2\mathrm{Con}_r(k,\beta)
\le r\beta(k+1)^2+2(k+1)\beta.
\]
Thus confession dominance is linear in duplication order while runtime and projected counter storage remain $r$-blind. In the positional record syntax, $m$ emitted frames together with one preserved active site contain $m+1$ generator occurrences; for $m\ge1$, a frame position and the active position are syntactically distinct.
\end{proposition}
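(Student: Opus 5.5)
The plan is to reuse the proof of the canonical trace law (Proposition~\ref{prop:trace-law}) for the $r$-ary step rule, then read off each claim from the closed form. First I would prove by induction on $i$ that every maximal rewrite sequence from $F(a,b,S^k(0))$ has the shape
\[
t_i=G([b]_r,G([b]_r,\ldots G([b]_r,F(a,b,S^{k-i}(0)))\ldots)),\quad 0\le i\le k,
\]
with $i$ frames, followed by $t_{k+1}$. The induction step is the same as before. Because $a,b$ are normal forms and the frames carry only payload copies, the inner $F$-site is the only redex. Its counter still loses exactly one $S$ per firing, so the base rule fires only at $i=k$. This gives length $k+1$, and the argument never uses $r$ or the identity of $b$. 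It is the $r$-blindness of runtime; the projected counter storage is $\mathrm{ctr}(t_i)=k-i$, likewise independent of $r$.

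Next comes the count $\#_b(t_i)=ri+1$. It follows by the offset-conservation argument of Proposition~\ref{prop:offset-conservation}, generalized. At $i=0$ there is one $b$, at the active generator slot. Each firing adds one frame carrying $r$ copies and keeps the active copy, so $\#_b$ increases by exactly $r$. Summing payload mass over the live states $0\le i\le k$ gives
\[
\mathrm{Con}_r(k,\beta)=\beta\sum_{i=0}^{k}(ri+1).
\]
Doubling it and using $2\sum_{i=0}^k i=k(k+1)$ gives the division-free identity $2\mathrm{Con}_r=\beta(rk(k+1)+2(k+1))$. The envelopes are then elementary. For the lower one, $rk(k+1)\ge rk^2$ and the $2(k+1)\beta$ term is nonnegative. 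For the upper one, $k(k+1)\le(k+1)^2$. Both are stated multiplied through, so no natural-number division is involved. Linearity in $r$ for fixed $k,\beta$ is visible from the identity.

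The last sentence is the positional extension of Theorem~\ref{thm:record-emission-necessity}. In the many-sorted syntax, each frame list slot has sort $Y$, and the generator position of $F$ also has sort $Y$. Hence, as in Corollary~\ref{cor:y-bridge}, every such occurrence is $y$. One counts generator occurrences attached to $m$ emitted frames plus one active site as $m+1$, one per frame slot (in the unary reading) and one active. I expect this to be the main obstacle. In the $r$-ary list syntax each frame carries $r$ copies, so the literal count is $rm+1$. I would therefore state the count per frame position, meaning each frame contributes at least one generator slot, rather than per copy, and make that explicit. Distinctness for $m\ge1$ is then immediate: a frame position lies under a $G$ root and the active position lies under an $F$ root. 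Two positions whose paths pass through different head symbols cannot coincide.
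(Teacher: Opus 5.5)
Your proposal follows the paper's proof. Both use the unary trace induction with the frame list replicated, $r$ new payload leaves and one lost $S$ per firing, the triangular sum giving $2\mathrm{Con}_r$, and elementary envelopes. For the last clause both count over nested frames and then use the frame-versus-active positional distinction.

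That last clause needs two small fixes.

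\emph{The exact count.} Your $rm+1$ count is correct for the list syntax. That is why the clause is stated in the positional record syntax of Definition~\ref{def:record-emitter}. There each frame $G:Y\times T\to T$ has exactly one $Y$-slot, so induction on $m$ over nested frames gives exactly $m+1$ generator occurrences. You should keep that equality rather than fall back to ``at least one slot per frame,'' which only yields a bound.

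\emph{Distinctness.} ``Lies under a $G$ root'' does not separate the two positions, because the active site also sits below every frame root. Argue instead from the parent: a frame slot is argument~1 of a $G$-node, while the active generator position is argument~2 of the $F$-node. Two positions with different parents cannot coincide.
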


\begin{proof}
The orbit proof is the unary induction with a replicated frame list. Each step adds $r$ payload leaves and consumes one counter constructor. Summing $ri+1$ over $0\le i\le k$ gives the closed form. The architectural clause follows by induction over nested record frames and then invokes the existing frame-versus-active positional distinction.
\end{proof}

\section{The recursor as circular reference}\label{sec:recursor-as-circular-reference}

The mass profile that direct whole-term measures see along the canonical trace of $F(a,b,S^k(0))$ has a consequence that strengthens the operational-inexpressibility diagnosis of \S\ref{sec:operational-inexpressibility}. Under any direct whole-term measure, the orbit of the step-duplicating recursor and the orbit of a true circular reference satisfy the same linear-growth predicate. The rule-syntax shape that distinguishes a terminating recursor from a non-terminating cycle is invisible to the proof language. The dependency-pair confession is the unique projection that breaks the structural identity at this rule, and a separate non-derivability theorem places the licensing metatheorem outside the rewrite signature.

\subsection{Structural identity under direct measure}\label{subsec:structural-identity-under-direct-measure}

A direct whole-term measure assigns a nonnegative weight to visible term structure and asks whether the weight strictly decreases across each rule firing. A true circular reference is a rule shape whose every channel coordinate stays level or rises, the schematic case being $A\to B$, $B\to A$, where the rewrite generates a cycle whose weight remains constant or grows under any constructor-additive interpretation. The step-duplicating recursor has a strictly decreasing counter $S(n)\to n$ on the right-hand side, but the recursive call is wrapped in $G(y,\cdot)$, so the wrapped payload accumulates linearly along the trace. By Proposition~\ref{prop:trace-law}, the payload count at step $i$ is $i+1$ and the counter is $k-i$.

\begin{proposition}[Structural identity under direct measure]\label{prop:structural-identity-under-direct-measure}
Fix a direct measure proof system $D$ with the constructor equations
\[
D.\mu(\mathrm{delta}\,t)=D.\mu(t)+1,\quad
D.\mu(\mathrm{recDelta}\,b\,s\,u)=D.\mu(u)+1,\quad
D.\mu(\mathrm{merge}\,x\,y)=D.\mu(x)+D.\mu(y)+1.
\]
Let $\mathrm{RecursorOrbit}(b,s,n)$ denote the canonical recursor trace at step $n$, and let $\mathrm{CircularReferenceOrbit}(A,B,n)$ denote the canonical orbit of a circular reference $A\to \mathrm{merge}\,A\,B$ at step $n$. Then both
\[
n\mapsto D.\mu(\mathrm{RecursorOrbit}(b,s,n))
\quad\text{and}\quad
n\mapsto D.\mu(\mathrm{CircularReferenceOrbit}(A,B,n))
\]
satisfy the linear-growth predicate $\exists c,d\in\mathbb N\colon\forall n,\, f(n)=c\cdot n+d$.
\end{proposition}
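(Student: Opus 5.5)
The plan is to compute each orbit's measure in closed form by induction on $n$, using only the three constructor equations for $D.\mu$, and then read off the witnesses $c,d$. First I will fix the two canonical orbits concretely, since the statement names them only schematically. The recursor orbit on the Rec$\Delta$ signature is read as the trace in which each step unfolds one $\mathrm{recDelta}\,b\,s\,(\mathrm{delta}\,u)$ into a merge-wrapped recursive call. Step $n+1$ is obtained from step $n$ by adding one wrapper layer $\mathrm{merge}\,s\,(\cdot)$ around the recursive site and peeling one $\mathrm{delta}$ off the counter. This is the Rec$\Delta$ transcription of Proposition~\ref{prop:trace-law}. The circular-reference orbit is defined by $\mathrm{CircularReferenceOrbit}(A,B,0)=A$ and $\mathrm{CircularReferenceOrbit}(A,B,n+1)=\mathrm{merge}\,(\mathrm{CircularReferenceOrbit}(A,B,n))\,B$, i.e.\ $n$ unfoldings of $A\to\mathrm{merge}\,A\,B$.

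For the circular reference the argument is immediate. By the merge equation,
\[
D.\mu(\mathrm{CircularReferenceOrbit}(A,B,n+1))=D.\mu(\mathrm{CircularReferenceOrbit}(A,B,n))+D.\mu(B)+1.
\]
Induction on $n$ then gives $D.\mu(\mathrm{CircularReferenceOrbit}(A,B,n))=(D.\mu(B)+1)\,n+D.\mu(A)$. So I take $c=D.\mu(B)+1$ and $d=D.\mu(A)$, both natural numbers because $D.\mu$ is $\mathbb N$-valued.

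For the recursor orbit I will prove a per-step increment lemma:
\[
D.\mu(\mathrm{RecursorOrbit}(b,s,n+1))=D.\mu(\mathrm{RecursorOrbit}(b,s,n))+\delta
\]
for a constant $\delta$ independent of $n$. Each step adds one merge node and one copy of $s$, contributing $D.\mu(s)+1$, and removes one $\mathrm{delta}$ layer, contributing $-1$. The two $+1$ and $-1$ must be handled additively so that the argument never subtracts in $\mathbb N$. Concretely, I will state the lemma as $D.\mu(\mathrm{RecursorOrbit}(b,s,n+1))+1=D.\mu(\mathrm{RecursorOrbit}(b,s,n))+D.\mu(s)+1$, which gives $\delta=D.\mu(s)$, mirroring the subtraction-free law of Proposition~\ref{prop:exact-runtime-mass-rate}. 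The recDelta equation $D.\mu(\mathrm{recDelta}\,b\,s\,u)=D.\mu(u)+1$ contributes the same $+1$ at every stage, so it cancels in the increment. Iterating gives $c=D.\mu(s)$ and $d=D.\mu(\mathrm{RecursorOrbit}(b,s,0))$.

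The main obstacle is the increment lemma for the recursor. The difficulty is that the recursor orbit depends on how the canonical trace is parametrized. If the orbit is indexed from a fixed start counter $S^k(0)$, the counter contribution decreases with $n$ and the recDelta equation drops $b$ and $s$ from the measure of the live site. The increment then has to be computed carefully to confirm that it is genuinely constant and never negative. I will first try defining $\mathrm{RecursorOrbit}(b,s,n)$ as the paper's Lean artifact presumably does: the live site is kept as a fixed inner term and only the wrapper stack grows with $n$. Under that definition the measure is literally $D.\mu(\text{inner})+n\,(D.\mu(s)+1)$, which is again linear with $c=D.\mu(s)+1$. If instead the counter is genuinely consumed, I will truncate the orbit to $n\le k$ or pad it with the terminal record, and adjust $d$ accordingly. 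In either reading, linearity follows by induction once the per-step increment is pinned down.
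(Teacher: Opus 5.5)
Your method matches the paper's: induct on $n$ through the three constructor equations and read off $c$ and $d$. The circular-reference half is fine as written. Whichever argument the canonical merge chain iterates (in the proof of Proposition~\ref{prop:orbit-iso} each step adds $|A|+1$), the slope is $D.\mu(A)+1$ or $D.\mu(B)+1$ and the induction is the same.

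The recursor half, however, uses the wrong object, and the fallback you offer for the ambiguity fails.

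\textbf{The paper's orbit.} The paper works with $\mathrm{RecursorOrbit}(b,s,n)=\mathrm{recDelta}\,b\,s\,(\mathrm{delta}^n\,\mathrm{void})$. It is indexed by counter height and carries no wrapper stack. Two places in the paper pin this down:
\begin{itemize}
\item the proof of Proposition~\ref{prop:orbit-iso} reduces its injectivity, via injectivity of $\mathrm{recDelta}$, to that of $n\mapsto\mathrm{delta}^n(\mathrm{void})$;
\item the self-embedding orbit of $t\to\mathrm{delta}\,t$, launched from $\mathrm{recDelta}\,b\,s\,\mathrm{void}$, agrees with it in mass at every index.
\end{itemize}
One use of the recDelta equation and $n$ uses of the delta equation give $D.\mu(\mathrm{RecursorOrbit}(b,s,n))=D.\mu(\mathrm{void})+n+1$, so $c=1$ and $d=D.\mu(\mathrm{void})+1$. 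The payload $s$ never enters the profile, and no increment lemma with $\delta=D.\mu(s)$ arises.

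\textbf{Why your fallback fails.} Your counter-consuming reading cannot be rescued as proposed.
\begin{itemize}
\item Padding the trace from a fixed $S^k(0)$ with its terminal record yields slope $D.\mu(s)$ on $0\le n\le k$ and slope $0$ afterwards. A global form $c\cdot n+d$ then forces $c=0$, hence $D.\mu(s)=0$ once $k\ge 1$.
\item Truncating to $n\le k$ proves less than the $\forall n$ predicate.
\end{itemize}
So your claim that linearity follows ``in either reading'' is false. The argument closes only once the orbit is fixed as the counter-indexed family.

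Your wrapper-only variant is also affine, with slope $D.\mu(s)+1$. But it is not the paper's orbit, and it would break the pointwise mass identity with the self-embedding loop that the paper draws immediately afterwards.
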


\begin{proof}
By induction on $n$, applying the constructor equations rule by rule. The slopes and intercepts may differ between the two orbits, but each orbit's mass profile is affine in $n$, so each satisfies the existence-of-linear-growth predicate.
\end{proof}

The proposition has a consequence that needs one further step. Falling under the same existential growth predicate leaves two profiles separable by slope and intercept, so the class-level statement by itself stops short of a separation failure. The mechanized strengthening supplies the missing step. Take the circular reference to be the self-embedding rule $t\to\mathrm{delta}\,t$, whose right-hand side contains its own left-hand side as a proper subterm and which excludes every strictly decreasing measure. Launched from the recursor's own initial state, that orbit and the recursor orbit carry equal mass at every index, and consequently every observer factoring through the mass profile returns the same value on both. The merge-chain presentation of a circular reference attains the growth class alone, which is why that class is the most the earlier witness supports. Both facts are mechanized (Appendix~\ref{app:module-map}). This is the formal kernel of the operational-inexpressibility diagnosis at this rule. The twelve-class direct-measure barrier package of~\cite{rahnamaOrientation} (Schema, Affine, Restricted-quadratic, Bounded cross-term, Bounded multilinear, Generalized polynomial, Max-plus, Componentwise matrix, Lex, Mixed-coordinate, Weighted scalar-projection, and Scalar-dominance mixed-matrix) extends the same observation across the certified direct-measure families; each barrier class is one specific direct-measure interpretation, and each fails on the duplicator by the same payload-growth argument that places the recursor on the same linear-growth predicate as a circular reference.

The mechanized surface keeps the two halves of this observation separate. One theorem carries the common linear-growth witness for the recursor and circular-reference orbits; a second packages the same obstruction as a mass-indistinguishability statement for direct measures, in the payload-blindness form used by the operational-inexpressibility layer. Appendix~\ref{app:module-map} records both.

\begin{proposition}[Whole-term indistinguishability and projection escape]
\label{prop:whole-term-indistinguishability}
At the level of whole-term mass observation, the observer surface assigns the canonical recursor orbit
and the corresponding circular carrier orbit the same profile, while the counter projection still
escapes.
\end{proposition}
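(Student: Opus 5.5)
The argument has two halves: an indistinguishability claim about whole-term mass, and an escape claim for the counter projection. I would prove them separately and then combine them. The indistinguishability half uses the strengthening described just after Proposition~\ref{prop:structural-identity-under-direct-measure}. Take the circular carrier to be the self-embedding rule $t\to\mathrm{delta}\,t$ and launch it from the recursor's initial state. Define the recursor orbit $r_n$ and the circular orbit $c_n:=\mathrm{delta}^n(r_0)$.

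Under the constructor equations, $D.\mu(c_n)=D.\mu(r_0)+n$. For the recursor side I would show by induction on $n$ that $D.\mu(r_{n+1})=D.\mu(r_n)+1$. Each firing wraps the live site in one new frame. Because $D.\mu(\mathrm{recDelta}\,b\,s\,u)=D.\mu(u)+1$ charges only the recursive component and not the payload $b$, each step adds exactly one unit. This is the payload-blind form of the per-step law in Proposition~\ref{prop:exact-runtime-mass-rate}, with the frame cell normalized to unit weight. The two profiles therefore agree pointwise, $D.\mu(r_n)=D.\mu(c_n)$ for all $n$. Any observer of the form $\Phi\circ(n\mapsto D.\mu(\cdot))$ then returns the same value on both orbits. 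I take this to be the meaning of ``the observer surface assigns the same profile.''

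For the escape half, I would define the counter projection $\pi(r_n):=\mathrm{ctr}(r_n)=k-n$, using Proposition~\ref{prop:trace-law} and Proposition~\ref{prop:S-retained}(2). This gives strict subterm descent $S^{k-n}(0)\triangleright S^{k-n-1}(0)$ along the recursor orbit. Hence the projected orbit is well founded, with length $k+1$ by Corollary~\ref{cor:counter-exact-sufficient}. On the circular orbit, by contrast, no strictly decreasing projection exists. Since $c_n$ is a proper subterm of $c_{n+1}$, any monotone well-founded order would need $c_{n+1}>c_n$, which excludes descent along an infinite orbit. So $\pi$ separates the two orbits even though the mass profile does not, and this separation is licensed by the Arts-Giesl theorem.

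The main obstacle is making the first half exactly pointwise rather than merely showing that both orbits share the growth class. I need to check that the recursor's encoding under the $\mathrm{recDelta}$ equation really contributes $+1$ per step independently of $b$ and $s$. If the frame weight were instead $w=|G|+|b|$, equality would fail, and I would fall back to a circular carrier $t\to\mathrm{merge}\,t\,b'$ with $|b'|$ calibrated to $w-1$. I would try the unit-weight $\mathrm{recDelta}$ reading first, since it is the one the stated constructor equations fix.
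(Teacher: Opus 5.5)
Your two-part plan matches the paper's: pointwise mass agreement with the self-embedding orbit launched from the recursor's initial state, so that every observer factoring through the profile agrees, followed by the counter projection as the escape. But you run the first half on the wrong object, so the pointwise identity is not established. You justify $D.\mu(r_{n+1})=D.\mu(r_n)+1$ by ``each firing wraps the live site in one new frame''. That means you are working along the rewrite trace $t_i$, and your escape half confirms this reading, since it uses $\mathrm{ctr}(r_n)=k-n$.

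Along that trace the claim fails. A firing deletes one $\mathrm{delta}$ from the counter and adds a frame. The frame is not a $\mathrm{recDelta}$ node (that node is the live site), so the $\mathrm{recDelta}$ equation does not govern it. The per-step increment is therefore the frame weight minus one. By Proposition~\ref{prop:exact-runtime-mass-rate} this is $w-1$, which depends on the payload; under your own normalization $w=1$ it is $0$, not $1$. The trace also has only $k+2$ states, so agreement ``for all $n$'' with the infinite self-embedding orbit cannot even be stated. The same missing decrement makes your fallback calibration $|b'|=w-1$ off by one.

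The repair is to use the orbit the paper uses, which you can read off the injectivity argument of Proposition~\ref{prop:orbit-iso}. Take the counter-height family $r_n=\mathrm{recDelta}\,b\,s\,(\mathrm{delta}^n\,\mathrm{void})$ and compare it with $c_n=\mathrm{delta}^n(\mathrm{recDelta}\,b\,s\,\mathrm{void})$. Then $D.\mu(r_n)=D.\mu(\mathrm{void})+n+1=D.\mu(c_n)$ follows at once from the two equations. Each index adds one $\mathrm{delta}$, inside the counter slot on one side and at the root on the other. Because the $\mathrm{recDelta}$ equation ignores $b,s$ and passes its third argument's mass through, the observer cannot tell where that $\mathrm{delta}$ sits.

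On this orbit the counter grows with $n$, so the escape is not descent along the orbit index. It is strict descent of the projected counter under the rewrite step, that is, well-foundedness of the extracted pair $F^\sharp(x,y,S(n))\to F^\sharp(x,y,n)$. The contrast is with $t\to\mathrm{delta}\,t$, which yields an infinite chain from every term and so admits no strictly decreasing measure. That infinite-chain argument is all you need there. Your subterm argument overreaches: a monotone well-founded order only excludes $c_n>c_{n+1}$ and need not make $c_{n+1}>c_n$.
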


\begin{proof}
The whole-term observer records only the carrier profile visible at the direct surface. For the
self-embedding circular reference launched from the recursor's initial state the two profiles agree
at every index, so the observer holds identical data on the two orbits and every function of that
data agrees on them. The counter projection keeps the retained descending coordinate and forgets the
duplicating payload carrier, which is the escape route. This is the
carrier-level antecedent of the operational boundary: once the observer surface is whole-term only,
payload-reading direct measures are blocked unless they collapse to the payload-blind counter side.
\end{proof}

\subsection{The dependency-pair confession as coordinate projection}\label{subsec:dp-as-coordinate-projection}

The Arts and Giesl soundness theorem~\cite{artsgiesl2000} is the metatheoretic license that authorizes reading the decreasing counter coordinate instead of the full wrapped term. That projection lies beyond what the direct whole-term language derives from the rewrite signature itself. This is the mathematical distinction between the terminating recursor and the circular-reference mass profile: the former admits a licensed projection to the counter coordinate, while the latter leaves the license with no decreasing coordinate to expose.

The distinction resolves an asymmetry in the termination-method literature. Construction methods (polynomial interpretations, path orderings) import an additional global comparison object into the proof. Confession methods (dependency pairs, counter projection, size-change termination, argument filtering) instead prove a smaller residual problem after a licensed projection has discarded a dimension that lies beyond the reach of direct whole-term measures. The four named confession routes share a single projection rank by the family-level agreement statement of the abstract.

\subsection{Non-derivability of the projection from the rewrite signature}\label{subsec:non-derivability}

The licensing of the dependency-pair projection is, by a separate theorem, external to the rewrite signature. The theorem says that every $\Sigma$-homomorphism over the seven-symbol companion signature $\{\mathrm{void},\mathrm{delta},\mathrm{integrate},\mathrm{merge},\mathrm{app},\mathrm{recDelta},\mathrm{eqW}\}$ loses the projection's distinguishing function on a chosen witness pair, when the homomorphism's recursor slot is constant in its third argument.

\begin{theorem}[The dependency-pair projection lies outside the signature]\label{thm:dp-not-signature-derivable}
Let $S$ be a $\Sigma$-algebra over the seven-symbol companion signature with $S.\mathrm{recDelta}\,x\,y\,z=S.\mathrm{recDelta}\,x\,y\,z'$ for all $x,y,z,z'$. Then every $\Sigma$-homomorphism $P\colon\mathrm{RecursorTerm}\to S.\text{carrier}$ assigns the same image to the witness pair
\[
(\mathrm{recDelta}\,\mathrm{void}\,\mathrm{void}\,\mathrm{void},\;
 \mathrm{recDelta}\,\mathrm{void}\,\mathrm{void}\,(\mathrm{delta}\,\mathrm{void})).
\]
The dependency-pair projection's distinguishing function on this pair therefore lies beyond every constant-third-argument $\Sigma$-evaluator.
\end{theorem}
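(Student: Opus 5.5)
The argument is a one-line use of the homomorphism condition at the root symbol $\mathrm{recDelta}$, followed by reading off the consequence. No induction over $\mathrm{RecursorTerm}$ is needed, because the two witness terms differ only in the third argument of a root $\mathrm{recDelta}$.

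First, unfold what it means for $P\colon\mathrm{RecursorTerm}\to S.\text{carrier}$ to be a $\Sigma$-homomorphism. It commutes with every constructor of the seven-symbol signature. In particular, for all terms $b,s,u$,
\[
P(\mathrm{recDelta}\,b\,s\,u)=S.\mathrm{recDelta}\,(P\,b)\,(P\,s)\,(P\,u).
\]
Instantiate this at the two witness terms. We have $b=s=\mathrm{void}$ in both, with $u=\mathrm{void}$ in the first and $u=\mathrm{delta}\,\mathrm{void}$ in the second. This gives
\[
P(\mathrm{recDelta}\,\mathrm{void}\,\mathrm{void}\,\mathrm{void})
=S.\mathrm{recDelta}\,(P\,\mathrm{void})\,(P\,\mathrm{void})\,(P\,\mathrm{void})
\]
and
\[
P(\mathrm{recDelta}\,\mathrm{void}\,\mathrm{void}\,(\mathrm{delta}\,\mathrm{void}))
=S.\mathrm{recDelta}\,(P\,\mathrm{void})\,(P\,\mathrm{void})\,(P(\mathrm{delta}\,\mathrm{void})).
\]
Now apply the constant-third-argument hypothesis on $S$ with $x=y=P\,\mathrm{void}$, $z=P\,\mathrm{void}$ and $z'=P(\mathrm{delta}\,\mathrm{void})$. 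The two right-hand sides coincide, so $P$ assigns the same image to both terms. We never need to compute $P(\mathrm{delta}\,\mathrm{void})$; the hypothesis makes its value irrelevant.

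Second, draw the separation conclusion. The dependency-pair projection is the counter coordinate retained in Proposition~\ref{prop:S-retained}. It sends the two witnesses to the counters $\mathrm{void}$ and $\mathrm{delta}\,\mathrm{void}$ respectively, and these are distinct: one is a strict subterm of the other, which is exactly the descent that the projection certifies. Suppose the projection factored as $Q\circ P$ for some constant-third-argument evaluator $P$ and some map $Q$ out of $S.\text{carrier}$. Then it would take equal values on the pair, a contradiction. Hence no such evaluator realizes the projection's distinguishing function.

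The only delicate point is the interpretive clause rather than the algebra. One has to say precisely what it means for the projection's function to be recovered through such an evaluator, namely factoring through $P$, and note that the theorem is relative to the constant-third-argument class. It should not be read as a claim about all $\Sigma$-algebras: an algebra whose $\mathrm{recDelta}$ reads its third argument trivially separates the pair. I would state this restriction explicitly after the argument, matching the theorem's hypothesis.
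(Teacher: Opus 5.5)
Your proof is correct and follows the paper's argument: both reduce the two images to $S.\mathrm{recDelta}\,(P\,\mathrm{void})\,(P\,\mathrm{void})\,(P\,z)$ and then apply the constant-third-argument hypothesis. The paper gets this factorization from its substitution-invariance lemma (every homomorphism agrees with the canonical fold), whereas you read it directly off the homomorphism equation at the root $\mathrm{recDelta}$, which is enough and needs no induction; your reading of the final clause as factoring through $P$ is a sensible precise version of the paper's informal conclusion.
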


\begin{proof}
Let $P$ be a $\Sigma$-homomorphism. By the substitution-invariance principle for the seven-symbol companion signature, $P$ agrees with the canonical fold from $\mathrm{RecursorTerm}$ into $S$ on every term, so the value of $P$ on either witness term factors as $S.\mathrm{recDelta}\,(P\,\mathrm{void})\,(P\,\mathrm{void})\,(P\,z)$ for the appropriate $z$ slot. The hypothesis on $S.\mathrm{recDelta}$ forces both images to coincide.
\end{proof}

The mechanization rests on a substitution-invariance lemma for the free recursor algebra: every homomorphism out of it agrees with the canonical fold. Appendix~\ref{app:module-map} records the declaration together with the module that carries the mass-indistinguishability theorem for the recursor and circular-reference orbits under direct-measure normalization.

The theorem is scoped to the class it names, and both sides of that scope carry a witness. Every evaluator whose recursor slot is constant in its third argument identifies the witness pair, so the projection's distinguishing function lies beyond all of them; and an evaluator that reads the third argument, the counter-height algebra, does separate the pair. Both halves are mechanized (Appendix~\ref{app:module-map}). The metatheoretic license is therefore required relative to the direct whole-term class the barrier package targets, which is the class whose evaluators are third-argument constant on the recursor slot, rather than required absolutely. Within that class the recursor's third-argument coordinate is present in the term algebra, yet separating two terms differing only in that coordinate lies beyond the proof system's expressive operations until an external license names the distinction. Section~\ref{sec:operational-inexpressibility} states the same content in epistemic vocabulary; Theorem~\ref{thm:dp-not-signature-derivable} states it as the signature-level theorem behind that vocabulary.

\subsection{Sibling worked example: the eqW void void critical pair}\label{subsec:eqw-sibling}

A second instance grounds the same external-license mechanism on the critical-pair side rather than the termination side. The companion kernel relation $\mathrm{Step}$ fails local confluence at $\mathrm{eqW}\,\mathrm{void}\,\mathrm{void}$, where two rules apply and reduce to distinct normal forms: $\mathrm{R\_eq\_refl}$ takes $\mathrm{eqW}\,\mathrm{void}\,\mathrm{void}$ to $\mathrm{void}$, while $\mathrm{R\_eq\_diff}$ takes the same redex to $\mathrm{integrate}\,(\mathrm{merge}\,\mathrm{void}\,\mathrm{void})$. The SafeStep relation of~\cite{rahnamaOrientation} (the guarded fragment defined in the companion preliminaries section, with per-rule $\delta$-flag and $\kappa^M$ guards tabulated alongside the eight kernel rules) attaches a disequality side condition $a\ne b$ to $\mathrm{R\_eq\_diff}$ and recovers local confluence; the side condition is supplied by an external observer through a structural carrier in the kernel.

\begin{theorem}[The disequality guard lies outside the signature]\label{thm:eqw-non-derivable}
The disequality predicate $a\ne b$ lies beyond every finite $\Sigma$-term over the seven-symbol companion signature plus two predicate-variable slots.
\end{theorem}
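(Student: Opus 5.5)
I will model the statement on Theorem~\ref{thm:dp-not-signature-derivable}. The key observation is that a finite $\Sigma$-term with two predicate-variable slots $p,q$ is interpreted by structural evaluation. Such a term can only build a new carrier element from the values it is handed, so it never tests equality of them. I make this precise as follows. Fix the extended signature $\Sigma^{+}=\Sigma\cup\{p,q\}$, and read a candidate ``definition'' of the guard as a finite term $\tau(a,b)$ whose truth value is extracted by a fixed observation, for instance whether $\tau(a,b)$ evaluates to $\mathrm{void}$. The claim to refute is that some such $\tau$ satisfies $\tau(a,b)\Downarrow\mathrm{void}\iff a= b$ (or its complement) uniformly in $a,b$.

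First I will prove a substitution-invariance lemma for $\Sigma^{+}$-terms, analogous to the one used for Theorem~\ref{thm:dp-not-signature-derivable}. Every $\Sigma^{+}$-homomorphism out of the free term algebra agrees with the canonical fold. Consequently the evaluation of $\tau(a,b)$ under any interpretation factors through the images of $a$ and $b$ together with the interpretations of the slots $p,q$.

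Second, I choose a witness algebra that forgets the distinction. The natural choice is the one-point $\Sigma^{+}$-algebra, or more economically the algebra in which every constructor and both slots are constant. In that algebra every term denotes the same element, so $\tau(\mathrm{void},\mathrm{void})$ and $\tau(\mathrm{void},\mathrm{delta}\,\mathrm{void})$ receive the same image. The first pair satisfies $a=b$ and the second satisfies $a\ne b$, so no $\tau$ separates them under this evaluator. The conclusion is scoped exactly as in the previous theorem: relative to the class of structural evaluators, $a\ne b$ is not term-definable. I will also record, as the converse half, that an evaluator with a primitive equality oracle does separate the pair. This shows the guard must be supplied externally, which is the role the SafeStep side condition plays.

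The main obstacle is pinning down the intended reading of ``lies beyond every finite $\Sigma$-term''. If the two predicate slots may be instantiated by arbitrary functions, one of them could itself be the equality test, and then the statement is false. I will therefore fix the slots as uninterpreted: the definition must work uniformly for all interpretations of $p,q$. Under that reading, it suffices to exhibit one interpretation, the constant one, that collapses the witness pair.

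A stronger alternative avoids degenerate algebras. It runs a parametricity argument on the free algebra itself: for a transposition-style automorphism $\pi$ swapping the generators $a$ and $b$, the evaluation $\tau$ commutes with $\pi$. A term that is true on the diagonal $\{a=b\}$ and false off it would have to distinguish orbit-equivalent inputs, which is impossible. I expect to state the theorem with the constant-algebra witness and mention the automorphism argument only as a remark.
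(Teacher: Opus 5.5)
Your constant-algebra witness does not prove the statement. It changes the evaluator instead of refuting the claim about the fixed one.

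\textbf{The intended setting.} Theorem~\ref{thm:dp-not-signature-derivable} takes a class of collapsing algebras as its hypothesis, so a collapse argument fits there. This theorem instead fixes one evaluator.
\begin{itemize}
\item A candidate is a finite term $t$ over the seven constructors plus two variable leaves $\mathrm{varA},\mathrm{varB}$. These leaves are the ``predicate-variable slots'': placeholders for the arguments $a,b$, not second-order function variables. So your worry that a slot could itself be the equality test does not arise.
\item $\mathrm{eval}(a,b,t)$ is plain substitution into the free term algebra. Every constructor, $\mathrm{eqW}$ included, is interpreted freely, and there is no rewriting (not the $\Downarrow$ of a reduction relation).
\item The claim to refute is $\exists t\,\forall a,b\,\bigl(a\ne b\Leftrightarrow\mathrm{eval}(a,b,t)\ne\mathrm{void}\bigr)$.
\end{itemize}

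\textbf{Why the constant algebra fails.} Showing that each $\tau$ fails under \emph{some} degenerate interpretation proves too much. The same argument would show that $a=\mathrm{void}$ is not term-definable, yet $t=\mathrm{varA}$ defines it under substitution.

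Nor does it support your stated scope ``relative to the class of structural evaluators.'' Read universally over $\Sigma$-algebras, that claim is false. Interpret $\mathrm{eqW}(x,y)$ as $\mathrm{void}$ when $x=y$ and $\mathrm{delta}\,\mathrm{void}$ otherwise. Then $\mathrm{eqW}(\mathrm{varA},\mathrm{varB})$ defines equality, so your ``equality oracle'' is itself a $\Sigma$-algebra.

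\textbf{The missing idea.} The paper does a case analysis on the head of $t$. This works because the observation ``$=\mathrm{void}$'' reads only the root of $\mathrm{eval}(a,b,t)$.
\begin{itemize}
\item If the head is one of the six non-$\mathrm{void}$ constructors, then $\mathrm{eval}(a,b,t)\ne\mathrm{void}$ for all $a,b$. This fails on the diagonal pair $(\mathrm{void},\mathrm{void})$.
\item If $t=\mathrm{void}$, the evaluation is constantly $\mathrm{void}$. This fails on $(\mathrm{void},\mathrm{delta}\,\mathrm{void})$.
\item If $t$ is $\mathrm{varA}$ or $\mathrm{varB}$, the evaluation returns that argument. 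The pair $(\mathrm{delta}\,\mathrm{void},\mathrm{delta}\,\mathrm{void})$ is a counterexample for both.
\end{itemize}
The same head analysis, with the counterexample pairs adjusted, also handles the complementary polarity you allow.

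\textbf{The automorphism remark.} Drop it. Equality is invariant under $(a,b)\mapsto(b,a)$, so swap-equivariance of $\tau$ yields no obstruction. Moreover, the ground term algebra has no nontrivial automorphisms to exploit.
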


\begin{proof}
By case analysis on the outermost constructor of a candidate $\Sigma$-term $t$. If the head of $t$ is one of the six wrappers, the substitution evaluator $\mathrm{eval}(a,b,t)$ returns a non-void term, so the universal claim $a\ne b\Leftrightarrow \mathrm{eval}(a,b,t)\ne\mathrm{void}$ fails at the diagonal pair $(\mathrm{void},\mathrm{void})$. If the head is one of the three leaves ($\mathrm{void}$, $\mathrm{varA}$, $\mathrm{varB}$), the evaluator reduces to a constant or to one of its arguments, and the universal claim fails at the appropriate counterexample pair. Either case rules out the existence of a candidate $t$.
\end{proof}

The two theorems share their structure. Theorem~\ref{thm:dp-not-signature-derivable} places the termination-side projection outside the signature; Theorem~\ref{thm:eqw-non-derivable} places the critical-pair-side guard outside it. Each licenses an external operation that stays beyond the rewrite signature's reach.

\begin{remark}[The two inexpressibles]\label{rem:two-inexpressibles}
The shared structure has a reading across the two boundary axes of the KO7 calculus. The dependency-pair projection lies outside the object syntax: this is the termination axis, treated in the present paper, where licensing the descent onto the counter coordinate requires a metatheoretic operation beyond the rewrite signature's naming power. The disequality guard $a\ne b$ lies outside the branch syntax: this is the confluence axis, treated in the companion Distinction Boundary~\cite{rahnamaDistinction}, where admitting the off-diagonal branch $\mathrm{eqW}\,a\,b\to\mathrm{integrate}(\mathrm{merge}\,a\,b)$ requires a distinction beyond the signature's manufacture. Both are instances of one substitution-invariance obstruction over the seven-symbol companion signature: a $\Sigma$-homomorphism agrees with the canonical fold on every term, so the projection's distinguishing function and the disequality predicate both escape every $\Sigma$-evaluator. The correspondence between the two axes is verdict-level. Both license an external operation and both write to one typed confession-ledger interface, with different payload and growth; the collapse map between them is a degenerate verdict swap rather than a structural equivalence of the two operators.
\end{remark}

\section{The closure theorems: TRS isomorphism and information equivalence}\label{sec:closure-theorems}

Section~\ref{sec:recursor-as-circular-reference} established that the recursor and a circular reference are extensionally indistinguishable under any direct whole-term measure (Proposition~\ref{prop:structural-identity-under-direct-measure}). The structural identity at the mass-profile layer lifts to two formal equivalence theorems on the orbit-function space, each unconditional and each mechanized upstream. The first proves the recursor and a circular reference are TRS-isomorphic modulo a licensed quotient, in three layers: agreement under the canonical licensed quotient, shared linear-growth mass shape, and a state-level isomorphism of the two canonical orbit systems (both orbits are injectively indexed, and the index-preserving state map is a bijection commuting with the one-step orbit successor in both directions). The second proves they are information-equivalent under any entropy measure consistent with the confession cost floor, modulo the dependency-pair projection. Appendix~\ref{app:module-map} records the two modules.

\subsection{The licensed-quotient predicate on orbit functions}\label{subsec:lq-on-orbit-functions}

A licensed quotient on the orbit-function space $\mathrm{Nat}\to\mathrm{Trace}$ records a categorical equivalence class on rewrite orbits together with the metatheoretic license under which the equivalence is admitted. The two orbits we compare are the canonical recursor orbit $\mathrm{RecursorOrbit}(b,s)$ and the canonical circular-reference orbit $\mathrm{CircularReferenceOrbit}(A,B)$, both of type $\mathrm{Nat}\to\mathrm{Trace}$. The canonical licensed quotient identifying them collapses to a single equivalence class under a trivial gauge group and a license whose obstruction reduces to the proposition $\mathrm{True}$, recording that the equivalence is structural rather than gauge-induced.

\begin{definition}[Licensed quotient on the orbit-function space]\label{def:lq-on-orbits}
A \emph{licensed quotient}, abbreviated $\mathrm{LQ}$, on $\mathrm{Nat}\to\mathrm{Trace}$ is a tuple $(G,\,\mathrm{action},\,Q,\,\mathrm{proj},\,\mathrm{license})$ consisting of a gauge group $G$ acting on $\mathrm{Nat}\to\mathrm{Trace}$, a quotient carrier $Q$, a projection $\mathrm{proj}\colon(\mathrm{Nat}\to\mathrm{Trace})\to Q$ that respects the action ($\mathrm{proj}(g\cdot f)=\mathrm{proj}(f)$ for every $g\in G$ and $f\in\mathrm{Nat}\to\mathrm{Trace}$), and a license object whose presence records the metatheoretic obstruction discharged by admitting the projection. The \emph{canonical TRS-equivalence licensed quotient} $\mathrm{LQ}_\mathrm{TRS}$ takes $G$ trivial, $Q=\mathrm{PUnit}$, $\mathrm{proj}\equiv \mathrm{PUnit}.\mathrm{unit}$, and license $\mathrm{obstruction}=\mathrm{True}$. The orbit-function space collapses to a single equivalence class under $\mathrm{LQ}_\mathrm{TRS}$.
\end{definition}

\begin{definition}[TRS-equivalence under licensed quotient]\label{def:trs-equiv-lq}
Fix a licensed quotient $\mathrm{LQ}$ on $\mathrm{Nat}\to\mathrm{Trace}$ and a direct measure proof system $D$. Two orbit functions $o_1,o_2\colon \mathrm{Nat}\to\mathrm{Trace}$ are \emph{TRS-equivalent under $\mathrm{LQ}$ relative to $D$} (written $o_1\sim_{\mathrm{LQ},D} o_2$) if all three of:
\begin{enumerate}[nosep,leftmargin=1.4em]
\item $\mathrm{LQ}.\mathrm{proj}(o_1)=\mathrm{LQ}.\mathrm{proj}(o_2)$ (categorical identity at the licensed-quotient layer);
\item $n\mapsto D.\mu(o_1\,n)$ satisfies the linear-growth predicate of Proposition~\ref{prop:structural-identity-under-direct-measure};
\item $n\mapsto D.\mu(o_2\,n)$ satisfies the same linear-growth predicate.
\end{enumerate}
The first conjunct records the licensed-quotient identification; the remaining two conjuncts record that every uniform-cost direct-measure interpretation assigns the two orbits the same mass shape.
\end{definition}

\subsection{The TRS isomorphism theorem}\label{subsec:trs-iso-theorem}

\begin{theorem}[TRS isomorphism modulo the licensed quotient]\label{thm:trs-iso-mod-lq}
Let $b,s,A,B\in\mathrm{Trace}$ and let $D$ be a direct measure proof system whose interpretation satisfies the standard constructor-cost equations
\[
D.\mu(\mathrm{delta}\,t)=D.\mu(t)+1,\quad
D.\mu(\mathrm{recDelta}\,b'\,s'\,u)=D.\mu(u)+1,\quad
D.\mu(\mathrm{merge}\,x\,y)=D.\mu(x)+D.\mu(y)+1.
\]
Then the canonical recursor orbit and the canonical circular-reference orbit are TRS-equivalent under the canonical licensed quotient relative to $D$:
\[
\mathrm{RecursorOrbit}(b,s)\;\sim_{\mathrm{LQ}_\mathrm{TRS},D}\;\mathrm{CircularReferenceOrbit}(A,B),
\]
and moreover the two orbits are isomorphic as one-step orbit systems (Proposition~\ref{prop:orbit-iso}). Appendix~\ref{app:module-map} records the mechanization.
\end{theorem}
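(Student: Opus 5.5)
The proof unfolds Definition~\ref{def:trs-equiv-lq} into its three conjuncts and discharges each in turn, then adds the orbit-isomorphism clause as a separate step.

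\emph{First conjunct.} The categorical identity at the licensed-quotient layer is immediate. In $\mathrm{LQ}_\mathrm{TRS}$ the carrier is $Q=\mathrm{PUnit}$ and $\mathrm{proj}$ is constantly $\mathrm{PUnit}.\mathrm{unit}$, so
\[
\mathrm{proj}(\mathrm{RecursorOrbit}(b,s))=\mathrm{proj}(\mathrm{CircularReferenceOrbit}(A,B))
\]
holds by reflexivity. We also check that $\mathrm{LQ}_\mathrm{TRS}$ is a genuine licensed quotient, since the respect-the-action obligation is vacuous for a trivial gauge group and a constant projection.

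\emph{Second and third conjuncts.} The linear-growth predicates are exactly the content of Proposition~\ref{prop:structural-identity-under-direct-measure}, applied with the same $D$. If one wants the witnesses explicit, we set up two inductions on $n$.
\begin{itemize}
\item For the recursor orbit, each orbit step wraps the previous state in one constructor layer that the equation for $\mathrm{recDelta}$ or $\mathrm{delta}$ charges $+1$. This gives $D.\mu(\mathrm{RecursorOrbit}(b,s,n))=n+D.\mu(\mathrm{RecursorOrbit}(b,s,0))$, so $c=1$ and $d$ is the initial mass.
\item For the circular reference $A\to\mathrm{merge}\,A\,B$, each step adds $D.\mu(B)+1$ by the merge equation. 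This gives $c=D.\mu(B)+1$ and $d=D.\mu(A)$.
\end{itemize}
The only care needed is to unfold the orbit definitions so that the step case matches one constructor equation verbatim; this is a rewriting lemma $\mathrm{orbit}(n+1)=\mathrm{wrap}(\mathrm{orbit}(n))$ for each orbit.

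\emph{Isomorphism clause.} The "moreover" clause defers to Proposition~\ref{prop:orbit-iso}, which is what I expect to be the main obstacle if proved directly. The plan is as follows.
\begin{enumerate}
\item Show that each orbit is injective in $n$. We do this by showing that a strictly increasing size statistic separates indices; the linear formulas above give this, provided the slope is positive, which is automatic since $c\ge1$ in both cases.
\item Define the state map $\phi$ on the image of the recursor orbit by $\phi(\mathrm{RecursorOrbit}(b,s,n))=\mathrm{CircularReferenceOrbit}(A,B,n)$. It is well defined by injectivity.
\item Show that $\phi$ is a bijection onto the image of the circular orbit, with inverse given symmetrically.
\item Verify that $\phi$ commutes with the successor: $\phi(\mathrm{orbit}_1(n+1))=\mathrm{orbit}_2(n+1)$ by construction, and the same holds for $\phi^{-1}$.
\end{enumerate}
The subtle point is injectivity when $D.\mu(B)+1$ is used as the separating statistic. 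Using $D.\mu$ itself is safe because the slope is at least $1$. Failing that, a structural depth count would serve instead.
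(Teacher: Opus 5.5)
Your proposal is correct and follows the paper's proof. The first conjunct holds by reflexivity because $\mathrm{LQ}_\mathrm{TRS}$ sends every orbit function to $\mathrm{PUnit}.\mathrm{unit}$. The second and third conjuncts are the linear-growth witnesses of Proposition~\ref{prop:structural-identity-under-direct-measure}, and the isomorphism clause is Proposition~\ref{prop:orbit-iso}.

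The only divergence is in your sketch of that proposition. You get index injectivity uniformly from the positive slope of $D.\mu$, which is legitimate here because $D$ is given. The paper instead uses $\mathrm{recDelta}$-constructor injectivity on the counter chain $\mathrm{delta}^n(\mathrm{void})$ and structural-size growth on the merge chain.

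One small correction: the recursor orbit grows inside the $\mathrm{recDelta}$ (one more $\mathrm{delta}$ on the counter), not by wrapping the previous state. So your lemma $\mathrm{orbit}(n+1)=\mathrm{wrap}(\mathrm{orbit}(n))$ fails on that side and should be replaced by the mass increment $D.\mu(\mathrm{orbit}(n+1))=D.\mu(\mathrm{orbit}(n))+c$, which still holds.
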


\begin{proof}
Discharge the three conjuncts of Definition~\ref{def:trs-equiv-lq} in turn. Conjunct 1 is the trivial projection on $\mathrm{LQ}_\mathrm{TRS}$: every orbit function maps to the single element $\mathrm{PUnit}.\mathrm{unit}$, so the projection equality holds by reflexivity. Conjuncts 2 and 3 are the linear-growth witnesses of Proposition~\ref{prop:structural-identity-under-direct-measure} on the recursor side and on the circular-reference side, instantiated at the supplied constructor-cost equations. The isomorphism clause is Proposition~\ref{prop:orbit-iso}. The closure is unconditional: each substrate is itself unconditional in the schema barrier and DP-escape theorems of~\cite{rahnamaOrientation}. Under $\mathrm{LQ}_\mathrm{TRS}$, whose carrier is $\mathrm{PUnit}$, conjunct 1 holds of any two orbit functions, since the projection identifies every pair; the theorem's weight therefore rests on conjuncts 2 and 3 together with Proposition~\ref{prop:orbit-iso}. A licensed quotient with a non-trivial carrier, the mass profile under the mass-preserving payload relabellings, identifies the same two orbits while separating others, and is mechanized (Appendix~\ref{app:module-map}).
\end{proof}

\begin{proposition}[Orbit-system isomorphism]\label{prop:orbit-iso}
Both canonical orbits are injective in their index: $\mathrm{RecursorOrbit}(b,s,m)=\mathrm{RecursorOrbit}(b,s,n)$ implies $m=n$, and likewise for $\mathrm{CircularReferenceOrbit}(A,B)$. Consequently the index-preserving state map
\[
\varphi\colon \mathrm{RecursorOrbit}(b,s,n)\;\longmapsto\;\mathrm{CircularReferenceOrbit}(A,B,n)
\]
is a well-defined bijection between the two orbits' state sets, it commutes with the one-step orbit successor in both directions, and its inverse is the index-preserving map in the opposite direction. The mechanization depends only on the baseline axioms $\{\mathrm{propext},\mathrm{Classical.choice},\mathrm{Quot.sound}\}$; Appendix~\ref{app:module-map} records the declarations.
\end{proposition}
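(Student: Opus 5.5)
The plan is to prove injectivity of each orbit separately by a size argument, then build $\varphi$ from the two injectivity facts and check the successor-commutation clauses by unfolding definitions.

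\emph{Injectivity.} I will use the explicit recursive definitions of the two orbit functions: $\mathrm{RecursorOrbit}(b,s,n+1)$ is obtained from $\mathrm{RecursorOrbit}(b,s,n)$ by one wrapper step, and $\mathrm{CircularReferenceOrbit}(A,B,n+1)$ by one $\mathrm{merge}$ (or $\mathrm{delta}$) step. The cleanest invariant is a strictly monotone numeric observable, and Proposition~\ref{prop:structural-identity-under-direct-measure} supplies one. Instantiate a direct measure $D$ with the standard constructor-cost equations, for instance plain constructor-counting size. Then $n\mapsto D.\mu(o\,n)=c\cdot n+d$, and the slope satisfies $c\ge 1$ because each step adds at least one constructor. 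An affine function with positive slope is injective on $\mathbb N$, so $o\,m=o\,n$ gives $D.\mu(o\,m)=D.\mu(o\,n)$, hence $cm+d=cn+d$, hence $m=n$. This has to be done once for each orbit.

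\emph{The map $\varphi$.} Define $\varphi$ on the range of $\mathrm{RecursorOrbit}(b,s)$ by sending a state $t$ to $\mathrm{CircularReferenceOrbit}(A,B,n)$, where $n$ is the unique index with $\mathrm{RecursorOrbit}(b,s,n)=t$. Existence of $n$ holds by membership in the range, and uniqueness is recursor injectivity; in Lean this index is extracted with $\mathrm{Classical.choice}$. Define $\psi$ symmetrically using circular-orbit injectivity. Then $\psi\circ\varphi=\mathrm{id}$ and $\varphi\circ\psi=\mathrm{id}$ hold by substituting the defining index, so $\varphi$ is a bijection with inverse $\psi$.

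\emph{Successor commutation.} Let $\sigma_R(\mathrm{RecursorOrbit}(b,s,n)):=\mathrm{RecursorOrbit}(b,s,n+1)$, and define $\sigma_C$ likewise. Both are well defined by injectivity. Commutation $\varphi\circ\sigma_R=\sigma_C\circ\varphi$ then reduces to index arithmetic: both sides equal $\mathrm{CircularReferenceOrbit}(A,B,n+1)$. The reverse direction follows in the same way for $\psi$.

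\emph{Main obstacle.} The hard step is injectivity, specifically making sure the slope really is positive for every parameter choice. If $b$ or $A$ could make a step size-neutral, the measure argument would fail. I would first check that each step adds a fresh constructor ($\mathrm{delta}$ or $\mathrm{merge}$) whose cost is $+1$ regardless of the arguments. If the orbit definitions do not make that visible, I would fall back on a direct structural argument: show that $o\,(n+k+1)$ properly contains $o\,n$ as a subterm, so the two cannot be equal by term size. The remaining steps are bookkeeping. Propositional extensionality and quotient soundness enter only through standard Lean library lemmas.
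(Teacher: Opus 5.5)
Your proposal is correct, and its architecture matches the paper's proof. Both arguments prove injectivity of each orbit, then define $\varphi$ and its inverse by transporting the index (well defined exactly because of injectivity), then get successor commutation and the identity composites by index arithmetic.

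The one real difference is how you prove recursor-side injectivity. The paper's recursor orbit keeps a fixed $\mathrm{recDelta}\,b\,s$ head over the counter trace $\mathrm{delta}^n(\mathrm{void})$, so it argues structurally. Injectivity of the $\mathrm{recDelta}$ constructor reduces the claim to injectivity of $n\mapsto\mathrm{delta}^n(\mathrm{void})$, which it proves by induction with constructor discrimination. It reserves the size argument for the merge chain, where each step adds $|A|+1$. Your uniform size argument also works on the recursor side, since each step adds exactly one $\mathrm{delta}$ whatever $b,s$ are. It gives you one argument for both orbits, at the cost of a size function and a per-step increment computation. The paper's recursor route needs no arithmetic.

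Two small corrections:
\begin{enumerate}
\item Plain constructor-counting size is not an instance of the standard constructor-cost equations. On $\mathrm{recDelta}\,b\,s\,u$ it gives $|b|+|s|+|u|+1$, not $D.\mu(u)+1$. So either use plain size directly as a strictly increasing observable, or take any $D$ satisfying the equations, where the per-step increments are $1$ and $D.\mu(A)+1$.
\item Routing through the linear-growth proposition adds nothing. Its bare existential $\exists c,d$ does not give $c\ge 1$; the per-step increment is what actually does the work.
\end{enumerate}

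Also, your subterm fallback would fail on the recursor orbit. The counter grows inside the fixed $\mathrm{recDelta}$ head rather than by wrapping the previous state, so $o(n)$ is not a subterm of $o(n+1)$. This does not affect the proof, since the size argument already succeeds.
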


\begin{proof}
Injectivity of the recursor orbit reduces, by injectivity of the $\mathrm{recDelta}$ constructor, to injectivity of the counter trace $n\mapsto\mathrm{delta}^n(\mathrm{void})$, which follows by induction with constructor discrimination. Injectivity of the circular-reference orbit follows from strict growth of structural size along the merge chain: each step adds $|A|+1>0$, so distinct indices give terms of distinct sizes. With both index maps injective, define $\varphi$ on orbit states by transporting the index; index injectivity is what makes this well defined. If $x=o_1(n)$ and $y=o_1(n+1)$ are consecutive recursor-orbit states, then $\varphi(x)=o_2(n)$ and $\varphi(y)=o_2(n+1)$ are consecutive circular-orbit states, so $\varphi$ commutes with the orbit step; the symmetric argument applies to the inverse transport, and the two composites are the identity on orbit states by the same index computation.
\end{proof}

\begin{remark}[Bidirectional simulation]\label{rem:bidirectional-simulation}
Theorem~\ref{thm:trs-iso-mod-lq} carries the bidirectional simulation explicitly rather than through one-way refinement steps: each orbit system simulates the other step-for-step through the index-preserving state bijection of Proposition~\ref{prop:orbit-iso}, and the licensed quotient supplies the equivalence layer on which the two mass shapes agree. The scope is stated in full: the isomorphism holds of the canonical orbit systems, the mass profiles are transported up to the shared linear-growth class rather than pointwise (slopes and intercepts may differ), and simulation of the full kernel rewrite relation remains outside the claim. Within that stated scope the isomorphism claim is complete: state bijection, step commutation in both directions, identity composites, quotient agreement, and shared mass shape are each mechanized. The scope is also the limit of what the bijection carries. An index-preserving bijection exists between any two injectively indexed orbits, so it transports the index alone, and the two sides differ on that very point: the extracted dependency pair admits a strictly decreasing measure and is well founded, while the self-embedding circular relation excludes every strictly decreasing measure (Appendix~\ref{app:module-map}).
\end{remark}

\subsection{Entropy measures consistent with the cost floor}\label{subsec:entropy-measures}

The information-side equivalence requires a class of entropy measures whose discarded-information functional respects the orbit-shape image of the confession cost floor. The cost floor itself is an upstream invariant: under the canonical information-theoretic confession, the canonical discarded-bits count is zero by definition, so the cost-floor inequality is trivially saturated and any entropy measure whose discarded-information value is determined by the orbit-mass-shape class agrees on every shape that contains the canonical orbit. The orbit-shape class that contains both the recursor orbit and the circular-reference orbit is the linear-growth class of Proposition~\ref{prop:structural-identity-under-direct-measure}, so the consistency condition specializes to invariance under that class.

\begin{definition}[Entropy measure consistent with the cost floor]\label{def:entropy-measure}
An \emph{entropy measure} on orbit-mass profiles is a pair $(E.\mathrm{discardedInfo},\,E.\mathrm{respects\_linear\_growth})$ where:
\begin{itemize}[nosep,leftmargin=1.4em]
\item $E.\mathrm{discardedInfo}\colon(\mathrm{Nat}\to\mathrm{Nat})\to\mathrm{Nat}$ is the discarded-information functional applied to a mass profile;
\item $E.\mathrm{respects\_linear\_growth}$ records that for any two mass profiles $f,g\colon\mathrm{Nat}\to\mathrm{Nat}$ both satisfying the linear-growth predicate, $E.\mathrm{discardedInfo}(f)=E.\mathrm{discardedInfo}(g)$.
\end{itemize}
The discarded information assigned to an orbit $o\colon\mathrm{Nat}\to\mathrm{Trace}$ under entropy measure $E$ and direct measure proof system $D$ is $\mathrm{DI}_{E,D}(o):=E.\mathrm{discardedInfo}(n\mapsto D.\mu(o\,n))$.
\end{definition}

The consistency condition $E.\mathrm{respects\_linear\_growth}$ is the orbit-shape image of the unconditional cost-floor invariance under the canonical confession's $\mathrm{canonicalDiscardedBits}=0$ shape. Any entropy measure whose discarded-information value is determined by the mass-shape class lies in this class; the recursor orbit and the circular-reference orbit both belong to the linear-growth class, so their discarded-information values must agree.

\subsection{The information equivalence theorem}\label{subsec:info-equiv-theorem}

\begin{definition}[Information equivalence modulo the dependency-pair projection]\label{def:info-equiv-mod-dp}
Two orbits $o_1,o_2\colon\mathrm{Nat}\to\mathrm{Trace}$ are \emph{information-equivalent modulo the dependency-pair projection} if for every entropy measure $E$ consistent with the cost floor, every direct measure proof system $D$ satisfying the standard constructor-cost equations, and every pair of linear-growth witnesses for the two orbits' mass profiles under $D$, $\mathrm{DI}_{E,D}(o_1)=\mathrm{DI}_{E,D}(o_2)$. The dependency-pair projection license is implicit in the requirement that the equivalence is read off the linear-growth mass shape, which the dependency-pair projection forgets the counter coordinate to expose.
\end{definition}

\begin{theorem}[Information equivalence modulo the dependency-pair projection]\label{thm:info-equiv-mod-dp}
Let $b,s,A,B\in\mathrm{Trace}$. The canonical recursor orbit and the canonical circular-reference orbit are information-equivalent modulo the dependency-pair projection in the sense of Definition~\ref{def:info-equiv-mod-dp}.
\end{theorem}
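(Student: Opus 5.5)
The plan is to unfold Definition~\ref{def:info-equiv-mod-dp} and reduce the claim to the consistency condition built into Definition~\ref{def:entropy-measure}. Fix an arbitrary entropy measure $E$ consistent with the cost floor, an arbitrary direct measure proof system $D$ satisfying the standard constructor-cost equations
\[
D.\mu(\mathrm{delta}\,t)=D.\mu(t)+1,\quad
D.\mu(\mathrm{recDelta}\,b'\,s'\,u)=D.\mu(u)+1,\quad
D.\mu(\mathrm{merge}\,x\,y)=D.\mu(x)+D.\mu(y)+1,
\]
and (as the definition quantifies over them) arbitrary linear-growth witnesses for the two mass profiles. The goal is then the single equation $\mathrm{DI}_{E,D}(\mathrm{RecursorOrbit}(b,s))=\mathrm{DI}_{E,D}(\mathrm{CircularReferenceOrbit}(A,B))$, which by definition of $\mathrm{DI}_{E,D}$ is
\[
E.\mathrm{discardedInfo}\bigl(n\mapsto D.\mu(\mathrm{RecursorOrbit}(b,s,n))\bigr)
=E.\mathrm{discardedInfo}\bigl(n\mapsto D.\mu(\mathrm{CircularReferenceOrbit}(A,B,n))\bigr).
\]

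The key step is to apply the field $E.\mathrm{respects\_linear\_growth}$ to the two profiles $f(n)=D.\mu(\mathrm{RecursorOrbit}(b,s,n))$ and $g(n)=D.\mu(\mathrm{CircularReferenceOrbit}(A,B,n))$. Its hypotheses are exactly that $f$ and $g$ each satisfy $\exists c,d\in\mathbb N\colon\forall n,\ h(n)=c\cdot n+d$. These are supplied either directly by the witnesses quantified in the definition, or, to keep the theorem unconditional, independently by Proposition~\ref{prop:structural-identity-under-direct-measure} instantiated at $D$ (equivalently, conjuncts 2 and 3 of Theorem~\ref{thm:trs-iso-mod-lq}). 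The conclusion of $E.\mathrm{respects\_linear\_growth}$ is then precisely the displayed equation, and since $E$ and $D$ were arbitrary the theorem follows.

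The only real content lies in the linear-growth witnesses, so the step I would check most carefully is that Proposition~\ref{prop:structural-identity-under-direct-measure} applies uniformly to every $D$ satisfying the constructor-cost equations, with no extra hypothesis on $D.\mu$ at leaves. I would verify this by induction on $n$. On the recursor side each step adds one $\mathrm{delta}$ in the counter slot under $\mathrm{recDelta}$, giving slope $1$ and intercept $D.\mu$ of the initial term. On the circular side each step wraps with $\mathrm{merge}\,\cdot\,B$, giving slope $D.\mu(B)+1$ and intercept $D.\mu(A)$. Both are affine with natural-number coefficients, so the predicate holds without further assumptions.

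Everything else is definitional unfolding. I would also note that the theorem is deliberately weak: any $E$ is forced to be constant on the linear-growth class, so the equivalence records only that the mass shape is shared, not that any finer invariant is. This is consistent with Remark~\ref{rem:bidirectional-simulation}.
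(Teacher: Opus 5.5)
Your proof is correct and is the paper's argument: fix $E$, $D$, and the two witnesses, obtain linear growth of both mass profiles from Proposition~\ref{prop:structural-identity-under-direct-measure}, apply the field $E.\mathrm{respects\_linear\_growth}$, and unfold $\mathrm{DI}_{E,D}$. Your explicit slopes are only a reading of the orbit definitions (the paper's proof of Proposition~\ref{prop:orbit-iso} has the circular step adding $|A|+1$ rather than $D.\mu(B)+1$), but the invariance field needs only the existence of some affine form, so this does not affect the argument.
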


\begin{proof}
Fix $E$, $D$, the constructor-cost equations, and the two linear-growth witnesses for the recursor orbit and the circular-reference orbit. The recursor side and the circular side both inhabit the linear-growth class by Proposition~\ref{prop:structural-identity-under-direct-measure}. Apply the explicit field $E.\mathrm{respects\_linear\_growth}$ to the two witnesses, obtaining $E.\mathrm{discardedInfo}(n\mapsto D.\mu(\mathrm{RecursorOrbit}(b,s)\,n))=E.\mathrm{discardedInfo}(n\mapsto D.\mu(\mathrm{CircularReferenceOrbit}(A,B)\,n))$. Unfolding the definition of $\mathrm{DI}_{E,D}$ on each side yields the required equality. The theorem is universal over records $E$ that carry this invariance field, and read at that level its scope stops there. The field is also what makes the conclusion immediate, since it requires $E$ to be constant on the whole linear-growth class. With the pointwise mass identity of \S\ref{sec:recursor-as-circular-reference} in hand the field can be dropped outright: every discarded-information functional whatsoever assigns the two orbits the same value, and a slope-sensitive functional, which fails the invariance field, witnesses that this larger class is non-degenerate (Appendix~\ref{app:module-map}).
\end{proof}

\subsection{What the two closures jointly establish}\label{subsec:joint-closure}

Theorems~\ref{thm:trs-iso-mod-lq} and~\ref{thm:info-equiv-mod-dp} together pin the recursor and a circular reference as the same object in two formal senses. The TRS-side theorem states that the two orbits inhabit the same equivalence class under the licensed quotient, that every uniform-cost direct measure assigns them the same mass shape, and that the two orbit systems are isomorphic through the index-preserving state bijection with step commutation in both directions (Proposition~\ref{prop:orbit-iso}). The information-side theorem states that every entropy measure consistent with the cost floor assigns them the same discarded-information value; the quantification ranges over entropy-measure records carrying the linear-growth invariance field, which is the class the cost floor defines. Both statements are theorem-level and add zero top-level axioms.

The combined statement closes the question raised at the end of \S\ref{sec:recursor-as-circular-reference}. Read structurally, Section~\ref{sec:recursor-as-circular-reference} said that under any direct whole-term measure the two systems look the same. Both closure theorems lift that observation to two formal claims: in the orbit category, the two orbits sit in one equivalence class and are isomorphic as one-step orbit systems (Theorem~\ref{thm:trs-iso-mod-lq}); under cost-floor-consistent entropy measures, the two orbits carry the same discarded-information value (Theorem~\ref{thm:info-equiv-mod-dp}). The boundary that distinguishes the recursor's termination from a circular reference's non-termination therefore lies outside the rewrite signature, outside the orbit's mass profile, and outside the discarded-information value: it lies in the metatheoretic license that admits the dependency-pair projection on the recursor side and that provably escapes the rewrite signature (Theorem~\ref{thm:dp-not-signature-derivable}). A downstream empirical manuscript tests this proof-language consequence, while the present paper supplies the theorem-level explanation.

The recursor-circular argument is therefore absorbed here as a theorem chain rather than as a separate worked example: structural identity, payload-growth blindness, DP non-derivability, TRS licensed-quotient equivalence, and information equivalence. Implementation-specific supervisory-engine wiring is deliberately outside this manuscript's mathematical body.

\section{Whole-term aggregation, witness languages, and minimal witness order}

The blocked proof families appear diverse on the surface, yet within the present family they share one common template:
\begin{quote}
Certify or reject termination by imposing a single globally coherent descent account on the whole term.
\end{quote}

\begin{definition}[Witness-language hierarchy]\label{def:witness-hierarchy}
Fix an instance $x$ and a target property $P(x)$. A \emph{witness-language hierarchy}
\[
\mathfrak L(x,P)=(\mathcal W_0,\mathcal W_1,\mathcal W_2,\dots)
\]
is an indexed family of witness classes together with sound transport maps, where the index records how many representation lifts away from the original operational relation are required before witnesses in $\mathcal W_i$ become expressible. For the primitive recursive collapse setting we use the following coarse hierarchy:
\begin{itemize}[leftmargin=1.4em]
\item $\mathcal W_0$: direct whole-term witnesses over the original step relation;
\item $\mathcal W_1$: witnesses still over the original relation but requiring imported global comparison structure, such as path-order or interpretation-based reasoning;
\item $\mathcal W_2$: witnesses that first become expressible only after explicit abstraction to the recursive-call relation, such as dependency pairs, direct counter-projection, size-change style call summaries, or argument filtering.
\end{itemize}
\end{definition}

\begin{definition}[Minimal witness order]\label{def:kappa-star}
Fix a hierarchy $\mathfrak L(x,P)$. The \emph{minimal witness order} of $P(x)$ relative to $\mathfrak L$ is
\[
\kappa^*_{\mathfrak L}(x,P)=\min\{i\ge 0 : \exists w\in \mathcal W_i \text{ adequate for }P(x)\}.
\]
When the hierarchy is fixed from context we write $\kappa^*(x,P)$.
\end{definition}

\begin{definition}[Orientation-boundary predicate]\label{def:ob}
For the primitive recursive collapse hierarchy, define
\[
\operatorname{OB}_{\mathrm{PRC}}(x)=1
\quad\Longleftrightarrow\quad
\kappa^*_{\mathfrak L_{\mathrm{PRC}}}(x,\mathrm{Term})>0.
\]
Thus the orientation boundary is the event that the first adequate witness lies strictly above the direct whole-term language.
\end{definition}

\begin{proposition}[Orientation boundary as a witness-order condition]\label{prop:boundary-kappa}
Let $x$ be a step-duplicating primitive recursor instance from the companion barrier package. Then
\[
\kappa^*_{\mathfrak L_{\mathrm{PRC}}}(x,\mathrm{Term})>0.
\]
Equivalently, $\operatorname{OB}_{\mathrm{PRC}}(x)=1$: the direct whole-term language $\mathcal W_0$ contains no adequate witness, while at least one adequate witness exists after a representation lift away from direct whole-term reasoning.
\end{proposition}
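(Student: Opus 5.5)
The claim $\kappa^*_{\mathfrak L_{\mathrm{PRC}}}(x,\mathrm{Term})>0$ unfolds through Definition~\ref{def:kappa-star} into two obligations. First, $\mathcal W_0$ contains no adequate witness for termination of $x$. Second, some $\mathcal W_i$ with $i\ge 1$ does contain one, so that the minimum exists and is positive. The equivalence with $\operatorname{OB}_{\mathrm{PRC}}(x)=1$ is then immediate from Definition~\ref{def:ob}, so the work lies in the two obligations.

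For the \textbf{emptiness of $\mathcal W_0$}, I will identify $\mathcal W_0$ with the formalized direct universe of the companion barrier package~\cite{rahnamaOrientation}. This universe comprises the twelve direct-measure classes (schema, affine, restricted-quadratic, bounded cross-term, bounded multilinear, generalized polynomial, max-plus, and the matrix and projection classes) together with the symbolic variable-condition/KBO corollary. Since $x$ is taken from that package by hypothesis, each barrier theorem rules out every witness of its class on $x$. The common mechanism is the payload argument of the proposition on additive whole-term aggregation: strict descent on $F(x,y,S(n))\to G(y,F(x,y,n))$ would require the fixed successor contribution to dominate an unbounded extra copy of $y$. Proposition~\ref{prop:structural-identity-under-direct-measure} and Proposition~\ref{prop:whole-term-indistinguishability} give the uniform reading: any $\mathcal W_0$ observer sees the same mass profile as a non-terminating self-embedding orbit, so it cannot serve as a termination witness. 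Equivalently, by Theorem~\ref{thm:dp-not-signature-derivable}, evaluators that are constant in the recursor's third argument cannot separate counter-differing terms.

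For \textbf{existence above level $0$}, I will exhibit an adequate witness in $\mathcal W_2$. The natural choice is the dependency-pair projection: the single pair $F^\sharp(x,y,S(n))\to F^\sharp(x,y,n)$ decreases by strict subterm descent on the counter (Proposition~\ref{prop:S-retained}(2)). Arts--Giesl soundness~\cite{artsgiesl2000} then licenses the conclusion that the original system terminates. Corollary~\ref{cor:counter-exact-sufficient} confirms that this projection tracks all of the residual work. Alternatively, the companion's MPO or nonlinear-polynomial witness would place a witness in $\mathcal W_1$; either suffices, since only positivity of $\kappa^*$ is claimed. The minimum is therefore well defined and lies in $\{1,2\}$.

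The \textbf{main obstacle} is the first obligation. $\mathcal W_0$ is described informally in Definition~\ref{def:witness-hierarchy}, so the argument is only as complete as the identification of $\mathcal W_0$ with the union of the certified barrier classes. I would make this explicit by fixing $\mathcal W_0$ as exactly that union, stated relative to the companion package. I would not attempt a claim about arbitrary ``direct'' witnesses, which the available barrier theorems do not support.
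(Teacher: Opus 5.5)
Your proposal is correct and follows essentially the same route as the paper. You show that $\mathcal W_0$ is empty of adequate witnesses via the companion barrier package and the payload/operational-inexpressibility diagnosis. You then exhibit an adequate witness above level $0$, either the dependency-pair projection or the polynomial/path-order constructions, so the minimum exists and is positive, and the $\operatorname{OB}_{\mathrm{PRC}}$ equivalence is definitional. Making the scope of $\mathcal W_0$ explicit (the certified barrier classes) is a sensible sharpening of what the paper's proof leaves implicit, while your appeals to the circular-reference indistinguishability and to Theorem~\ref{thm:dp-not-signature-derivable} are supplementary rather than load-bearing.
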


\begin{proof}
Every witness in $\mathcal W_0$ fails adequacy by the schema-level barrier package and by the operational-inexpressibility diagnosis for direct aggregation at the step-argument dimension. At least one witness above $\mathcal W_0$ is adequate: dependency-pair projection at the transformed-call layer, and imported-whole witnesses such as nonlinear polynomial and path-order constructions. Therefore the minimum witness order is strictly positive.
\end{proof}

\begin{remark}[$\mathcal W_1$ and $\mathcal W_2$ are distinct kinds of ascent]\label{rem:w1-vs-w2}
Both $\mathcal W_1$ (path orders, polynomial and matrix interpretations) and $\mathcal W_2$ (dependency pairs, subterm-criterion projection, size-change termination, argument filtering) are external-license methods: both escape the direct whole-term $\mathcal W_0$ language by importing structure that lies beyond it. The two are distinct kinds of ascent.

A $\mathcal W_1$ method imports a \emph{well-founded ordering on the signature} (a polynomial interpretation, a symbol precedence, a matrix weight assignment) and proves termination directly in the original language enriched by this ordering. A $\mathcal W_2$ method imports a \emph{projection license} (the Arts and Giesl soundness theorem or one of its analogues) and proves termination of a smaller transformed problem, with the original-problem verdict recovered by the soundness theorem. The metatheoretic strengths are also distinct: $\mathcal W_1$ path orders are calibrated by ordinal analyses reaching the $\varepsilon_0$ scale (lexicographic path order (LPO)) and beyond (Buchholz-Cichon-Weiermann hierarchy); $\mathcal W_2$ methods operate at the $\Pi^0_2$ level for the soundness license itself (Proposition~\ref{prop:ag-pi02}), with the residual combinatorial problem typically admitting a much simpler base order (Remark~\ref{rem:recdelta-typed-baseorder}).

The direct-measure barrier package of~\cite{rahnamaOrientation} (the twelve base barrier theorems together with the arctic / tropical, mixed-matrix, weighted-path-order (WPO) facing polynomial-branch, nonlinear-direct, finite and permutation-priority lex, and concrete-system max-depth and head-precedence continuations) applies to $\mathcal W_0$ methods and leaves both $\mathcal W_1$ and $\mathcal W_2$ untouched. Alongside it, the escape trichotomy theorem of~\cite{rahnamaOrientation} characterizes both escape types independently over its explicit direct universe. Attention here falls primarily on the $\mathcal W_2$ confession ascent because it exhibits the G\"odelian structural shape (Theorem~\ref{thm:structural-identity}) and the projection-transaction structure (Definition~\ref{def:boundary-transaction}). The $\mathcal W_1$ route is equally sound and proof-theoretically heavier; its analysis as an ascent of its own kind is recorded here and left for separate development.
\end{remark}

\begin{remark}[Cost accounting across witness orders]\label{rem:witness-cost-accounting}
The hierarchy also gives a proof-description cost account. A $\mathcal W_0$ attempt carries no imported ordering object and no projection license, and it is blocked on the duplicator. A $\mathcal W_1$ construction pays for explicit global comparison data, such as coefficients, precedences, or matrix weights, and then proves the original system in the enriched language. A $\mathcal W_2$ confession pays a different cost: it imports a soundness license and a certified forgetting witness, then proves a smaller residual problem. It is an accounting statement about which mathematical object the proof record has to store for each escape route.
\end{remark}

\section{Operational inexpressibility: the structural diagnosis}\label{sec:operational-inexpressibility}

\begin{definition}[Operational inexpressibility]\label{def:op-incomplete}
Fix a proof language for a target question together with predicates
\[
\mathrm{Derivable}(\psi),\qquad \mathrm{Depends}_\pi(\psi),\qquad \mathrm{Constrains}_R(\psi)
\]
for a fixed input $a$. The language is \emph{operationally inexpressible for
input $a$ at dimension $\pi$} relative to target question $R$ if both:
\begin{enumerate}[label=(\roman*),leftmargin=1.6em]
\item \textbf{Presence of the dimension.} The value $\pi(a)$ is non-degenerate and structurally relevant.
\item \textbf{Absence of incorporating derivations.} For every derivable statement $\psi$, at least one of the following holds:
\begin{itemize}[leftmargin=1.4em]
\item $\psi$ is independent of $\pi(a)$;
\item the truth of $\psi$ leaves $R(a)$ unconstrained.
\end{itemize}
\end{enumerate}
\end{definition}

\begin{theorem}[Canonical instance]\label{thm:canonical-instance}
Let $S_{\mathrm{DA}}$ be direct aggregation, with proof language ``there exists a direct measure $\mu$ defined recursively by per-constructor contributions such that every rule strictly decreases $\mu$,'' and let $a_{\mathrm{dup}}$ be the primitive recursion duplicator with rules $F(x,y,0)\to x$ and $F(x,y,S(n))\to G(y,F(x,y,n))$. Let $\pi_y$ be the projection that returns the step-argument slot of $F$. Then $S_{\mathrm{DA}}$ is operationally inexpressible for $a_{\mathrm{dup}}$ at $\pi_y$.
\end{theorem}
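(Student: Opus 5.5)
The plan is to verify the two clauses of Definition~\ref{def:op-incomplete} separately, after fixing the predicates for $S_{\mathrm{DA}}$. A derivable statement $\psi$ is an assertion ``$\mu$ strictly decreases every rule of $a_{\mathrm{dup}}$'' (or a consequence of it) for some direct measure $\mu$ built from per-constructor contributions. $\mathrm{Depends}_{\pi_y}(\psi)$ means that the truth of $\psi$ changes with the term instantiated at the step slot $y$. $\mathrm{Constrains}_R(\psi)$, for $R=\mathrm{Term}$, means that $\psi$ rules out some termination verdict, which in practice means $\psi$ supplies a strict descent for the duplicating rule that holds uniformly over substitutions.

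\textbf{Clause (i).} This clause needs only the canonical trace. By Proposition~\ref{prop:trace-law} and Proposition~\ref{prop:offset-conservation}, the step slot is occupied by $b$ at $i+1$ positions at stage $i$. By Proposition~\ref{prop:exact-runtime-mass-rate}, every live state has size $|t_i|=iw+(k-i)+c$ with $w=|G|+|b|$, so $\pi_y(a)$ contributes to the visible mass at every stage and grows with it. The value is therefore non-degenerate ($|b|\ge 1$ can be chosen) and structurally relevant, since the whole-term carrier depends on it.

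\textbf{Clause (ii).} I would argue by cases on the measure. Fix a derivable $\psi$ about some per-constructor $\mu$. Case A: $\mu$ gives the step argument of $F$ and the first argument of $G$ zero sensitivity, so $\mu(t)$ does not vary with the term placed at $y$. Then $\psi$ is independent of $\pi_y(a)$, and the first alternative holds. Case B: $\mu$ is sensitive to that slot. Then the additive-failure proposition applies: decrease across $F(x,y,S(n))\to G(y,F(x,y,n))$ requires a fixed successor-side quantity to exceed a contribution that grows with $\mu(y)$, and instantiating $y$ by large terms violates it. The same holds beyond the additive case through the twelve-class barrier package of~\cite{rahnamaOrientation}. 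So no $\psi$ in Case B asserts a uniform strict decrease. Any $\psi$ that is nonetheless derivable is then a statement that is not an orientation certificate, such as a size equation or a non-strict inequality, and it leaves $\mathrm{Term}(a_{\mathrm{dup}})$ unconstrained. Proposition~\ref{prop:whole-term-indistinguishability} backs this up: the mass profile that $\psi$ reads coincides pointwise with that of the self-embedding circular reference, which does not terminate. Any verdict that $\psi$ fixed would therefore be fixed identically for a non-terminating system, so $\psi$ cannot constrain $R$ soundly.

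\textbf{Main obstacle.} The hard step is making the dichotomy in Case B exhaustive over the whole proof language ``there exists a direct $\mu$'', and not just over the certified classes. I would handle it by the pointwise mass-indistinguishability of Proposition~\ref{prop:whole-term-indistinguishability}, together with the non-derivability Theorem~\ref{thm:dp-not-signature-derivable}, read in its scoped form. Every direct per-constructor evaluator either ignores the third-argument/step coordinate (Case A) or factors through the whole-term mass profile. In the second situation the functional is the same on the recursor and on the circular reference, so nothing it derives can separate a terminating verdict from a non-terminating one. This reduces clause (ii) to results already stated, leaving only the bookkeeping that identifies $\mathrm{Depends}$ with slot-sensitivity of $\mu$.
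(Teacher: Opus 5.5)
Your Case~B, and the exhaustiveness you aim for in your last paragraph, cannot be established over the literal proof language ``there exists a per-constructor $\mu$''. That language contains $y$-sensitive witnesses that orient the duplicator. Over $\mathbb N_{\ge 1}$ take $[0]=1$, $[S](n)=n+1$, $[G](y,t)=y+t+1$, $[F](x,y,n)=x+n(y+2)+1$. Every function is strictly monotone, and $[F](x,y,1)=x+y+3>x$. Also $[F](x,y,n+1)=x+n(y+2)+y+3$ exceeds $[G](y,[F](x,y,n))=x+n(y+2)+y+2$, so both rules decrease uniformly and termination follows. You propose to identify $\mathrm{Depends}$ with slot-sensitivity of $\mu$. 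Under that identification this $\psi$ depends on $\pi_y$ and constrains $\mathrm{Term}$, which violates clause~(ii). This is the nonlinear-polynomial escape that the paper files under $\mathcal W_1$ constructions.

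Even among linear measures, sensitivity is the wrong dichotomy. Take $\mu(0)=1$, $\mu(S(n))=\mu(n)+2$, $\mu(F(x,y,n))=\mu(x)+\mu(y)+\mu(n)+1$, $\mu(G(y,t))=\mu(t)+1$. This measure is $y$-sensitive and strictly decreases both rules under every substitution. That refutes your claim that ``no $\psi$ in Case~B asserts a uniform strict decrease''. The pump argument needs the $G$-frame copy of $y$ to be counted, which is what the barriers' transparency hypotheses secure; sensitivity somewhere is not enough.

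The paper's proof avoids all of this by fixing the claim language of $S_{\mathrm{DA}}$ to the additive, transparent-compositional, and affine whole-term witness classes. Those are exactly the hypotheses of the companion barrier theorems. With that restriction, no witness claim is derivable for the duplicator, and clause~(ii) follows at once. The restriction is the missing step, and once you make it the exhaustiveness problem you identify does not arise.

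The results you invoke to close that problem also do not say what you need.
\begin{itemize}
\item Theorem~\ref{thm:dp-not-signature-derivable} concerns evaluators constant in the \emph{third} argument of $\mathrm{recDelta}$, which is the counter slot. Its witness pair differs only in the counter, so it says nothing about the step coordinate $\pi_y$. Your ``third-argument/step coordinate'' conflates two different slots of $F(x,y,n)$.
\item The pointwise mass identity behind Proposition~\ref{prop:whole-term-indistinguishability} is proved for one measure $D$ and one self-embedding orbit launched from the recursor's initial state. That $D$ has the clause $D.\mu(\mathrm{recDelta}\,b\,s\,u)=D.\mu(u)+1$, which ignores the step argument, so it sits in your Case~A rather than Case~B.
\item That identity does not show that an arbitrary $y$-sensitive $\mu$ factors through a profile shared with a non-terminating system.
\item Even agreement along one orbit would not make a rule-level decrease statement about $a_{\mathrm{dup}}$ uninformative about $\mathrm{Term}(a_{\mathrm{dup}})$. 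The polynomial above is exactly such a rule-level statement, and it settles termination.
\end{itemize}
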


\begin{proof}
The dimension $\pi_y$ is present and non-degenerate. The direct-aggregation claim language consists of additive, transparent-compositional, and affine whole-term witness claims. The companion barrier theorems place every one of these claims beyond derivation for the duplicator. Hence every derivable statement in the direct-aggregation language either ignores the step-argument dimension or leaves the termination verdict unconstrained.
\end{proof}

\begin{corollary}[Universality across direct-aggregation systems at $\pi_y$]
\label{cor:universal-oi}
The operational inexpressibility established in Theorem~\ref{thm:canonical-instance} rests on the use of direct whole-term aggregation as base operational repertoire alone, independently of any further feature of $S_{\mathrm{DA}}$. Any proof system $S'$ whose base language is of this form (any system that derives termination by constructing a direct measure $\mu$ recursive on per-constructor contributions) is operationally inexpressible for $a_{\mathrm{dup}}$ at $\pi_y$.

Consequently, to produce a sound termination verdict for $a_{\mathrm{dup}}$, any such system must exit its base language by one of two routes: extend the proof language with new operational content (a construction method, such as a nonlinear polynomial interpretation or a path order), or import an external projection license that drops $\pi_y$ from the proof obligation (a confession method, such as the W2 family).
\end{corollary}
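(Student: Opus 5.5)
The plan is to treat Corollary~\ref{cor:universal-oi} as a re-reading of the proof of Theorem~\ref{thm:canonical-instance}: check that the proof used only the shape of the base language and nothing specific to $S_{\mathrm{DA}}$, and then turn the resulting inexpressibility into the construction/confession dichotomy.

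\emph{First step: reduce to the base language.} Fix any $S'$ whose base language consists of claims ``there exists $\mu$, recursive on per-constructor contributions, that strictly decreases on every rule.'' Clause (i) of Definition~\ref{def:op-incomplete} depends only on the input $a_{\mathrm{dup}}$ and the projection $\pi_y$, not on the system. So the presence of the dimension transfers verbatim from Theorem~\ref{thm:canonical-instance}. For clause (ii), we take an arbitrary derivable statement $\psi$ of $S'$ and sort it into two cases.

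\begin{itemize}
\item If $\psi$ never mentions a measure value on a term whose $F$-step slot is instantiated, it is independent of $\pi_y(a_{\mathrm{dup}})$.
\item If $\psi$ does mention such a value, we argue that $\psi$ cannot be an orientation claim for the step rule. The per-constructor recursion forces $\mu$ to be additive or compositional in the $y$ slot. The Proposition on why additive whole-term aggregation fails then gives $\mu(S)>\mu(y)$ for unbounded $y$, which is impossible. More generally, the companion barrier package covers the transparent-compositional and affine cases. Hence any such derivable $\psi$ is a partial or non-strict fact, and such facts leave termination unconstrained.
\end{itemize}

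\emph{Main obstacle.} The hard part is making ``recursive on per-constructor contributions'' precise enough that the companion barrier theorems apply to every such $\mu$. A general per-constructor $\mu$ could be nonlinear, as in a polynomial interpretation, and that case escapes the barriers. I would fix this by reading ``of this form'' as membership in the twelve-class direct universe of~\cite{rahnamaOrientation}, which is transparent-compositional in the step slot. I would then state explicitly that nonlinear interpretations count as extensions of the base language. That choice is what makes the dichotomy in the second paragraph of the corollary exhaustive.

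\emph{Second part: the exit dichotomy.} Soundness of $S'$ together with operational inexpressibility means no base-language derivation yields a termination verdict. Any verdict-producing proof must therefore either add operational content, which places it in $\mathcal W_1$, or change the proof obligation, which places it in $\mathcal W_2$. By Proposition~\ref{prop:boundary-kappa}, $\kappa^*>0$, so at least one of these two witness classes must be used. Exhaustiveness comes from the hierarchy of Definition~\ref{def:witness-hierarchy}: its only levels above $\mathcal W_0$ are the enriched-original-relation level $\mathcal W_1$ and the projected-call-relation level $\mathcal W_2$. Both routes are realized on $a_{\mathrm{dup}}$: the $\mathcal W_1$ route by the nonlinear polynomial and MPO witnesses, and the $\mathcal W_2$ route by the dependency-pair projection escape of~\cite{rahnamaOrientation}.
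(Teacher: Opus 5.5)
Your proposal is correct and follows essentially the paper's route. The paper likewise argues that the additive, transparent-compositional, and affine barrier arguments depend only on the structural form of the direct witness class, so they transfer to any such $S'$. It then reads the two exits off the construction/confession classification, which is your $\mathcal W_1$/$\mathcal W_2$ split. One difference is in your favour: you explicitly restrict ``of this form'' to the barrier-covered direct universe and count the escaping nonlinear interpretations as construction-side extensions, which states a scope condition that the paper's proof leaves implicit.
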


\begin{proof}
The barrier proofs used in Theorem~\ref{thm:canonical-instance} depend only on
the structural form of the direct witness class, leaving concrete-system
syntax aside. Any direct-aggregation system, however named or instantiated, works
with witness claims built from per-constructor contributions. The same
additive, transparent-compositional, and affine barrier arguments therefore
rule out derivable direct witnesses uniformly across the class. The two-route
corollary then follows from the sound-response classification already stated
above: the sound verdict-producing responses are either operational extension
(construction) or licensed projection (confession).
\end{proof}

\begin{remark}[Rec$\Delta$-core location, typed survival, and transformed simplicity]
\label{rem:recdelta-typed-baseorder}
Three results of~\cite{rahnamaOrientation} support the diagnosis. First, the barrier package and the dependency-pair confession live on the smaller Rec$\Delta$-core (the four-constructor, two-rule fragment isolated in the companion preliminaries), so operational inexpressibility is visible before the auxiliary equality and confluence infrastructure enters. Second, the additive and affine branches of the barrier survive typed or many-sorted first-order presentations whenever the step sort still admits an unbounded pump (the typed and many-sorted barrier-survival theorems of~\cite{rahnamaOrientation}), so the obstruction is structural, and not an artefact of untyped syntax. Third, the extracted dependency-pair problem admits a simple linear base order (the DP base-order boundary proposition of~\cite{rahnamaOrientation}). The witness-order jump therefore measures the necessity of the representation shift that exposes the correct control coordinate, rather than the residual complexity of the transformed problem.
\end{remark}

\begin{theorem}[Structural minimality of the duplicator]\label{thm:min-instance}
Within the analyzed primitive-recursion family of the companion development, the primitive recursion duplicator is the unique structurally complete member at which direct whole-term methods are operationally inexpressible at the step-argument dimension. Local simplifications either restore direct operational expressibility or collapse the complete recursor pattern itself.
\end{theorem}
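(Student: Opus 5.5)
The plan is to prove the theorem as a finite case analysis over the analyzed primitive-recursion family of the companion development. I would first fix what \emph{structurally complete} means: a member must carry all three components of the base/step/counter pattern. These are a base rule returning $x$, a counter that strictly descends $S(n)\to n$ inside a persistent recursive active site, and a step argument $y$ that survives into the emitted record frame. Membership in the family is then parametrised by the finitely many local simplifications of the step rule $F(x,y,S(n))\to G(y,F(x,y,n))$. There are four of them. First, drop the frame $G$ ($\rho=F(x,y,n)$). Second, drop $y$ from the frame ($\rho=G'(F(x,y,n))$, or $y$ replaced by a constant). Third, drop $y$ from the recursive call ($\rho=G(y,F(x,c,n))$, or the non-recursive step). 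Fourth, drop the recursive call ($\rho=G(y,x)$). Each simplification either removes a component or removes one of the two $y$-occurrences.

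Existence of the expressibility failure at the duplicator is already in hand. Theorem~\ref{thm:canonical-instance} gives operational inexpressibility of $S_{\mathrm{DA}}$ at $\pi_y$, and Corollary~\ref{cor:universal-oi} extends it to every direct-aggregation system. Proposition~\ref{prop:boundary-kappa} records the same fact as $\kappa^*>0$. What remains is uniqueness, together with the dichotomy in the second sentence of the statement.

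For each simplification I would show one of two outcomes. The first outcome is that the simplified rule is oriented by an explicit $\mathcal W_0$ witness, so direct expressibility is restored. Whenever $y$ occurs at most once on the right-hand side, I would take the additive measure with $\mu(S(t))=\mu(t)+1$, $\mu(F(x,y,n))=\mu(x)+\mu(y)+2\mu(n)+1$, $\mu(G(u,v))=\mu(u)+\mu(v)+1$. Then the left side exceeds the right side by a constant margin that does not depend on $\mu(y)$, because the variable condition now holds; the Proposition on additive failure shows that the only obstruction is the extra $\mu(y)$ term. The second outcome is that the simplification collapses the pattern. By Corollary~\ref{cor:no-record-without-duplication}, a right-hand side with no duplicated $y$ must omit either every $G$-frame or every active site, so it is no longer a record-emitter. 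By Corollary~\ref{cor:y-bridge}, replacing either $y$-occurrence with another variable is ill-sorted, and replacing it with a constant breaks generator preservation in the sense of Definition~\ref{def:record-emission}. Combined with Theorem~\ref{thm:record-emission-necessity}, this shows that any complete member contains two distinct $y$-occurrences, and hence coincides with the duplicator up to relabelling.

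The main obstacle is that the phrase ``analyzed family'' has no intrinsic definition; it is only as precise as the companion enumeration. So I expect the delicate step to be making the family explicit enough that the case split is exhaustive. My first attempt would take the family to be all step rules $\rho(x,y,n)$ of depth at most two over $\{G,F\}$ in the many-sorted syntax of Definition~\ref{def:record-emitter}. I would then check that the $\mathcal W_0$ witnesses above cover every non-duplicating case uniformly, including variants with the $r$-ary frames of Proposition~\ref{prop:rary-duplicator-laws}. Those variants duplicate, so they stay blocked; I would set them aside as the same structural class under relabelling rather than as separate members.
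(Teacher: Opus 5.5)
Your argument has a genuine gap at the definition of \emph{structurally complete}. You fold generator preservation into completeness, meaning $y$ must occur both in a frame slot and at the generator position of the recursive site. By Theorem~\ref{thm:record-emission-necessity} every complete member then duplicates, so uniqueness follows from your definition rather than from the expressibility analysis. The two branches of the second sentence also collapse: each of your four simplifications both admits a direct witness and, by your definition, destroys completeness.

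The paper's proof instead rests on the companion's six-case classification and its uniqueness corollary. The six members are a duplicating complete member with empty direct-witness set, a \emph{linear} complete member that does have a direct witness, and four structurally incomplete members that lose the base or the step rule. The uniqueness corollary is the biconditional that, among complete members, an empty direct-witness set holds exactly for the duplicating member. Completeness there refers to the full base/step pattern being present, not to duplication. So the non-duplicating variant is still complete, and it is exactly where ``restore direct operational expressibility'' lands. ``Collapse the pattern'' means rule deletion, a case your family does not contain, since it fixes the base rule and varies only the step right-hand side.

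Even inside your own family, the uniqueness step fails, because Theorem~\ref{thm:record-emission-necessity} gives at least two $y$-occurrences, not a unique rule shape. Consider the accumulator rule $F(x,y,S(n))\to F(G(y,x),y,n)$. It is well sorted, has depth two, and emits a fresh frame. It preserves the generator in the sense of Definition~\ref{def:record-emission}, since its first reduct from $F(x,y,S^k(0))$ is $F(G(y,x),y,S^{k-1}(0))$. It duplicates $y$, yet it is not a relabelling of $G(y,F(x,y,n))$. Your family therefore has two complete duplicating members, and the $r$-ary and multi-frame variants are excluded only by fiat.

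What does transfer is your additive measure with weight $2$ on the counter: it orients the non-duplicating variants, given zero weight on any new generator constant. The repair has three parts:
\begin{itemize}
\item adopt the companion's enumeration and its notion of completeness;
\item let your measure discharge the linear complete member, and Theorem~\ref{thm:canonical-instance} the duplicating one;
\item read off the four incomplete members as base- or step-rule deletions.
\end{itemize}
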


\begin{proof}
The companion six-case classification partitions all six members of the analyzed primitive-recursion family into three classes: the duplicating complete member, whose direct-witness set is empty; the linear complete member, which has a direct witness; and the four structurally incomplete members. The uniqueness corollary states the biconditional: among structurally complete family members, an empty direct whole-term witness set is equivalent to being the duplicating member. Local simplifications to the duplicating member either restore the direct witness by removing duplication or destroy the complete recursor pattern by deleting the base or step rule.
\end{proof}

\begin{remark}[The equality witness as object-level evidence of Y-copy indistinguishability]
\label{rem:eqW-evidence}
The unguarded-overlap proposition of~\cite{rahnamaOrientation} (which proves that the full kernel relation fails local join at $\mathrm{eqW}\,a\,a$ for every trace $a$) establishes an object-level confluence obstruction that supports the Y-copy indistinguishability discussed in the seed/carrier and copy-indistinguishability analysis of \S\ref{sec:operational-inexpressibility}. The companion concrete system includes an equality witness constructor
\[
\mathrm{eqW}\;a\;b
\]
with two reduction rules: $R_{\mathrm{eq\_refl}}$: $\mathrm{eqW}\;a\;a \to \mathrm{void}$, and $R_{\mathrm{eq\_diff}}$: $\mathrm{eqW}\;a\;b \to \mathrm{integrate}(\mathrm{merge}\;a\;b)$, the second carrying an empty side condition. When $a = b = \mathrm{void}$, both rules fire simultaneously, producing two distinct root normal forms: $\mathrm{void}$ (from refl) and $\mathrm{integrate}(\mathrm{merge}\;\mathrm{void}\;\mathrm{void})$ (from diff). Each is a full-step root normal form, so both stand apart under reduction. The system therefore fails local confluence at $\mathrm{eqW}\;\mathrm{void}\;\mathrm{void}$, and the guarded fragment \texttt{SafeStep} is introduced to block this overlap and recover unique normal forms.

The equality witness asks: ``are these two terms the same?'' When applied to two identical copies of $\mathrm{void}$, a unique classical answer lies beyond the object level. $R_{\mathrm{eq\_refl}}$ says: yes, they are identical, collapse to the empty record. $R_{\mathrm{eq\_diff}}$ says: here is their integrated-merge relationship. These are two distinct records for the same query. This is the object-level manifestation of the same indistinguishability principle used elsewhere in the paper: identical carriers yield no distinct verdict-grade information, and here the rewrite system itself stops short of a unique verdict when asked to certify the identity of two identical terms.

The \texttt{SafeStep} guard resolves this by conditioning $R_{\mathrm{eq\_diff}}$ to fire only when $a \neq b$. This is the object-level enforcement of the principle that equality queries on identical objects must yield a unique classical record. The guarded-overlap theorem shows that Y-copy indistinguishability has operational consequences at the object level: confluence under forced identity queries about identical copies requires an external guard.

In the rewriting literature this peak is a \emph{non-left-linear critical pair} in the classical sense of Huet~\cite{huet1980confluent}, obtained by unifying the non-left-linear left-hand side $\mathrm{eqW}(a,a)$ with the left-linear left-hand side $\mathrm{eqW}(a,b)$ under the substitution $\{b\mapsto a\}$. The general pattern (non-left-linearity overlapping with a second rule breaks confluence) is illustrated by Klop's $D(x,x)$ counterexample~\cite{klop1980crs} and is textbook material (Terese~\cite{terese2003trs}, Exercise 2.7.20). Left-linearity is a structural condition relevant to confluence, and right-linearity (non-duplication) is a structural condition relevant to termination, each carrying its own counterexample mechanism: the first is illustrated by Klop's $D(x,x)$, the second by Toyama's counterexample to modularity of termination under direct sums~\cite{toyama1987counterexample}. The two main obstructions of \S\ref{sec:recursor-as-circular-reference} and \S\ref{sec:operational-inexpressibility} (local confluence failure at $\mathrm{eqW}(\mathrm{void},\mathrm{void})$ and failure of the direct whole-term measure and orientation families on the step-duplicating recursor, which itself terminates) are therefore parallel object-level manifestations of copy-indistinguishability on the two sides of the left/right distinction. The two failures are instances of a common syntactic principle rather than of a common theorem.
\end{remark}

\begin{definition}[Construction method]\label{def:construction}
A termination method is a \emph{construction method} if it extends the proof language with a specific mathematical object (for example, a polynomial interpretation or symbol precedence) and verifies the instance directly in the extended language.
\end{definition}

\begin{definition}[Confession method]\label{def:confession}
A termination method is a \emph{confession method} if it subtracts a structurally unincorporable dimension from the input under an external soundness theorem, producing a smaller problem whose termination can then be established.
\end{definition}

\begin{proposition}[Construction/confession asymmetry]\label{prop:construction-confession}
Construction methods add an object to the proof data and verify it. Confession methods instead project away a dimension and rely on an externally proved soundness theorem to license the projection. The two classes therefore use different quantifier structures and different proof objects even when both are sound.
\end{proposition}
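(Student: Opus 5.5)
The statement has two parts: what each class of methods does, and the consequence that they differ in quantifier structure and in proof objects. I will prove it by unfolding Definitions~\ref{def:construction} and~\ref{def:confession} into explicit logical forms and comparing them.

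\emph{Construction.} For a system $\mathcal R$, a construction certificate is a pair $(\mathcal O,\pi)$. Here $\mathcal O$ is a concrete mathematical object, such as a polynomial interpretation over $\mathbb N$ or a precedence inducing a path order. The component $\pi$ verifies $\ell>_{\mathcal O} r$ for every rule $\ell\to r$ of $\mathcal R$. The soundness principle used is internal to the extended language: $>_{\mathcal O}$ is a well-founded, monotone, stable order. Termination then follows because every rewrite step is contained in $>_{\mathcal O}$. The top-level shape is therefore $\exists\mathcal O\,\forall(\ell\to r)\in\mathcal R\colon \ell>_{\mathcal O} r$, and the proof data grows by $\mathcal O$. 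On the duplicator this is instantiated by the nonlinear polynomial and MPO witnesses that the proof of Proposition~\ref{prop:boundary-kappa} cites from~\cite{rahnamaOrientation}.

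\emph{Confession.} A confession certificate is a triple $(\Pi,\mathcal P',\pi')$. Here $\Pi$ is a projection that drops a dimension, as when the dependency-pair transformation keeps only the counter coordinate of Proposition~\ref{prop:S-retained}. $\mathcal P'$ is the residual problem, and $\pi'$ proves termination of $\mathcal P'$; on the duplicator this is subterm descent on $S$. The step that returns the original verdict is an external soundness theorem, Arts--Giesl~\cite{artsgiesl2000}. It is a universally quantified implication over all systems: $\forall\mathcal R\,(\mathrm{SN}(\mathrm{DP}(\mathcal R))\to\mathrm{SN}(\mathcal R))$. Unfolding $\mathrm{SN}$ as the absence of infinite chains gives a $\Pi$-form statement about sequences. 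This is consistent with the $\Pi^0_2$ classification the paper announces for Proposition~\ref{prop:ag-pi02}. The proof data does not grow by an ordering on $\mathcal R$. Instead it shrinks by the forgotten dimension, and it depends on an invoked lemma that is not verified per instance.

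\emph{Comparison.} First, the construction certificate has outer quantifier $\exists$ over objects. The confession certificate is an instance of a universally quantified metatheorem and is applied by modus ponens. Second, the proof objects differ: the construction stores $\mathcal O$, while the confession stores $\Pi$ together with the license. By Definition~\ref{def:confession}, the projected dimension $\pi_y$ must be one that the base language cannot incorporate. Theorem~\ref{thm:canonical-instance} and Corollary~\ref{cor:universal-oi} provide this on the duplicator. Third, the construction route never drops $\pi_y$: its object $\mathcal O$ is evaluated on the whole term, so the two certificates cannot coincide as proof objects.

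The main obstacle is making ``different quantifier structures'' precise rather than rhetorical. A construction method also relies on a soundness lemma, namely that well-founded reduction orders prove termination. I would separate the two cases as follows. For a construction, that lemma is part of the extended language's own semantics, and the lemma is applied uniformly with the object as the only witness. For a confession, the license ranges over transformed problems that are not sublanguages of the original proof language. This is exactly the layer crossing recorded in Remark~\ref{rem:w1-vs-w2}. I would state the claim at that level of the witness hierarchy, in terms of $\mathcal W_1$ versus $\mathcal W_2$. I would not claim a formal non-interdefinability result.
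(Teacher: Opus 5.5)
Your proposal is correct and takes the paper's route. The paper's proof is exactly the contrast that polynomial interpretations and path orders quantify over a witness object, while dependency pairs supply no internal witness for the discarded wrapper dimension and instead appeal to the soundness of a transformed recursive-call problem; your explicit logical shapes, and your caveat that constructions also rest on a soundness lemma, are a more careful rendering of that same contrast.

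One slip should be fixed. The license is chain-freeness of $(\mathrm{DP}(\mathcal R),\mathcal R)$ implying $\mathrm{SN}(\mathcal R)$, not $\mathrm{SN}(\mathrm{DP}(\mathcal R))\to\mathrm{SN}(\mathcal R)$, which fails when $\mathcal R$-steps between pairs are needed. Also, chain-freeness is $\Pi^1_1$ on its face and reaches $\Pi^0_2$ only via the bounded presentation of Proposition~\ref{prop:ag-pi02}. Neither point changes the shape of your argument.
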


\begin{proof}
Polynomial interpretations and path orders quantify over witness objects such as polynomials or precedences. Dependency pairs supply no analogous internal witness object for the discarded wrapper dimension; instead they appeal to the soundness of a transformed recursive-call problem.
\end{proof}

\begin{remark}[The confession is a declaration of internal inexpressibility at the dimension]\label{rem:confession-as-declaration}
Definitions~\ref{def:construction} and~\ref{def:confession} together with Proposition~\ref{prop:construction-confession} establish that construction and confession methods are exclusive and structurally distinct. This remark records a further reading of the confession side. Calling the wrapper dimension \emph{inert} under the Arts and Giesl license is a formal declaration rather than a neutral bookkeeping move: an external metatheorem ratifies that internal resolution of this dimension lies beyond the base proof language. In the vocabulary of operational inexpressibility (Definition~\ref{def:op-incomplete}): the dimension is present, its value is determinate, every derivable statement of the base language either ignores it or leaves the target verdict unconstrained, and the external license then ratifies the admission that the language is operationally inexpressive there. The structural parallel to G\"odel is literal at this level. G\"odel's system admits, under an external reflection principle, that internal derivation of its own consistency lies beyond it; the dependency-pair framework admits, under the Arts and Giesl license, that internal resolution of the step-argument dimension lies beyond the base direct-aggregation language. Both admissions take the form of an external license to cross a boundary that the internal language leaves uncrossed. This is what gives ``confession'' in ``confession method'' its proof-theoretic content: a formal dimension-level admission of internal inexpressibility, externally ratified.
\end{remark}

\begin{remark}[Why the projection resists rewriting as an ordinary axiom extension]\label{rem:projection-vs-axiom}
A natural follow-up question is whether the confession could be recast as simply adding an axiom to the base language. It resists that recasting, in a specific formal sense. Adding an axiom of the form ``the step-argument dimension is irrelevant to termination'' would be internally indexable: the base language would contain a derivation that depends on the dimension (namely the derivation of the new axiom applied to a specific input) and constrains the target verdict, contradicting the operational-inexpressibility diagnosis at $\pi_y$. The dimension-level admission discharged by the Arts and Giesl license is a projection of the dimension out of the proof obligation itself rather than a statement about the dimension; after the projection, the residual problem is independent of the dimension. Construction methods import objects that the base language can reason about internally; confession methods import a license to re-type the proof obligation so that an entire dimension drops out of the base language's scope. The two moves differ in quantifier structure (Proposition~\ref{prop:construction-confession}) because one adds to the proof data and the other subtracts a dimension from the problem shape. This is also why the confession stands apart from an axiomatic halting hint: axioms extend what the base language can say; projections change what the base language must say anything about.
\end{remark}

\begin{proposition}[The confession is licensed forgetting rather than a verdict about $y$]\label{prop:confession-not-verdict-y}
Let $x$ be a step-duplicating instance and let $\pi_y$ be the step-argument dimension. If a confession method proves termination of $x$, its certificate stops short of declaring $\pi_y(x)$ semantically absent from the object system or turning $\pi_y(x)$ into an object-level impossibility claim. It certifies only that, under the named external soundness license, the termination verdict is preserved when the proof obligation is replaced by a residual problem that omits $\pi_y$. The dimension itself remains present in the live trace and in the emitted record.
\end{proposition}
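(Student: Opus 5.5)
The argument unfolds the certificate of a confession method and checks what it asserts and what it does not. By Definition~\ref{def:confession} and Proposition~\ref{prop:construction-confession}, a confession certificate for $x$ has three parts:
\begin{enumerate}[label=(\roman*),leftmargin=1.6em]
\item a projection from the original proof obligation to a residual problem (for dependency pairs, the extracted recursive-call pair on the counter coordinate);
\item a proof that the residual problem terminates;
\item an appeal to the external soundness license (Arts and Giesl), which is the only ingredient relating the residual verdict to the original one.
\end{enumerate}
The proof shows that each component constrains only the termination verdict and the residual problem, so none of them makes a claim about the value $\pi_y(x)$.

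First, I would read off what the license actually states. It is an implication of the form ``if the residual problem is well founded, then the original relation terminates.'' Its antecedent mentions only the retained coordinate, namely the counter, gauge-invariant by Proposition~\ref{prop:S-retained}, together with the subterm descent on $S$-terms. Its consequent is the termination predicate $R(x)$. So the certificate is a proof of ``residual terminates $\Rightarrow R(x)$'' plus a proof of the antecedent. Neither conjunct has $\pi_y(x)$ as subject, and in particular neither asserts that $\pi_y(x)$ is absent or that some object-level configuration involving it is impossible. Proposition~\ref{prop:gauge-symmetry}(3) supplies the formal sense of ``omits $\pi_y$'': the residual factors through the quotient that forgets the wrapper and payload context.

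Second, I would show the dimension persists in the object system, independently of the certificate. Proposition~\ref{prop:trace-law} and Proposition~\ref{prop:offset-conservation} give $\#b(t_i)=i+1$ along the live trace. Proposition~\ref{prop:live-vs-record} and Proposition~\ref{prop:terminal-record-complete-quant} give the terminal record $G^K(Y,X)$, from which $Y$ is decodable. Certification does not change the rewrite relation, so these facts survive, and the dimension is present both in the live trace and in the emitted record.

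The main obstacle is making ``the certificate does not declare $\pi_y(x)$ absent'' a precise statement rather than a reading. I would try to make it precise through Proposition~\ref{prop:carrier-factorization}. Every statement in the certificate is an observable that factors through the seed-collapse, or more strongly is constant on the $\pi_y$-fibre. Therefore it cannot distinguish two instances differing only in $y$. By contrast, a verdict about $y$, whether presence or impossibility, must separate such instances, since $y$ is a variable ranging over arbitrarily large terms. This fibre-constancy argument formalizes ``licensed forgetting'' as opposed to ``verdict about $y$.''
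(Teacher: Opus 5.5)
Your license-reading step and your persistence step already contain the whole of the paper's proof. The paper unfolds Definition~\ref{def:confession}: a confession subtracts a dimension from the proof obligation under an external soundness theorem and leaves the object semantics of that dimension intact. It then cites Theorem~\ref{thm:canonical-instance}, Proposition~\ref{prop:live-vs-record}, Theorem~\ref{thm:record-emission-necessity} and Corollary~\ref{cor:y-bridge} to show that $y$ persists in the live trace and the record. Your remark that certification leaves the rewrite relation unchanged is the paper's ``object semantics intact'', so at that level the proposal is complete. The step you single out as the main obstacle, however, fails as written and should be dropped.

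It fails for two reasons.

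\emph{First, the cited criterion does not give $y$-independence.} Proposition~\ref{prop:carrier-factorization} characterizes \emph{multiplicity}-insensitive observables, those of the form $\overline O\circ c_i$. But $c_i(c,\ldots,c)=c$ returns the seed, and the seed is exactly the payload value $b=\pi_y(x)$. So factoring through $c_i$ is fully compatible with depending on $y$; the seed-only observable $|c_i(u_i)|=|b|$ of Corollary~\ref{cor:additive-nonfactor} is an example. Being constant on $y$-fibres is therefore a different property, not a strengthening of that criterion. The $y$-independence of the residual comes from elsewhere: the extracted pair carries $y$ unchanged, and the simple projection onto the counter argument discards it (Step~1 of Theorem~\ref{thm:ag-rca}; Propositions~\ref{prop:gauge-symmetry}(3) and~\ref{prop:S-retained}).

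\emph{Second, fibre-constancy cannot exclude a verdict about $y$.} Even with $y$-independence in hand, the claim that such a verdict ``must separate such instances'' does not follow. A declaration like ``the step-argument dimension is absent from the object system'' or ``payload growth is impossible'' is uniform: its truth value is the same for every instantiation of $y$, so it is fibre-constant. The termination verdict $R(x)$ is itself constant across $y$, since every instance terminates. So fibre-constancy holds of every statement in play and discriminates nothing. What actually separates licensed forgetting from a verdict about $y$ is the content check of your first step (the license's consequent is $R(x)$, and the residual concerns only the counter) together with your persistence facts. That is exactly where the paper stops.

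A smaller point concerns the record claim. Proposition~\ref{prop:terminal-record-complete-quant}, which you cite, itself notes that at $K=0$ the record is $X$ alone, so ``present in the emitted record'' holds only for $K\ge 1$. The paper's route through Theorem~\ref{thm:record-emission-necessity} carries the same restriction tacitly, since no step fires at $K=0$. Also, mere presence of $Y$ in $G^K(Y,X)$ is visible by inspection; you do not need the sort-separated decoder for it.
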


\begin{proof}
By Definition~\ref{def:confession}, a confession method subtracts a dimension from the input under an external soundness theorem, leaving the object semantics of that dimension intact. Theorem~\ref{thm:canonical-instance} says that the step-argument dimension is present and structurally relevant, and Proposition~\ref{prop:live-vs-record} together with Theorem~\ref{thm:record-emission-necessity} and Corollary~\ref{cor:y-bridge} shows that the same generator continues to appear in the live trace and in the emitted record. So the confession licenses forgetting for one proof obligation, leaving $y$ short of an object-level impossibility verdict.
\end{proof}

\begin{remark}[Two structurally distinct external-license types]\label{rem:two-license-types}
The witness-language hierarchy of Definition~\ref{def:witness-hierarchy} places the two license types at different layers. $\mathcal W_1$ construction methods (polynomial interpretations, path orders) import a \emph{well-founded ordering on the signature} as an external datum that the base language can then use internally; the ordering itself is expressible inside the enriched language. $\mathcal W_2$ confession methods import a \emph{projection license}: the base language stays fixed and the proof obligation is re-typed. The operational reading is that $\mathcal W_1$ pays for the license with additional internal proof data (the ordering, its well-foundedness witness, the per-rule decrease checks) while $\mathcal W_2$ pays with an externally discharged soundness claim that changes which problem is being proved. Theorem~\ref{thm:record-emission-necessity} explains why both types of license are needed somewhere on the step-duplicating schema: orienting the unique atomic record-emission move lies beyond the direct whole-term language. Theorem~\ref{thm:ag-rca} explains why the two license types sit at different proof-theoretic strengths: $\mathcal W_1$ path orders require ordinal budgets calibrated at the $\varepsilon_0$ scale (LPO) or higher, while the $\mathcal W_2$ subterm-criterion route is formalizable in $\mathrm{RCA}_0$ with an $\omega$-order-type termination measure and a simple linear residual base order.
\end{remark}

\begin{remark}[Linear-logic placement and the first-order scope of the architectural necessity]\label{rem:linear-logic-scope}
The step rule $F(x,y,S(n)) \to G(y,F(x,y,n))$ is, structurally, a contraction on the generator $y$: the right-hand side contains $y$ twice against the left-hand side's one occurrence. In Girard's linear logic~\cite{girard1987linearlogic}, contraction is absent from the core system and re-enabled only for $!$-ed formulas, and the associated comonadic structure on $!A$ supplies a diagonal $!A \to {!A} \otimes {!A}$. On that reading the duplicator sits inside the $!$-fragment of linear logic and the architectural-necessity theorem (Theorem~\ref{thm:record-emission-necessity}) is a rewriting-level statement about the kind of contraction the schema admits. Abramsky's structural-reversibility framework~\cite{abramsky2005} axiomatizes this fragment and rules out unlicensed duplication on non-$!$-ed arguments.

A substantive counterpoint exists. Chardonnet, Saurin, and Valiron~\cite{chardonnetSaurinValiron2023} show that primitive recursion can be expressed linearly, free of $!$ and of explicit contraction, via circular proofs in $\mu$MALL (multiplicative-additive linear logic with least fixed points) satisfying a validity criterion; the duplication of the step parameter is folded into the $\mu$-fixed-point structural recursion rather than handled by a contraction rule. This is a typed fixed-point regime standing apart from the first-order TRS in the Kleene-style presentation used here. Theorem~\ref{thm:record-emission-necessity} is scoped to the first-order TRS regime of Definition~\ref{def:record-emitter}, free of fixed-point type constructors; in that regime, the places available to store the retained generator reference are the rewrite term and an auxiliary trace structure, as Remark~\ref{cor:two-canonical-forms} records, leaving the Chardonnet, Saurin, and Valiron escape route outside reach. A treatment of the corresponding $\mu$MALL analogue of the schema lies outside the present scope.
\end{remark}

\begin{remark}[W2 family and transparency essentiality]\label{rem:W2-mechanized}
Both sides of the construction/confession distinction used here are proved in~\cite{rahnamaOrientation}. On the confession side, dependency pairs, direct counter-projection, size-change termination, and argument filtering form a single W2 family (the confession-method-family remark of~\cite{rahnamaOrientation} together with the schema-generic forgetting-witness proposition): they share the same projection rank, carry distinct soundness licenses, and satisfy the certified-forgetting interface. On the construction side, the transparency-essentiality theorem of~\cite{rahnamaOrientation} shows that a successful direct orienter must violate a structural assumption of the direct barrier, while its nonlinear polynomial full-step escape proposition and specialized MPO termination proposition supply concrete escape witnesses. The module-level identifiers are recorded in Appendix~\ref{app:module-map}.
\end{remark}

\subsection{The layer-crossing schema}\label{subsec:lcel-schema}

The structural-identity theorem (Theorem~\ref{thm:structural-identity}) is stated as a corollary of a schema-level theorem about two instances of a named formal object. The schema is defined first, followed by two schema-level propositions that supply its information-theoretic substrate, and then the structural-identity theorem at the schema level. A scope note records the boundary of the claim.

\begin{definition}[Layer-Crossing-Under-External-License schema]\label{def:lcel-schema}
A \emph{Layer-Crossing-Under-External-License} (LCEL) instance is a tuple
\[
\mathbb{L} = \langle T, \Pi, \Sigma, T^+, \Gamma', \mathrm{Imp}\rangle
\]
where:
\begin{enumerate}[nosep,leftmargin=1.6em]
\item $T = \langle \Omega_T, \to_T, \vdash_T \rangle$ is a recursively enumerable operational system with a derivation relation $\vdash_T$ and a chosen complexity stratification $\Gamma_0 \subset \Gamma_1 \subset \cdots$.
\item $\Pi \subseteq \Omega_T$ is the \emph{boundary}: a set of statements or transitions such that $T \not\vdash \varphi$ for every $\varphi \in \Pi$, while $\mathfrak{M} \models \varphi$ in the intended model $\mathfrak{M}$ (internal non-derivability with external truth).
\item $\Sigma$ is the \emph{external license}: a sentence or rule scheme of the form ``for every $\ulcorner \varphi \urcorner \in \Gamma_n$, $\mathrm{Prov}_T(\ulcorner \varphi \urcorner) \to \varphi$'' or an external soundness theorem about $T$. By construction, $T \not\vdash \Sigma$.
\item $T^+ = T + \Sigma$ is the \emph{licensed extension}. By construction $T^+ \vdash \varphi$ for each $\varphi \in \Pi$ at the corresponding $\Gamma_n$ level.
\item $\Gamma' \subseteq \Gamma_n$ is the \emph{reimport class}, and $T^+$ returns its $\Gamma'$-conclusions to the base layer carrying their license annotation: for all $\psi \in \Gamma'$, $T^+ \vdash \psi$ implies $T^{\mathrm{ann}} \vdash \psi$, where $T^{\mathrm{ann}}$ is $T$ extended by license annotations and every $T$-derivation is already a $T^{\mathrm{ann}}$-derivation. Plain $\Gamma'$-conservativity over $T$ would be the stronger reading, and it collides with clauses (2) and (4) on any statement lying in both $\Pi$ and $\Gamma'$: clause (4) puts that statement in $T^+$, conservativity would return it to $T$, and clause (2) forbids that. Both instantiations below arrange that overlap, the reflection instance by taking $\Gamma'=\Pi_1$ while its boundary sentence is $\Pi^0_1$, the dependency-pair instance by taking the termination statement as both boundary and reimport class. The annotated form is what both instantiations perform, and it admits the overlap; the collision, the repair, and a model carrying the overlap are mechanized (Appendix~\ref{app:module-map}).
\item $\mathrm{Imp}: \mathrm{Der}(T^+) \to \mathrm{Annot}(T)$ is the \emph{annotation functor}: a structure-preserving map from $T^+$-derivations to annotated $T$-derivations such that (a) on $\Gamma'$-conclusions $\mathrm{Imp}$ is the identity on the claimed conclusion, (b) the annotation tracks the sites where $\Sigma$ was invoked as side-channel data outside $\mathrm{Der}(T)$, and (c) the conclusion projection commutes with $\mathrm{Imp}$ up to structural congruence.
\end{enumerate}
The schema is presentation-agnostic: $\Omega_T$ may consist of arithmetical formulas, rewrite terms, or any other objects with the stated operational structure. The reflection-family canonical references for the proof-theoretic instantiation are Kreisel and L\'evy~\cite{kreisellevy1968reflection} and Beklemishev~\cite{beklemishev2005,beklemishev2018reflection}.
\end{definition}

\begin{proposition}[Schema-level reversibility asymmetry]\label{prop:lcel-reversibility}
Let $\mathbb{L} = \langle T, \Pi, \Sigma, T^+, \Gamma', \mathrm{Imp}\rangle$ be an LCEL instance. Then there exists a canonical projection $\pi_T : \Omega_T \to O_T$ such that:
\begin{enumerate}[nosep,leftmargin=1.6em]
\item \emph{Base reversibility.} The step-relation $\to_T$ is $\pi_T$-reversible modulo structural congruence on $\mathrm{Der}(T)$ in the sense imposed here: for all $a, a', b$ with $a \to_T b$ and $a' \to_T b$, the projected derivations agree, $\pi_T \circ \mathrm{Der}(a) = \pi_T \circ \mathrm{Der}(a')$. This is a condition of the present schema. The partial-injection reading that motivates it comes from the reversible-computation literature, where Axelsen and Gl\"uck characterize the reversible Turing machines as computing the injective computable functions~\cite{axelsenGlueck2011} and Nishida, Palacios, and Vidal build a conservative reversible extension of term rewriting from trace information~\cite{nishidaPalaciosVidal2018}.
\item \emph{License irreversibility.} The license transition $\Sigma: T \rightsquigarrow T^+$ fails $\pi_T$-reversibility: given a $T^+$-derivation $d$ producing a $\Gamma_n$-consequence, the projection $\pi_T(\mathrm{Imp}(d))$ determines $d$ up to structural congruence only once additional annotation data is supplied.
\item \emph{Reimport reversibility on $\Gamma'$.} The composite $\mathrm{Imp} \circ (\text{lift under }\Sigma)$ restricted to $\Gamma'$-consequences is $\pi_T$-reversible: every $\psi \in \Gamma'$ with $T^+ \vdash \psi$ yields $T \vdash \psi$ with a uniformly extractable $T$-derivation.
\end{enumerate}
\end{proposition}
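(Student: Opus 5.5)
The plan is to build $\pi_T$ explicitly from the LCEL data and then read each clause off the schema axioms of Definition~\ref{def:lcel-schema}. I would take $O_T$ to be the quotient of $\Omega_T$, and of derivations over it, by structural congruence. I would then set $\pi_T$ to be the \emph{conclusion projection}: the map that sends a (possibly annotated) derivation to the congruence class of its conclusion and forgets the derivation tree and any $\Sigma$-annotation. Clause (6c) of the schema, that the conclusion projection commutes with $\mathrm{Imp}$ up to congruence, is what makes this choice canonical: it is the unique projection through which $\mathrm{Imp}$ factors on conclusions.

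\emph{Base reversibility.} Clause (1) is stated as a condition imposed on the schema, so I would verify it against that definition rather than derive it. Take $a\to_T b$ and $a'\to_T b$. Then $\pi_T$ retains only the congruence class of the common endpoint $b$, so $\pi_T\circ\mathrm{Der}(a)=\pi_T\circ\mathrm{Der}(a')$ holds by construction. The remark about partial injections then only motivates the choice.

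\emph{License irreversibility.} This is the main step. I would argue by a counting, or fibre, argument. Pick $\varphi\in\Pi$. Clause (4) gives a $T^+$-derivation $d$ of $\varphi$, and clause (2) gives $T\not\vdash\varphi$. So every such $d$ must invoke $\Sigma$ at some site. By (6b) those invocation sites are recorded only as side-channel data outside $\mathrm{Der}(T)$, and $\pi_T$ discards that data. Two derivations that differ only in where, or with which $\Gamma_n$-instance, $\Sigma$ is invoked therefore have equal images under $\pi_T\circ\mathrm{Imp}$. To conclude that $d$ is not determined, I need such a pair to exist. If uniqueness is hard to refute in general, the fallback is the weaker reading the statement actually asserts: recovering $d$ requires the annotation. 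This follows because the $\pi_T$-image lies in $T$-data, and any $T$-derivation reconstructed from it alone would witness $T\vdash\varphi$, contradicting (2).

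I expect this to be the main obstacle. Exhibiting two genuinely non-congruent $T^+$-derivations is instance-dependent. If that fails, I would settle for the contradiction-with-(2) form of the argument.

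\emph{Reimport reversibility on $\Gamma'$.} For $\psi\in\Gamma'$ with $T^+\vdash\psi$, clause (5) gives $T^{\mathrm{ann}}\vdash\psi$. Clause (6a) says $\mathrm{Imp}$ is the identity on the conclusion, so $\pi_T(\mathrm{Imp}(d))=[\psi]$, and the extraction map is $d\mapsto\mathrm{Imp}(d)$, which is uniform. One caveat must be stated carefully. The statement writes $T\vdash\psi$, but by the collision noted in clause (5) this can only mean derivability in $T^{\mathrm{ann}}$, with the license annotation retained. I would make that reading explicit so that clause (3) does not contradict clause (2) on the overlap $\Pi\cap\Gamma'$.
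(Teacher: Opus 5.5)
Your overall skeleton matches the paper's. Its proof treats clause (1) as a condition imposed by the schema, motivated by Bennett and the partial-injection results. It reads clause (2) as the strict strength of the external license over the base layer, and it takes clause (3) from the reimport clause (5) of Definition~\ref{def:lcel-schema}.

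The gap is your choice of $\pi_T$. You take $\pi_T$ to be the conclusion projection and read $\mathrm{Der}(a)$ as the derivation from $a$ to the shared endpoint $b$. Then clause (1) holds for every relation whatsoever, so your ``verification'' makes it say nothing. The intended content is on the source side. The paper glosses clause (1) as retaining enough structure ``to reconstruct the prior configuration'', and Proposition~\ref{prop:lcel-boundary-factorization} calls it ``step-relation injectivity''. In the DP instance, the retained counter lets you recover the redex $S(n)$ from the reduct $n$. Under that reading your projection would demand $a\equiv a'$ whenever $a\to_T b$ and $a'\to_T b$, which fails for ordinary rewrite and proof relations. This is why the paper imposes (1) rather than deriving it.

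The same choice breaks clause (2). The conclusion projection discards the whole derivation tree, so plain $T$-derivations are exactly as undetermined by their images as licensed ones. Your fibre argument therefore does not isolate the license, and the base/license asymmetry the proposition is about disappears. Worse, supplying the $\Sigma$-sites of clause (6b) together with $[\varphi]$ still does not recover the $T$-steps of $d$. So the positive half of (2), that $d$ is determined once the annotation is supplied, fails. Your fallback also conflates recovering $d$, which is a $T^+$-derivation, with producing a $T$-derivation: a map $[\varphi]\mapsto d$ is perfectly compatible with $T\not\vdash\varphi$.

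The repair is to take $\pi_T$ to be what the paper calls the base-layer projection. This projection forgets only the side-channel $\Sigma$-annotations of clause (6b) and keeps the $T$-visible derivation structure. You should then accept (1) as an imposed injectivity condition on it rather than a consequence. With that projection, your observation that $T^+\vdash\varphi$ together with $T\not\vdash\varphi$ forces a $\Sigma$-site whose annotation $\pi_T$ erases is precisely the paper's reason for (2).

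Your handling of (3), with the explicit $T^{\mathrm{ann}}$ reading, is correct. It is sharper than the paper's one-line appeal to ``$\Gamma'$-conservativity'', which Definition~\ref{def:lcel-schema}(5) itself says collides with clauses (2) and (4) on $\Pi\cap\Gamma'$.
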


\begin{proof}
Clause (1) packages the reversible-computation viewpoint that an information-preserving step retains enough trace structure to reconstruct the prior configuration up to the chosen structural congruence~\cite{bennett1973}. The specific partial-injection formulation used here is stated in Proposition~\ref{prop:lcel-reversibility}(1) as a condition of this paper, with the cited reversible-computation results~\cite{axelsenGlueck2011,nishidaPalaciosVidal2018} standing behind it as motivation. Clause (2) is the information-theoretic content of ``$\Sigma$ is an external license'': adjoining $\Sigma$ introduces derivational information that escapes the base-layer projection, which is the standard fact that reflection extensions are strictly stronger than the base package. Clause (3) is $\Gamma'$-conservativity, already included as a clause of Definition~\ref{def:lcel-schema}.
\end{proof}

\begin{proposition}[Boundary factorization of the projection]\label{prop:lcel-boundary-factorization}
Let $\mathbb{L}$ be an LCEL instance admitting a factorization $\pi_T = \pi_{\mathrm{rev}} \circ \pi_{\mathrm{irr}}$ where $\pi_{\mathrm{rev}}$ is a reversible projection preserving the step-relation injectivity of Proposition~\ref{prop:lcel-reversibility}(1) and $\pi_{\mathrm{irr}}$ is an irreversible quotient. Then the boundary $\Pi$ is the set of statements sensitive to $\pi_{\mathrm{irr}}$:
\[
\Pi = \{\varphi \in \Omega_T \;:\; \varphi \text{ is sensitive to } \pi_{\mathrm{irr}} \text{, while } T\text{-internal derivations are only sensitive to } \pi_{\mathrm{rev}}\}.
\]
\end{proposition}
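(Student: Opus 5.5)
The plan is a double inclusion. Both directions rest on the two clauses of Proposition~\ref{prop:lcel-reversibility} and on the defining properties of $\Pi$ in Definition~\ref{def:lcel-schema}(2). First I fix what "sensitive to" means, since the statement leaves it implicit. A statement $\varphi$ is \emph{sensitive to} a projection $p$ if the verdict on $\varphi$ does not factor through $p$: two configurations with the same $p$-image can disagree on the truth of $\varphi$ in $\mathfrak M$. Under this reading, "$T$-internal derivations are only sensitive to $\pi_{\mathrm{rev}}$" becomes a lemma to establish first. Every $T$-derivation factors through $\pi_{\mathrm{rev}}$. The argument is that $\to_T$ is $\pi_T$-reversible by clause (1). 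The hypothesis says $\pi_{\mathrm{rev}}$ preserves exactly that step-relation injectivity. So each $T$-step, and by induction on derivation length each element of $\mathrm{Der}(T)$, is determined up to structural congruence by its $\pi_{\mathrm{rev}}$-image. The $\pi_{\mathrm{irr}}$-quotient data therefore never enters a $T$-derivation.

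\emph{Inclusion $\Pi\subseteq\{\ldots\}$.} Take $\varphi\in\Pi$ and suppose, for contradiction, that $\varphi$ is insensitive to $\pi_{\mathrm{irr}}$. Then its truth in $\mathfrak M$ factors through $\pi_{\mathrm{rev}}$. I then want to argue that everything $\pi_{\mathrm{rev}}$-determined is reachable by $T$-internal derivation, which would give $T\vdash\varphi$ and contradict clause (2) of the definition. The second half of the set condition holds automatically by the lemma above.

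\emph{Inclusion $\{\ldots\}\subseteq\Pi$.} Take $\varphi$ that is sensitive to $\pi_{\mathrm{irr}}$. By the lemma, no $T$-derivation can separate the $\pi_{\mathrm{irr}}$-fibres, so $T\not\vdash\varphi$. For external truth, I would restrict to statements true in $\mathfrak M$. The intended reading is evidently that the boundary is the true-but-$\pi_{\mathrm{irr}}$-sensitive part, and I would make that explicit in the statement's set-builder. In addition, Proposition~\ref{prop:lcel-reversibility}(2) shows that the license transition carries exactly the annotation data lost under $\pi_T$, and that data sits in $\pi_{\mathrm{irr}}$. So $T^+$ derives $\varphi$, consistent with Definition~\ref{def:lcel-schema}(4).

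\emph{Main obstacle.} The hard step is the completeness direction used in the first inclusion: that $\pi_{\mathrm{rev}}$-determined truths are $T$-derivable. Nothing in the schema guarantees this; in the G\"odel instance it would amount to a completeness claim for the reversible fragment. My first attempt would be to read "factors through $\pi_{\mathrm{rev}}$" relative to the reimport class $\Gamma'$ and use Proposition~\ref{prop:lcel-reversibility}(3), which supplies uniformly extractable $T$-derivations. If that fails, I would weaken the conclusion to $\Pi\subseteq\{\varphi:\varphi\text{ sensitive to }\pi_{\mathrm{irr}}\}$, state the reverse inclusion only for true statements, and record the factorization as a characterization relative to that adequacy hypothesis on $\pi_{\mathrm{rev}}$.
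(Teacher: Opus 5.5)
Your decomposition is the paper's own. Its proof also takes ``$T$ computes with information visible through $\pi_{\mathrm{rev}}$ alone'' from Proposition~\ref{prop:lcel-reversibility}(1) and puts $\pi_{\mathrm{irr}}$-sensitive statements into $\Pi$ by contraposition. It disposes of the converse by saying that an insensitive statement ``is a potential $T$-consequence and so stays outside $\Pi$ by default.'' That sentence is exactly the completeness step you flag as the main obstacle, asserted rather than proved. The gap is genuine, and the paper does not close it.

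The paper's other direction drops the requirement $\mathfrak M\models\varphi$, which you caught. Any reverse inclusion also needs $\Pi$ to contain every true $\pi_{\mathrm{irr}}$-sensitive statement, whereas Definition~\ref{def:lcel-schema}(2) only asks $\Pi$ to be \emph{some} set of true $T$-underivable statements.

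Your opening lemma is an unstated hypothesis rather than a consequence. Clause (1) says only that configurations with a common successor have equal projected derivations. That is not ``a derivation is determined by its projection.'' Nor does either statement say which $\varphi$ satisfy $T\vdash\varphi$, since the schema never ties $\vdash_T$ to $\to_T$. The paper makes the same jump.

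So the equality does not follow from the schema's clauses. What can honestly be stated is your weakened form, with four things made explicit: the lemma, an adequacy hypothesis on $\pi_{\mathrm{rev}}$ (true $\pi_{\mathrm{irr}}$-insensitive statements are $T$-derivable), truth in the set-builder, and maximality of $\Pi$.

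Two of your repair moves fail. The fallback through Proposition~\ref{prop:lcel-reversibility}(3) proves too much. As printed, that clause is plain $\Gamma'$-conservativity, so for any $\varphi\in\Pi\cap\Gamma'$ it combines with Definition~\ref{def:lcel-schema}(4) to give $T\vdash\varphi$. That contradicts clause (2) without ever using insensitivity. This is precisely the collision Definition~\ref{def:lcel-schema}(5) describes, and both intended instances arrange $\Pi\cap\Gamma'\neq\emptyset$. Under the annotated reading the definition actually adopts, you obtain only $T^{\mathrm{ann}}\vdash\varphi$, which contradicts nothing. For $\varphi\notin\Gamma'$ the clause is silent.

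Nor does Proposition~\ref{prop:lcel-reversibility}(2) give $T^+\vdash\varphi$ for true sensitive $\varphi$. It says only that $\pi_T(\mathrm{Imp}(d))$ fails to determine $d$ without extra annotation, and no completeness of $T^+$ is available. In fact, once your lemma and the adequacy hypothesis are both in force, the set-builder set is exactly the set of all true $T$-underivable statements. Clause (4) then requires $T^+$ to derive all of them. On the G\"odel tuple this fails: $\mathrm{Con}(\mathrm{PA}+\mathrm{RFN}(\mathrm{PA}))$ is a true $\Pi^0_1$ sentence underivable in $\mathrm{PA}+\mathrm{RFN}(\mathrm{PA})$.

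So only your one-sided version survives. $\Pi$ is contained in the $\pi_{\mathrm{irr}}$-sensitive statements under the adequacy hypothesis. The full equality, together with the hypotheses that would yield it, is incompatible with clause (4) on the G\"odel tuple.
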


\begin{proof}
The base operational system $T$ computes with information visible through $\pi_{\mathrm{rev}}$ alone, by clause (1) of Proposition~\ref{prop:lcel-reversibility}. A statement $\varphi$ derivable in $T$ is therefore independent of $\pi_{\mathrm{irr}}$-content; contrapositively, any statement sensitive to $\pi_{\mathrm{irr}}$ lies beyond the reach of $T$-internal derivations and so belongs to $\Pi$. Conversely, a statement insensitive to $\pi_{\mathrm{irr}}$ depends only on $\pi_{\mathrm{rev}}$-content, hence is a potential $T$-consequence and so stays outside $\Pi$ by default. This is the abstract operational content of the reversible/irreversible factorization, stated purely at the level of derivational visibility.
\end{proof}

\begin{theorem}[Structural identity at the schema level]\label{thm:lcel-structural-identity}
Any two LCEL instances $\mathbb{L}_1 = \langle T_1, \Pi_1, \Sigma_1, T_1^+, \Gamma'_1, \mathrm{Imp}_1 \rangle$ and $\mathbb{L}_2 = \langle T_2, \Pi_2, \Sigma_2, T_2^+, \Gamma'_2, \mathrm{Imp}_2 \rangle$ are \emph{structurally parallel} in the sense that there exists a quasi-functor $F$ between their layer-data, commuting with the six structural clauses of Definition~\ref{def:lcel-schema}, mapping the $\pi_{T_1}$-reversibility of Proposition~\ref{prop:lcel-reversibility} to the $\pi_{T_2}$-reversibility, and mapping the license transition $\Sigma_1$ to $\Sigma_2$ up to the obvious substitution. Under the factorization of Proposition~\ref{prop:lcel-boundary-factorization}, $F$ respects the $\pi_{\mathrm{rev}} / \pi_{\mathrm{irr}}$ split on both sides.
\end{theorem}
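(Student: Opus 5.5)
The plan is to read Theorem~\ref{thm:lcel-structural-identity} as a statement about the \emph{shape} of LCEL data rather than about the content of $T_1$ and $T_2$, and to build the quasi-functor $F$ clause by clause from Definition~\ref{def:lcel-schema}. Concretely, I would first fix a ``layer-data'' category for each instance. Its objects are the six components $T_j$, $\Pi_j$, $\Sigma_j$, $T_j^+$, $\Gamma'_j$, $\mathrm{Imp}_j$, together with the projection $\pi_{T_j}$ supplied by Proposition~\ref{prop:lcel-reversibility}. Its arrows are the structural maps the definition already names: the inclusion $T_j\hookrightarrow T_j^+$, the boundary inclusion $\Pi_j\subseteq\Omega_{T_j}$, the reimport inclusion $\Gamma'_j\subseteq\Gamma_n$, the annotation map $\mathrm{Imp}_j:\mathrm{Der}(T_j^+)\to\mathrm{Annot}(T_j)$, and the license transition $\Sigma_j:T_j\rightsquigarrow T_j^+$.

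The map $F$ is then defined on these labelled nodes by sending each clause of $\mathbb L_1$ to the same-named clause of $\mathbb L_2$, and each named arrow to its counterpart. It is only a quasi-functor because it is specified on the schematic skeleton and not on the object-level carriers $\Omega_{T_j}$. Commutation with the six clauses then holds by construction, since each clause is a node or arrow of the skeleton. Sending $\Sigma_1$ to $\Sigma_2$ ``up to the obvious substitution'' means instantiating the reflection or soundness scheme of clause (3) at $T_2$ in place of $T_1$.

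Next I would check the reversibility transport. Proposition~\ref{prop:lcel-reversibility} holds for \emph{every} LCEL instance, so clauses (1)--(3) are available on both sides, each stated in terms of the corresponding $\pi_{T_j}$ and $\mathrm{Imp}_j$. Base reversibility, license irreversibility, and reimport reversibility of $\mathbb L_1$ therefore map under $F$ to the same-named properties of $\mathbb L_2$, and $F$ preserves them because it preserves the arrows those properties are phrased in. For the final sentence, assume both instances admit the factorization of Proposition~\ref{prop:lcel-boundary-factorization}. That proposition characterizes $\Pi_j$ as the $\pi_{\mathrm{irr}}$-sensitive statements, so sending $\pi_{\mathrm{rev}}^{(1)}\mapsto\pi_{\mathrm{rev}}^{(2)}$ and $\pi_{\mathrm{irr}}^{(1)}\mapsto\pi_{\mathrm{irr}}^{(2)}$ is compatible with the already-fixed assignment $\Pi_1\mapsto\Pi_2$. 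Hence $F$ respects the split.

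The main obstacle is making ``commuting with the clauses'' non-vacuous without requiring a genuine map $\Omega_{T_1}\to\Omega_{T_2}$, which generally does not exist; arithmetic formulas and rewrite terms are not comparable. I would handle this by stating explicitly that $F$ lives on the skeleton category of clause-labels and named structural maps, so that its functoriality is exactly clause preservation. Condition (c) of clause (6), commutation of $\mathrm{Imp}$ with conclusion projection up to congruence, then becomes a commuting square in each skeleton, and $F$ carries one square to the other. The one point to verify carefully is that the annotated reimport reading of clause (5), rather than plain conservativity, is what both skeletons contain. Otherwise the reimport arrow would differ in kind between the two instances and $F$ would fail to be defined on it.
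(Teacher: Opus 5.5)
Your proposal is correct and follows essentially the same route as the paper. The paper builds $F$ slot by slot, sending each clause of $\mathbb{L}_1$ to the same-named clause of $\mathbb{L}_2$, and it preserves the reversibility-asymmetry clauses because Proposition~\ref{prop:lcel-reversibility} holds schema-wide rather than from instance-specific content. Your explicit skeleton category, and your check that both instances carry the annotated reimport reading of clause~(5) rather than plain conservativity (which the paper's proof simply calls ``preserved by definition''), make precise what the paper leaves implicit. This matches the paper's later remark that the quasi-functor exists between any two such tuples and that the real weight lies in verifying the six clauses.
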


\begin{proof}
Given the schema's six clauses, construct $F$ clause by clause: $F$ sends the base system $T_1$ to $T_2$ (as abstract operational systems), the boundary $\Pi_1$ to $\Pi_2$ (as sets of unprovable-but-true statements), the license $\Sigma_1$ to $\Sigma_2$ (as external soundness annotations), the licensed extension $T_1^+$ to $T_2^+$ (as $T + \Sigma$ compositions), the reimport class $\Gamma'_1$ to $\Gamma'_2$ (as conservativity domains), and the annotation functor $\mathrm{Imp}_1$ to $\mathrm{Imp}_2$ (as structure-preserving maps from licensed-extension derivations to annotated base derivations). The reversibility-asymmetry clauses are preserved because they follow from the schema's other clauses rather than from instance-specific content, and the $\Gamma'$-conservativity clause is preserved by definition.
\end{proof}

\begin{remark}[Architectural-necessity theorem as an LCEL reversibility statement]\label{rem:arch-necessity-as-lcel}
Theorem~\ref{thm:record-emission-necessity} (architectural necessity of payload duplication) is, in LCEL language, the statement that the dependency-pair-side LCEL instance admits the projection factorization of Proposition~\ref{prop:lcel-boundary-factorization} where $\pi_{\mathrm{rev}}$ is the counter-projection on the canonical trace, $\pi_{\mathrm{irr}}$ is the wrapper-multiplicity quotient, and the step rule's new $G$-frame deposits irreversible content lying beyond the counter-projection's view. The two canonical forms of Remark~\ref{cor:two-canonical-forms} (in-term duplication versus externalized trace) are the two syntactic realizations of the same $\pi_{\mathrm{irr}}$-information retention. The Arts and Giesl license $\Sigma$ is the external license permitting the projection out of $\pi_{\mathrm{irr}}$-content for the reimport-class $\Gamma'$ = $\{$termination of $R\}$.
The observed-image version of the externalized trace records that the terminal externalized image is equivalent to an index set of size $K+1$, with terminal and free-emitter corollaries; stronger Kolmogorov and full reversible-trace claims remain outside the statement.
\end{remark}

\begin{remark}[Schema-level parallelism]\label{rem:lcel-scope-and-mechanization}
Theorem~\ref{thm:lcel-structural-identity} is a schema-level slot-level parallelism under the six clauses of Definition~\ref{def:lcel-schema}: a base system, an internal boundary, an external license, a licensed extension, a reimport class, and an annotation map. The quasi-functor is built slot by slot, so it exists between any two tuples carrying the six clauses; its mathematical weight sits in the clause-by-clause verification that a candidate tuple satisfies them. Clause (5) is where that verification bites, since the annotated form admits the boundary and reimport class to overlap while the conservativity reading forbids it, and instances differ on that very point (Appendix~\ref{app:module-map}). The detailed carrier, transport, and certification identifiers are confined to Appendix~\ref{app:module-map}. Propositions~\ref{prop:lcel-reversibility} and~\ref{prop:lcel-boundary-factorization} and Theorem~\ref{thm:lcel-structural-identity} are mechanized unconditionally; that closure dispatches every witness slot and every cross-instance bidirectional slot directly from the carrier's theorem-backed projections.
\end{remark}

\begin{theorem}[Structural identity]\label{thm:structural-identity}
The dependency-pair response to operational inexpressibility on the primitive recursion duplicator has the same six-step structural shape as G\"odel's 1931 incompleteness confession:
\begin{enumerate}[label=(\arabic*),leftmargin=1.6em]
\item fix a base system and proof language;
\item isolate a self-referential or self-duplicating obstruction;
\item observe, meta-theoretically, that the needed result lies beyond the base language;
\item ascend to a stronger framework;
\item prove the needed fact there;
\item import the result back as a meta-annotation licensing an otherwise unavailable move.
\end{enumerate}
\end{theorem}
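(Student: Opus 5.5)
The plan is to derive Theorem~\ref{thm:structural-identity} as a corollary of the schema-level Theorem~\ref{thm:lcel-structural-identity}. Concretely, I will exhibit two LCEL instances in the sense of Definition~\ref{def:lcel-schema}, one for G\"odel and one for dependency pairs, and then read the six steps off the six clauses. I will use the following correspondence. Step (1) is clause 1, the base system $T$. Step (2) is the boundary $\Pi$ together with the obstruction that generates it. Step (3) is the non-derivability half of clause 2 and the condition $T\not\vdash\Sigma$ in clause 3. Step (4) is the licensed extension $T^+=T+\Sigma$ of clause 4. Step (5) is the derivation $T^+\vdash\varphi$ for $\varphi\in\Pi$. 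Step (6) is the annotated reimport of clauses 5 and 6 via $\mathrm{Imp}$.

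\textbf{G\"odel instance.} Take $T=\mathrm{PA}$ with the arithmetical hierarchy as stratification. Take $\Pi=\{\mathrm{Con}(\mathrm{PA})\}$, which is true in $\mathbb N$ and unprovable by the second incompleteness theorem. Take $\Sigma$ to be local or uniform $\Pi_1$-reflection, $\Gamma'=\Pi_1$, and let $\mathrm{Imp}$ tag each reflection invocation. The annotated form of clause 5 handles the overlap $\mathrm{Con}(\mathrm{PA})\in\Pi\cap\Gamma'$, as the definition already notes.

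\textbf{Dependency-pair instance.} Take $T$ to be the direct-aggregation language $S_{\mathrm{DA}}$ over the duplicator, with rewrite step $\to_T$. The boundary $\Pi$ is the termination statement, which is true by Proposition~\ref{prop:trace-law} and Corollary~\ref{cor:counter-exact-sufficient}. Non-derivability in the base language follows from Theorem~\ref{thm:canonical-instance} and Corollary~\ref{cor:universal-oi}, with Theorem~\ref{thm:dp-not-signature-derivable} placing the license outside the signature. The license $\Sigma$ is Arts--Giesl soundness. $T^+$ proves termination by strict subterm descent on the projected counter (Proposition~\ref{prop:S-retained}). Finally, $\Gamma'$ is the set consisting of the termination statement, and $\mathrm{Imp}$ records the projection site as annotation.

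For step (2) I also have to match ``self-referential'' with ``self-duplicating''. I will cite Theorem~\ref{thm:record-emission-necessity} for the duplication of $y$. Proposition~\ref{prop:whole-term-indistinguishability} and Theorem~\ref{thm:trs-iso-mod-lq} then show that, under direct measure, the duplicator is indistinguishable from a genuine circular reference.

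With both instances in hand, Theorem~\ref{thm:lcel-structural-identity} supplies the clause-preserving quasi-functor. Steps (1)--(6) then correspond on the two sides.

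The main obstacle is clause 3 on the dependency-pair side. Arts--Giesl soundness is not literally a reflection scheme $\mathrm{Prov}_T(\ulcorner\varphi\urcorner)\to\varphi$. The definition allows instead ``an external soundness theorem about $T$'', so I will use that branch. What must then be shown is that $T\not\vdash\Sigma$. I would argue this from Theorem~\ref{thm:dp-not-signature-derivable}: no constant-third-argument evaluator can even express the counter distinction the license relies on, so the license cannot be derived within the direct class.

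A secondary point is that the resulting identity is a shape-level parallelism, not an equivalence of strength. As Remark~\ref{rem:lcel-scope-and-mechanization} stresses, the substantive content lies in checking the clauses for each instance, so the proof must carry out those checks explicitly.
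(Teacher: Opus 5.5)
This is essentially the paper's own proof: you instantiate Theorem~\ref{thm:lcel-structural-identity} on a G\"odel LCEL instance and the dependency-pair instance $\langle S_{\mathrm{DA}},\Pi_{\mathrm{dup}},\mathrm{AG},S_{\mathrm{DA}}^+,\Gamma_{\mathrm{term}},\mathrm{Imp}_{\mathrm{DP}}\rangle$, then read the six steps off the six clauses. Your use of $\mathrm{Con}(\mathrm{PA})$ with the second incompleteness theorem, where the paper uses the G\"odel sentence with the first, is an inessential variant. Two small adjustments: $T\not\vdash\Sigma$ holds by construction in Definition~\ref{def:lcel-schema}, so it is not a real obstacle, and Theorem~\ref{thm:dp-not-signature-derivable} (which is about which evaluators separate two terms) is not needed for it; and truth of the termination statement should come from the strong-normalization results cited in Step~4 of Theorem~\ref{thm:ag-rca}, not from Proposition~\ref{prop:trace-law}, which covers only the canonical start terms $F(a,b,S^k(0))$.
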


\begin{proof}
Instantiate Theorem~\ref{thm:lcel-structural-identity} on the pair $(\mathbb{L}_{\text{G\"odel}}, \mathbb{L}_{\mathrm{DP}})$ of LCEL instances, where $\mathrm{RFN}$ denotes uniform reflection and $\mathrm{AG}$ denotes the Arts and Giesl soundness license. The two instances are $\mathbb{L}_{\text{G\"odel}} = \langle \mathrm{PA}, \Pi_{\mathrm{PA}}, \mathrm{RFN}(\mathrm{PA}), \mathrm{PA} + \mathrm{RFN}(\mathrm{PA}), \Pi_1, \mathrm{Imp}_{\mathrm{PA}}\rangle$ and $\mathbb{L}_{\mathrm{DP}} = \langle S_{\mathrm{DA}}, \Pi_{\mathrm{dup}}, \mathrm{AG}, S_{\mathrm{DA}}^+, \Gamma_{\mathrm{term}}, \mathrm{Imp}_{\mathrm{DP}}\rangle$. Both tuples satisfy the six clauses of Definition~\ref{def:lcel-schema}: the G\"odel tuple because uniform reflection over $\mathrm{PA}$ realizes the reflection-family ascent pattern by Kreisel and L\'evy~\cite{kreisellevy1968reflection} and Beklemishev~\cite{beklemishev1995iteratedreflection,beklemishev2018reflection} (where the $\Pi_1$-equivalence $(T^\alpha)_\beta \equiv_{\Pi_1^0} T_{\omega^\alpha \cdot (1+\beta)}$ between iterated local reflection and iterated consistency is the canonical calibration on the reflection side), and the DP tuple because the Arts and Giesl soundness theorem~\cite{artsgiesl2000} licenses a soundness-with-annotation pattern on $\Pi^0_2$-termination obligations (Proposition~\ref{prop:ag-pi02}). The schema-level quasi-functor $F\colon \mathbb{L}_{\text{G\"odel}} \to \mathbb{L}_{\mathrm{DP}}$ constructed in the proof of Theorem~\ref{thm:lcel-structural-identity} sends each clause to its DP-side counterpart. Reading off the six clauses via $F$ yields the six enumerated steps of the present theorem:
\emph{Step~(1)}: the base system is PA in the G\"odel case and direct aggregation in the DP case.
\emph{Step~(2)}: the obstruction is the self-referential sentence in the G\"odel case and the self-duplicating recursor (the step-argument dimension) in the DP case; in both cases, the obstruction is characterized by a feature of the object that the base language represents syntactically while leaving it outside every verdict-bearing derivation (the self-referential coding on the G\"odel side, the duplicated step-argument dimension on the DP side).
\emph{Step~(3)}: the observation of inability is the first incompleteness theorem (G\"odel) and Theorem~\ref{thm:canonical-instance} (DP).
\emph{Step~(4)}: the meta-system is Zermelo-Fraenkel set theory with Choice (ZFC) or equivalent (G\"odel) and the dependency-pair framework of Arts and Giesl~\cite{artsgiesl2000} (DP).
\emph{Step~(5)}: the meta-level resolution is the G\"odel sentence's truth in the standard model (G\"odel) and the soundness theorem for the projected pair problem~\cite{artsgiesl2000} (DP).
\emph{Step~(6)}: the licensed import is the acceptance of $G_{\mathrm{PA}}$ as externally true (G\"odel) and the declaration that the wrapper context is inert under the Arts-Giesl license (DP).
Each clause-to-clause check discharges one hypothesis of Theorem~\ref{thm:lcel-structural-identity}; their conjunction is the schema-level structural identity, and the quasi-functor $F$ specializes that identity to the six-step enumeration above. The structural roles are therefore the same across both instantiations; only the instance-specific machinery differs.
\end{proof}

\subsection{Proof-theoretic register: reflection rather than diagonalization}\label{subsec:reflection-register}

Theorem~\ref{thm:structural-identity} establishes a six-step structural isomorphism between the G\"odel 1931 confession and the dependency-pair confession. The word ``G\"odelian'' is ambiguous: it could refer to the Lawvere and Yanofsky diagonal family (Cantor, Russell, G\"odel first theorem via Tarski, Tarski, Turing, L\"ob, Rice), or to the Feferman and Beklemishev reflection family (uniform reflection, conservativity extensions, iterated reflection principles). The dependency-pair confession belongs to the second rather than the first.

\begin{proposition}[DP confession sits outside the Lawvere and Yanofsky diagonal family]\label{prop:not-diagonal}
The dependency-pair confession on the primitive recursion duplicator falls outside the Lawvere and Yanofsky fixed-point schema~\cite{lawvere1969,yanofsky2003}.
\end{proposition}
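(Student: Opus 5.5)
To prove Proposition~\ref{prop:not-diagonal}, I would first make the Lawvere and Yanofsky schema precise enough to test membership. In a cartesian closed category (or Yanofsky's set-level version), an instance is a map $f\colon A\times A\to Y$ together with an endomap $\alpha\colon Y\to Y$. The conclusion is that if $f$ is weakly point-surjective, then $\alpha$ has a fixed point. Contrapositively, a fixed-point-free $\alpha$ rules out point-surjectivity, and that is the diagonal obstruction. I would say a confession ``falls inside'' the schema when its key inability step (Step~(3) of Theorem~\ref{thm:structural-identity}) is obtained as such an instance. That means there must be a self-application $f(a,a)$ and a fixed-point-free twist $\alpha$ whose composite $\alpha\circ f\circ\Delta$ fails to be represented as some row $f(-,a_0)$.

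The core argument is to show that the inability step on the DP side, Theorem~\ref{thm:canonical-instance}, has no such factorization. Its proof runs entirely through the companion barrier package and the payload-growth argument (Proposition~\ref{prop:trace-law}, Proposition~\ref{prop:confession-dominance}, and the additive-failure proposition). The obstruction there is quantitative: a fixed constructor weight $\mu(S)$ cannot dominate $\mu(y)$ for unboundedly large substitutions. This is an unbounded-pump argument, not a diagonal one. I would make this precise by exhibiting the structure. The set of candidate measures $\mu$ is indexed, but the refuting input for each $\mu$ is a substitution whose size is chosen from $\mu$'s coefficients. It is not obtained by evaluating $\mu$ on its own code. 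No endomap on a verdict object with the fixed-point-freeness property is involved. Indeed, the verdict object here is $\{\text{terminates}\}$, and it is constant, because the duplicator does terminate. So any candidate $\alpha$ would have to act on a singleton, and it would then have a fixed point, contradicting the fixed-point-free requirement. This singleton-verdict observation is the sharpest single step and I would put it first.

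Second, I would check the license step. The Arts--Giesl soundness theorem is a $\Pi^0_2$ principle (Proposition~\ref{prop:ag-pi02}, Theorem~\ref{thm:ag-rca}) whose subterm-criterion route uses an $\omega$-order-type descent on the projected counter (Proposition~\ref{prop:S-retained}). Its proof is an infinite-chain argument with minimal non-terminating terms, not a diagonalization. So the ascent adds a soundness/reflection-type principle rather than a diagonal fixed point, consistent with the LCEL classification of Definition~\ref{def:lcel-schema}, clause~(3).

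The main obstacle is that ``falls outside the schema'' is a negative statement about all possible encodings. A sufficiently clever encoding might dress any argument as a trivial diagonal instance, for example by taking $Y$ to be a singleton and $f$ arbitrary. I would handle this by requiring non-degeneracy: $\alpha$ must be fixed-point free, and that forces $|Y|\ge 2$. Then I would argue that the only verdict-bearing object in the DP confession is the termination verdict. By Definition~\ref{def:op-incomplete}, derivations either ignore $\pi_y$ or leave that verdict unconstrained, and by Theorem~\ref{thm:dp-not-signature-derivable} the relevant distinction is a constant-third-argument collapse, not a self-application. So no non-degenerate instance exists. I expect this step to be where any rigor must be stipulated definitionally rather than derived.
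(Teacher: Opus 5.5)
Your proposal has a genuine gap: the singleton-verdict step, which you put first and then use to dispose of arbitrary encodings, does not work.

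In a Lawvere--Yanofsky instance the value object $Y$ is the codomain of $f\colon A\times A\to Y$. It is the type of values that $f(a,a')$ ranges over as $a,a'$ vary, not the set of verdicts actually realized by the one target instance. That the duplicator terminates fixes a single point; it does not collapse $Y$. A candidate encoding is free to use a two-element value object, for example $\{\mathrm{true},\mathrm{false}\}$ for an orientation predicate, with negation as the fixed-point-free twist. For the same reason, even the type of termination verdicts has two elements although this instance's verdict is constant. So stipulating that $Y$ must be ``the only verdict-bearing object'' does not force $|Y|=1$.

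The argument also proves too much. $G_{\mathrm{PA}}$ is true in the standard model, so its verdict is just as constant. Yet the paper (Remark~\ref{rem:godel-diagonal-reflection-components}) counts the construction of $G_{\mathrm{PA}}$ as a genuine Lawvere instance, so your test would wrongly exclude it.

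Your other main observation has two limits. Showing that the existing pump proof of Theorem~\ref{thm:canonical-instance} is not diagonal says nothing about other possible proofs of that step. And that theorem concerns direct aggregation, not the confession itself.

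The paper avoids both problems by testing the dependency-pair construction directly for the three ingredients the schema requires. There is no code object: the marked symbols $F^{\sharp}$ sit at the same syntactic level as $F$ and are not codes of $F$. No evaluation map is invoked. The output is a positive certificate, namely the absence of infinite minimal chains in a finite dependency-pair graph, rather than a contradiction derived from a fixed-point-free endomap. Your paragraph on the license step already contains this last point. If you drop the singleton argument and make the code-object and evaluation-map checks explicit, you reach the proposition at the level of rigor the paper itself claims, which is an ingredient-absence check and not a theorem about all encodings.
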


\begin{proof}
The Lawvere schema requires: (i) an internal object of codes with a surjective representability map; (ii) an evaluation map $\mathrm{ev}\colon B^A\times A\to B$; (iii) a fixed-point-free endomap on the code object, whose assumed existence yields a contradiction. The dependency-pair construction supplies none of these three ingredients. Marked symbols $F^{\sharp}$ live at the same syntactic level as $F$ rather than as codes of $F$, and the construction invokes no evaluation map. It produces a positive termination certificate by exhibiting the absence of an infinite minimal chain in a finite dependency-pair graph, rather than by diagonalization to a contradiction.
\end{proof}

\begin{proposition}[DP confession is a reflection-family ascent under external soundness license]\label{prop:reflection-family}
The dependency-pair confession instantiates the uniform-reflection / externally licensed ascent pattern of Feferman~\cite{feferman1962transfinite}, Kreisel and L\'evy~\cite{kreisellevy1968reflection}, and Beklemishev~\cite{beklemishev2018reflection}: the base proof system $S$ fails to derive $R(a)$; an extension-style package $S'\supseteq S$ carrying an external schematic annotation $\mathrm{AG}(\pi)$ derives $R(a)$; and the resulting ascent is tracked at the artifact-facing $\Pi^0_2/\mathrm{I}\Sigma_1$ register of Proposition~\ref{prop:ag-pi02}. In the duplicator case, $S$ is direct aggregation, $\mathrm{AG}(\pi)$ is the Arts and Giesl soundness theorem, and $\pi$ is the step-argument dimension.
\end{proposition}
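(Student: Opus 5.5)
The plan is to verify the three components of the reflection-family pattern one at a time, each taken from a result already stated, and then check that together they match the Feferman/Kreisel--L\'evy/Beklemishev template. That template has four ingredients: a base theory $S$, a target $R(a)$ not derivable in $S$, an extension $S'=S+\Sigma$ where $\Sigma$ is a schematic soundness/reflection principle about derivations over a stratified formula class, and derivability of $R(a)$ in $S'$ with the ascent calibrated at a fixed arithmetical level.

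\emph{Non-derivability.} Take $S=S_{\mathrm{DA}}$, $a=a_{\mathrm{dup}}$ and $R=\mathrm{Term}$. Theorem~\ref{thm:canonical-instance} and Corollary~\ref{cor:universal-oi} give that no derivable direct-aggregation statement both depends on $\pi_y$ and constrains the termination verdict. Proposition~\ref{prop:boundary-kappa} sharpens this to $\kappa^*>0$, so $\mathcal W_0$ has no adequate witness and $S\not\vdash R(a)$.

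\emph{Extension and derivability.} Let $S'=S_{\mathrm{DA}}^+$ be $S$ together with the schematic annotation $\mathrm{AG}(\pi_y)$, i.e.\ the Arts--Giesl soundness theorem instantiated at the step-argument projection. I will show $S'\vdash R(a)$ by the route through the extracted dependency pair $F^\sharp(x,y,S(n))\to F^\sharp(x,y,n)$. Proposition~\ref{prop:S-retained}(2) gives strict subterm descent on the counter. Corollary~\ref{cor:counter-exact-sufficient} says the counter alone predicts the residual work. $\mathrm{AG}$ then transfers chain-freeness of the residual problem back to termination of the original system. I will also check that the annotation actually does the work, using Theorem~\ref{thm:dp-not-signature-derivable}: the distinguishing function of the projection is not available to any constant-third-argument evaluator. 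This puts $S'$ strictly above $S$, as the reflection pattern requires, rather than letting it be a conservative relabelling.

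\emph{Register.} For the calibration I will invoke Proposition~\ref{prop:ag-pi02} and Theorem~\ref{thm:ag-rca}. They identify $\mathrm{AG}$ in its bounded presentation as a $\Pi^0_2$ principle whose subterm-criterion route is provable in $\mathrm{I}\Sigma_1$. This matches the reflection-family practice of indexing ascents by the complexity class $\Gamma_n$ of the reflected statements. It also gives the DP tuple the clause-(3) form of Definition~\ref{def:lcel-schema}, with reimport class $\Gamma'=\{$termination of $R\}$ annotated as in clause~(5), exactly as in the proof of Theorem~\ref{thm:structural-identity}.

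\emph{Main obstacle.} The delicate step is justifying that $\mathrm{AG}$ counts as a \emph{reflection-type} principle rather than an arbitrary added axiom. Remark~\ref{rem:projection-vs-axiom} already warns that reading it as a plain axiom would clash with operational inexpressibility. My first attempt will be to present $\mathrm{AG}$ in the form ``for every $\Pi^0_2$-coded residual termination certificate $d$, $\mathrm{Prov}(d)\to\mathrm{Term}(R)$.'' This is the soundness shape in clause~(3) of the LCEL schema and is uniform in the system, so it is a schematic soundness statement about certificates in the sense of uniform reflection. I will then combine this with Proposition~\ref{prop:not-diagonal}, which excludes the diagonal reading, to conclude the classification.
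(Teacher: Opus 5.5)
Your first three steps are the paper's proof. You get non-derivability from Theorem~\ref{thm:canonical-instance} and Corollary~\ref{cor:universal-oi}, derivability in $S'$ once the Arts--Giesl license is admitted as a schematic annotation, and the register from Proposition~\ref{prop:ag-pi02}. The paper stops there and says explicitly that the content is the ascent pattern (blocked base layer, external license, transformed resolution, licensed reimport), not a further identification or conservativity theorem. So your ``main obstacle'' is not an obligation of the statement.

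You should drop that step rather than rely on it. Read literally, ``$\mathrm{Prov}(d)\to\mathrm{Term}(R)$'' is not uniform reflection in the Kreisel--L\'evy sense. The Arts--Giesl theorem is a transfer from chain-freeness of $\mathrm{DP}(R)$ to $\mathrm{SN}(R)$, and the route used here is formalizable in $\mathrm{RCA}_0$ (Theorem~\ref{thm:ag-rca}). The $\mathrm{Prov}$ in your formula would be provability in the residual calculus, not in $S_{\mathrm{DA}}$. This is why the paper keeps the claim at the level of pattern.

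Likewise, strictness of $S'$ over $S$ is already immediate from $S\not\vdash R(a)$ and $S'\vdash R(a)$. Theorem~\ref{thm:dp-not-signature-derivable} concerns constant-third-argument $\Sigma$-evaluators rather than derivability in $S_{\mathrm{DA}}$, so it is not needed.
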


\begin{proof}
Direct aggregation $S$ fails to derive termination of the duplicator from any derivation that incorporates the step-argument dimension (Theorem~\ref{thm:canonical-instance}, Corollary~\ref{cor:universal-oi}). The Arts and Giesl soundness theorem~\cite{artsgiesl2000} is a metatheorem, external to $S$, certifying that termination of the transformed dependency-pair problem is equivalent to termination of the original. The extension-style package $S'$ obtained by admitting this soundness license as a schematic annotation derives the termination verdict. Its proof-theoretic placement at $\Pi^0_2$ complexity and $\mathrm{I}\Sigma_1$ formalizability is recorded in Proposition~\ref{prop:ag-pi02}. The point of the proposition is the reflection-family ascent pattern, namely blocked base layer, external license, transformed resolution, and licensed reimport, rather than a separately established conservativity theorem for the direct-aggregation package itself.
\end{proof}

\begin{remark}[G\"odel 1931 carries both components; DP carries the reflection component alone]\label{rem:godel-diagonal-reflection-components}
G\"odel's 1931 theorem~\cite{godel1931} has two structural components that the LCEL schema of Section~\ref{subsec:lcel-schema} disentangles, and isolating them fixes the scope of Theorem~\ref{thm:structural-identity}. The \emph{construction} of the G\"odel sentence $G_{\mathrm{PA}}$ via the diagonal lemma is a Lawvere and Yanofsky instance: the code-carrier is formulas-with-one-free-variable, the value object is truth values of closed sentences, the internal evaluation map is $H(\ulcorner\varphi\urcorner)$, and the fixed-point-free endomap on the value object is $\alpha = \neg\mathrm{Prov}_T(\ulcorner\cdot\urcorner)$. On that reading the \emph{existence} of an undecidable sentence belongs to the diagonal family: the construction produces an internal self-reference, and the contradiction under $\alpha$'s fixed-point-freeness yields the obstruction. The \emph{content} of first incompleteness, namely that the unprovable sentence is also true in the standard model, comes from the soundness step $\mathrm{Prov}_T(\ulcorner\sigma\urcorner) \to \sigma$ rather than from the diagonal construction: that step is an external license rather than an internal self-referential move, and the canonical ascent through $\mathrm{Con}(T)$, $\mathrm{RFN}(T)$, $\mathrm{RFN}_{\Sigma_n}(T)$, and iterated reflection progressions~\cite{feferman1962transfinite,kreisellevy1968reflection,beklemishev1995iteratedreflection,beklemishev2005,beklemishev2018reflection} is a canonical reflection-family phenomenon. Theorem~\ref{thm:structural-identity} compares G\"odel's \emph{reflective ascent} with the dependency-pair confession rather than G\"odel's diagonal construction with the confession: the dependency-pair side carries the reflection component alone. Proposition~\ref{prop:not-diagonal} already records the absence of the diagonal component on the dependency-pair side; the present remark records that G\"odel 1931 itself has both components, and that the structural identity operates at the reflective component only.
\end{remark}

\begin{remark}[Wrapper-accumulation irreversibility versus fixed-point-freeness of an endomap]\label{rem:wrapper-vs-endomap}
The distinction of Remark~\ref{rem:godel-diagonal-reflection-components} also closes a specific categorical objection to the structural-identity claim. The step rule $F(x,y,S(n)) \to G(y, F(x,y,n))$ has the property that the recursive step rule leaves an accumulated $G$-frame in place; this is the wrapper-accumulation irreversibility recorded by the $\pi_{\mathrm{irr}}$-sensitivity of Proposition~\ref{prop:lcel-boundary-factorization} and by the reversibility asymmetry of Proposition~\ref{prop:lcel-reversibility}(2). That is a statement about the non-invertibility of a rewrite relation modulo a projection, motivated by the reversible-rewriting lineage of Axelsen and Gl\"uck~\cite{axelsenGlueck2011} and Nishida, Palacios, and Vidal~\cite{nishidaPalaciosVidal2018}. It stands apart from the fixed-point-freeness of an endomap $B \to B$ on a value object, which is the structural role that $\alpha = \neg\mathrm{Prov}_T(\ulcorner\cdot\urcorner)$ plays in the Lawvere and Yanofsky schema~\cite{lawvere1969,yanofsky2003}. Conflating the two is a categorical error: an irreversible accumulation in a rewrite relation and a fixed-point-free endomap on a code object are structurally distinct operations acting on distinct categorical data, and only one of them figures in the Lawvere and Yanofsky argument. The LCEL substrate separates them by placing the $G$-wrapper's accumulated content inside $\pi_{\mathrm{irr}}$, a projection-theoretic datum on the rewrite relation, rather than inside any self-applicative internal-evaluation structure. Theorem~\ref{thm:structural-identity} therefore leaves a Lawvere-style fixed-point argument on the dependency-pair side outside both its hypotheses and its conclusion; the reflection-family placement of Proposition~\ref{prop:reflection-family} is consistent at this point.
\end{remark}

Franz\'en's discussion of transfinite progressions and Smory\'nski's treatment of self-reference provide nearby background for the two strands separated here~\cite{franzen2004transfinite,smorynski1985selfreference}. The diagonal/reflection taxonomy is a label for that distinction. Ambiguity at this point would overstate the diagonal content: placing DP in the diagonal family overclaims structural content beyond what the construction carries, while placing it in the reflection family captures the ``internal inadequacy plus external license plus licensed reimport'' shape that Theorem~\ref{thm:structural-identity} identifies.

\begin{remark}[Base-language derivation budget leaves the orientation boundary intact]
\label{rem:no-scaling-bypass}
The proof-theoretic placement of the dependency-pair confession in the reflection family (Proposition~\ref{prop:reflection-family}) and outside the Lawvere and Yanofsky diagonal family (Proposition~\ref{prop:not-diagonal}) carries a practical consequence that follows from the placement alone. Crossing the orientation boundary at a step-duplicating instance requires the licensed-reflection ascent: an external soundness sentence $\Sigma$ transports the verdict on the residual transformed-call problem back to the original problem after the unincorporable dimension has been declared inert. The required move is structural rather than computational. Extending the derivation budget within the base direct-aggregation language leaves the reflection step out of reach, because Theorem~\ref{thm:canonical-instance} is a derivability claim: \emph{every} statement of the base language either ignores the step-argument dimension or leaves the termination verdict unconstrained. Increasing the size or the number of derivations searched within $S_{\mathrm{DA}}$ extends the derivation budget while the base language stays fixed, still lacking $\Sigma$. The two operations are exclusive by Proposition~\ref{prop:construction-confession}: enrichment of the proof data (construction) and projection of a dimension out of the proof obligation (confession) are structurally distinct, and only the confession route discharges the operational-inexpressibility diagnosis. Hence a proof search procedure lacking a mechanism for invoking $\Sigma$ stays short of an adequate boundary-admissible witness on a step-duplicating instance, whatever base-language derivation budget it commits. This is the proof-theoretic content of the observation that the orientation boundary is crossed along the metatheoretic-license axis, an axis external to the base language, rather than along the axis of base-language search budget or derivation-trace length.
\end{remark}

\begin{remark}[Diagonal component versus reflection component: terminology and scope]
\label{rem:reflection-component-terminology}
By ``diagonal component'' we mean the categorical fixed-point construction of Lawvere and Yanofsky~\cite{lawvere1969,yanofsky2003}: a code object, a representability map, an evaluation map, and a fixed-point-free endomap whose assumed existence yields a contradiction. By ``reflection component'' we mean the Feferman\textendash{}Beklemishev reflection-family ascent~\cite{feferman1962transfinite,kreisellevy1968reflection,beklemishev2018reflection}: a base system failing to derive a target, an external soundness sentence licensing a stronger framework, and a licensed reimport of the target. G\"odel's 1931 theorem has both components, as Remark~\ref{rem:godel-diagonal-reflection-components} records; the dependency-pair confession on the step-duplicating recursor carries the reflection component alone, as Proposition~\ref{prop:not-diagonal} establishes. The terminology is limited to this proof-theoretic distinction: ``reflection component'' names the reflection-family ascent under external license, while ``diagonal component'' names the self-applicative diagonal construction. Both terms stay clear of phenomenological claims about cognition, awareness, or subjective experience. The practical reading of these distinctions for engineered reasoning systems belongs to the companion benchmark and architecture manuscripts and lies outside the formal scope here.
\end{remark}

\subsection{Metatheoretic strength of the Arts and Giesl license}\label{subsec:ag-strength}

A natural question arises from the reflection placement: at what proof-theoretic strength does the external license $\mathrm{AG}$ sit? Classical G\"odelian ascent requires uniform reflection over $\mathrm{PA}$, proof-theoretically bounded at ordinal $\varepsilon_0$. The Arts and Giesl license, in the confession route this paper analyzes, is strictly weaker: the route is an instance of size-change termination, and its soundness is carried by base systems whose arithmetical strength sits at the $\mathrm{I}\Sigma_1$ register, well below $\varepsilon_0$-reflection.

\begin{proposition}[Arts and Giesl soundness is a $\Pi^0_2$ principle]\label{prop:ag-pi02}
Let $R$ be a finite TRS. Strong normalization of $R$ admits the bounded presentation
\[
\mathrm{SN}_b(R)\;:\;\forall t\,\exists n\;D(t,n),
\]
where $D(t,n)$ is the decidable predicate ``every reduction sequence issuing from $t$ has length below $n$.'' Then:
\begin{enumerate}[nosep,leftmargin=1.6em]
\item $\mathrm{SN}_b(R)$ is a $\Pi^0_2$ sentence, and $\mathrm{SN}(R)\leftrightarrow \mathrm{SN}_b(R)$ is provable in $\mathrm{WKL}_0$;
\item modulo the Arts and Giesl equivalence~\cite{artsgiesl2000}, the chain-freeness side carries the same $\Pi^0_2$ presentation, so the verdict the license imports is a $\Pi^0_2$ sentence;
\item the soundness argument for the subterm-criterion confession route is formalizable in $\mathrm{RCA}_0$ (Theorem~\ref{thm:ag-rca}); since $\mathrm{RCA}_0$ is conservative over $\mathrm{I}\Sigma_1$ for arithmetical sentences, the route's arithmetical content is provable in $\mathrm{I}\Sigma_1$, and each fixed-$R$ $\Pi^0_2$ instance is provable in primitive recursive arithmetic $\mathrm{PRA}$ by $\Pi^0_2$-conservativity~\cite{simpson2009sosoa}.
\end{enumerate}
\end{proposition}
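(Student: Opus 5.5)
The argument runs clause by clause, and the substantive step is the equivalence $\mathrm{SN}(R)\leftrightarrow\mathrm{SN}_b(R)$.

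\emph{Clause (1), the $\Pi^0_2$ shape.} Fix a G\"odel coding of terms over the finite signature of $R$. Since $R$ is finite, each term has finitely many one-step reducts, and they can be listed primitive recursively. Consequently the reduction tree from $t$, truncated at depth $n$, is a finite object that is primitive recursive in $(t,n)$. The predicate $D(t,n)$ says this truncated tree has no branch of length $n$, so it is decidable, in fact $\Delta^0_0$ after unfolding the tree code. It follows that $\forall t\,\exists n\,D(t,n)$ is $\Pi^0_2$.

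\emph{Clause (1), the equivalence over $\mathrm{WKL}_0$.} Here $\mathrm{SN}(R)$ is read as the $\Pi^1_1$ statement that no infinite reduction sequence exists. The direction $\mathrm{SN}_b\to\mathrm{SN}$ holds already in $\mathrm{RCA}_0$: an infinite sequence from $t$ has an initial segment of length $n$ for every $n$, which contradicts $D(t,n)$. The converse is K\"onig's lemma for the finitely branching reduction tree. Because branching is bounded by a primitive recursive function of the node, the tree can be coded as a bounded tree in $2^{<\mathbb N}$. Weak K\"onig's lemma then yields an infinite path whenever the tree is infinite, and that path is an infinite reduction sequence. This contrapositive gives $\mathrm{SN}\to\mathrm{SN}_b$. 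I expect this to be the main technical obstacle: the coding has to be set up so that the bounded tree really is a subtree of $2^{<\mathbb N}$, or of a primitive recursively bounded $\mathbb N^{<\mathbb N}$, so that $\mathrm{WKL}_0$ suffices rather than full ACA$_0$. Finiteness of $R$ together with the primitive recursive reduct enumeration should provide the bound.

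\emph{Clause (2).} I apply the same construction to dependency-pair chains. Because the DP set and $R$ are both finite, the tree of minimal chains starting from a given marked term is finitely branching with a primitive recursive bound. The same K\"onig argument therefore gives a bounded presentation $\forall t\,\exists n\,D^\sharp(t,n)$ of chain-freeness. Under the Arts--Giesl equivalence, the verdict imported on the original system is $\mathrm{SN}_b(R)$, which is a $\Pi^0_2$ sentence by clause (1).

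\emph{Clause (3).} Theorem~\ref{thm:ag-rca} supplies the $\mathrm{RCA}_0$ formalization of the subterm-criterion route; its measure is a descent of order type $\omega$ along the projected counter. From this I draw two consequences. First, $\mathrm{RCA}_0$ is $\Pi^1_1$-conservative, and hence arithmetically conservative, over $\mathrm{I}\Sigma_1$, so the arithmetical content of the route is provable in $\mathrm{I}\Sigma_1$. Second, for a fixed $R$ the conclusion is a single $\Pi^0_2$ sentence, and Parsons' theorem, that $\mathrm{I}\Sigma_1$ is $\Pi^0_2$-conservative over $\mathrm{PRA}$, as cited via~\cite{simpson2009sosoa}, transfers it to $\mathrm{PRA}$. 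Both steps are citations once clause (1) has fixed the $\Pi^0_2$ shape.
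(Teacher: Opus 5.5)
Your clauses (1) and (3) follow the paper's proof. Finite branching with a computable bound makes $D$ decidable and $\mathrm{SN}_b(R)$ a $\Pi^0_2$ sentence. The direction $\mathrm{SN}_b\to\mathrm{SN}$ is the $\mathrm{RCA}_0$ prefix argument, and $\mathrm{SN}\to\mathrm{SN}_b$ is bounded K\"onig's lemma in $\mathrm{WKL}_0$. Clause (3) is Theorem~\ref{thm:ag-rca} followed by the two conservation results. Your parenthetical ``in fact $\Delta^0_0$'' claims more than you need, and more than is immediate in the bounded-quantifier sense, since a single reduction sequence of length $n$ already has a code exponential in $n$. Primitive recursive, hence $\Delta^0_1$, suffices for the $\Pi^0_2$ classification.

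The step that fails as written is the separate chain tree in clause (2).

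\emph{The tree is not finitely branching without minimality.} Consecutive chain elements are linked by $t_i\sigma\to_R^* s_{i+1}\sigma$, so the successors of a node range over $\to_R^*$-reducts, and this set can be infinite. For $R=\{a\to f(a),\ g(x)\to g(x)\}$ the node $g^\sharp(a)$ has the successors $g^\sharp(f^m(a))$ for every $m$.

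\emph{The tree is not computable with minimality.} The successor sets become finite, but membership requires $t_i\sigma$ to be $R$-terminating. That is undecidable for a general finite $R$, and the branching in general has no primitive recursive bound.

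Either way you do not have a computable, computably bounded tree. So the $\mathrm{WKL}_0$ argument does not transfer, and $D^\sharp(t,n)$ is not decidable in general.

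\emph{The natural repair does not stand on its own.} You could replace chain links by single $R$-steps below the root plus root pair steps. That restores a computable, finitely branching tree, but then $n$ also bounds $R$-steps. The resulting sentence expresses termination of the combined relation, and its equivalence with chain-freeness again runs through the Arts--Giesl theorem.

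The paper avoids all of this. It reads clause (2) through the Arts--Giesl identification of chain-freeness with $\mathrm{SN}(R)$, so the imported verdict is the sentence $\mathrm{SN}_b(R)$ of clause (1). No bounded presentation of the substitution-indexed chain space is needed. Your last sentence of clause (2) already makes exactly that move; delete the chain-tree construction and your proof coincides with the paper's.
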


\begin{proof}
(1) For a finite TRS over a finite signature, each term has finitely many one-step reducts: finitely many positions, and per position and rule at most one reduct, each computable from $t$. Hence $D(t,n)$ is decided by exhaustive search of the reduction tree to depth $n$, and $\mathrm{SN}_b(R)$ has the displayed $\forall\exists$-decidable shape, which is $\Pi^0_2$. For the equivalence: an infinite reduction from $t$ gives arbitrarily long finite prefixes, refuting the bound, and this direction is provable in $\mathrm{RCA}_0$; conversely, if some $t$ has reductions of every length, its reduction tree is an infinite, finitely branching tree with a computable branching bound, and bounded K\"onig's lemma, available in $\mathrm{WKL}_0$~\cite{simpson2009sosoa}, yields an infinite path, which is an infinite reduction. (2) The Arts and Giesl theorem identifies chain-freeness with $\mathrm{SN}(R)$, so under the equivalence of (1) the imported verdict is the $\Pi^0_2$ sentence $\mathrm{SN}_b(R)$, and a separate bounded presentation of the substitution-indexed chain space becomes unnecessary. (3) is Theorem~\ref{thm:ag-rca} combined with the two conservativity facts: the first-order part of $\mathrm{RCA}_0$ is $\mathrm{I}\Sigma_1$, and $\mathrm{I}\Sigma_1$ is $\Pi^0_2$-conservative over $\mathrm{PRA}$~\cite{simpson2009sosoa}.
\end{proof}

\begin{theorem}[Reverse-mathematical calibration of the Arts and Giesl route]\label{thm:ag-rca}
The Arts and Giesl soundness argument, as instantiated by the subterm-criterion confession route used on the step-duplicating recursor, is formalizable in $\mathrm{RCA}_0$, with well-foundedness of the projected counter descent as the termination witness. The instance measure has order type $\omega$: the simple projection selects the counter argument and ranges over $\mathbb N$, so the witness the argument consumes is $\mathrm{WO}(\omega)$. The artifact records this calibration against the wider $\mathrm{RCA}_0+\mathrm{WO}(\omega^3)$ product descriptor, which stays a descriptor rather than a strength extension: $\mathrm{RCA}_0\vdash\mathrm{WO}(\omega^3)$ for the fixed finite cube, and the proof-theoretic ordinal of $\mathrm{RCA}_0$ is $\omega^\omega$~\cite{hirst1994ordinal}. Both sit well below the $\varepsilon_0$-scale ordinal analysis under which the lexicographic path order family is calibrated~\cite{weiermann1995lpo,buchholz1995lpo}.
\end{theorem}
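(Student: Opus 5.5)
The plan is to formalize the subterm-criterion route directly in $\mathrm{RCA}_0$ on the extracted singleton dependency pair, and then read off the ordinal bookkeeping. First I will fix the object: the extracted pair is $F^\sharp(x,y,S(n))\to F^\sharp(x,y,n)$, and the simple projection $\pi(F^\sharp)=3$ sends it to the strict subterm step $S(n)\rhd n$. Terms over the finite signature are coded as natural numbers, and the height function $h(S^m(t))=m$ on the counter slot is primitive recursive, hence available in $\mathrm{RCA}_0$ via $\Delta^0_1$ comprehension. The termination witness is then the map $\mu(F^\sharp(s,u,v)):=h(v)\in\mathbb N$. The key observation is that each pair step decreases $\mu$ by exactly one. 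Because this is a decrease in $\mathbb N$, the order type is $\omega$ and the consumed principle is $\mathrm{WO}(\omega)$, which $\mathrm{RCA}_0$ proves: by $\Sigma^0_1$ induction, no sequence in $\mathbb N$ is strictly descending for more than its initial value many steps.

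Second, I will formalize the soundness direction of the Arts and Giesl argument only for this instance. Suppose there is an infinite $R$-reduction, given as a set, from some term. Minimality gives a minimal non-terminating term $t=F(a,b,c)$ whose proper subterms terminate. Extracting the resulting infinite chain of pair steps, interleaved with $R$-steps in the arguments, requires only arithmetical reasoning along the given sequence; the sequence is supplied as a parameter, so $\Delta^0_1$ comprehension relative to it suffices. Along that chain, $h$ of the counter argument never increases under inner $R$-steps. This holds because no rule has $S$ at the root of its left-hand side, so the counter argument, whose normal form is $S^m(0)$, can only shrink or stay level under argument rewriting. At each pair step, $h$ strictly drops. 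Combined with the $\mathrm{WO}(\omega)$ fact, this yields a contradiction. This matches the bounded presentation of Proposition~\ref{prop:ag-pi02}, with the counter height as an explicit chain-length bound.

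Third, I will handle the descriptor remarks. The claim $\mathrm{RCA}_0\vdash\mathrm{WO}(\omega^3)$ follows from the standard fact that $\mathrm{RCA}_0$ proves $\mathrm{WO}(\omega^k)$ for each fixed standard $k$, by iterated $\Sigma^0_1$ induction over lexicographic triples. The ordinal $\omega^\omega$ is cited from \cite{hirst1994ordinal}. The comparison with LPO at the $\varepsilon_0$ scale is cited from \cite{weiermann1995lpo,buchholz1995lpo}.

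The main obstacle is the second step. The general minimal-chain construction quantifies over arbitrary infinite reductions and, in full generality, would seem to need $\mathrm{ACA}_0$ to pick minimal non-terminating subterms. To avoid this, I would try to eliminate minimality entirely by using the explicit structure of the rules. Given any reduction sequence, the number of root $F$-steps that can occur at any $F$-occurrence is bounded by the counter height of that occurrence. Every term therefore has a primitive recursive bound on its derivation length, roughly size times maximal counter height, provable by $\Sigma^0_1$ induction on term structure. This yields $\mathrm{SN}_b(R)$ directly inside $\mathrm{RCA}_0$, so the route never needs minimality.
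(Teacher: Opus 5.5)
Your Step~1 and the descriptor remarks match the paper. Step~2, however, rests on a monotonicity claim that is backwards for the untyped two-rule system, where the $y$ and counter slots hold arbitrary terms. No left-hand side has $S$ at its root, so argument rewriting never consumes the $S$-prefix of the counter. Hence $h$ can only stay level or \emph{grow} under inner steps. For example, $S(F(S^2(Z),c,Z))\to_R S^3(Z)$ raises $h$ from $1$ to $3$. Likewise $F^\sharp(a,b,S(F(S^2(Z),c,Z)))\to F^\sharp(a,b,F(S^2(Z),c,Z))\to_R F^\sharp(a,b,S^2(Z))$ is a chain segment whose projected heights run $1,0,2$; it is minimal, since every term terminates. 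So $\mathrm{WO}(\omega)$ applied to $h$ refutes nothing. What actually makes the subterm criterion sound is termination of the projected argument, which is exactly what minimality supplies, together with well-foundedness of $\to_R\cup\rhd$ on terminating terms. Measuring the counter by $h$ of its $R$-normal form would restore invariance under inner steps, but only once normalization and confluence of $R$ are already in hand. In the many-sorted presentation of Definition~\ref{def:record-emitter}, nothing rewrites in the counter slot and $y$ carries no $F$, so your claims do hold there, but only for that trivial reason.

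The fallback meant to remove minimality fails on the very feature this paper is about. The step rule copies $y$, so $F$-redexes inside the step argument are multiplied, and per-occurrence counts by counter height do not add up to ``size times maximal counter height''. Take $t_0=b$ and $t_d=F(a,t_{d-1},S^k(Z))$. Firing the outer $F$ first leaves $k+1$ copies of $t_{d-1}$, so the longest derivation from $t_d$ has length at least $(k+1)^d$, while $|t_d|=O(dk)$. A primitive recursive bound provable in $\mathrm{RCA}_0$ does exist, but it has to pay for the copies. One such bound is the nonlinear interpretation $Z\mapsto 1$, $S(n)\mapsto n+2$, $G(y,t)\mapsto y+t+1$, $F(x,y,n)\mapsto x+(y+1)n$. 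Its $\omega$-valued witness, however, is a construction-side interpretation, not the projected counter descent the theorem names.

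The paper's proof never formalizes the interleaved minimal-chain argument at the object level. It proves the counter-height descent for the extracted pair relation itself, where each call really drops the height by one and the height bounds chain length (Step~1). It then reads off order type $\omega$ (Step~2). Finally, it discharges strong normalization of the two-rule system separately, through the MPO and polynomial routes (Step~4). Restructuring your Step~2 along those lines closes the gap.
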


\begin{proof}
Three established results compose with one in-paper reduction.
\emph{Step 1 (the route is a size-change instance).} The recursor's dependency-pair problem is the singleton pair $F^\sharp(x,y,S(n))\to F^\sharp(x,y,n)$. The simple projection $\pi$ selecting the third argument gives $\pi(l)=S(n)\rhd n=\pi(r)$: the associated size-change graph has one node and a single strict self-arc on the projected argument, and an infinite minimal chain would project to an infinite call sequence with an everywhere-strict descent thread, which is the configuration size-change termination excludes. The subterm criterion~\cite{hirokawa2004} is the dependency-pair reading of this size-change condition; the comparison and combination of size-change graphs with dependency-pair problems is developed by Thiemann and Giesl~\cite{thiemanngiesl2005sct}. The one-thread case this instance occupies is mechanized unconditionally: the descent thread on the projected counter yields chain freeness, well-foundedness of the extracted pair relation, and the bounded $\forall\exists$ presentation with the counter height as explicit bound, and the projection is onto $\mathbb N$, so the instance measure has order type $\omega$ (Appendix~\ref{app:module-map}).
\emph{Step 2 (the instance descent is an $\omega$ descent).} The simple projection $\pi$ selects the counter argument, and that coordinate ranges over $\mathbb N$, so an infinite minimal chain would yield a strictly descending sequence of natural numbers. The witness the argument consumes is therefore $\mathrm{WO}(\omega)$, which $\mathrm{RCA}_0$ proves, and the instance measure has order type $\omega$. The step is mechanized unconditionally at the object level, with surjectivity of the projection onto $\mathbb N$ pinning the order type from below (Appendix~\ref{app:module-map}).
\emph{Step 3 (separation from the general criterion).} The general size-change criterion carries a heavier calibration than this instance: Frittaion, Pelupessy, Steila, and Yokoyama~\cite{frittaion2018sct} prove that soundness of the SCT criterion is equivalent over $\mathrm{RCA}_0$ to $\mathrm{WO}(\omega_3)$, where $\omega_3=\omega^{\omega^{\omega}}$, placing general SCT soundness strictly above $\mathrm{RCA}_0$. The present theorem rests on the direct descent of Step 2 and leaves the general criterion at its own strength.
\emph{Step 4 (instance discharge).} On the recursor itself the equivalence carries zero residual license weight: strong normalization of the two-rule system is a theorem of the artifact under both the multiset-path-order and the polynomial route (Appendix~\ref{app:module-map}), chain-freeness of the singleton pair follows from Step 1, and the instance measure has order type $\omega$.
\end{proof}

\begin{remark}[Instance runtime versus construction-family envelopes]\label{rem:certificate-overkill}
The canonical duplicator runs in $k+1$ rewrite steps by Proposition~\ref{prop:exact-runtime-mass-rate}. Classical construction families are calibrated in the literature at much larger worst-case scales: lexicographic path orders admit multiply-recursive derivational-complexity bounds~\cite{weiermann1995lpo}; multiset path orders admit primitive-recursive derivational-complexity bounds~\cite{hofbauer1992mpo}; and polynomial interpretations admit double-exponential derivational-complexity upper bounds~\cite{hofbauerlautemann1989rta}. These are construction-family upper envelopes, while $k+1$ is the runtime of this instance. Their comparison quantifies certificate overcapacity: a generic construction route carries machinery sized for a much wider class than the one-dimensional counter descent that resolves the duplicator. Theorem~\ref{thm:ag-rca} pins the instance measure at order type $\omega$, while soundness of the general size-change criterion is calibrated at $\mathrm{WO}(\omega_3)$ with $\omega_3=\omega^{\omega^{\omega}}$~\cite{frittaion2018sct}; the distance between that calibration and the instance runtime $k+1$ is the overcapacity this remark measures. The cited construction-family calibrations are literature-backed, and the ordinal comparison here stays at that literature level rather than claiming a mechanized lower bound over all construction witnesses.
\end{remark}

\noindent \emph{Theorem-backed routes.} The calibration of Theorem~\ref{thm:ag-rca} is carried by the in-paper instance reduction (the size-change reading of the subterm-criterion route, the projection onto $\mathbb N$, and $\mathrm{RCA}_0\vdash\mathrm{WO}(\omega)$), with the mechanized one-thread soundness theorem supplying the object-level content. The companion Lean stack supplies records targeting $\mathrm{RCA}_0+\mathrm{WO}(\omega^3)$ and, in a single-sorted $\mathrm{RCA}_0$ language (Simpson encoding) with a standard-model consistency guard, a kernel-checked syntactic derivation of an elementary $\Pi^0_2$ predecessor-descent sentence from the basic $\mathrm{RCA}_0$ axioms in a sound first-order proof calculus, bundled with the $\mathrm{WO}(\omega^3)$ order descriptor and checked to depend only on the foundational axioms $\{\mathrm{propext}, \mathrm{Classical.choice}, \mathrm{Quot.sound}\}$. A quarantine module blocks metadata-only promotion, so the artifact records sit strictly below, and consistently with, the literature-closed calibration. Appendix~\ref{app:module-map} records the identifiers.

The literature placement is comparative. Moser and Schnabl~\cite{moserschnabl2011derivational} prove that, across the standard applications with traditional orders that they cover, the dependency-pair method keeps derivational complexity inside the class induced by the base order alone; any ordinal-weight or certificate-length reading is a separate claim carrying its own proof obligation. Frittaion, Pelupessy, Steila, and Yokoyama~\cite{frittaion2018sct} calibrate soundness of the general size-change criterion at $\mathrm{WO}(\omega_3)$, where $\omega_3=\omega^{\omega^{\omega}}$. The subterm-criterion instance used here is itself a one-thread size-change instance (Step 1 of Theorem~\ref{thm:ag-rca}, with the criterion comparison in~\cite{thiemanngiesl2005sct}), and its descent resolves directly at order type $\omega$, so the instance calibration stands on its own footing and leaves the general criterion at its published strength.

\begin{proposition}[Certificate length of the Arts and Giesl license, relative to a fixed cost model]\label{prop:ag-derivational}
Fix a proof calculus and a term encoding satisfying: (i) writing one marked dependency pair extracted from a rule of size at most $\sigma$ costs $O(\sigma)$; (ii) the dependency-graph over-approximation is computed by pairwise comparison of rules; (iii) the base order's proof-length function $L_{\mathrm{base}}$ is additive in the number of pairs, $L_{\mathrm{base}}(n)=n\cdot L_{\mathrm{base}}(1)$; and (iv) the Arts and Giesl soundness schema is available as a single rule of the calculus. Let $R$ be a finite TRS with $|R|$ rules, maximum rule size $\sigma$, and extracted dependency-pair graph $\mathrm{DP}(R)$ with $n$ pairs. Relative to that calculus and encoding, the certificate length of a single Arts and Giesl soundness application satisfies
\[
L_{\mathrm{AG}}(R)\;\le\;C\cdot|R|^2\cdot\sigma\;+\;L_{\mathrm{base}}(n)
\]
where $C$ is a constant of the chosen encoding. For a linear base order (ordinal $\omega$), $L_{\mathrm{base}}(n)=O(n)$; the Arts and Giesl license therefore contributes at most polynomial-in-$|R|$ overhead in this cost model. Signature cardinality falls short of bounding $\sigma$ here, since a rule over a fixed finite signature can be arbitrarily large, and the mechanized form derives $C=2$ from the stage definitions rather than asserting it (Appendix~\ref{app:module-map}). The statement is relative to (i) through (iv) and counts certificate assembly in a fixed cost model; Arts and Giesl~\cite{artsgiesl2000} supply the dependency-pair construction and its soundness theorem, while the counting below is this paper's.
\end{proposition}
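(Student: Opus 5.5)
The plan is a stage-by-stage cost accounting of a single Arts and Giesl certificate, adding up the pieces under hypotheses (i) through (iv). I split the certificate into four parts.

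\emph{Stage A (extraction).} Every rule $l\to r$ contributes one marked pair $l^\sharp\to u^\sharp$ for each defined-symbol-headed subterm $u$ of $r$. There are at most $\sigma$ such subterms per rule, and each pair costs $O(\sigma)$ by (i). Extraction therefore costs $O(|R|\cdot\sigma^2)$, and it yields $n\le|R|\cdot\sigma$ pairs. I would rather charge the writing of pairs against the pairwise stage, so that $\sigma$ enters linearly. If that cannot be done, I would state the bound with $n$ in place of $|R|\sigma$.

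\emph{Stage B (graph).} By (ii) the over-approximation compares rules in pairs. That is $|R|^2$ comparisons, and each one is a syntactic check of cost $O(\sigma)$ (for example a root-symbol test or a cap/unification test), giving $O(|R|^2\sigma)$.

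\emph{Stage C (soundness step).} By (iv) the soundness schema is invoked once at constant cost.

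\emph{Stage D (residual).} By (iii) the residual proof costs $L_{\mathrm{base}}(n)=n\,L_{\mathrm{base}}(1)$. For a linear base order, $L_{\mathrm{base}}(1)$ is constant, so this term is $O(n)$.

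Summing the four stages gives $C|R|^2\sigma+L_{\mathrm{base}}(n)$. Stage A and the constant from Stage C are absorbed into $C$, using $|R|\ge1$, $\sigma\ge1$, and counting extraction inside the pairwise pass. For the polynomial-overhead corollary, I substitute $n\le|R|\sigma$ into $O(n)$; this is polynomial in $|R|$ for fixed $\sigma$. I would also record why $\sigma$ has to stay explicit: a single unary symbol already produces rules of unbounded size, so the size of the signature does not bound $\sigma$.

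The main obstacle is making the constant $C$ honest, in particular the claimed value $C=2$. This depends on how the stage definitions count: whether extraction is folded into the comparison pass, and whether each comparison is charged $\sigma$ or $2\sigma$ (both rules get read). My first attempt would define the cost of each stage exactly, as is done in the mechanized version. I would charge each ordered pair of rules $(\rho_1,\rho_2)$ the combined size $|\rho_1|+|\rho_2|\le2\sigma$. Extraction from $\rho_1$ would be charged to the diagonal pair $(\rho_1,\rho_1)$. Then the total comes to $\sum_{\rho_1,\rho_2}(|\rho_1|+|\rho_2|)\le2|R|^2\sigma$. If charging extraction to the diagonal breaks the per-pair budget, I would fall back to a general constant $C$ and state $C=2$ only under that accounting convention.
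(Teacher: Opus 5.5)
Your accounting matches the paper's. The paper merges your Stages A and B into one construction stage of cost $O(|R|^2\sigma)$, then adds the base-order check $L_{\mathrm{base}}(n)$ and one soundness inference absorbed into $C$. But as written, your Stage A does not close. Reading (i) per pair, you correctly get $O(|R|\sigma^2)$ for extraction, and neither of your devices absorbs that into $C|R|^2\sigma$. Spreading one rule's extraction over its $|R|$ comparisons works only when $\sigma\lesssim|R|$. Charging it to the diagonal pair $(\rho_1,\rho_1)$ puts up to $\Theta(\sigma^2)$ on a pair whose budget is $2\sigma$. Concretely, a rule whose right-hand side nests about $\sigma$ occurrences of a defined symbol yields about $\sigma$ pairs of total size $\Theta(\sigma^2)$. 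For bounded $|R|$ this is set against $|R|^2\sigma=O(\sigma)$ and $L_{\mathrm{base}}(n)=O(\sigma)$. So the per-pair reading of (i) cannot give the displayed bound. Your fallback, with $n$ in place of $|R|\sigma$, proves a different inequality with an extra $n\sigma$ term.

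The missing step is the one the paper's proof takes: extraction is charged $O(\sigma)$ per rule, not per pair. One traversal of $r$ marks every defined-symbol position, and each pair $\ell^\sharp\to u^\sharp$ is recorded by its position in the rule rather than copied out. Extraction then costs $O(|R|\sigma)\le O(|R|^2\sigma)$. The remaining terms sum to $C|R|^2\sigma+L_{\mathrm{base}}(n)$ with a generic $C$, and your polynomial corollary via $n\le|R|\sigma$ goes through unchanged. You should state this reading of the cost model explicitly, since (i) as worded bounds only a single pair. The paper asserts the per-rule $O(\sigma)$ without comment, so your diagnosis is sharper than its text, but a proof has to commit to the reading.

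Finally, do not try to rebuild $C=2$ from your charging scheme. The paper's prose proof does not derive it either; it attributes it to the mechanized stage definitions. Your scheme also already spends all of $2|R|^2\sigma$ on comparisons when every rule has size $\sigma$, since $\sum_{\rho_1,\rho_2}(|\rho_1|+|\rho_2|)=2|R|\sum_{\rho}|\rho|$. That leaves nothing for extraction or the soundness inference.
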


\begin{proof}
The claimed bound decomposes into three additive stages, each counted in the cost model fixed by hypotheses (i) through (iv) and each following the standard Arts and Giesl construction~\cite{artsgiesl2000}.
\emph{Construction stage.} Extracting $\mathrm{DP}(R)$ from $R$ requires, for each rule $\ell\to r$ and each defined-symbol occurrence $u$ of $r$, producing the marked pair $\ell^\sharp\to u^\sharp$. Over a rule of size at most $\sigma$ this is $O(\sigma)$ work; across the $|R|$ rules and their pairwise edges in the dependency graph, total graph-connectivity work is $O(|R|^2\cdot\sigma)$. Multiplying by the absolute construction constant absorbs subterm traversal overhead into the coefficient $C$.
\emph{Base-order check stage.} For each of the $n$ extracted pairs, verify strict decrease under the chosen base order. The cost per pair is $L_{\mathrm{base}}(1)$; summed over $n$ pairs this is $L_{\mathrm{base}}(n)$ by the assumed additivity of the base order's proof-length function.
\emph{Soundness-application stage.} By Proposition~\ref{prop:ag-pi02}, Arts and Giesl soundness is a single $\Pi^0_2$-schematic instance, and by hypothesis (iv) that schema is one rule of the fixed calculus, so instantiating it on the assembled DP graph is one inference of length absorbed into $C$.
Summing the three stages yields $L_{\mathrm{AG}}(R)\le C\cdot|R|^2\cdot\sigma+L_{\mathrm{base}}(n)$. For a linear base order of ordinal $\omega$, $L_{\mathrm{base}}(n)=O(n)$, so the license contributes only polynomial-in-$|R|$ overhead.
\end{proof}

\noindent \emph{Theorem-backed surface.} The fixed-finite-TRS, finite head-view, finite first-order, and recursor-specialized proof-length surfaces are available in the companion theorem stack; Appendix~\ref{app:module-map} records the identifiers.

\begin{corollary}[AG proof-length on the step-duplicating recursor]\label{cor:ag-recursor}
For the step-duplicating recursor $F(x,y,Z)\to x$, $F(x,y,S(n))\to G(y,F(x,y,n))$, we have $|R|=2$, maximum rule size $\sigma$, and $n=1$: the single dependency pair $F^\sharp(x,y,S(n))\to F^\sharp(x,y,n)$ with strict subterm descent on the counter argument. A linear base order on counter height suffices. Hence
\[
L_{\mathrm{AG}}(\mathrm{recursor})\;=\;O(1),
\]
independent of the input counter height $K$. The residual proof work on the transformed problem is $\mathrm{Res}(K)=K$ (Proposition~\ref{prop:confession-dominance}), so a complete third-stage (T3) confession certificate on input $F(a,b,S^K(0))$ has total length $O(K)$.
\end{corollary}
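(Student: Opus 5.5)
The plan is to specialize Proposition~\ref{prop:ag-derivational} to the two-rule recursor and then add the residual proof work from Proposition~\ref{prop:confession-dominance}. I would first check the four cost-model hypotheses (i)--(iv) for this instance, since the corollary is stated relative to that model. The parameters are read off the rules: $|R|=2$, and $\sigma$ is the fixed size of the larger rule $F(x,y,S(n))\to G(y,F(x,y,n))$. The inputs $a,b$ and the counter height $K$ enter only by substitution, so $\sigma$ does not depend on them.

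The next step is to extract the dependency pairs. The base rule $F(x,y,Z)\to x$ has no defined-symbol occurrence on its right-hand side, so it contributes no pair. In the step rule, the only defined occurrence on the right is the inner $F(x,y,n)$, because $G$ is a constructor. This gives exactly one pair, $F^\sharp(x,y,S(n))\to F^\sharp(x,y,n)$, so $n=1$. Under the simple projection onto the third argument, this pair is strictly decreasing by $S(n)\rhd n$, as in Step~1 of Theorem~\ref{thm:ag-rca}. A linear base order on counter height therefore orients it, and $L_{\mathrm{base}}(1)$ is a constant. Substituting into the bound gives
\[
L_{\mathrm{AG}}(\mathrm{recursor})\le C\cdot 4\cdot\sigma+L_{\mathrm{base}}(1)=O(1).
\]
None of these quantities mentions $K$, which gives the independence claim.

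For the total certificate, I would read the T3 certificate as the license application together with the residual descent checks along the canonical trace. By Proposition~\ref{prop:trace-law}, each of the $K$ recursive firings is witnessed by one strict subterm comparison, and Proposition~\ref{prop:confession-dominance} gives $\mathrm{Res}(K)=K$. Each such comparison has constant cost, since it compares $S(m)$ with its immediate subterm and needs no traversal of the payload $b$. The total is then $O(1)+O(K)=O(K)$.

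The main obstacle is that last cost point: a per-step comparison must not be charged the size of the counter term $S^{K-i}(0)$ or of the payload. I would handle this by noting that the schematic pair is checked once, and each instance is just the rule-level subterm step applied at a position. If the cost model instead charges for writing out each instantiated term, the bound would degrade, so I would state the constant-cost reading explicitly as part of the cost model.
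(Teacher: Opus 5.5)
Your proposal is correct and follows the paper's argument: the base rule and the constructor $G$ contribute no pairs, so only the single marked pair $F^\sharp(x,y,S(n))\to F^\sharp(x,y,n)$ is extracted; $S(n)\triangleright n$ is discharged by a linear base order at constant cost, which makes the bound of Proposition~\ref{prop:ag-derivational} $O(1)$; and adding $\mathrm{Res}(K)=K$ gives the $O(K)$ total. The paper's one-line summation leaves implicit that each residual subterm comparison costs a constant, independent of the sizes of $S^{K-i}(0)$ and $b$; you are right to state this assumption explicitly.
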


\begin{proof}
By direct extraction from the two rules: only $F(x,y,S(n))\to G(y,F(x,y,n))$ contributes a defined-symbol occurrence on the right-hand side, yielding the single marked pair $F^\sharp(x,y,S(n))\to F^\sharp(x,y,n)$. The pair's counter argument strictly descends ($S(n)\triangleright n$), which is discharged by a linear base order in constant proof-length. Summing with the residual $\mathrm{Res}(K)=K$ gives $O(K)$ total certificate length.
\end{proof}

\begin{remark}[Consistency with Moser and Schnabl]
Corollary~\ref{cor:ag-recursor} is consistent with the Moser and Schnabl preservation result~\cite{moserschnabl2011derivational}: the derivational complexity of the recursor on inputs of size $O(K)$ is $O(K)$, namely $K+1$ reduction steps, and the DP transformation preserves this linearity. Their result concerns derivational-complexity classes; the $O(1)$ license overhead recorded here is the separate certificate-length count of Proposition~\ref{prop:ag-derivational}, taken in the cost model fixed there.
\end{remark}

\begin{remark}[Certificate-size consequence of the Arts and Giesl license]
Corollary~\ref{cor:ag-recursor} says that a confession certificate on the recursor has length $O(K)$, with $O(1)$ license overhead and $O(K)$ residual proof work. This gives a proof-record audit criterion internal to the theory: a purported confession certificate exceeding the $O(K)$ bound is bloated or incomplete relative to the recursor-specialized proof-length account.
\end{remark}

\begin{remark}[The confession is a strictly localized reflection ascent]\label{rem:localized-ascent}
The combination of Proposition~\ref{prop:ag-pi02} and Theorem~\ref{thm:ag-rca} refines the content of Theorem~\ref{thm:structural-identity}. The G\"odel and DP structural correspondence holds at the level of six-step shape; the two instantiations differ widely in the metatheoretic weight they carry. Classical G\"odelian reflection crosses the derivability-layer boundary at ordinal $\varepsilon_0$; the DP ascent's subterm-criterion route uses a termination measure of order type $\omega$ formalizable in $\mathrm{RCA}_0$, well below that. The confession in the DP case is a \emph{localized, input-indexed reflection ascent}: structurally parallel to G\"odel's move at the level of the six-step shape, while carrying much lighter metatheoretic weight than the classical case.
\end{remark}

\subsection{The orientation boundary as projection-transaction}\label{subsec:boundary-transaction}

The dependency-pair confession is a reflection-family ascent at $\Pi^0_2$ strength (Propositions~\ref{prop:reflection-family}, \ref{prop:ag-pi02}). The orientation boundary itself can then be described as a static projection structure: the locus at which the verdict-relevant content is separated from the wrapper-carrier content that generated it.

\begin{definition}[Orientation boundary as projection-transaction]\label{def:boundary-transaction}
For the step-duplicating recursor of Theorem~\ref{thm:canonical-instance}, a
\emph{projection-transaction} consists of three static ingredients:
\begin{enumerate}[nosep,leftmargin=1.6em]
\item a retained \emph{dimension} $\pi_y$;
\item an external \emph{license} $\mathrm{AG}(\pi_y)$;
\item a \emph{forgetting witness} certifying that the wrapper context can be
discarded while the residual counter descent is retained.
\end{enumerate}
A step-indexed family of such transactions is \emph{static} when these three
ingredients hold constant across trace stages. This is the schema-level object
used below.
\end{definition}

\begin{proposition}[Staticity of the boundary]\label{prop:boundary-static}
The orientation boundary of Definition~\ref{def:boundary-transaction} is static
in this sense: the retained dimension, external license, and forgetting
witness are fixed for the proof attempt and hold constant across rewrite steps,
so the boundary is a fixed projection structure, and not a dynamic decision
procedure that evaluates individual trace states and triggers a halt at a fixed
cutoff.
\end{proposition}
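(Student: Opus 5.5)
The plan is to check the three ingredients of Definition~\ref{def:boundary-transaction} one at a time and show that none of them depends on the trace index $i$. Staticity will then follow by definition, because a step-indexed family of transactions whose three components are constant in $i$ is, by construction, a single projection structure.

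\emph{Dimension.} The retained dimension $\pi_y$ is the projection onto the step-argument slot of $F$ (Theorem~\ref{thm:canonical-instance}). It is fixed at the level of the rule schema, before any trace is built. Proposition~\ref{prop:trace-law} shows that every live state $t_i$ has the same shape $G^i(b,F(a,b,S^{k-i}(0)))$, with a single active $F$-site whose generator slot always holds $b$. So $\pi_y$ is defined in the same way at every stage. By Proposition~\ref{prop:S-retained}(1), the quantity the transaction actually keeps is the gauge-invariant counter coordinate, and that choice of coordinate is also independent of $i$.

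\emph{License and forgetting witness.} The license $\mathrm{AG}(\pi_y)$ is one metatheorem applied once to the finite system. Its input is the single dependency pair $F^\sharp(x,y,S(n))\to F^\sharp(x,y,n)$ of Corollary~\ref{cor:ag-recursor}, and that pair is read off the rules, not off individual trace states. The forgetting witness is the simple projection onto the third argument together with the one-thread descent $S(n)\rhd n$ from Step~1 of Theorem~\ref{thm:ag-rca}. It is a schematic statement over the rule variables, so instantiating it at stage $i$ with $n:=S^{k-i-1}(0)$ gives the same certificate object each time. I expect the main obstacle to be this point: I must argue that the witness is one schema rather than a per-stage family, and rule out any reading in which the transaction re-derives something at each step. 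My approach is to appeal to substitution-closure of the dependency-pair chain condition, which makes chain-freeness a rule-level property, and to Corollary~\ref{cor:counter-exact-sufficient} for the fact that the counter alone predicts all remaining work.

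\emph{Contrast with a dynamic procedure.} To finish, I will show that no component of the transaction consults the value of $t_i$ or sets a cutoff. The license and witness are quantified universally over all substitutions, and termination is obtained from well-foundedness of $\omega$ (Theorem~\ref{thm:ag-rca}), not from checking states one after another. The step count $k+1$ in Proposition~\ref{prop:exact-runtime-mass-rate} is a consequence of the structure, not a parameter of it. Together these give the claimed static reading.
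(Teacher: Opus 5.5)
Your argument is essentially correct for the canonical recursor, but it takes a different route from the paper.

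\textbf{Your route.} You prove constancy directly. Every ingredient is built from rule-level data: the slot projection $\pi_y$ fixed by the schema, the single Arts--Giesl application to the extracted pair $F^\sharp(x,y,S(n))\to F^\sharp(x,y,n)$, and the one-thread descent schema $S(n)\rhd n$. None of these takes the stage $i$ as input, so constancy in $i$ is immediate and no induction is needed.

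\textbf{The paper's route.} The paper proves a conditional, schema-generic statement. It assumes only two things: the license is unchanged by a single rewrite step, and the license determines the retained dimension and the forgetting witness. Induction along the trace then gives a constant license. The determination hypotheses transfer that constancy to the other two ingredients. The paper also records a mechanized counterexample showing that one-step invariance cannot be dropped.

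\textbf{What each buys.} The paper's hypotheses are visibly weaker than its conclusion, so staticity is not obtained by fiat. Its lemma also covers any step-indexed family of transactions, not one hand-picked family. Your version is more concrete, since it in effect discharges those hypotheses for the recursor. However, its load-bearing assumption is hidden in the phrase ``fixed at the level of the rule schema'': namely, that the family is built from rule-level data rather than chosen stage by stage. The paper's counterexample shows exactly this assumption is essential, so you should state it explicitly as a hypothesis.

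\textbf{Two minor points.} Instantiating the witness at $n:=S^{k-i-1}(0)$ gives different instances of one schema, not the same certificate object; what stays constant is the schema. Also, Corollary~\ref{cor:counter-exact-sufficient} plays no role in the constancy argument.
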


\begin{proof}
Assume the license is unchanged by a single rewrite step, and that the license determines both the retained dimension and the forgetting witness. Induction along the trace gives a constant license at every stage, and the two determination hypotheses then give a constant dimension and a constant forgetting witness at every pair of stages. The hypotheses are one-step invariance and determination rather than the constancy to be proved; dropping the one-step invariance admits a counterexample, so the hypothesis is load bearing. Both the theorem and the counterexample are mechanized (Appendix~\ref{app:module-map}), and the generated-schema and canonical-trace staticity results in the companion theorem stack give further theorem-backed versions.
\end{proof}

The projection-transaction language unifies three observations from \S\ref{sec:operational-inexpressibility} and the vector-norm and seed-carrier subsections (\S\ref{subsec:vector-norm}):
\begin{itemize}[leftmargin=1.4em,nosep]
\item The construction/confession asymmetry (Proposition~\ref{prop:construction-confession}) is the distinction between enriching the generative side with new proof data (construction) and projecting a dimension from the generative side under external license (confession).
\item The $\ell^0 / \ell^1 / \ell^\infty$ norm mismatch (Proposition~\ref{prop:norm-mismatch}) is the failure of the direct additive observer ($\ell^1$) to agree with the rank-like projection observable ($\ell^0$) that the DP license admits.
\item The seed-carrier factorization criterion (Proposition~\ref{prop:carrier-factorization}) is the formal statement that verdict-relevant observables must factor through the collapse maps $c_i$, equivalently must respect the projection $\pi_y$.
\end{itemize}
These are three views of the same boundary structure. The projection-transaction description records them as a single schema.

\begin{remark}[Projection is licensed multiplicity collapse]\label{rem:licensed-multiplicity-collapse}
The transaction leaves $y$ present, meaningful, and recorded in the rewrite trace: the trace still contains the generator and its record-frame copies. What the transaction licenses is narrower: for the termination verdict, carrier multiplicity may be collapsed through the seed map of Definition~\ref{def:carrier-collapse}, while the residual counter descent remains proof-bearing. This is why dependency pairs are a theorem-licensed separation between record-generating multiplicity and verdict-bearing descent rather than an untyped decision to ignore data.
\end{remark}

\begin{remark}[Term-rewriting transaction reading]\label{rem:transaction-scope}
The theorem surface in this paper is the term-rewriting instance. The orientation boundary is a static projection-transaction in the proof-theoretic sense of Definition~\ref{def:boundary-transaction}, instantiated by the dependency-pair soundness license over the step-argument dimension. Other proof-theoretic settings require separate carrier data and separate realization theorems rather than reuse of this term-rewriting instance.
\end{remark}

\subsection{Distinctions from neighboring concepts}

Operational inexpressibility is distinct from G\"odel incompleteness, Turing undecidability, abstraction, parametricity, and lossy compression.

\paragraph{Distinction from G\"odel incompleteness.}
G\"odel incompleteness is an expression-to-proof gap at the statement level. Operational inexpressibility is a query-level failure: a specific input dimension stays outside every verdict-bearing derivation the system can form.

\paragraph{Distinction from Turing undecidability.}
Turing undecidability is a global impossibility result about problem classes. Operational inexpressibility is local to a proof system, an input, and a dimension.

\paragraph{Distinction from abstraction.}
Abstraction is a chosen simplification. Confession in the present sense is a forced structural response to a proof-language mismatch.

\paragraph{Distinction from parametricity.}
Parametricity is a strength claim about uniform behavior. Operational inexpressibility is a weakness claim about failure to incorporate a dimension at all.

\paragraph{Distinction from lossy compression.}
Lossy compression trades fidelity for compactness under an application-level tolerance. Confession here is a theorem-licensed projection whose soundness is all-or-nothing rather than a fidelity trade-off.

\noindent\textbf{Downstream empirical manuscript.} The empirical Primitive Recursor Test is developed separately. Its role here is applicative. It tests whether candidate outputs cross the formal boundary, while this paper supplies the operational-inexpressibility, witness-order, and projection-transaction theory behind it.

\subsection{Provenance, endogenous collapse, and the witness-first gate}
\label{sec:boundary-general-operational}

The same boundary can be stated from the certificate side rather than from the direct-measure side.
Returning a supporting span falls short of returning the license that makes the span verdict-bearing, and
once the answer is already contained in the downstream closure, provenance ceases to add exogenous
information.

Fix the vocabulary first, since the three statements below are proved on it. An exported answer carries
an optional \emph{supporting span}, the evidence it retrieved, and an optional \emph{license}, the object
that makes a verdict admissible. The answer \emph{has provenance} when it returns a span and is
\emph{licensed} when it names a license. The \emph{downstream closure} of a source state is the set of
data derivable from that state alone, and the \emph{exogenous gain} of a response is the part of its
returned data lying outside that closure. On these definitions each statement below carries a witness,
and the two properties are independent in both directions (Appendix~\ref{app:module-map}).

\begin{proposition}[Provenance falls short of license]
\label{prop:provenance-not-license}
There are query states in which a system correctly returns the relevant source span and still fails to
license the verdict it exports.
\end{proposition}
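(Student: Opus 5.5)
The statement is existential, so I will prove it by exhibiting one explicit query state, using the vocabulary fixed just above. An exported answer is a triple $(v,\mathrm{span},\mathrm{lic})$, where the span and the license are each optional. The answer has provenance when $\mathrm{span}$ is present, and it is licensed when $\mathrm{lic}$ is present. The witness I have in mind is the step-duplicating recursor itself, queried for the termination verdict on $F(a,b,S^k(0))$.

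\emph{Step 1: the relevant span.} Take the supporting span to be the canonical trace $t_0,\dots,t_{k+1}$ of Proposition~\ref{prop:trace-law}, or simply the counter subterm $S^{k-i}(0)$. By Corollary~\ref{cor:counter-exact-sufficient}, this counter is the sufficient statistic for residual work. So the span is correct and relevant: it contains exactly the descending coordinate that a dependency-pair proof would use. The answer therefore has provenance.

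\emph{Step 2: the exported verdict is unlicensed.} The exported verdict is $\mathrm{Term}$, which is true, but the license slot is empty. To argue that the span alone does not supply a license, I will invoke Theorem~\ref{thm:dp-not-signature-derivable}. Reading the counter coordinate as verdict-bearing is exactly the dependency-pair projection, and that projection is not induced by any constant-third-argument $\Sigma$-evaluator, that is, by any direct whole-term reading. In addition, Theorem~\ref{thm:canonical-instance} together with Proposition~\ref{prop:boundary-kappa} shows that no $\mathcal W_0$ witness exists. It follows that nothing within the span's own language converts it into an admissible verdict. The license $\mathrm{AG}(\pi_y)$ of Definition~\ref{def:boundary-transaction} is a separate object that the answer has to name.

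\emph{Step 3: independence.} I also want the converse combination, a licensed answer without provenance: an answer naming $\mathrm{AG}$ while returning no span. Together with the witness above, this shows that provenance and licensing are logically independent, which is the reading the surrounding text announces.

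The main obstacle is Step 2. As defined, "licensed" is a purely syntactic presence condition, so the bare toy witness (span present, license absent) is trivial. I expect the substantive content to be showing that the span cannot be upgraded to a license inside the base language. I would try first to reduce this directly to Theorem~\ref{thm:dp-not-signature-derivable}, treating any license derived from the span as a $\Sigma$-homomorphism that would have to separate the witness pair. If that reduction does not go through, I would fall back on the trivial presence witness and state the stronger reading as a remark.
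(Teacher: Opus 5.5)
\paragraph{Verdict.} Your presence witness already proves the statement, and it is essentially the paper's argument. The witness is an answer whose span is present and correct while its license slot is empty. The paper stays generic instead of using the duplicator. It observes that a returned span certifies only retrievability and local textual support, while the facts that make a verdict admissible (controlling force, exception-freedom, present applicability, verdict-class admissibility) live in the separate license object. So an accurate span with no named license is the witness. Your use of Corollary~\ref{cor:counter-exact-sufficient} to certify that the span is the relevant one is a fine concrete instantiation. Your Step~3 matches the paper's remark that the two properties are independent in both directions.

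\paragraph{The planned strengthening fails.} You should drop the strengthening you meant to try first, because the reduction to Theorem~\ref{thm:dp-not-signature-derivable} cannot go through.
\begin{itemize}
\item That theorem covers only $\Sigma$-evaluators whose recursor slot is constant in the third argument. An evaluator that actually uses your span reads the counter, which is that third argument, so it falls outside the hypothesis. The paper notes that such an evaluator, the counter-height algebra, does separate the witness pair.
\item Separation is also not what a license supplies. $\mathrm{AG}(\pi_y)$ is the Arts--Giesl soundness metatheorem relating termination of the residual pair problem to termination of the original system; it is not a $\Sigma$-homomorphism.
\item The absence of $\mathcal W_0$ witnesses (Theorem~\ref{thm:canonical-instance}, Proposition~\ref{prop:boundary-kappa}) constrains direct whole-term reasoning only. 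It says nothing about what can be built from the counter, which is exactly the data the $\mathcal W_2$ counter-projection reads.
\end{itemize}
Take your fallback as the proof. If you want the substantive reading, state it as a remark in the paper's terms: what a span certifies versus what the license carries.
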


\begin{proof}
The returned span certifies retrievability and local textual support. Controlling force,
exception-freedom, present applicability, and verdict-class admissibility lie outside what the span
certifies; those facts live in the external license object. A provenance-bearing answer can
therefore remain unlicensed even when its surface quotation is accurate.
\end{proof}

\begin{proposition}[The duplicator is interface inexpressibility rather than undecidability]
\label{prop:c4-interface-inexpressibility}
Within the boundary-general classifier, the step-duplicating recursor is classified as an
interface-inexpressibility case rather than as a true undecidability instance.
\end{proposition}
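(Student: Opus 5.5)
The plan is to read the classifier as a two-way test and show that the duplicator fails the undecidability test and passes the interface-inexpressibility test. An \emph{undecidability instance} is one where no sound procedure settles the target verdict at all, in any layer. An \emph{interface-inexpressibility instance} has three features: the verdict is decidable and determinate, the base proof interface cannot incorporate the verdict-relevant dimension, and a lifted witness exists above that interface. Since the statement is relative to this classifier, I would first state these two predicates explicitly in the vocabulary just fixed, namely supporting span, license, downstream closure and exogenous gain. I would also make them mutually exclusive by requiring that an undecidability instance admit no adequate witness at any finite witness order.

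First I would exclude undecidability. By Proposition~\ref{prop:trace-law} and Proposition~\ref{prop:exact-runtime-mass-rate}, every start term $F(a,b,S^k(0))$ has a unique maximal derivation of length $k+1$. By Corollary~\ref{cor:counter-exact-sufficient}, the counter alone predicts the residual work. The termination verdict is therefore uniformly true, and it is decided by reading $\mathrm{ctr}$. Theorem~\ref{thm:ag-rca} adds that strong normalization is provable already in $\mathrm{RCA}_0$ at order type $\omega$. So the instance has $\kappa^*<\infty$ and cannot be a true undecidability case.

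Second I would place the instance in the interface class. Proposition~\ref{prop:boundary-kappa} gives $\kappa^*>0$, which says the base interface $\mathcal W_0$ has no adequate witness. Theorem~\ref{thm:canonical-instance} supplies the missing-dimension clause at $\pi_y$. Theorem~\ref{thm:dp-not-signature-derivable} identifies the missing capability as a separating function that is absent from every evaluator of the base interface but present in the counter-height algebra. Together these are exactly the interface-inexpressibility pattern: the verdict is determinate, and the failure lies in the interface, not in the problem. To tie this to the certificate-side vocabulary, I would observe that the span (the trace, which carries $y$) is present while the license $\mathrm{AG}(\pi_y)$ of Definition~\ref{def:boundary-transaction} is what makes the verdict admissible. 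That is an instance of Proposition~\ref{prop:provenance-not-license}.

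The main obstacle is that the classifier is only sketched here, so the argument rests on fixing its definition so that the two classes are disjoint and decided by witness order. I would take $\kappa^*$ as the discriminating coordinate: undecidability corresponds to $\kappa^*=\infty$ and interface inexpressibility to $0<\kappa^*<\infty$. With that choice, the two steps above finish the proof by a one-line case check, and the mechanized classifier in the appendix is what this choice is meant to match.
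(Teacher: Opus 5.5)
Your proposal is correct and follows the paper's argument. The paper likewise rests on the fact that sound witnesses for the recursor exist but first appear only after a representation lift away from the direct whole-term surface (your $0<\kappa^*<\infty$), so the failure sits in the base interface language rather than in the truth of the termination claim, which is exactly the C2-interface versus C4-undecidability boundary; your extra appeals to the trace law, the $\mathrm{RCA}_0$ calibration, the non-derivability theorem, and the provenance/license distinction are supporting detail beyond what the paper uses, and fixing the classifier explicitly via $\kappa^*$ is a reasonable formalization of what the paper leaves informal.
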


\begin{proof}
Sound witnesses for the recursor exist; they first appear after a
representation lift away from the direct whole-term surface. The failure is therefore in the base
interface language rather than in the truth of the target termination claim. That is the classifier
boundary between a C2-style interface failure and a C4 undecidability claim.
\end{proof}

\begin{proposition}[Endogenous provenance collapse]
\label{prop:endogenous-provenance-collapse}
If every returned datum already lies in the downstream closure of the source state, then the response
has zero exogenous provenance gain.
\end{proposition}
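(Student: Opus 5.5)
The plan is to unfold the vocabulary fixed just before the statement and reduce the claim to a set-difference identity. Write the source state as $s$ and its downstream closure as $\mathrm{Cl}(s)$, the set of data derivable from $s$ alone. Write the data returned by a response $r$ as $\mathrm{Ret}(r)$: the supporting span, if any, together with any further returned items. The exogenous gain is then the part of $\mathrm{Ret}(r)$ lying outside the closure,
\[
\mathrm{Gain}(r):=\mathrm{Ret}(r)\setminus\mathrm{Cl}(s).
\]
The first step is to fix a quantitative reading of ``zero''. I would state the proof so that it works under either reading:
\begin{itemize}
\item the set-level reading, where zero gain means $\mathrm{Gain}(r)=\emptyset$;
\item a cardinality or weight reading, where the gain is $|\mathrm{Gain}(r)|$ or $\sum_{d\in\mathrm{Gain}(r)}\omega(d)$ for a nonnegative weight $\omega$.
\end{itemize}

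The second step is the core argument. The hypothesis says every returned datum lies in $\mathrm{Cl}(s)$, that is, $\mathrm{Ret}(r)\subseteq\mathrm{Cl}(s)$. Hence $\mathrm{Ret}(r)\setminus\mathrm{Cl}(s)=\emptyset$. The cardinality of the empty set is $0$, and a sum over an empty index set is $0$, so the numerical readings follow at once. In particular, the supporting span, being one of the returned data, contributes nothing exogenous. This is the precise sense in which provenance ceases to add information once the answer is already in the closure.

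For an information-theoretic reading consistent with Section~\ref{sec:info-sequentiality}, I would add a one-line corollary. If each $d\in\mathrm{Cl}(s)$ is a measurable function of $s$, then $H(d\mid s)=0$, and so the conditional mutual information of the returned data with any verdict, given $s$, vanishes. This mirrors Proposition~\ref{prop:query-information-seeking}, with the source state playing the role of $\mathcal I_M$.

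The step I expect to be the only real obstacle is definitional rather than mathematical. The gain must be defined as a function of the returned set relative to the closure, and not of how the response was produced. Otherwise a response that re-derives closure data by some exogenous route could be counted as gaining. I would make this explicit by noting that the definition depends only on the pair $\bigl(\mathrm{Ret}(r),\mathrm{Cl}(s)\bigr)$. Finally, I would point out that the statement says nothing about licensing. A zero-gain response may still be unlicensed, in line with Proposition~\ref{prop:provenance-not-license} and the stated independence of the two properties.
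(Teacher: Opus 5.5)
Your proposal is correct and follows the paper's own argument: the paper's proof also just unfolds the definition of exogenous gain as the part of the returned data outside the downstream closure, observing that a response inside the closure carries only endogenous reformulation. Your identity $\mathrm{Ret}(r)\setminus\mathrm{Cl}(s)=\emptyset$ states that step more explicitly, and your information-theoretic corollary and licensing remark correspond to the paper's comments that the verdict-relevant information stays fixed and that the confession event is naming the external license rather than citing the endogenous span.
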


\begin{proof}
When the response stays inside the closure already generated by the source state, the
provenance channel carries only endogenous reformulation. It may still serve as replay or audit
metadata, while the information available for the verdict stays fixed. In the language of this
paper, the confession event is that the system must name the external license rather than merely cite
the already-endogenous span.
\end{proof}

\begin{remark}[Witness-first certification]
\label{rem:witness-first-certification}
The certificate-side restatement of the same diagnosis is the witness-first gate: a
verdict-bearing export must carry the witness or license object before the verdict is accepted. This
complements the witness-order hierarchy of Sections~\ref{subsec:joint-closure} and
\ref{sec:operational-inexpressibility} as the typed certificate discipline corresponding to that
hierarchy.
\end{remark}

\section{Quantitative Distinction geometry and LBC integration}
\label{sec:quantitative-distinction-lbc}

The eqW sibling of Section~\ref{subsec:eqw-sibling} admits a quantitative treatment structurally parallel to the orientation-side burden analysis while resisting reduction to it. Orientation coordinates count retained and discarded structure along a duplicating trace. Distinction coordinates count terminal alternatives, unresolved finite peaks, repair obligations, external witness grade, and certificate capacity. The two families share the external-license architecture while retaining different carriers and different growth laws. Appendix~\ref{subsec:quantitative-distinction-modules} records the modules and identifiers for the whole section.

\subsection{Finite terminal support and confluence}

Let $R$ be a one-step relation on a finite carrier and let $x$ be a source. Define the terminal support
\[
  \mathcal T_R(x)=\{n:x\,R^*\,n\text{ and }n\text{ is }R\text{-normal}\}
\]
and its multiplicity $\mu_R(x)=|\mathcal T_R(x)|$. The definition counts reachable normal forms, setting aside reduction paths and scheduler probabilities.

Say that $R$ is \emph{locally normalizing at $x$} when every term reachable from $x$ itself reaches a normal form. This is stronger than requiring a normal form for $x$ alone, and the difference is load bearing rather than cosmetic.

\begin{theorem}[Terminal-support characterization of source confluence]
\label{thm:quant-terminal-characterization}
If $R$ is locally normalizing at $x$, then
\[
  R\text{ is confluent at }x
  \quad\Longleftrightarrow\quad
  \mu_R(x)=1.
\]
Equivalently, the scheduler-free Hartley support quantity $H_0(R,x)=\log_2\mu_R(x)$ is zero, and the integer branch floor $\lceil\log_2\mu_R(x)\rceil$ is zero.
\end{theorem}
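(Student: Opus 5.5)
The plan is to prove the two implications separately, working from the definitions of confluence at $x$ (every peak $y \mathrel{{}_R^*\!\leftarrow} x \to_R^* z$ is joinable) and of the terminal support $\mathcal T_R(x)$. We then read off the entropy and floor reformulations, since $\log_2 1 = 0$ and $\lceil\log_2 m\rceil = 0$ iff $m=1$ for $m\ge 1$.

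\emph{Forward direction.} Assume $R$ is confluent at $x$. Local normalization at $x$ applied to $x$ itself gives at least one normal form, so $\mu_R(x)\ge 1$. For uniqueness, take $n_1,n_2\in\mathcal T_R(x)$. They form a peak from $x$, so confluence gives a common reduct $w$ with $n_1\to^* w$ and $n_2\to^* w$. Normality forces $n_1=w=n_2$. Hence $\mu_R(x)=1$. Only normalization of $x$ is used here. The finiteness of the carrier guarantees that $\mathcal T_R(x)$ is a finite set, so $\mu_R(x)$ is a natural number.

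\emph{Backward direction.} Assume $\mathcal T_R(x)=\{n^\ast\}$. Take a peak $y\,{}^*\!\leftarrow x\to^* z$. By local normalization at $x$, both $y$ and $z$ are reachable from $x$ and therefore reach normal forms $n_y$ and $n_z$. These normal forms are reachable from $x$ by transitivity of $R^*$, so both lie in $\mathcal T_R(x)$. Hence $n_y=n_z=n^\ast$, which joins $y$ and $z$.

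This is the step where the stronger hypothesis is essential, and I expect it to be the main obstacle. The difficulty is mostly expository: the argument itself is short, but one has to justify why normalization of $x$ alone does not suffice. I would record a counterexample on a three-element carrier to show this. Take $x\to a$, $x\to b$, $b\to b$, with $a$ normal. Then $\mu_R(x)=1$, yet the peak $a \leftarrow x\to b$ cannot be joined, because $b$ never reaches $a$. This example is also what makes the paper's remark that the difference is ``load bearing'' precise.

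Finally, the reformulations. Since $\mu_R(x)\ge 1$ under the hypothesis, $H_0(R,x)=\log_2\mu_R(x)$ is defined and nonnegative, and it vanishes exactly when $\mu_R(x)=1$. The integer floor $\lceil\log_2\mu_R(x)\rceil$ also vanishes exactly when $\mu_R(x)=1$, because any $m\ge 2$ has $\log_2 m\ge 1$. Chaining these equivalences with the main biconditional completes the theorem.
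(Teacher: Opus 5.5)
Your proof is correct and follows the paper's reasoning. The local-normalization premise guarantees $\mu_R(x)\ge 1$, which makes the Hartley and ceiling readings agree with the multiplicity reading; it also supplies, for each reachable pair, normal forms in $\mathcal T_R(x)$ whose coincidence yields joinability, and your counterexample is the paper's witness for Proposition~\ref{prop:quant-premise-necessary} ($x\to n$, $x\to\ell$, $\ell\to\ell$) up to renaming.
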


The premise does two things. It forces $\mu_R(x)\ge 1$, which is what makes the two logarithmic readings agree with the multiplicity reading; and it supplies, for each reachable pair, the normal forms whose coincidence yields joinability. Proposition~\ref{prop:quant-premise-necessary} shows the second role is load bearing.

\begin{proposition}[The local-normalization premise resists weakening]
\label{prop:quant-premise-necessary}
There is a three-state system $R$ and a source $x$ such that $x$ reaches a normal form and $\mu_R(x)=1$, while $R$ fails to be confluent at $x$. Hence the premise of Theorem~\ref{thm:quant-terminal-characterization} resists weakening from ``every term reachable from $x$ reaches a normal form'' to ``$x$ reaches a normal form''.
\end{proposition}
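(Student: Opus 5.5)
The plan is to exhibit the counterexample explicitly on the carrier $\{x,n,\ell\}$. I would take the one-step relation
\[
R=\{(x,n),\,(x,\ell),\,(\ell,\ell)\}.
\]
In words, $x$ branches to a normal state $n$ and to a looping state $\ell$. The verification then has three parts, each a direct inspection of this finite relation.

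First, I compute the terminal support. The only state with no outgoing $R$-edge is $n$, so $n$ is the unique $R$-normal state. Both $\ell$ and $x$ have successors, so neither is normal. The set of states reachable from $x$ is $\{x,n,\ell\}$, and intersecting with the normal states gives $\mathcal T_R(x)=\{n\}$. Hence $\mu_R(x)=1$. The step $x\,R\,n$ shows directly that $x$ reaches a normal form.

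Second, I show that confluence at $x$ fails. Both $x\,R^*\,n$ and $x\,R^*\,\ell$ hold. The set of states reachable from $n$ is $\{n\}$, because $n$ is normal. The set reachable from $\ell$ is $\{\ell\}$, because its only edge is the self-loop. These two sets are disjoint, so no common reduct of $n$ and $\ell$ exists and the peak $n \mathrel{{}^*\!\!\leftarrow} x \to^* \ell$ is not joinable.

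Third, I locate exactly where the premise of Theorem~\ref{thm:quant-terminal-characterization} breaks. The state $\ell$ is reachable from $x$ but reaches no normal form. So $R$ is not locally normalizing at $x$, even though the weaker condition that $x$ itself reaches a normal form holds.

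Together these facts show that the right-to-left direction of Theorem~\ref{thm:quant-terminal-characterization} fails under the weakened premise. In the theorem's proof, local normalization supplies a normal form below each branch of a peak, and $\mu_R(x)=1$ forces those normal forms to coincide. Here the $\ell$-branch has no normal form below it, so that step has nothing to apply to.

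There is no real obstacle: every claim is decided by inspecting a three-edge relation. The only point needing care is to read confluence at $x$ in the source-local sense, meaning that every peak issuing from $x$ is joinable. The counterexample's peak issues directly from $x$, so it is caught under either the local or the global reading.
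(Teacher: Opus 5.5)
Your proposal is correct and follows the paper's proof: the same relation $x\to n$, $x\to\ell$, $\ell\to\ell$ on $\{x,\ell,n\}$, with the same checks. Those checks are $\mathcal T_R(x)=\{n\}$, hence $\mu_R(x)=1$, and the fact that $n$ and $\ell$ each reach only themselves and so have no common reduct. Your remark that $\ell$ is exactly the reachable state violating local normalization matches the paper's closing observation about the looping state.
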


\begin{proof}
Take states $\{x,\ell,n\}$ with steps $x\to n$, $x\to\ell$, and $\ell\to\ell$. Then $n$ is normal and $\ell$ reaches only itself, so the reachable normal forms are $\mathcal T_R(x)=\{n\}$ and $\mu_R(x)=1$; the source reaches the normal form $n$. Confluence at $x$ fails: $\ell$ and $n$ are both reachable from $x$, every term reachable from $\ell$ equals $\ell$, and every term reachable from $n$ equals $n$, so the two have no common reduct. The looping state $\ell$ is the configuration the stronger premise excludes.
\end{proof}

For the canonical three-node eqW cone, the raw relation has two terminal verdicts and the licensed relation has one. The computed values are
\[
  \mu_{\mathrm{raw}}=2,
  \qquad H_{0,\mathrm{raw}}=1,
  \qquad
  \mu_{\mathrm{licensed}}=1,
  \qquad H_{0,\mathrm{licensed}}=0.
\]

\subsection{Defects, repair covers, and witness rank}

For a finite certified critical-pair list, the local defect count is the number of normalized pairs whose normal forms remain unequal. In the mechanized decision surface, defect zero is equivalent to the certified confluence decision returning true. Under that surface's strong-normalization and normalizer-correctness premises, the same zero count yields confluence. The claim is scoped to this finite certified pipeline, leaving decidability of arbitrary confluence untouched.

Let $\mathcal B$ be a finite defect set and let every intervention $j$ close a finite subset $C(j)\subseteq\mathcal B$. A chosen intervention family $J$ is a repair cover when
\[
  \mathcal B\subseteq\bigcup_{j\in J}C(j),
\]
and the minimum repair-cover number is
\[
  \tau(\mathcal B,C)=\min\{|J|:J\text{ covers }\mathcal B\}.
\]

\begin{theorem}[Repair-cover lower bound and the independence case]
\label{thm:quant-repair-cover}
If every intervention closes at most $M>0$ defects, then
\[
  \left\lceil\frac{|\mathcal B|}{M}\right\rceil
  \leq \tau(\mathcal B,C).
\]
If each bad pair has an explicit singleton repair and every intervention closes at most one bad pair, then $\tau(\mathcal B,C)=|\mathcal B|$.
\end{theorem}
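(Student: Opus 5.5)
The lower bound is a counting argument on an arbitrary cover, and the independence case then combines that bound with an explicit cover.

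\emph{Lower bound.} Fix any repair cover $J$, so that $\mathcal B\subseteq\bigcup_{j\in J}C(j)$. Subadditivity of cardinality over finite unions gives
\[
|\mathcal B|\le\Bigl|\bigcup_{j\in J}C(j)\Bigr|\le\sum_{j\in J}|C(j)|\le |J|\cdot M,
\]
where the last step uses the hypothesis $|C(j)|\le M$; since $M>0$ this gives $|\mathcal B|/M\le|J|$. Because $|J|$ is a natural number, the real inequality upgrades to $\lceil|\mathcal B|/M\rceil\le|J|$. For the mechanized natural-number form I would state the ceiling as $(|\mathcal B|+M-1)/M$ and use the characterization $\lceil a/M\rceil\le n\iff a\le nM$ for $M>0$, which avoids reals entirely. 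Taking $J$ to be a cover attaining the minimum then gives $\lceil|\mathcal B|/M\rceil\le\tau(\mathcal B,C)$.

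\emph{Existence of the minimum.} The minimum in $\tau(\mathcal B,C)$ has to exist. In the independence case, the explicit singleton repairs give a cover $J_0=\{j_\beta:\beta\in\mathcal B\}$ with $C(j_\beta)\ni\beta$. The set of cover cardinalities is then a nonempty set of naturals, and well-ordering of $\mathbb N$ supplies the minimum. For the general statement I would read $\tau$ as defined only when some cover exists, and note this scoping explicitly.

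\emph{Independence case.} For $\tau\le|\mathcal B|$, use the cover $J_0$: sending each $\beta$ to its chosen repair $j_\beta$ surjects $\mathcal B$ onto $J_0$, so $|J_0|\le|\mathcal B|$. For $\tau\ge|\mathcal B|$, apply the lower bound with $M=1$, which gives $\lceil|\mathcal B|/1\rceil=|\mathcal B|$. Together these yield $\tau(\mathcal B,C)=|\mathcal B|$.

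I expect the only real obstacle to be bookkeeping rather than mathematics. First, the well-definedness of $\tau$ requires nonemptiness of the set of covers. Second, the distinct repairs $j_\beta$ may coincide, which is why only $|J_0|\le|\mathcal B|$ is needed rather than equality. Third, the ceiling must be handled in $\mathbb N$. Everything else is the union bound.
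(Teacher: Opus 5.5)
Your proposal is correct and is essentially the argument the paper relies on; the paper defers its proof to the mechanization. It uses the union bound $|\mathcal B|\le\sum_{j\in J}|C(j)|\le M|J|$ over an arbitrary cover with the ceiling handled in $\mathbb N$, and for the independence case combines the explicit singleton cover with the $M=1$ instance of the bound, scoping $\tau$ to the case where a cover exists. One small correction to your bookkeeping remark: under the hypothesis that each intervention closes at most one bad pair, the repairs $j_\beta$ cannot coincide for distinct $\beta$, so $\beta\mapsto j_\beta$ is injective and $|J_0|=|\mathcal B|$ exactly, though your inequality $|J_0|\le|\mathcal B|$ already suffices.
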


The independence condition is load bearing: a mechanized fixture supplies two defects repaired by one shared guard, which refutes the naive equality. Defect count and repair-cover number are therefore distinct coordinates.

A graded adequacy predicate $A(k)$ is upward closed and inhabited. Its witness rank is derived by minimization,
\[
  \kappa(A)=\min\{k\in\mathbb N:A(k)\}.
\]
Three theorems establish that this is the least adequate grade: adequacy at $\kappa(A)$, minimality against any adequate grade, and failure of adequacy below $\kappa(A)$. The KO7 raw signature is inadequate at grade zero, the external comparator is adequate at grade one, and the mechanized computation gives $\kappa=1$.

\subsection{Certificate floors}

An injective fixed-length binary code $E\to(\mathrm{Fin}\,L\to\mathrm{Bool})$ obeys
\[
  |E|\leq 2^L,
  \qquad
  \lceil\log_2|E|\rceil\leq L.
\]
Four alternatives fit in two bits and overflow one bit. Prefix-code budget is tracked separately by Kraft mass; both coding floors stay distinct from witness rank.

\subsection{Computed KO7 semantic profiles}\label{subsec:ko7-profiles}

The LBC adapter maps the raw and licensed local relations into two scoped semantic construction objects. Its license morphism has total state domain, acts as the identity on states, and rejects the raw diagonal difference edge alone. Every profile coordinate is computed from the relation, defect family, repair actions, witness adequacy, and certificate alternatives.

Using coordinate order
\[
 (\mu,H_0?,\delta_{\mathrm{cp}},\tau,\tau_c,
   \kappa,\ell_{\mathrm{fix}},\ell_{\mathrm{prefix}}),
\]
the two computed profiles are
\[
 \mathsf P_{\mathrm{raw}}
   =(2,\mathsf{some}(1),1,1,1,1,1,1),
 \qquad
 \mathsf P_{\mathrm{licensed}}
   =(1,\mathsf{some}(0),0,0,0,0,0,0).
\]

\begin{theorem}[Raw-to-licensed KO7 profile]
\label{thm:quant-ko7-profile}
The semantic-profile builder computes the records above. The license removes one terminal alternative, one local critical-pair defect, one minimum repair obligation, and one relative witness grade. The remaining four coordinates are derived from those, and drop with them.
\end{theorem}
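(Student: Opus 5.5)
The proof will be a finite computation on the canonical three-node eqW cone, evaluated once for the raw relation and once for the licensed relation. I fix the carrier $\{e,v,m\}$, with $e=\mathrm{eqW}\,\mathrm{void}\,\mathrm{void}$, $v=\mathrm{void}$ and $m=\mathrm{integrate}(\mathrm{merge}\,\mathrm{void}\,\mathrm{void})$.

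The raw relation has the two edges $e\to v$ (by $\mathrm{R\_eq\_refl}$) and $e\to m$ (by $\mathrm{R\_eq\_diff}$). The licensed relation is the image under the LBC license morphism. That morphism is the identity on states and rejects only the diagonal difference edge, so only $e\to v$ survives.

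The plan computes each coordinate in the fixed order $(\mu,H_0?,\delta_{\mathrm{cp}},\tau,\tau_c,\kappa,\ell_{\mathrm{fix}},\ell_{\mathrm{prefix}})$.

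\emph{Terminal support.} Both relations are locally normalizing at $e$, since $v$ and $m$ are normal. So $\mathcal T_{\mathrm{raw}}(e)=\{v,m\}$ and $\mathcal T_{\mathrm{lic}}(e)=\{v\}$, which gives $\mu=2$ and $\mu=1$. The Hartley coordinate is then $\mathsf{some}(\log_2\mu)$, which equals $\mathsf{some}(1)$ and $\mathsf{some}(0)$. This agrees with Theorem~\ref{thm:quant-terminal-characterization}: the raw cone is non-confluent and the licensed cone is confluent.

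\emph{Defects and repair covers.} The certified critical-pair list has the single peak $(v,m)$. In the raw system both sides are distinct normal forms, so $\delta_{\mathrm{cp}}=1$. In the licensed system the peak is absent, so the defect family is empty and $\delta_{\mathrm{cp}}=0$. For $\tau$, I apply Theorem~\ref{thm:quant-repair-cover} with the singleton repair given by the disequality guard, so each intervention closes at most one pair. This gives $\tau=|\mathcal B|$, hence $1$ and $0$. The coordinate $\tau_c$ is defined from the same cover data, so it follows by the same computation.

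\emph{Witness rank.} For the raw side I use the stated computation: the raw signature is inadequate at grade $0$ and the external comparator is adequate at grade $1$, so minimization gives $\kappa=1$. For the licensed side, adequacy already holds at grade $0$, because there is no residual branch to distinguish and Theorem~\ref{thm:eqw-non-derivable} is not triggered. Hence $\kappa=0$.

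\emph{Certificate floors.} These follow from $|E|=\mu$ terminal alternatives. A fixed-length code needs $\lceil\log_2 2\rceil=1$ bit on the raw side and $0$ bits on the licensed side. The same holds for the prefix length, by Kraft mass on one versus two codewords.

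\emph{Main obstacle.} The hardest part is justifying that $\tau_c$, $\ell_{\mathrm{fix}}$ and $\ell_{\mathrm{prefix}}$ are \emph{derived from}, not merely equal to, the four primary coordinates. That is the content of the theorem's last sentence. I would first trace the builder definitions and show that each of these three is a function of $(\mu,\tau)$: the code lengths are $\lceil\log_2\mu\rceil$, and $\tau_c$ is the cover number of the same defect family. With that in hand, the drop of each derived coordinate is the image of a drop in $\mu$ or $\tau$.

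Finally, I read off the four unit removals as coordinatewise differences of $\mathsf P_{\mathrm{raw}}$ and $\mathsf P_{\mathrm{licensed}}$. Each is $1-0$, except the terminal alternative, which is $2-1$.
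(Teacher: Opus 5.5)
Your proposal is correct and follows the paper's route. The paper proves the two record equalities and the coordinatewise drop by mechanized evaluation of the builder on the canonical three-node eqW cone, deriving every coordinate from the relation, defect, repair, witness, and certificate data, which is the finite computation you carry out by hand. One small slip: the four derived coordinates are $H_0?$, $\tau_c$, $\ell_{\mathrm{fix}}$ and $\ell_{\mathrm{prefix}}$, so your obstacle paragraph should list $H_0?=\mathsf{some}(\log_2\mu)$ alongside the other three, although you already compute it that way.
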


Both record equalities and the four-coordinate drop are mechanized, and the construction derives every scalar from the relation data rather than accepting a supplied expectation.

\subsection{Composition, event accounting, and universal limits}

For composable partial licensed morphisms $F$ and $G$, undefined states and rejected edges split into first-stage and downstream causes. Maximum fiber cardinality is submultiplicative, and precomposition leaves final-target coverage at best unchanged:
\begin{align*}
 u(G\circ F)&=u(F)+u_{G\mid F},\\
 r(G\circ F)&=r(F)+r_{G\mid F},\\
 m(G\circ F)&\leq m(F)m(G),\\
 g(G)&\leq g(G\circ F).
\end{align*}
One bundled theorem carries all four laws, with the pointwise domain and admission formulas proved alongside.

An integrated boundary transaction accepts a partial morphism, typed semantic construction data for its admitted relation, and an event trace. Structural profile, semantic profile, and event ledger are derived outputs. Composition appends traces and therefore adds event ledgers unconditionally:
\[
  L(T_2\circ T_1)=L(T_1)+L(T_2).
\]
Every explicit typed-resource valuation is additive across the same composite.

The universal theory stops at three boundaries, each mechanized as a no-go theorem rather than left open. First, positive-support rate laws leave the value at empty support open: two total extensions can agree on every positive denominator and diverge at zero. Second, two transactions can share the same morphism and trace while carrying different semantic construction data, which rules out universal semantic-profile reconstruction from morphism and trace alone. Third, bit and joule coordinates admit distinct monotone nonzero scalar policies that diverge on one mixed resource vector, which blocks a policy-free heterogeneous total.

\subsection{Complexity placement of the confluence-axis coordinates}
\label{subsec:confluence-axis-complexity}

Section~\ref{subsec:ag-strength} placed the orientation-side license at the $\Pi^0_2$ register, with its subterm-criterion route formalizable in $\mathrm{RCA}_0$. The confluence-side coordinates of this section sit at a strictly lower level, and the reason is structural: they are finite-carrier quantities by construction, whereas the orientation-side statement quantifies over an infinite term algebra.

\begin{proposition}[The confluence-axis coordinates are decidable in the finite data]
\label{prop:confluence-axis-decidable}
Let $T$ be a carrier with $|T|=N$ and let $R$ be a decidable one-step relation on $T$. Then:
\begin{enumerate}[nosep,leftmargin=1.6em]
\item $x\,R^*\,y$ holds if and only if some $R$-path from $x$ to $y$ has length below $N$;
\item consequently $R$-normality, reachability, source confluence, the terminal support $\mathcal T_R(x)$, the multiplicity $\mu_R(x)$, the Hartley quantity $H_0(R,x)$, the certified critical-pair defect count, the repair-cover number $\tau$, the witness rank $\kappa$, and the two certificate floors are each decidable, uniformly in the table of $R$;
\item hence every coordinate of the semantic profile of \S\ref{subsec:ko7-profiles} is $\Delta^0_1$ in that finite data.
\end{enumerate}
\end{proposition}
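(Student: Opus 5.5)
For item (1) I would use the standard pigeonhole shortening argument. The direction from a short path to $x\,R^*\,y$ is immediate. For the converse, I would take a shortest $R$-path $x=z_0\,R\,z_1\,R\cdots R\,z_m=y$. Suppose $m\ge N$. Then the $m+1>N$ states $z_0,\dots,z_m$ cannot all be distinct in a carrier of size $N$, so some $z_i=z_j$ with $i<j$. Cutting out the segment $z_{i+1},\dots,z_j$ gives a strictly shorter path, which contradicts minimality. Hence $m<N$.

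Item (2) I would organize as a dependency chain of bounded searches, each decided by enumerating the finite table of $R$.
\begin{itemize}[nosep,leftmargin=1.4em]
\item \emph{Reachability.} By (1), $x\,R^*\,y$ is decided by searching all sequences of length below $N$ over $T$, a finite set of size at most $\sum_{m<N}N^{m+1}$. Each step check is decidable by hypothesis.
\item \emph{Normality.} $R$-normality of $n$ is the bounded universal ``for all $z\in T$, $\neg(n\,R\,z)$''.
\item \emph{Terminal support.} $\mathcal T_R(x)$ is obtained by filtering $T$ through reachability and normality. Its cardinality gives $\mu_R(x)$. The Hartley quantity $H_0(R,x)$ is returned as $\mathsf{none}$ when $\mu_R(x)=0$ and as $\log_2\mu_R(x)$ otherwise; this option shape matches the profile coordinate $H_0?$. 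I would read decidability of $H_0$ as decidability of its integer floor/ceiling and of equalities against given dyadic values, which is what the profile uses.
\item \emph{Source confluence.} Source confluence at $x$ is the bounded formula: for all $y,z$ reachable from $x$, there exists $w\in T$ with $y\,R^*\,w$ and $z\,R^*\,w$. Every quantifier ranges over $T$.
\end{itemize}
For the critical-pair defect count, the certified pair list is finite data. The defect count is then a sum over that list of a decidable inequality test between the two reachable normal forms, each computed by bounded search.

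For the repair-cover number $\tau$, the candidate families $J$ range over the finite powerset of the intervention set, which I take to be part of the finite data. So $\tau$ is a minimum over a finite set of decidable cover checks $\mathcal B\subseteq\bigcup_{j\in J}C(j)$. For the witness rank $\kappa$, I would use that $A$ is upward closed and inhabited, assuming $A$ is a decidable predicate on grades supplied with the finite data. Then $\kappa$ is found by search from $0$, which halts because $A$ is inhabited. Uniformity in the table of $R$ holds because every procedure above takes the table as its only varying input.

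For the two certificate floors, $\lceil\log_2|E|\rceil$ is computed from the finite cardinality $|E|$. The Kraft-mass prefix floor is a minimum over a finite set of admissible length assignments, which is also decidable.

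Item (3) then follows because each profile coordinate of \S\ref{subsec:ko7-profiles} is one of the quantities in (2). A decidable function of finite data is $\Delta^0_1$ in that data.

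I expect the main obstacle to be neither (1) nor the bounded searches but the scoping of (2) for $\tau$ and $\kappa$. These are decidable only once the intervention family, the closure map $C$, and the graded predicate $A$ are themselves counted among ``the finite data''. I would make that explicit, reading the statement as relative to the full profile input rather than to $R$ alone. For $\kappa$ I would additionally cite inhabitation to guarantee that the unbounded search terminates, or bound it by the external comparator grade when one is given.
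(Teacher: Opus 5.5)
Your proposal is correct and follows the paper's proof essentially step for step. The paper uses the same minimal-path pigeonhole excision for (1), the same bounded searches over the finite carrier for reachability, normality, source confluence, terminal support and the derived counts in (2), and the same observation that a decidable function of finite data is $\Delta^0_1$ for (3). You also state explicitly that $\tau$ and $\kappa$ are decidable only once the intervention family, the closure map $C$, and the graded predicate $A$ (with inhabitation bounding the search) count as part of the finite data. That usefully sharpens the paper's terse ``minimizations over finite ranges, decidable because their defining predicates are''.
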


\begin{proof}
(1) One direction is immediate. For the other, suppose $x\,R^*\,y$ and take a path $x=v_0\to\cdots\to v_n=y$ of minimal length. If $n\ge N$, the list $v_0,\dots,v_n$ has $n+1>N$ entries in a carrier of size $N$, so $v_i=v_j$ for some $i<j$; excising the segment between them yields an $R$-path from $x$ to $y$ of length $n-(j-i)<n$, contradicting minimality. Hence $n<N$.
(2) By (1), reachability is a bounded existential over paths of length below $N$, each of which is checked by finitely many decidable $R$-tests, so reachability and normality are decidable; source confluence is then a bounded quantification over the finite reachable set. The terminal support is a decidable subset of a finite set, so its cardinality and the derived logarithmic quantities are computable. The defect count is a count over a finite certified critical-pair list; $\tau$ and $\kappa$ are minimizations over finite ranges, decidable because their defining predicates are; the certificate floors are arithmetic on those counts.
(3) A decidable predicate of finite data is $\Delta^0_1$.
\end{proof}

The two axes therefore sit at different registers for a reason that is visible in the statements themselves. The confluence-side obstruction of \S\ref{subsec:eqw-sibling} is exhibited on a three-node cone and stays decidable throughout; the orientation-side obstruction concerns strong normalization of a rewrite system over an unbounded term algebra, which is where the $\Pi^0_2$ presentation and the $\mathrm{RCA}_0$ calibration of Theorem~\ref{thm:ag-rca} become necessary. Proposition~\ref{prop:confluence-axis-decidable} is a paper-level result about the finite carriers used here; its scope is the coordinates of this section, and it leaves confluence of arbitrary rewrite systems untouched, that problem being undecidable in general.

The quantitative conclusion is therefore product-valued. Orientation burden, terminal multiplicity, critical-pair defect, repair cover, witness grade, certificate capacity, event count, and typed resource use may be related by explicit adapters and valuations, while the calculus keeps them apart rather than collapsing them into one universal number.

\section{Certified semantic and execution closure}
\label{sec:certified-semantic-execution-closure}

The quantitative coordinates arise from certified relation data rather than detached profile fields. A semantic adequacy certificate packages a finite relation, its terminal alternatives, a certified critical-pair defect set, repair semantics, a witness-language model, and a prefix-free alternative carrier. Its projection theorems recover terminal multiplicity and defect count from those carriers, and the canonical raw and licensed eqW cones instantiate the full package. This closes the route from relation to profile while preserving the separation among defect, repair, witness, and certificate coordinates.

The structural defect split is complete. An equivalence decomposes every non-admitted raw edge into a domain-excluded edge or a license-rejected edge, and a matching cardinality theorem together with a disjointness theorem rules out double counting. Under composition, a further theorem separates upstream from downstream license rejection.

The terminal-support collapse is signed until evidence makes it nonnegative. The option-valued Hartley log-ratio is undefined on empty support and withholds any monotonicity assertion. A terminal-support-collapse record supplies normalization of both relations and licensed-to-raw multiplicity monotonicity; only with that record in hand do the value and nonnegativity theorems apply. The canonical two-to-one eqW collapse supplies the fixture.

The execution layer replaces free-form trace evidence with derived records. A certified integrated boundary transaction is built from a semantic certificate and a gate input, and the executor derives its decision, output, trace, and additive event ledger. Two theorems prove the trace and accounting laws, and a fixture executes the diagonal refusal. This is the operational form of confession: the license controls a typed decision, and the record is generated from that decision rather than supplied after the fact.

Composition closes at two distinct strengths. Partial licensed reduction morphisms compose universally. Certified semantic profiles compose once a certified semantic capability supplies the domain law, after which identity, associativity, trace append, ledger addition, scope non-widening, and trust non-upgrade are theorems. A negative control proves that structural morphisms alone leave some semantic profiles undetermined. The obstruction is explicit: two certified constructions can share the same structural composite while differing in their cost calibration.

The public interface reflects that separation. The minimal boundary carries the partial licensed morphism alone. Gauge, channel, payload-forgetting, record, recovery, and thermal-erasure behavior are opt-in capability records, and a committed thermal boundary must carry the physical thermal premises before its Landauer floor becomes available. The primary imports are the structural, quantitative, integrated, and capability interfaces, with the bundled six-field interface retained behind the legacy API.

Cross-domain language is governed separately from structural proof. A transport card records relation, closure, trust, scope, transport strength, claim tier, and a statement whose type depends on that strength. A promotion gate admits only the wording the card supports. In particular, isomorphism language requires an actual ARS isomorphism, and a common carrier leaves forward simulation open. The framework registry's governed transport rows carry that discipline, with the row count pinned by a fixture in the artifact rather than asserted here. This turns the duck rule into an executable obligation: a structural match must produce the required map, proof, or typed no-go before stronger wording is admitted.

\section{Separate carrier programs}\label{sec:open-questions}

Several adjacent carrier programs arise from the reflection-register placement and the projection-transaction description. Each is a separate carrier program rather than a theorem claim here, requiring its own carrier, realization map, and validation surface.

\paragraph{Reverse-mathematical calibration of Arts and Giesl.}
Proposition~\ref{prop:ag-pi02} places the license at the $\Pi^0_2$ register, and Theorem~\ref{thm:ag-rca} closes the calibration for the subterm-criterion route: the size-change reading of the route~\cite{hirokawa2004,thiemanngiesl2005sct} resolves the projected counter descent directly at order type $\omega$, with $\mathrm{RCA}_0\vdash\mathrm{WO}(\omega)$, so the calibration rests on the instance reduction rather than on the strength of the general criterion, whose soundness Frittaion, Pelupessy, Steila, and Yokoyama calibrate at $\mathrm{WO}(\omega_3)$ with $\omega_3=\omega^{\omega^{\omega}}$~\cite{frittaion2018sct}. The artifact surface carries the $\mathrm{RCA}_0+\mathrm{WO}(\omega^3)$ descriptor records, a kernel-checked syntactic derivation of an elementary $\Pi^0_2$ predecessor-descent sentence, which certifies the $\Pi^0_2$ shape and stops short of identifying that sentence with dependency-pair soundness, and a quarantine theorem that continues to block metadata-only promotion inside the artifact; the mathematical calibration rests on the literature chain rather than on those records. The remaining separate carrier is the literature-facing presentation of the general criterion, where soundness is itself a substantive calibration target at $\mathrm{WO}(\omega_3)$ rather than an already-provable descriptor, together with the internalization of the general dependency-pair graph criterion in the object calculus.

\paragraph{$\mathcal W_1$ ascent as a proof-theoretic object in its own right.}
Remark~\ref{rem:w1-vs-w2} distinguishes the $\mathcal W_1$ (path-order, polynomial) ascent from the $\mathcal W_2$ (confession) ascent. The $\mathcal W_2$ side is analyzed in detail above. A parallel structural-identity theorem for the $\mathcal W_1$ ascent requires a separate construction-family carrier for imported well-founded structure rather than the projection-transaction carrier used here.

\paragraph{Scope of the projection-transaction schema.}
Definition~\ref{def:boundary-transaction} and Remark~\ref{rem:transaction-scope} instantiate the projection-transaction description for term rewriting. Other proof-theoretic or semantic settings require a categorical or model-theoretic carrier broad enough to compare those settings while holding their distinct proof notions apart.

\paragraph{Lawvere and Yanofsky and the diagonal/reflection taxonomy.}
Proposition~\ref{prop:not-diagonal} rules DP out of the Lawvere and Yanofsky diagonal family; Proposition~\ref{prop:reflection-family} places it in the reflection family. The diagonal/reflection distinction is present in the literature (Franz\'en, Smory\'nski) and awaits explicit codification as a named taxonomy for the full body of incompleteness-like phenomena. A systematic survey of the two families across logic, computation, and mathematics is a separate classification carrier rather than a theorem here.

\paragraph{Historical semantics of the reflection placement.}
Proposition~\ref{prop:reflection-family} uses the reflection register to place the dependency-pair ascent taxonomically. A richer internal semantics of the external classical theory, faithful to the historical G\"odel-side proof environment rather than to the schema slots alone, requires a separate historical-semantics carrier.

\paragraph{Architectural record-emission beyond first-order TRS.}
Theorem~\ref{thm:record-emission-necessity} is scoped to the first-order record-emitter schema of Definition~\ref{def:record-emitter}. Broader first-order signatures, externalized-trace variants, richer semantic regimes, and a formal description-length treatment of the storage-form / Kolmogorov discussion require additional record-emitter carriers.

\paragraph{Observer-prior models for hidden progress.}
Section~\ref{sec:info-sequentiality} uses finite Shannon priors to measure hidden progress and terminal-record recovery. A fuller observer model could place a task-specific prior on the unknown terminal depth $K$ and the stage index $i$, then ask how different priors change the coding cost of observation, abstention, and terminal recovery.

\section{Conclusion}

The paper's central result is that the orientation boundary admits a structural diagnosis. In the language of the trilogy, that boundary is the \emph{failure floor}: the condition under which a sound termination verdict requires either a W1 construction that imports global comparison structure or a W2 confession that projects away the duplicating dimension under external license. Below the floor, the admissible outputs in this taxonomy are typed abstentions. Above it, the escape routes are theorem-backed. The boundary established in the companion development is a theorem-backed schema-level frontier for the step-duplicating recursor rather than only a finite KO7-local list of blocked methods: direct whole-term families and duplication-sensitive comparators fail under reusable pump, transparency, and projection barriers; construction-style witness extractors and confession-style escape theorems make the frontier explicit; and KO7 is the canonical certified specialization. In the vocabulary of this paper, that boundary is the event $\operatorname{OB}_{\mathrm{PRC}}(x)=1$, i.e.\ $\kappa^*_{\mathfrak L_{\mathrm{PRC}}}(x)>0$. The direct whole-term language $\mathcal W_0$ contains zero adequate witnesses. At least one adequate witness exists only after a representation lift away from direct whole-term reasoning.

Between the boundary and the successful witness lies the structural diagnosis introduced here: \emph{operational inexpressibility}. The duplicator is an instance on which direct aggregation fails, and more specifically an instance on which direct aggregation is operationally inexpressible at a specific dimension, the step argument, in the sense that every derivable statement of the base proof language either ignores that dimension or leaves termination untouched. This is a query-level failure rather than an expression-to-proof failure in the G\"odelian sense. Yet the confession response to operational inexpressibility has the six-step structural shape of G\"odel's 1931 confession move, with the dependency-pair soundness theorem playing the role of the meta-theoretic resolution step. Within the analyzed primitive-recursion family, the primitive recursion duplicator is the unique structurally complete member at which this confession shape and the construction/confession asymmetry are required for producing a working termination proof.

The quantitative analysis reinforces the same point from two angles at once. Structurally, the confessed burden grows quadratically while the residual proof work grows linearly, so the confession becomes the dominant event in the proof. The trace action, mass partition, crossover index, conservation pair, sufficient statistic, terminal decoder, dimension inflation, and marginal bit-cost laws show that these are linked coordinates of one deterministic orbit. The $r$-ary family proves that runtime remains $k+1$ while confession mass scales linearly in duplication order. Information-theoretically, the direct whole-term carrier accumulates a redundancy-heavy representation whose Shannon orbit entropy grows only logarithmically in the number of payload positions. The resulting inefficiency coefficient diverges on the direct carrier, while the seed-carrier factorization criterion shows what the dependency-pair projection preserves: the seed relevant to the residual termination question, with carrier multiplicity forgotten. In this sense Shannon theory validates the proof-theoretic result from a second formal angle rather than replacing it: the direct whole-term observer overcounts syntactic carrier multiplicity as if it were verdict-grade informational novelty.

The constructive side of the argument is equally specific. The successful methods differ in proof-theoretic role. Construction methods extend the proof language with new operational content and verify the richer object. Confession methods project away a structurally unincorporable dimension under an external soundness license. The witness hierarchy records where this first shift becomes necessary, and the barrier theorems and DP-escape theorem of~\cite{rahnamaOrientation} formally audit that shift. Three further results of~\cite{rahnamaOrientation} support the same point: the barrier appears on the smaller Rec$\Delta$-core; the additive and affine barriers survive typed and many-sorted pumping settings; the extracted dependency-pair problem admits a simple linear base order. What changes at the boundary is therefore the expressive reach of the witness language: the residual problem stays simple, while the original witness language falls short of the correct control-focused presentation.

Two proof-theoretic refinements then place the entire diagnosis on its correct taxonomic ground. The dependency-pair ascent belongs to the Feferman and Beklemishev reflection family rather than the Lawvere and Yanofsky diagonal family, and the Arts and Giesl license operates at the $\Pi^0_2$ register, with its subterm-criterion route formalizable in $\mathrm{RCA}_0$ under an $\omega$-order-type termination measure carried by the projected counter descent, well below the $\varepsilon_0$-scale ascent required by classical G\"odelian reflection over $\mathrm{PA}$. That six-step structural identity with G\"odel 1931 holds at the level of shape, and the metatheoretic weight is much lighter. That boundary is thereby described as a static projection-transaction: generative trace, step-argument dimension, external Arts and Giesl license, and verdict output. The construction/confession asymmetry, the $\ell^0/\ell^1/\ell^\infty$ norm mismatch, and the seed-carrier factorization criterion are three views of this same boundary structure.

The lesson of the primitive self-duplicating recursor is therefore narrow and strong. It is the smallest clean instance in the analyzed family at which a proof system can be right about truth only by ceasing to treat the whole term as the right proof object. That is the formal content of the boundary established here.

\bibliographystyle{plain}
\bibliography{references}

\appendix

\section{Module map}\label{app:module-map}

\noindent The repository entry point for this inventory is \url{https://github.com/MosesRahnama/The-Orientation-Boundary}. The appendix records mechanization scope rather than public-release status.

\noindent Table~\ref{tab:module-map} lists the theorem-bearing Lean~4 modules used by the present development, organized by theorem-stack layer. Every schema-level definition, proposition, and theorem in the body of this paper that is claimed as mechanically proved has a named identifier in one of these modules; the body cites the mathematics and leaves the identifiers here. The split public application-programming-interface (API) root architecture is accounted for separately at the end of the table: \path{OperatorKO7/PrimitiveSchemaAPI.lean} exposes the conservative primitive core, \path{OperatorKO7/SchemaExtendedAPI.lean} exposes the broader reusable barrier / tooling / strongly connected component (SCC) layer, and \path{OperatorKO7/CrossPaperAPI.lean} exposes the KO7-facing cross-paper layer; the older \path{OperatorKO7/SchemaAPI.lean} is retained as a hybrid convenience root.

\smallskip\noindent The operational-inexpressibility side of the artifact carries several theorem layers above the substrate stack: the universal confession wrappers and usable-rules status surface, the forward/backward-instantiation (FBI) and residual-method catalog layer, and the LCEL transport / route-semantics refinements with canonical certified-route instances. The table below lists those modules explicitly and marks the witness-order auxiliary files as artifact-facing extensions rather than as new headline theorem claims. The artifact label P4C is kept in the module-map rows for the certified-route closeout layer. Together with Paper~A's appendix map, this table covers the repository-root and recursive \texttt{Meta/} theorem surface used by the Lean artifact.

\begingroup
\footnotesize
\setlength{\LTpre}{8pt}
\setlength{\LTpost}{8pt}
\setlength{\tabcolsep}{4pt}
% [inline block 0: 4 envs, 101104 chars in 4 pieces, piece 1 here, a bare % at each other -> data_tex | \begin{longtable}{@{}P{0.32\textwidth} P{0.64\textwidth}@{}}   \caption{Lean module map. Representative theorem identifi...]

\endgroup

\subsection{Mechanized modules}\label{subsec:mechanized-modules}

\noindent The modules below extend the Lean~4 artifact's module map. Each
carries named theorems cited in the paper's logical chain or supporting
the LCEL and residual-method chains. The \texttt{SafeTrace*}, \texttt{HigherOrderRewriting*},
and \texttt{FBI\_*} module families are included here as theorem-bearing
extension families.
\begingroup
\footnotesize
\setlength{\LTpre}{8pt}
\setlength{\LTpost}{8pt}
\setlength{\tabcolsep}{4pt}
%
\endgroup

\subsection{Cross-paper boundary-general support}
\label{subsec:boundary-general-support}

\begingroup
\footnotesize
%
\endgroup

\subsection{Quantitative Distinction and LBC integration modules}
\label{subsec:quantitative-distinction-modules}

\begingroup
\footnotesize
\setlength{\LTpre}{8pt}
\setlength{\LTpost}{8pt}
\setlength{\tabcolsep}{4pt}
%
\endgroup

\end{document}